\newtheorem{theorem}{Theorem}[section]
\newtheorem*{theorem*}{Theorem}
\newtheorem{lemma}[theorem]{Lemma}
\newtheorem{conjecture}[theorem]{Conjecture}
\newtheorem{corollary}[theorem]{Corollary}
\newtheorem{definition}[theorem]{Definition}
\DeclarePairedDelimiter\ceil{\lceil}{\rceil}
\DeclarePairedDelimiter\floor{\lfloor}{\rfloor}
 \gdef\xxxmark{%
   \expandafter\ifx\csname @mpargs\endcsname\relax 
     \expandafter\ifx\csname @captype\endcsname\relax 
       \marginpar{xxx}
     \else
       xxx 
     \fi
   \else
     xxx 
   \fi}
 \gdef\xxx{\@ifnextchar[\xxx@lab\xxx@nolab}
 \long\gdef\xxx@lab[#1]#2{\textbf{[\xxxmark #2 ---{\sc #1}]}}
 \long\gdef\xxx@nolab#1{\textbf{[\xxxmark #1]}}
\newcommand{\CACHE}{\textit{CACHE}}
\newcommand{\MEM}{\textit{MEM}}
\newcommand{\PCACHE}{\textit{PCACHE}}
\newcommand{\tO}{\tilde{O}}
\newcommand{\tOmg}{\tilde{\Omega}}
\let\epsilon=\varepsilon
\begin{document}
\title{Fine-grained I/O Complexity via Reductions:\\
	\emph{New lower bounds, faster algorithms, and a time hierarchy}}
\author[1]{Erik D. Demaine}
\author[1]{Andrea Lincoln} 
\author[1]{Quanquan C. Liu}  
\author[1]{Jayson Lynch}
\author[1]{\\Virginia Vassilevska Williams}
	\affil[1]{Computer Science and Artificial Intelligence Lab, Massachusetts Institute of Technology,
		  Cambridge, MA, USA \authorcr
          \url{{edemaine, andreali, quanquan, jaysonl, virgi}@mit.edu}}
\date{}
\maketitle

\setcounter{page}{1}
\section*{Abstract}

This paper initiates the study of I/O algorithms (minimizing cache misses)
from the perspective of fine-grained complexity
(conditional polynomial lower bounds).
Specifically, we aim to answer why sparse graph problems are so hard,
and why the Longest Common Subsequence problem gets a savings of a factor of
the size of cache times the length of a cache line, but no more.
We take the reductions and techniques from complexity and fine-grained complexity and apply them to the I/O model to generate new (conditional) lower bounds
as well as faster algorithms.
We also prove the existence of a time hierarchy for the I/O model, which motivates the fine-grained reductions.

\begin{itemize}
	\item Using fine-grained reductions, we give an algorithm for distinguishing 2 vs.\ 3 diameter and radius that runs in $O(|E|^2/(MB))$ cache misses, which for sparse graphs improves over the previous $O(|V|^2/B)$ running time.
	\item We give new reductions from radius and diameter to Wiener index and median. These reductions are new in both the RAM and I/O models. 
	\item We show meaningful reductions between problems that have linear-time solutions in the RAM model. The reductions use low I/O complexity (typically $O(n/B)$), and thus help to finely capture the relationship between ``I/O linear time'' $\Theta(n/B)$ and RAM linear time $\Theta(n)$. 
	\item We generate new I/O assumptions based on the difficulty of improving sparse graph problem running times in the I/O model. We create conjectures that the current best known algorithms for Single Source Shortest Paths (SSSP), diameter, and radius are optimal.
	\item  From these I/O-model assumptions, we show that many of the known reductions in the word-RAM model can naturally extend to hold in the I/O model as well (e.g., a lower bound on the I/O complexity of Longest Common Subsequence that matches the best known running time). 
	\item We prove an analog of the Time Hierarchy Theorem in the I/O model, further motivating the study of fine-grained algorithmic differences.
\end{itemize}

%


\newpage


\section{Introduction}
\label{sec:intro}

The I/O model (or external-memory model) was introduced by Aggarwal and Vitter \cite{AggarwalVi88} to model the non-uniform access times of memory in modern processors. The model nicely captures the fact that, in many practical scenarios, cache misses between levels of the memory hierarchy (including disk) are the bottleneck for the program.
As a result, the I/O model has become a popular model for developing cache-efficient algorithms. 

In the I/O model, the expensive operation is bringing a \emph{cache line} of $B$ contiguous words from the ``main memory'' (which may alternately represent disk) to the ``cache'' (local work space). The cache can store up to $M$ words in total, or $M/B$ cache lines. Computation on data in the cache is usually treated as free, and thus the main goal of I/O algorithms is to access memory with locality. That is, when bringing data into cache from main memory in contiguous chunks, we would like to take full advantage of the fetched cache line. This is preferable to, say, randomly accessing noncontiguous words in memory.

When taking good graph algorithms for the RAM model and analyzing them in the I/O model, the running times are often very bad. Take for example Dijkstra's algorithm or the standard BFS algorithm. These algorithms fundamentally look at an adjacency list, and follow pointers to every adjacent node. Once the new node is reached, the process repeats, accessing all the adjacent nodes in priority order that have not previously been visited. This behavior looks almost like random access! Unless one can efficiently predict the order these nodes will be reached, the nodes will likely be stored far apart in memory. Even worse, this optimal order could be very different depending on what node one starts the algorithm at. 

Because of this bad behavior, I/O-efficient algorithms for graph problems take a different approach. For dense graphs, one approach is to reduce the problems to matrix equivalent versions. For example, APSP is solved by $(\min,+)$ matrix multiplication~\cite{jia1981complexity,pagh2014inputMatrix,seidel92}. The locality of matrices leads to efficient algorithms for these problems. 

Unfortunately, sparse graph problems are not solved efficiently by $(\min,+)$ matrix multiplication. For example, the best algorithms for directed single-source shortest paths in sparse graphs take $O(n)$ time, giving no improvement from the cache line at all~\cite{Brodal2004,CR04,ABDHI07}. Even in the undirected case, the best algorithm takes $O(n/\sqrt{B})$ time in sparse graphs~\cite{Meyer2003}. 

The Diameter problem in particular has resisted improvement beyond $O(|V|/\sqrt{B})$ even in undirected unweighted graphs \cite{Mehlhorn2002}, and in directed graphs, the best known algorithms still run in time $\Omega(|V|)$ \cite{chiang1995external,ABDHI07}. For this reason, we use this as a conjecture and build a network of reductions around sparse diameter and other sparse graph problems. 

In this paper we seek to explain why these problems, and other problems in the I/O model, are so hard to improve and to get faster algorithms for some of them.

In this paper we use reductions to generate new algorithms, new lower bounds, and a time hierarchy in the I/O model. Specifically, we get new algorithms for computing the diameter and radius in sparse graphs, when the computed radii are small. We generate novel reductions (which work in both the RAM and I/O models) for the Wiener Index problem (a graph centrality measure). We generate further novel reductions which are meaningful in the I/O model related to sparse graph problems. Finally, we show that an I/O time hierarchy exists, similar to the classic Time Hierarchy Theorem.

\paragraph{Caching Model and Related Work.}
Cache behavior has been studied extensively. In 1988, Aggarwal and Vitter \cite{AggarwalVi88} developed the \emph{I/O model}, also known as the external-memory model \cite{Demaine02}, which now serves as a theoretical formalization for modern caching models. 
A significant amount of work on algorithms and data structures in this model has occurred including items like buffer-trees \cite{LarsArge2003}, B-trees \cite{BayerBTree70}, permutations and sorting \cite{AggarwalVi88}, ordered file maintenance \cite{Bender02}, $(\min,+)$ matrix multiplication \cite{jia1981complexity,pagh2014inputMatrix}, and triangle listing  \cite{pagh2014input}. 
Frigo, Leiserson, Prokop and Ramachandran \cite{FrigoLePr99} proposed the \emph{cache-oblivious model}. In this model, the algorithm is not given access to the cache size $M$, nor is it given access to the cache-line size $B$. Thus the algorithm must be oblivious to the cache, despite being judged on its cache performance. 
Some surveys of the work include \cite{Vitter01,Arge96,Demaine02}. 



%

When requesting cache lines from main memory in this paper we will only request the $B$ words starting at location $xB$ for integers $x$. Another common model, which we do not follow in this paper, allows arbitrary offsets for the cache line pulls. This can be simulated with at most twice as many cache misses and twice as much cache. \looseness=-1

\paragraph{Fine-grained Complexity.}

A popular area of recent study is fine-grained complexity. The field uses efficient reductions to uncover relationships between problems whose classical algorithms have not been improved substantially in decades. Significant progress has been made in explaining the lack of progress on many important problems \cite{backurs2015edit, williams2010subcubic, radiusgrandoni, abboud2014popular, frechet, cfg, diametervrod} such as APSP, othogonal vectors (OV), 3-SUM, longest common subsequence (LCS), edit distance and more. Such results focused on finding reductions from these problems to other (perhaps less well-studied) problems such that an improvement in the upper bound on any of these problems will lead to an improvement in the running time of algorithms for these problems. For example, research around the All-Pairs Shortest Paths problem (APSP) has uncovered that many natural, seemingly simpler graph problems on $n$ node graphs are fine-grained equivalent to APSP, so that an $O(n^{3-\epsilon})$ time algorithm for $\epsilon>0$ for one of the problems implies an $O(n^{3-\epsilon'})$ time algorithm for some $\epsilon'>0$ for all of them.

\subsection{History of Upper Bounds}
\label{subsec:UBhist}

In the I/O model, the design of algorithms for graph problems is difficult. This is demonstrated by the number of algorithms designed for problems like Sparse All Pairs Shortest Paths, Breath First Search, Graph Radius and Graph Diameter where very minor improvements are made (see Table~\ref{table:problem-defs} for definitions of problems). Note that dense and sparse qualifiers in the front of problems indicate the problem defined over a dense/sparse graph, respectively. 

The Wiener index problem measures the total distance from all points to each other. Intuitively, this measures how close or far points in the graph are from each other. In this respect, Wiener index is similar to the radius, diameter and median measures of graph distance. 

\begin{table}
\begin{center}
\begin{tabular}{ |c|c| } 
\hline
Problem Name & Problem Definition\\
\hline
\hline
Orthogonal & Given two sets $U$ and $V$ of $n$ vectors each with\\
Vector (OV) & elements $\{0, 1\}^d$ where $d = \omega(\log{n})$, determine whether there\\
& exist vectors $u \in U$ and $v \in V$ such that $\sum_{i = 1}^d u_i \cdot v_i = 0$.\\
\hline
Longest Common & Given two strings of $n$ symbols over some alphabet $\Sigma$, compute the length \\
Subsequence (LCS) &  of the longest sequence that appears as a subsequence in both input strings.\\
\hline
Edit Distance (ED) & Given two strings $s_1$ and $s_2$, determine the minimum\\
& number of operations that converts $s_1$ to $s_2$. \\
\hline
Sparse Diameter & Given a sparse graph $G = (V, E)$, determine if $\max_{u, v \in V}d(u, v)$ \\
& where $d(u, v)$ is the distance between nodes $u$ and $v$ in $V$.\\
\hline
2 vs. 3 Sparse & Given a sparse graph $G = (V, E)$, determine if $\max_{u, v \in V}d(u, v) \leq 2$ \\
 Diameter & where $d(u, v)$ is the distance between nodes $u$ and $v$ in $V$.\\
\hline
Hitting Set (HS) & Given two lists of size $n$, $V$ and $W$, where the elements are taken\\
& from a universe $U$, does there exist a set in $V$ that hits (contains an\\ 
& element of) every set in W. \\
\hline
Sparse Radius & Given a sparse graph $G = (V, E)$, determine $\min_{u\in V}\left(\max_{v \in V} d(u, v)\right)$\\
&  where $d(u, v)$ is the distance between nodes $u$ and $v$ in $V$.\\
\hline
2 vs. 3 Sparse  & Given a sparse graph $G = (V, E)$, determine if $\min_{u\in V}\left(\max_{v \in V} d(u, v)\right) \leq 2$\\
Radius &  where $d(u, v)$ is the distance between nodes $u$ and $v$ in $V$.\\
\hline
3 vs. 4 Sparse & Given a sparse graph $G = (V, E)$, determine if $\min_{u\in V}\left(\max_{v \in V} d(u, v)\right) \leq 3$\\
Radius &  where $d(u, v)$ is the distance between nodes $u$ and $v$ in $V$.\\
\hline
Median & Let $d(u,v)$ be the shortest path distance between nodes $u$ and $v$ in a graph $G$.\\
& The median is the node $v$ that minimizes the sum $\sum_{u\in V} d(v,u)$.\\
\hline
3-SUM & Given a set of $n$ integers, determine whether the set \\
& contains three integers $a, b, c$ such that $a + b = c$.\\
\hline
Convolutional  & Given three lists $A$, $B$ and $C$ each consisting of $n$ numbers,\\ 
3-SUM & return true if $\exists i,j,k \in [0,n-1]$ such that \\
& $i+j+k \equiv 0 \textbf{ mod } n$ and $A[i]+B[j]+C[k] = 0$.\\
\hline
0 Triangle & Given a graph $G$, return true if $\exists a,b,c \in V$ such that\\
& $w(a,b)+w(b,c)+w(c,a)=0$ where $w(u,v)$ is the weight\\
& of the edge $(u,v)$.\\
\hline
All-Pairs Shortest & Given a directed or undirected graph with integer weights, \\ 
Paths (APSP) & determine the shortest distance between all pairs of \\
&vertices in the graph.\\
\hline
Wiener Index & Let $d(u,v)$ be the shortest paths distance between nodes $u$\\ & and $v$ in a graph $G$. The Wiener Index of $G$ is $\sum_{u\in V}\sum_{v\in V} d(u,v)$.\\
\hline
Negative Traingle & Given a graph $G$, return true if $\exists a,b,c \in V$ such that\\
&  $w(a,b)+w(b,c)+w(c,a)<0$ where $w(u,v)$ is the weight of the edge $(u,v)$.\\
\hline 
$(\min, +)$-Matrix  & Given two $n$ by $n$ matrices $A$ and $B$, \\
Multiplication &  return $C[i,k] = \min_{j\in[1,n]} \left(A[i,j]+B[j,k]\right)$.\\
\hline
Sparse Weighted  & Given a sparse, weighted graph $G = (V, E)$, determine \\
Diameter & $\max_{u, v \in V}d(u, v)$ where $d(u, v)$ is the distance\\
& between nodes $u$ and $v$ in $V$.\\
\hline
\end{tabular}
\caption{Fine-grained problems definitions.}\label{table:problem-defs}
\end{center}
\end{table}

The history of improvements to the upper bound of negative triangle in the I/O model is an important example of the difficulty in the design of I/O efficient algorithms for graph problems (see Table~\ref{table:apsp-results} for a summary). For a long time, no improvements in terms of $M$ were made to the upper bound for negative triangle. A key was re-interpreting the problem as a repeated scan of lists.

\begin{table}
	\[
	\renewcommand\arraystretch{1.5}
	\begin{array}{|c|c||c|c||c|c|} 
	\hline
	\multicolumn{2}{|c||}{\text{Dense APSP}} & \multicolumn{2}{c||}{\text{Sparse Weighted Diameter}} & \multicolumn{2}{c|}{\text{Sparse}-\triangle} \\
	\hline
	\tO(n^3) & \text{[naive]}  & \tO(n^2) & \text{[naive]}& \tO(n^{1.5})& \text{[naive]}\\
	\tO(n^3/(\sqrt{M})) & \text{\cite{HongKu81}} & \tO(n^2/\sqrt{B}) & \text{\cite{apspSparse}} & \tO(n+n^{1.5}/B) & \text{\cite{menegola2010external}}\\
	\tO(n^3/(\sqrt{M}B)) & \text{Extension}^* & \tO(n^2/\sqrt{B}) & \text{\cite{chowdhury2005external}}  & \tO(n^{1.5}/B) & \text{\cite{dementiev2006algorithm}}\\
	&  & & & \tO(n^{1.5}/(\sqrt{M}B)) & \text{\cite{pagh2014input}}\\
	\hline
	\end{array}
	\]
	\caption{History of APSP upper bounds. $^*$ This is an extension of \cite{HongKu81} see e.g. \cite{pagh2014input}.}\label{table:apsp-results}
\end{table}

We feel that the history of negative triangle has taught us that upper bounds in the I/O model on graph problems are best achieved by creating efficient reductions from a graph problem to a non-graph problem. Hence the study of fine-grained reductions in the I/O model is crucial to using this approach in solving such graph problems with better I/O efficiency. Graph problems tempt the algorithmic designer into memory access patterns that look like random access, whereas matrix and array problems immediately suggest memory local approaches to these problems. We consider the history of the negative triangle problem to be an instructive parable of why the matrix and array variants are the right way to view I/O problems.

\subsection{Our Results}
We will now discuss our results in this paper. In all tables in this paper, our results will be in bold. We demonstrate the value of reductions as a tool to further progress in the I/O model through our results. 

Our results include improved upper bounds, a new technique for lower bounds in the I/O model and the proof of a computational hierarchy. Notably, in this paper we tie the I/O model into both fine-grained complexity and classical complexity. 
\subsubsection{Upper Bounds}

We get improved upper bounds on two sparse graph problems and have a clarifying note to the community about matrix multiplication algorithms. 

For both the sparse 2 vs 3 diameter and sparse 2 vs 3 radius problems, we improve the running time from $O(n^2/B)$ to $O(n^2/(MB))$. We get these results by using an insight from a pre-existing reduction to two very local problems which have trivial $O(n^2/(MB))$ algorithms that solve them. Note that this follows the pattern we note in Section \ref{subsec:UBhist} in that we produce a reduction from a graph problem to a non-graph problem to obtain better upper bounds in terms of $M$. 

Furthermore, previous work in the I/O model related to matrix multiplication seems to use the naive matrix multiplication $n^3$ bound, or the Strassen subdivision. However, fast matrix multiplication algorithms which runs in $n^{\omega}$ time imply a nice self-reduction.  Thus, we can get better I/O algorithms which run in the most recent fast matrix multiplication time. We want to explicitly add a note in the literature that fast matrix multiplication in the I/O model should run in time $T_{MM}(n,M,B) =O( n^{\omega'}/(M^{\omega'/2-1}B))$ where $\omega'$ is the matrix multiplication exponent, if it is derived using techniques bounding the rank of the matrix multiplication tensor. The current best $\omega'$ is $\omega'<2.373$~\cite{vstoc12,legall} giving us the I/O running time of $T_{MM}(n,M,B) = O(n^{2.373}/(M^{0.187}B))$. We give these results in Section~\ref{sec:MMspeed}.

\subsubsection{I/O model Conjectures}

In the I/O model a common way to get upper bounds is to get a self-reduction where a large problem is solvable by a few copies of a smaller problem. We make the small subproblems so small they fit in cache. If the problem is laid out in a memory local fashion in main memory then it will take $M/B$ I/Os to solve a subproblem that fits in memory $M$. 

In Section \ref{sec:masterstheorem}, we give an I/O-based Master Theorem which gives the running time for algorithms with recurrences of the form $T(n,M,B) = \alpha T(n/\beta,M,B)  + f(n,M,B)$ (like the classic Master Theorem from \cite{CLRS09}) and $T(n, M, B) = g^2T(n/\beta, M, B) + f(n, M, B)$ (self-reduction). The running times generated by these recurrences match the best known running times of All-Pairs Shortest Paths (APSP), 3-SUM, Longest Common Subsequence (LCS), Edit Distance, Orthogonal Vectors (OV), and more. Thus, if we conjecture that a recursive algorithm has a running time that is optimal for a problem, we are able to transfer this bound over to the I/O model using our Master Theorem and self-reduction framework in a natural way. 

\paragraph{Lower Bounds From Fine-Grained Complexity Assumptions.}

We demonstrate that many of the reductions in the RAM model between problems of interest and common fine-grained assumptions give lower bounds in the I/O model. We generate reasonable I/O conjectures for these problems and demonstrate that the reductions are I/O-efficient. First, we begin with the conjectures. 

\begin{conjecture}[I/O All-Pairs Shortest Paths (APSP) Conjecture]\label{conj:apsp}
 APSP requires $\frac{n^{3-o(1)}}{M^{1/2+o(1)}B^{1+o(1)}}$ I/Os. 
\end{conjecture}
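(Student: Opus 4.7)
Since Conjecture~\ref{conj:apsp} is a conjecture rather than a theorem, any proposal is aspirational; a formal proof would in particular imply the long-standing RAM-model APSP conjecture. The plan I would follow is to reduce the bound to well-established hardness assumptions, factor by factor, and match the shape of the bound to what is already known.

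First, I would split the conjectured bound $n^{3-o(1)}/(M^{1/2+o(1)}B^{1+o(1)})$ into two multiplicative pieces and argue each separately. For the $n^{3-o(1)}$ piece, the reduction is immediate from the classical RAM APSP conjecture: since each Aggarwal--Vitter I/O operation costs $\Omega(1)$ time on the underlying RAM, an I/O algorithm using $I$ misses runs in time $\Omega(I)$ when $M, B$ are constant, so any algorithm using fewer than $n^{3-o(1)}$ I/Os at constant $M,B$ would refute RAM APSP. For the $M^{-1/2}B^{-1}$ piece, I would appeal to Hong--Kung's red-blue pebble game lower bound for matrix multiplication, which unconditionally gives $\Omega(n^3/(\sqrt{M}B))$ in the semiring model and exactly matches the $\tO(n^3/(\sqrt{M}B))$ upper bound in Table~\ref{table:apsp-results} obtained by lifting Hong--Kung's algorithm to the $(\min,+)$ semiring.

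The next step is to fuse the two factors into a single statement. The plan is to show that any APSP algorithm beating $n^{3-o(1)}/(\sqrt{M}B)$ cache misses yields either a sub-cubic RAM APSP algorithm, by simulating the I/O algorithm at constant $M,B$, or a $(\min,+)$ matrix multiplication algorithm beating Hong--Kung, by invoking the RAM-model equivalence between APSP and $(\min,+)$ matrix multiplication inside each cache-sized subproblem and threading the pebbling lower bound through the reduction.

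The hard part — and precisely the reason this statement is only a conjecture — is that no unconditional superlinear I/O lower bound technique is known beyond pebble-game arguments, which apply only to restricted straight-line semiring computation. Fusing the two pieces rigorously requires showing that any I/O-efficient APSP algorithm must essentially perform $(\min,+)$ matrix multiplication on cache-sized blocks, a structural statement currently out of reach. In lieu of such tools, the paper adopts Conjecture~\ref{conj:apsp} as an assumption, paralleling how the fine-grained complexity community treats the RAM APSP conjecture, and leverages it via the I/O-efficient reductions developed later in the paper.
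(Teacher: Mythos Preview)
The paper does not attempt to prove Conjecture~\ref{conj:apsp}; it is stated purely as an assumption on which later lower bounds are built, so there is no paper proof to compare against. Your recognition that this is aspirational is correct, and your overall framing (match the known upper bound, argue each factor is necessary) is the natural one.

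That said, there is a genuine gap in the heuristic you give for the $n^{3-o(1)}$ factor. You write that an I/O algorithm using $I$ misses runs in RAM time $\Omega(I)$, and hence an algorithm with fewer than $n^{3-o(1)}$ misses at constant $M,B$ would refute RAM APSP. The implication goes the wrong way. In the I/O model, in-cache computation is free, so an algorithm with $n^{2}$ cache misses could still perform $n^{3}$ (or more) RAM operations inside the cache; a low-miss algorithm does not yield a low-time RAM algorithm. The correct direction is the reverse: a truly subcubic RAM algorithm would give a truly subcubic-miss I/O algorithm (Lemma~\ref{lem:RAMsubsetCache}), so the I/O conjecture at constant $M,B$ is \emph{stronger} than, not a consequence of, the RAM APSP conjecture. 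Thus the $n^{3-o(1)}$ piece cannot be obtained by reducing to RAM APSP in the way you describe.

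Similarly, the Hong--Kung pebbling bound you invoke for the $M^{-1/2}B^{-1}$ factor applies only to straight-line semiring programs, not to arbitrary I/O algorithms for APSP; you acknowledge this later, but it means neither factor of your decomposition is actually derivable from the assumptions you name. This is consistent with the paper's choice to simply posit the conjecture rather than derive it.
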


\begin{conjecture}[I/O $3$-SUM Conjecture]\label{conj:3-sum}
	3-SUM requires $\frac{n^{2-o(1)}}{M^{1+o(1)}B^{1+o(1)}}$ I/Os. 
\end{conjecture}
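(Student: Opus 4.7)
The statement is a conjecture rather than a theorem, so what I will sketch is a justification plan rather than a formal proof: I will describe (a) how to show the conjectured bound is not already contradicted by a known algorithm, and (b) how to argue it is the "correct" bound to conjecture given the standard RAM 3-SUM hypothesis and the paper's self-reduction framework.

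First, I would verify that the best known I/O upper bound for 3-SUM is exactly $O(n^2/(MB))$, matching the lower bound asserted by the conjecture. The natural algorithm is: sort the three input lists at cost $O((n/B)\log_{M/B}(n/B))$, a lower-order term; then, for each block of $\Theta(M)$ consecutive elements of $A$, load the block into cache at cost $O(M/B)$ I/Os, and perform a standard two-pointer linear scan of $B$ and $C$ inside the cache-resident block. There are $O(n/M)$ such blocks, and each scan incurs $O(n/B)$ cache misses to stream $B$ and $C$ past, totaling $O(n^2/(MB))$ I/Os. So the conjectured bound matches the frontier.

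Second, I would argue that the conjecture is the direct I/O analogue of the classical RAM 3-SUM conjecture ($n^{2-o(1)}$ time) once one applies the self-reduction framework of Section~\ref{sec:masterstheorem}. The $MB$ savings in the denominator arises from the standard I/O accounting: a subproblem of size $\Theta(M)$ fits in cache, costs $O(M/B)$ I/Os to load, and then runs for free at RAM speed. Because 3-SUM decomposes as a two-way self-reduction (split two of the three lists into pieces of size $\sqrt{M}$ and run the base case in cache), the recurrence $T(n,M,B) = O((n/\sqrt{M})^2)\cdot T(\sqrt{M},M,B) + O(n^2/(MB))$ from the paper's Master Theorem gives precisely the $M^1 B^1$ speedup. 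This is structurally parallel to Conjecture~\ref{conj:apsp}, whose weaker $M^{1/2}$ factor reflects APSP's three-way self-reduction; the matching of exponents to the arity of the self-reduction is the main piece of evidence that the conjectured exponents are the right ones.

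The hard part, and the reason this is posed as a conjecture rather than a theorem, is that no unconditional super-linear I/O lower bounds are known for problems in this regime, and even the RAM 3-SUM conjecture itself remains open. Any actual proof would have to rule out cleverer cache-exploiting schemes (e.g., ones that reuse cache contents across overlapping subproblems or exploit sorted structure in a non-block-aligned way), and to date no technique is known that can do so; hence the conjecture's role in the paper is as a hypothesis from which I/O lower bounds on downstream problems are derived via fine-grained reductions.
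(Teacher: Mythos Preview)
You correctly recognize that this is a conjecture, not a theorem, and that the paper offers no proof of it; it is stated as a hypothesis from which downstream I/O lower bounds are derived via reductions. Your two-part justification---(a) the conjectured bound matches the best known upper bound $O(n^2/(MB))$ cited in Table~\ref{table:long-list}, and (b) the $MB$ denominator is exactly what the self-reduction/Master Theorem framework of Section~\ref{sec:masterstheorem} predicts when one transports the RAM 3-SUM hypothesis to the I/O model---is precisely the informal reasoning the paper relies on in the discussion preceding the conjectures.

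One small technical wobble: your description of the self-reduction (``split two of the three lists into pieces of size $\sqrt{M}$'') is not the decomposition that achieves $O(n^2/(MB))$; that would yield $(n/\sqrt{M})^2$ subproblems each still containing a full-size third list. The algorithm the paper cites \cite{baran2005subquadratic} instead loads $\Theta(M)$-sized chunks of one sorted list and streams the other two past it, which is closer to what you wrote in your first paragraph. This does not affect your justification, since all you need is that the upper bound is $O(n^2/(MB))$, but the $\sqrt{M}$ parenthetical should be dropped or corrected.
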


\begin{conjecture}[I/O Orthogonal Vectors (OV) Conjecture]\label{conj:ov}
	OV requires $\frac{n^{2-o(1)}}{M^{1+o(1)}B^{1+o(1)}}$ I/Os. 
\end{conjecture}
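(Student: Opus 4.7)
Since Conjecture 3.3 is a \emph{conjecture} rather than a theorem, strictly speaking there is no proof to give; what one can (and should) provide is a two-sided justification showing that the stated form matches both the best known I/O algorithm for OV and the classical RAM-model OV conjecture. My plan is to present both sides, mirroring the reasoning that I expect motivates Conjectures 3.1 and 3.2.

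For the algorithmic upper-bound side I would describe the standard cache-aware blocking procedure for OV. With $n$ vectors of dimension $d = n^{o(1)}$, load a batch of $\Theta(M/d)$ vectors into cache using $O(M/B)$ I/Os, then exhaustively compare them against another batch of $\Theta(M/d)$ vectors loaded the same way; each such batch pairing resolves $\Theta(M^2/d^2)$ candidate pairs at a cost of $O(M/B)$ I/Os. Aggregating over all $n^2$ candidate pairs gives a total cost of $O(n^2 d^2/(MB)) = O(n^{2+o(1)}/(MB))$ I/Os, which matches the conjectured lower bound up to the $n^{o(1)}$ slack absorbed into the $M^{1+o(1)} B^{1+o(1)}$ factor. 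For the RAM side, setting $M = O(1)$ and $B = O(1)$ reduces the I/O model to the word-RAM, so an algorithm running in $n^{2-\epsilon}/(M^{1+o(1)} B^{1+o(1)})$ I/Os for any constant $\epsilon > 0$ would in particular give a RAM algorithm of time $n^{2-\epsilon}$, contradicting the RAM-model OV conjecture. Thus the exponent $2$ in the numerator is pinned down by RAM hardness and the $MB$ in the denominator is pinned down by the blocking algorithm.

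The main obstacle to upgrading this justification into an actual proof is that neither piece is an unconditional lower bound: the RAM OV conjecture is itself unproven, and even conditional on it there is no known fine-grained reduction from RAM-OV that preserves the precise $MB$ factor in the denominator across the full range of $M$ and $B$. The real content of Conjecture 3.3 is therefore the additional assertion that blocking is essentially optimal for exploiting cache on OV --- i.e., that no data-structural or algebraic reorganization of the pair comparisons beats the batching bound. Turning the consistency check into a theorem would require either a tight I/O-to-RAM fine-grained reduction or an unconditional communication-complexity style argument against all cache-aware OV algorithms, neither of which is currently available; absent that, the conjecture is the minimal statement one can make that is simultaneously consistent with the RAM OV conjecture and with the best known I/O algorithm.
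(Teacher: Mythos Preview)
Your assessment is correct: this is a conjecture, not a theorem, and the paper offers no proof of it. The paper's implicit justification is exactly the two-sided consistency check you describe --- the matching upper bound appears as Lemma~\ref{lem:ovUB} (proved via a cache-oblivious recursion $T(n)=4T(n/2)+n/B$ rather than the cache-aware blocking you sketch, but yielding the same $O(n^2/(MB)+n/B)$ bound), and the RAM-model OV conjecture anchors the exponent $2$ in the numerator, with the $o(1)$ slack in $M^{1+o(1)}B^{1+o(1)}$ absorbing the $d=n^{o(1)}$ factors. Your discussion of why this cannot be upgraded to a theorem is also accurate and goes somewhat beyond what the paper makes explicit.
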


\begin{conjecture}[I/O Hitting Set (HS) Conjecture]\label{conj:hs}
	HS requires $\frac{n^{2-o(1)}}{M^{1+o(1)}B^{1+o(1)}}$ I/Os. 
\end{conjecture}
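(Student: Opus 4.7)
Since the final statement is a conjecture rather than a theorem, there is no formal proof to reproduce; the goal of the ``proof proposal'' is to justify that the stated bound is the right place to draw the line. The plan is to justify Conjecture~\ref{conj:hs} by (a) exhibiting an I/O algorithm matching the upper side of the bound and (b) deriving its lower side from an already-stated conjecture, namely the I/O OV Conjecture (Conjecture~\ref{conj:ov}).

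For the upper bound side, I would show that the naive quadratic RAM algorithm for HS becomes an $O(n^2/(MB))$ I/O algorithm via the standard block-in-cache trick. Partition $V$ into $n/M$ chunks of $M$ elements, and for each chunk load it into cache using $M/B$ I/Os; then stream $W$ past it one cache line at a time, costing $n/B$ I/Os per chunk and performing all hitting tests between the loaded chunk of $V$ and every set of $W$ in-cache at no additional I/O cost. Totalling over chunks gives $(n/M)\cdot(M/B + n/B) = O(n^2/(MB))$ I/Os, which matches the conjectured lower bound up to subpolynomial factors and shows the conjecture is not trivially refuted.

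For the lower bound side, I would appeal to the known RAM-model fine-grained reduction from OV to HS, which maps an $n$-vector OV instance to an HS instance of $O(n)$ sets over a universe of $O(n)$ elements using only a linear-time transformation. The key observation is that this reduction is already I/O-friendly: it is a single streaming pass over the OV input producing a streaming output, so it can be executed in $O(n d / B)$ I/Os, which is negligible compared to the $n^{2-o(1)}/(M^{1+o(1)}B^{1+o(1)})$ target. Hence any algorithm solving HS in $o(n^{2}/(MB))$ I/Os would, composed with this reduction, solve OV in $o(n^2/(MB))$ I/Os, contradicting Conjecture~\ref{conj:ov}. In short, Conjecture~\ref{conj:hs} is implied by Conjecture~\ref{conj:ov} modulo an I/O-efficient reduction.

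The main obstacle is that no unconditional lower bound of this strength is known even in the RAM model, so the best we can do is give a conditional justification: match the bound algorithmically, and show it sits consistently inside the network of I/O conjectures introduced in this section. The subtlety to be careful about is that the reduction from OV to HS must produce the HS instance in a memory-local order (grouping each set contiguously), since an adversarial output layout could inflate the I/O cost by a factor of $B$; this is easily arranged by writing out each constructed set as a contiguous block before moving to the next.
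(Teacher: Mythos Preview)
Your part~(a) is correct and matches the paper's own justification: Lemma~\ref{lem:hsUB} proves exactly the $\tO(n^2/(MB))$ upper bound by the same block-and-stream argument you sketch, showing the conjectured lower bound is tight against known algorithms.

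Your part~(b), however, contains a genuine error. You assert a ``known RAM-model fine-grained reduction from OV to HS,'' but no such reduction is known. The two problems have different quantifier structures: OV asks $\exists a\,\exists b\,(a\cdot b=0)$, while HS asks $\exists a\,\forall b\,(a\cdot b\neq 0)$. Neither problem, nor its complement, matches the other, and the Abboud--Vassilevska-Williams--Wang paper that introduced the Hitting Set hypothesis explicitly presents it as an \emph{independent} assumption, noting that HS might remain hard even if SETH (and hence the OV conjecture) fails. This is precisely why the present paper states Conjecture~\ref{conj:hs} separately from Conjecture~\ref{conj:ov} rather than deriving one from the other, and why HS is used for radius lower bounds while OV is used for diameter lower bounds.

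So the paper's ``proof'' is simply to posit the conjecture as a standalone hardness assumption, supported only by the matching upper bound and by analogy with the RAM-model HS hypothesis. Your attempt to ground it in the OV conjecture would be a nice strengthening if it worked, but the reduction you invoke does not exist in the literature and you have not supplied one.
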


From these conjectures we can generate many lower bounds. Many of our lower bounds are tight to the fastest known algorithms. These reductions have value even if the conjectures are refuted since many of these reductions also give upper bounds for other problems--leading to better algorithms for many problems even if the conjectures are refuted.\\
\\

\begin{table}
\begin{minipage}{\linewidth}
\[
\renewcommand\arraystretch{1.5}
\arraycolsep=3pt
\begin{array}{|c|c|c|c|c|c|} 
\hline
\text{Problem} & \text{Upper Bound} & \text{UB source} &\text{Lower Bound} &  \text{LB from} & \text{LB source} \\
	\hline
	\text{OV}&\tO(n^2/(MB))&\textbf{Lem \ref{lem:ovUB}}&\tOmg(n^2/(MB))&\text{I/O OV Conj}&\text{By Def}\\
	\text{LCS}&\tO(n^2/(MB))&\cite{chowdhury2006cache} &\tOmg(n^2/(MB))& \text{I/O OV Conj}&\textbf{Lem \ref{lem:LCSOV}}\\
	\text{Edit Distance}&\tO(n^2/(MB))&\cite{ChowdhuryRa06}&\tOmg(n^2/(MB))& \text{I/O OV Conj}&\textbf{Lem \ref{lem:edOV}}\\
	\text{Sparse Diameter}&\tO(n^2/B)& \cite{apspSparse} &\tOmg(n^2/(MB))& \text{I/O OV Conj}&\textbf{Lem \ref{lem:diamOV}}\\
	\text{2 vs.\ 3  Sprs.\ Diameter}&\tO(n^2/(MB))&\textbf{Lem~\ref{thm:diamCache}} &\tOmg(n^2/(MB))& \text{I/O OV Conj}&\textbf{Lem \ref{lem:diamOV}}\\
	\text{Hitting Set}&\tO(n^2/(MB))&\textbf{Lem \ref{lem:hsUB}} &\tOmg(n^2/(MB))& \text{I/O HS Conj}&\text{By Def.}\\
	\text{Sparse Radius}&\tO(n^2/B)& \cite{apspSparse} &\tOmg(n^2/(MB))& \text{I/O HS Conj}&\textbf{Lem \ref{lem:radiHS}}\\
	\text{2 vs.\ 3 Sparse Radius}&\tO(n^2/(MB))&\textbf{Thm~\ref{thm:radiusCache}} &\tOmg(n^2/(MB))& \text{I/O HS Conj}&\textbf{Lem \ref{lem:radiHS}}\\
	\text{3 vs.\ 4 Sparse Radius}&\tO(n^2/B)
&\cite{apspSparse} &\tOmg(n^2/(MB))& \text{I/O HS Conj}&\textbf{Lem \ref{lem:radiHS}}\\
	\text{Sparse Median}&\tO(n^2/B)\footnote{Upper bound comes from APSP in general graphs.}
&\cite{apspSparse} &\tOmg(n^2/(MB))& \text{I/O HS Conj}&\textbf{Thm \ref{thm:3v4radius-to-median}}\\
\text{Sparse Median}&\tO(n^2/B)\footnote{Upper bound comes from APSP in general graphs.}
&\cite{apspSparse} &\tOmg(n^2/B)& \text{3 vs.\ 4 Sprs.\ Radius}&\textbf{Thm \ref{thm:3v4radius-to-median}}\\
	\text{3-SUM}&\tO(n^2/(MB))&\cite{baran2005subquadratic}&\tOmg(n^2/(MB))&\text{I/O 3-SUM Conj}&\text{By Def}\\
	\text{Conv. 3-SUM}&\tO(n^2/(MB))\footnote{The upper bound is a trivial extension of the 3-SUM upper bound for explanation see Lemma \ref{lem:0convReduction}.}&\cite{baran2005subquadratic} 
   &\tOmg(n^2/(MB))&\text{I/O 3-SUM Conj}&\textbf{Lem \ref{lem:con3sumred}}\\
	\text{0 Triangle}&\tO(n^3/(\sqrt{M}B))&\textbf{Lem \ref{lem:0triUB}}&\tOmg(n^2/(MB))&\text{I/O 3-SUM Conj}&\textbf{Thm \ref{thm:0tri3sum}}\\
	\text{APSP}&\tO(n^3/(\sqrt{M}B))&\cite{pagh2014inputMatrix} &\tOmg(n^3/(\sqrt{M}B))&\text{I/O APSP Conj}&\text{By Def}\\
	\text{Wiener Index}&\tO(n^3/(\sqrt{M}B))\footnote{This upper bound comes directly from applying the algorithm for APSP, solving APSP, and summing the results.} &\cite{pagh2014inputMatrix} &\tOmg(n^3/(\sqrt{M}B))&\text{I/O APSP Conj}&\textbf{Thm~\ref{thm:apsp-lb-wi}}\\
	\text{0 Triangle}&\tO(n^3/(\sqrt{M}B))&\textbf{Lem \ref{lem:0triUB}}&\tOmg(n^3/(\sqrt{M}B))&\text{I/O APSP Conj}&\textbf{Thm \ref{lem:zeroTraingleAPSP}}\\
	\text{$-$ Triangle}&\tO(n^3/(\sqrt{M}B))&\textbf{Thm \ref{thm:fastAlgorithms}}&\tOmg(n^3/(\sqrt{M}B))&\text{I/O APSP Conj}&\textbf{Thm \ref{thm:minTOapsp}}\\
	\text{(min,+) MM}&\tO(n^3/(\sqrt{M}B))&\cite{pagh2014inputMatrix}&\tOmg(n^3/(\sqrt{M}B))&\text{I/O APSP Conj}&\textbf{Thm \ref{thm:minTOapsp}} \\
	\hline
\end{array}
\]
\end{minipage}
\caption{Previous results and our results on upper and lower bounds of problems. Sparse (Sprs.)\ means that $|E| = O(|V|)$.}\label{table:long-list}
\end{table}

\subsubsection{Lower Bounds from Sparse Graph Problems}

In addition to the upper, lower bounds, and reductions presented in the I/O model for the standard RAM problems listed in Table~\ref{table:long-list}, we introduce novel upper, lower bounds, and reductions between graph problems. The reason for this focus is the fact that, more than in the RAM model, the I/O model has a history of particularly slow algorithms in graphs. In particular, sparse graph problems have very slow algorithms. We make novel reductions between sparse graph problems, many of which apply to the RAM model as well, such that solving one of these problems will solve many other variations of hard sparse graph problems in the I/O model. 

We provide reductions between problems that currently require $\Omega(n/\sqrt{B})$ time to solve. Thus, these problems specifically require linear time reductions. We show equivalence between the following set of problems for undirected/directed and unweighted graphs: $(s, t)$-shortest path, finding the girth through an edge, and finding the girth through a vertex.

We additionally generate a new reduction from sparse weighted Diameter to the sparse Wiener Index problem in Section \ref{sec:newLB}. This reduction holds in the RAM model as well as the I/O model.







\subsubsection{Hierarchy}
The time and space hierarchy theorems are fundamental results in computational complexity that tell us there are problems which can be solved on a deterministic Turing Machine with some bounded time or space, which cannot be solved on a deterministic Turing Machine which has access to less time or space. See, notably, the famous time and space hierarchies \cite{sipser2006introduction}. For some classes, for example BPP, no time hierarchy is known to exist (e.g., \cite{ryanCommentBPP,Barak02}). 

In Section~\ref{sec:Hierarchy}, we show similar separation hierarchies exist in the I/O model once again using the simulations between the RAM and I/O models and our complexity class $\CACHE_{M,B}(t\left(n\right))$ defined in Section~\ref{sec:PandPSPACE} as the set of problems solvable in $O\left(t\left(n\right)\right)$ cache misses.

\begin{theorem}
	If the memory used by the algorithm is referenceable by $O(B)$ words (i.e. the entire input can be brought into cache by bringing in at most $O(B)$ words), then
	$$\CACHE_{M,B}(t\left(n\right))  \subsetneq CACHE_{M,B}(\left( t\left(n\right)B \right)^{1+\epsilon}).$$
\end{theorem}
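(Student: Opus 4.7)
The plan is to mirror the classical Hartmanis--Stearns diagonalization argument, bridging through the RAM time hierarchy. The high-level strategy has three steps: (i) simulate an arbitrary $A \in \CACHE_{M,B}(t(n))$ in deterministic RAM time $\tO(t(n)B)$, (ii) invoke the standard RAM time hierarchy theorem to separate two sufficiently close deterministic RAM time classes, and (iii) lift the separation back to the I/O model by observing that any $T$-step RAM computation induces at most $T$ cache misses.

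Concretely, I would first build a universal I/O-to-RAM simulator $U$. On input $\langle A, x \rangle$, $U$ executes $A$ step by step, maintaining $A$'s $M$-word cache and main memory inside $U$'s own address space. Each of $A$'s $t(n)$ cache misses is handled by transferring the corresponding cache line, which costs $O(B)$ RAM operations (one per word of the line, plus constant bookkeeping to update the simulated cache metadata and to decrement the I/O budget). Under the referenceability hypothesis, address translation is $O(1)$ per reference and the configuration space $A$ can traverse between misses is bounded, so the full simulation runs in RAM time $O(t(n)B \cdot \mathrm{polylog}(n))$, which fits inside $\mathrm{TIME}((t(n)B)^{1+\epsilon})$ for any fixed $\epsilon>0$ and $n$ large enough. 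For the reverse direction, any RAM algorithm running in time $T$ issues at most $T$ memory references, hence incurs at most $T$ cache misses in the I/O model, giving $\mathrm{TIME}(T) \subseteq \CACHE_{M,B}(T)$. Chaining these containments with the classical RAM time hierarchy yields
\[
\CACHE_{M,B}(t(n)) \;\subseteq\; \mathrm{TIME}(\tO(t(n)B)) \;\subsetneq\; \mathrm{TIME}((t(n)B)^{1+\epsilon}) \;\subseteq\; \CACHE_{M,B}((t(n)B)^{1+\epsilon}),
\]
and the strict middle inclusion transfers directly to the desired separation.

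The main obstacle is the first inclusion: bounding the total RAM work of an I/O algorithm by $\tO(t(n)B)$. The worry is that an algorithm in $\CACHE_{M,B}$ can a priori perform arbitrarily many RAM operations between cache misses, since in-cache computation is free under I/O accounting; without a bound on this, universal simulation is hopeless. The hypothesis that memory is referenceable in $O(B)$ words is exactly what rules this out: it caps the size of the address space, and hence the configuration space touchable between misses, so that a terminating algorithm must make halting progress within a bounded number of RAM steps per miss. A secondary technical point is that $t$ should be time-constructible so the simulator can enforce the I/O budget; maintaining an $O(\log t)$-word counter in cache suffices and contributes only lower-order overhead easily absorbed by the $B^{\epsilon}$ slack.
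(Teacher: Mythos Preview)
Your chain of inclusions is the right shape, but the first link is broken. You claim that an I/O algorithm making $t(n)$ cache misses can be simulated by a plain RAM machine in $\tO(t(n)B)$ steps, and you justify this by saying the referenceability hypothesis ``caps the size of the address space, and hence the configuration space touchable between misses.'' That is not what the hypothesis does. ``Referenceable by $O(B)$ words'' means $s(n)=2^{O(wB)}$, i.e.\ an address fits in $O(B)$ words; it says nothing about the cache. Between two cache misses the algorithm computes purely inside the $M$-word cache, whose configuration space has size $2^{Mw}$. A terminating I/O algorithm may legitimately spend up to that many steps between consecutive misses, so your universal RAM simulator can only be bounded by $t(n)\cdot 2^{\Theta(Mw)}$ time, not $\tO(t(n)B)$. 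The separation you want then evaporates.

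The paper sidesteps exactly this obstacle by not attempting a plain RAM simulation. It equips the RAM with a $\MEM(Mw)$ oracle that, in one call, advances the simulated cache from one miss to the next; the RAM itself only shuttles the $B$-word cache line (plus an address of $O(B)$ words under the hypothesis, which is where the assumption is actually used, to make the overhead factor $\alpha=O(1)$). This gives $\CACHE_{M,B}(t(n))\subseteq RAM(t(n)B\alpha)^{\MEM(Mw\alpha)}$. One then invokes the \emph{relativized} RAM time hierarchy $RAM^O(T)\subsetneq RAM^O(T^{1+\epsilon})$ and closes the loop with $RAM(T)^{\MEM(Mw)}\subseteq \CACHE_{M,B}(T)$. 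The oracle is essential: it is what absorbs the unbounded in-cache work that your direct simulation cannot afford.
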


Notably, this theorem applies any time we use a polynomially size memory and our word size is $w = \Omega(\lg n)$, which is the standard case in the RAM model. 

 This separation is motivation for looking at complexity of specific problems and trying to understand what computational resources are necessary to solve them. 

\subsubsection{Improved TM Simulations of RAM Imply Better Algorithms}
In Section~\ref{sec:TMsimRAM}, we show that improved simulations of RAM machines by Turing Machines would imply better algorithms in the I/O model.
Specifically, if we can simulate RAM more efficiently with either multi-tape Turing machines or multi-dimensional Turing machines, then we can show that we can gain some cache locality and thus save by some factor of $B$, the cache line size.

\subsection{Organization}
In this paper, we argue that the lens of reductions offer a powerful way to view the I/O model. We show that reductions give novel upper and lower bounds. We also define complexity classes for the I/O model and prove a hierarchy theorem further motivating the analysis of the I/O model using fine-grained complexity.

We begin with faster algorithms obtained through reductions which are collected in Section~\ref{sec:Algorithms}.
Section \ref{sec:newUpper} develops such algorithms for small
diameter and radius.
Section \ref{sec:masterstheorem} develops the I/O Master Theorem, which is more broadly a useful tool for analyzing almost all cache-oblivious algorithms.
Section~\ref{sec:MMspeed} uses this theorem to show how all recent improvements to matrix multiplication's RAM running time also give efficient cache-oblivious algorithms.

One can get new lower bounds, by using the techniques from fine-grained complexity. Some fine-grained reductions from the RAM model also work in the I/O model, we show examples in Sections \ref{sec:3sum}, \ref{sec:APSPReductions}, \ref{sec:ov}. We also get new reductions that work in both the RAM and I/O model related to the Wiener Index problem in Section \ref{sec:newLB}. Some reductions in the RAM model do not work in the I/O model; thus, in Section \ref{sec:newLB}, we give novel reductions between several algorithms which take $O(n^2/B)$ and $O(n^2/(MB))$ time. We also get reductions that are meaningful in the I/O model which are not in the RAM model, notably, between problems whose fastest algorithms are $O(n/\sqrt{B})$ and $O(n)$, respectively, in Section \ref{sec:linear}.

One can also use reductions and simulation arguments to prove a hierarchy theorem for the I/O model, explained in Section~\ref{sec:Pcache}.

\section{Algorithms in the I/O Model}
\label{sec:Algorithms}
In this section, we discuss our improved algorithms, algorithm analysis tools, and how reductions generate algorithms. As is typical in the I/O model, we assume that all inputs are stored in disk and any computation done on the inputs are done in cache (after some or all of the inputs are brought into cache). Section~\ref{sec:newUpper} gives better algorithms for the 2 vs 3 Diameter problem and the 2 vs 3 Radius problem in the I/O model. Section~\ref{sec:MMspeed} gives improved algorithms for Matrix Multiplication in the I/O model. 

Self-reductions are commonly used for cache-oblivious algorithms, because dividing until the subproblems are arbitrarily small allows for the problems to always fit in cache. In the RAM model, self-reductions allow for easy analysis via the Master Theorem. Despite the amount of attention to analyzing self-reductions in the I/O model, no one has written down the I/O-based Master Theorem. In Section~\ref{sec:masterstheorem}, we describe and prove a version of the Master Theorem for the I/O model. We present a proof of this theorem to simplify our analysis and to help future papers avoid redoing this analysis. 

Finally, in Section~\ref{sec:Hy1} we explain how some reductions in the RAM model imply faster algorithms in the I/O model. 
\subsection{Algorithms for Sparse 2 vs 3 Radius and Diameter}
\label{sec:newUpper}
For both the radius and diameter problems on unweighted and undirected graphs, we can show distinguishing between a diameter or radius of 2 and a larger diameter or radius can be solved efficiently.  Our algorithm relies on the reinterpretation of the 2 vs 3 problem as a set-disjointness problem. Every node, $v$, has an associated set, $S_v$, its adjacency list union itself. If two nodes have disjoint sets $S_v$ and $S_u$, then they are distance greater than 2 from each other. 
Our algorithm for 2 vs 3 diameter and radius save an entire factor of $M$ from the previously best known running times. 

This is a similar idea to the reduction from 2 vs 3 diameter to OV and from 2 vs 3 radius to Hitting Set in the RAM model. These reductions were introduced by Abboud, Vassilevska-Williams and Wang~\cite{diameterReduction}. While these reductions exist in the RAM model, they don't result in faster algorithms for 2 vs 3 diameter and radius in the I/O model because they use a hashing step that results in BFS being run from $\frac{|E|}{\Delta}$ nodes for some parameter $\Delta$ that can be set that gives the orthogonal vectors instance a dimension of $\Delta^2$. In the I/O model, BFS is quite inefficient: we would need to set $\Delta \approxeq M $ to get an efficient algorithm using the approach in~\cite{diameterReduction}. But, with a dimension of $M^2$ the algorithm will run very slowly. Therefore, below we present a solution to the set disjointness problem with no hashing into a smaller dimension.

Below we present the cache-aware algorithm for distinguishing 2 vs 3 diameter in an undirected, unweighted graph which runs in $O\left(\frac{|E|^2}{MB}+ sort(|E|)\right)$ time where $sort(|E|)$ is the time to sort the elements in $|E|$ in the I/O model and $sort(|E|) = O\left(\frac{|E|\log{|E|}}{B}\right)$. We leave the proofs of cache-oblivious 2 vs 3 diameter and radius in Appendix \ref{sec:newUpperProofs}. For both 2 vs 3 diameter and radius we get running times of $O\left(\frac{n^2}{MB}+ \frac{|E|\lg(|E|)}{B}\right)$. The algorithms and proofs for cache-obliviousness are finicky, but fundamentally are self-reductions of the form $T(n) = 4T(n/2)+n/B$. We leave the proofs to the Appendix because uneven subdivision and tracking bits are not very illuminating to the overall scope of our paper. 



We will start by giving a non-oblivious algorithm which relies on a recursive self-reduction. We will then show how to make this oblivious. It is easier to explain the analysis and algorithm when we can rely on the size of cache, but we can avoid that and get an oblivious algorithm anyway. The previous best algorithm is from Arge, Meyer and Toma which achieves $O(|V|sort(|E|)) = O(\frac{|V| |E| \log{|E|}}{B})$~\cite{apspSparse}. We get an improvement over the previous algorithm in running time whenever $\frac{|E|}{|V|} = o(M)$.

\begin{theorem}\label{lem:2v3-sparse}
	Determining if the diameter of an undirected, unweighted graph is 1, 2 or greater than 2 can be done in $O\left(\frac{|E|^2}{MB}+ sort(|E|)\right)$ time in the I/O-model. 
\end{theorem}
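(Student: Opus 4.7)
The plan is to reduce the 2-vs-3 diameter decision to all-pairs set disjointness on the augmented neighborhoods $S_v := N(v)\cup\{v\}$, and then to solve the resulting set-disjointness instance by a recursive cache-aware algorithm matching the self-reduction $T(N)=4T(N/2)+O(N/B)$ mentioned in the introduction to this section. The reduction is direct: in an undirected unweighted graph, $d(u,v)\leq 2$ iff $S_u\cap S_v\neq\emptyset$, because adjacent vertices lie in each other's augmented neighborhood and vertices at distance exactly two share a common neighbor. Diameter exactly one is distinguished from two by comparing $|E|$ with $\binom{|V|}{2}$ in $O(|E|/B)$ I/Os, so the remaining task is to decide whether some pair of the $|V|$ sets, with total size $O(|E|)$, is disjoint.

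For the main step I would sort each adjacency list and concatenate them contiguously in a single array, precomputing a prefix-sum array of cumulative sizes; this preprocessing costs $O(sort(|E|))$ I/Os. I then define a bilinear subroutine $\mathrm{Cross}(\mathcal A,\mathcal B)$ on two contiguous subfamilies whose combined size is $N$: if $N\leq \Theta(M)$, pull both into cache in $O(M/B)$ I/Os and brute-force every disjointness check by merging sorted lists inside cache; otherwise, use the prefix sums to split each of $\mathcal A,\mathcal B$ at its cumulative-size median and recurse on the four subproblems $\mathrm{Cross}(\mathcal A_i,\mathcal B_j)$. The top-level all-pairs problem is handled by $\mathrm{All}(\mathcal S)=\mathrm{All}(\mathcal S_1)\vee\mathrm{All}(\mathcal S_2)\vee\mathrm{Cross}(\mathcal S_1,\mathcal S_2)$, with the obvious base case.

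Plugging into the I/O Master Theorem of Section~\ref{sec:masterstheorem}, the recurrence $T(N)=4T(N/2)+O(N/B)$ with base case $T(M)=O(M/B)$ solves to $T(N)=O(N^2/(MB))$: there are $(N/M)^2$ leaves each costing $O(M/B)$, and this leaf sum dominates the geometric sum of internal scanning work. The $\mathrm{All}$ recurrence $T_A(N)=2T_A(N/2)+T_C(N)+O(N/B)$ then also evaluates to $O(N^2/(MB))$ by the non-balanced case of the same theorem, since $T_C(N)$ already dominates. Setting $N=|E|$ and adding the preprocessing yields the claimed $O(|E|^2/(MB)+sort(|E|))$ bound.

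The main technical obstacle is handling ``heavy'' sets whose individual size alone exceeds the base-case threshold $\Theta(M)$, since such a set cannot be placed into a base case alongside its partner. There are at most $O(|E|/M)$ heavy sets, and pairs involving them can be processed by a streaming merge costing $O((|S_h|+|S_{h'}|)/B)$ I/Os, or by keeping a light group of total size $\Theta(M)$ resident while scanning $S_h$ past it; a simple charging argument then shows that heavy-vs-light and heavy-vs-heavy interactions together contribute only $O(|E|^2/(MB))$, so the overall bound is preserved. The rest of the bookkeeping---in particular finding each cumulative-size split in $O(1)$ per recursive call via the precomputed prefix-sum array, and decoding the final boolean answer---is routine and charged to the preprocessing and the linear per-level scan.
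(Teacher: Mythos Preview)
Your proposal is correct and takes essentially the same approach as the paper: reduce to all-pairs set disjointness on the augmented neighborhoods $S_v=N(v)\cup\{v\}$, separate out the heavy lists of size exceeding $\Theta(M)$ and handle heavy--heavy and heavy--light pairs by streaming merges, and process the remaining light lists in $M$-sized pieces. The only difference is cosmetic: the paper's main-text proof does the light--light case by explicit cache-aware blocking (pack short lists into subsections of size $\Theta(M)$ and compare every pair of subsections) rather than your 4-way recursion plus Master Theorem, but the two analyses are interchangeable, and indeed your recursive formulation is closer to the paper's cache-oblivious version in the appendix.
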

\begin{proof}
	Check if the number of edges is $n^2$, this can be done in $O\left(\frac{|E|}{B}\right)$ time. If it is equal to $n^2$, then return $1$. 
	
	Otherwise, 
	count the size of the adjacency lists of all nodes and record these sizes in disk in time $O\left(\frac{|E|}{B}\right)$. 
	Sort the nodes in each adjacency list by the length of each of their adjacency lists in $O(sort(|E|)) = O(\frac{|E|\log{|E|}}{B})$ time. 
	Give each node an extra indicator bit $alreadyClose$. 
	
	We split the nodes into those with adjacency lists of length less than or equal to $M/4$ and those with adjacency lists longer than $M/4$. We call these the \emph{short} and \emph{long} adjacency lists, respectively. Let $A_S$ be the ordered set created by concatenating short lists ordered by length from shortest to longest and $A_L$ be the ordered set created by concatenating long lists also ordered by length.

	Sub-divide $A_S$ into subsections of length at least $M/4$ and less than $M/2$ in the following way. The $i$-th (for $i \in \left[0, \frac{8|E|}{M}\right]$) subsection contains some or all of the nodes whose adjacency lists are represented in the range of nodes from index $\frac{iM}{2}$ to $\frac{(i+2)M}{2}$ or some subset of the nodes in $A_S\big[\frac{iM}{2}: \frac{(i+2)M}{2}\big]$ (recall that $A_S$ and $A_L$ are lists of \emph{concatenated adjacency lists}--thus, any index into either list returns a node). 
If an adjacency list begins at $A_S\big[\frac{j_1M}{2}\big]$ where $i -1 \leq j_1 < i$ and ends at $A_S\big[\frac{l_1M}{2}\big]$ where $i-1 \leq l_1 < i + 1$ ignore that adjacency list and do not include it in subsection $i$. If an adjacency list begins between  $A_S\big[\frac{j_2M}{2}\big]$ where $i \leq j_2 < i+1$ and $A_S\big[\frac{l_2M}{2}\big]$ where $i \leq l_2 < i+2$ then include it in subsection $i$. A given subsection can have length at most $\frac{M}{4}+\frac{M}{4} \leq \frac{M}{2}$.
	Then, create a copy of $A_S$ called $C_S$ where $C_S[i]$ contains the subsection of index $i$. We create a copy for convenience since we want to maintain the original $A_S$ while modifying $C_S$ and copying is cheap here. Furthermore, $C_S$ is different from $A_S$ in the sense that $C_S$ is an array of arrays (since it maintains the subsections we created from $A_S$ using the procedure above).
	However, if one wants to be more efficient, one can perform the rest of the algorithm more carefully and can directly modify $A_S$ instead of $C_S$.
Also note there are at most $\frac{(2|E|)4}{M} = \frac{8|E|}{M}$ subsections in $C_S$.

	Suppose there are $k$ long adjacency lists in $A_L$. Let $A^i_L$ denote the section of $A_L$ representing the $i$-th adjacency list where $i\in [1,k]$. Then, if $A^i_L$ is longer than $M/4$, subdivide it into pieces $A^i_L[j]$ where $j \in \Big[1, \Big\lceil\frac{|A_L[i]|4}{M} \Big\rceil\Big]$.
	Create a copy of  $A_L$ called $C_L$ where $C_L[i]$ contains the subsection of $A_L$ with index $i$ (again this copy is an array of arrays). Note there are at most $\frac{8|E|}{M}$ subdivisions in total in $C_L$. 
	
	We would like to check if any of the nodes with long adjacency lists are far from other nodes. Ideally we would just run BFS from each node, but BFS in sparse graphs runs slowly (by a multiplicative factor of $\sqrt{B}$) in the I/O model. So, we will instead use a method of scanning through these lists. 
	
	First we will check if any two long lists are far from each other.  
	For $i\in[1,k]$ we set  $v_i$ to be the node associated with adjacency list $C_L[i]$. For every $j\in [1,k]$ we scan through the  adjacency lists $C_L[i]$ and $C_L[j]$ in sorted order progressing simultaneously in both lists to see if the intersection of the sets of nodes (each list also includes $v_i$ and $v_j$, respectively) in these two adjacency lists is non-empty. If all are close (intersection non-empty), we move on. If any are far (intersection empty), we return that the diameter is $>2$. This takes time $\sum_{i\in[1,k]}\sum_{j\in[1,k]}\frac{\left(|C_L[i]|+|C_L[j]|\right)}{B} =\frac{k|E|}{B}$ and $k\leq \min\left(\frac{8|E|}{M}, |V|\right)$. Therefore, the overall running time of this procedure is $O\left(\min\left(\frac{|E|^2}{MB}, \frac{|V||E|}{B}\right)\right)$.
	
	Next, we check if any long lists are far from short lists. For every $i\in[1,k]$ we will check if the associated node with the long list, call it $v_i$, is far from short nodes. For every $j\in \big[1, \frac{8|E|}{M}\big]$, we set each $v_j$'s $alreadyClose=False$ for every node, $v_j$ for $j \in \left[1, \frac{8|E|}{M}\right]$. For every subsection of size $M/4$ within $C_L[i]$ we check pairwise with every node in $C_S[j]$ to see if there are any overlaps in the adjacency lists. If there is an overlap with a node in $C_S[j]$ set $v_j$'s $alreadyClose=True$. After we are done checking with all subsections of $C_L[i]$, if all nodes $v_j$ represented by adjacency lists in $C_S$ have $alreadyClose==True$ we move on, but if any of them have $alreadyClose==False$ we return that the diameter is $>2$. This takes time $\sum_{i\in[1,k]} (\text{number subsections in } A^i_L)(\text{number subsections in } C_S)\left(\frac{M}{B}\right) = O\left(\frac{(|E|/M)^2 M}{B}\right) = O\left(\frac{|E|^2}{MB}\right)$.
	
	Now that we have verified all of the long adjacency lists we need only compare short lists to short lists. We do this by bringing in every pair for $i,j\in[0,\frac{8|E|}{M}]$ of $C_S[i]$ and $C_S[j]$ and checking if any two nodes are far from each other. This takes time $O\left(\frac{(|E|/M)^2M}{B}\right)= O\left(\frac{|E|^2}{MB}\right)$. 
	
	By summing the running time from the different parts of this algorithm, we obtain a total of $O\left(\frac{|E|^2}{MB} + sort(|E|)\right)$ I/Os in differentiating if the diameter of an undirected, unweighted graph is $1$, $2$, or greater than $2$. 
\end{proof}

Now that we have this framework we can give a cache-oblivious version of the algorithm. The proofs of cache-oblivious 2 vs 3 Diameter and 2 vs 3 Radius are included in Appendix \ref{sec:newUpperProofs}.

\subsection{Master Theorem in the I/O Model}
\label{sec:masterstheorem}
In this section, we formally define our Master Theorem framework for the I/O model and provide bounds on the I/O complexity of problems whose I/O complexity fits the specifications of our framework. In addition, we also describe some example uses of our Master Theorem for the I/O model.  

The Master Theorem recurrence in the RAM model looks like $T(n)= a T(n/b)+f(n)$. We will use a similar recurrence but all functions will now be defined over $n$, $M$ and $B$.  
The I/O-Master Theorem function $f(n, M, B)$ includes all costs that are incurred in each layer of the recursive call. This includes the I/O complexity of reading in an input, processing the input, processing the output and writing out the output. 
In this section, we assume that $f(n, M, B)$ is a monotonically increasing function in terms of $n$ in order to apply our Master Theorem framework. What this means is that for any fixed $M$ and $B$ we want the number of I/Os to increase or stay the same as $n$ increases.  Given that $f(n, M, B)$ specifies the I/O complexity of reading in the inputs and writing out the outputs, we prove the following version of the Master Theorem in the I/O model.

\begin{theorem}[I/O Master Theorem]\label{thm:io-master-theorem}
	If $f(n,M,B)$ contains the cost of reading in the input (for each subproblem) and writing output (after computation of each subproblem), then the following holds. 
	Given a recurrence of $T(n,M,B) = \alpha T(n/\beta,M,B) + f(n,M,B)$, where $\alpha \geq 1$ and $\beta > 1$ are constants, and a base case of $T(n/x, M, B) = t(x, M, B)$ (where $t(x, M, B) = \Omega(1)$) for some $x \leq n$ and some function $t(x, M, B)$. Let $A(n,M,B) = \left(\frac{n}{x}\right)^{\log_{\beta}(\alpha)}t(x, M, B)$, $C(n, M, B) = \left(\frac{n}{x}\right)^{\log_{\beta}(\alpha) + \epsilon_1}t(x, M, B)$ and $D(n, M, B) = \left(\frac{n}{x}\right)^{\log_{\beta}(\alpha) - \epsilon_2}t(x, M, B)$ for some $\epsilon_1, \epsilon_2 > 0$, then we get the following cases:
	\begin{itemize}
		\item [Case 1:] If $f(n,M,B)  = O\left(D(n,M,B)\right)$ then $T(n) = \Theta\left(A(n,M,B) + \frac{n}{B}\right)$.
		\item [Case 2:] If $C(n,M,B)  = O\left(f(n,M,B)\right)$ and $\alpha T(n/\beta, M, B) \leq c f(n, M, B)$ for some constant $c< 1$ and all sufficiently large $n$, then $T(n) = \Theta\left(f(n,M,B) + \frac{n}{B}\right)$.
		\item [Case 3:] If $f(n, M, B) = \Theta(A(n, M, B))$, then $T(n, M, B) = \Theta(f(n, M, B)\log{\left(\frac{n}{x}\right)} + \frac{n}{B})$. 
		\item [Case 4:] If $f(n, M, B)$ has a constant number of terms, $f(n, M, B) =\Omega\left(\frac{n}{B}\right)$, and none of the previous cases are satisfied, then $T(n, M, B) = O\left(A(n,M,B)+f(n,M,B)\left(\frac{n}{x}\right)^{\log_{\beta}\alpha} + \frac{n}{B}\right)$ (note that this includes if $A$ and $f$ are incomparable), with tighter upper bounds provided in our proof (specifically Eqns.~\ref{eq:greater-one},~\ref{eq:equal-one}, and~\ref{eq:less-one}) depending on characteristics of the actual function, $f(n, M, B)$.
	\end{itemize}
\end{theorem}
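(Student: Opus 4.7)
The plan is to mimic the classical Master Theorem proof by unrolling the recurrence into a recursion tree and summing contributions level by level, while carefully tracking how the base case $t(x,M,B)$ and the scan term $n/B$ enter the bound. First I would fix the recursion depth at $d = \log_{\beta}(n/x)$, so that level $i$ (for $0 \leq i \leq d$) contains exactly $\alpha^i$ subproblems, each of size $n/\beta^i$. The cost contributed by the internal work at level $i$ is $\alpha^i f(n/\beta^i, M, B)$, and the cost contributed by the $\alpha^d = (n/x)^{\log_{\beta}\alpha}$ leaves is $(n/x)^{\log_{\beta}\alpha} t(x,M,B) = A(n,M,B)$. Thus
\begin{equation*}
T(n,M,B) \;=\; A(n,M,B) \;+\; \sum_{i=0}^{d-1} \alpha^{i}\, f\!\left(\tfrac{n}{\beta^{i}},\,M,\,B\right).
\end{equation*}
The additive $n/B$ term appears because $f$ is assumed to include the cost of reading the input of each subproblem, and at the root the input has size $n$, so $f(n,M,B) = \Omega(n/B)$; any upper bound will therefore absorb this term.

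Next I would split into the four cases exactly as in the classical argument. For Case 1, I would use $f(n/\beta^i,M,B) = O(D(n/\beta^i,M,B))$, expand $D$ using its definition, and verify that the geometric series in $i$ is dominated by its last term, which up to constants equals $A(n,M,B)$; this gives $T = \Theta(A + n/B)$. For Case 2, I would invoke the regularity assumption $\alpha f(n/\beta,M,B) \leq c\, f(n,M,B)$ directly, which telescopes the sum into a geometric series controlled by the root term $f(n,M,B)$, and then check that $C(n,M,B) = O(f)$ keeps $A$ absorbed into $f$. Case 3 is the critical balanced case: substituting $f = \Theta(A)$ into the sum makes every level contribute $\Theta(f(n,M,B))$, producing the $\log(n/x)$ factor. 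Case 4 is the catch-all where neither comparability nor regularity holds; here I would only claim the termwise bound, i.e.\ bound $\sum_i \alpha^i f(n/\beta^i,M,B)$ by $(n/x)^{\log_\beta \alpha} f(n,M,B)$ using monotonicity of $f$ in $n$, and state sharper sub-bounds (Eqns.~\ref{eq:greater-one}, \ref{eq:equal-one}, \ref{eq:less-one}) by splitting according to whether $\alpha f(n/\beta)/f(n)$ is greater than, equal to, or less than $1$ at each level.

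The main obstacle, and the point that differs genuinely from the RAM-model proof, is handling the $n/B$ term cleanly and justifying that the base case $t(x,M,B)$ may itself depend on $M$ and $B$ in a nontrivial way (for example $t(x,M,B)$ can be as small as $O(x/B)$ when a subproblem fits in cache, which is precisely why self-reductions save the extra factor in the I/O setting). I would therefore be explicit that the leaf contribution is $(n/x)^{\log_\beta \alpha}\cdot t(x,M,B)$ rather than a constant, and show that when $x$ is chosen so that the base case fits in cache (say $x = \Theta(M)$), the bounds specialize to the known I/O running times of APSP, LCS, OV, and the other problems listed in Table~\ref{table:long-list}. A secondary subtlety is the lower bound matching in Cases 1--3: for $\Theta$ (rather than $O$) bounds, I must observe that the recurrence itself forces $T \geq A$ always (from the leaves) and $T \geq f(n,M,B)$ (from the root call), both of which are trivial, together with $T = \Omega(n/B)$ from the input-read cost encoded in $f$.
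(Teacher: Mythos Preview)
Your proposal is correct and follows essentially the same approach as the paper: unroll the recursion tree to depth $\log_\beta(n/x)$, identify the leaf contribution as $A(n,M,B)=(n/x)^{\log_\beta\alpha}t(x,M,B)$, sum the per-level work $\sum_i \alpha^i f(n/\beta^i,M,B)$, and then split into the four cases via geometric-series comparisons, with the additive $n/B$ coming from the input-read cost built into $f$. The paper's only notational difference is in Case~4, where it introduces an auxiliary quantity $f'(\beta)$ (essentially the factor by which $f$ shrinks when $n\mapsto n/\beta$) to phrase the three sub-cases; your formulation in terms of the level-ratio $\alpha f(n/\beta)/f(n)$ is equivalent and arguably cleaner.
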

\begin{proof}
First, note that, given our condition on $f$, we do not have to worry about clever maintaining of previous cache computations since $f(n, M, B)$ includes the cost of reading in input and writing out output. 

We first show that the final cost of the recurrence is

\begin{align}\label{eq:final-cost}
T(n, M, B) = F(n, M, B) + \Theta\left(\frac{n}{B}\right)
\end{align}

where the recursion cost of $F(n)$ is given as below

\begin{align}\label{eq:recursion-cost}
F(n, M, B) = \Theta(A(n, M, B)) + \sum_{j = 0}^{\log_{\beta}{(\frac{n}{x})} -1} \alpha^j f(n/\beta^j, M, B).
\end{align}

We consider the recursion tree of the recursion defined in Eq.~\ref{eq:recursion-cost}. The root is at level $0$. At level $j$ of the tree, there exists $\alpha^j$ nodes each of which costs $f(n/\beta^j, M, B)$ I/Os to compute. The leaves of the tree each cost $f(x, M, B) = t(x, M, B)$ time to compute. There exists $\Theta((\frac{n}{x})^{\log_{\beta} \alpha})$ leaves, resulting in a total cost of $\Theta((\frac{n}{x})^{\log_{\beta}{\alpha}}t(x, M, B))$ I/Os and the number of I/Os needed in the remaining nodes of the tree is $\sum_{j = 0}^{\log_{\beta}{(\frac{n}{x})} - 1}\alpha^j f(n/\beta^j, M, B)$. Summing these costs gives the recursion stated in Eq~\ref{eq:recursion-cost}. Finally, to obtain our final I/O cost given in Eq.~\ref{eq:final-cost}, we know that reading in the input incurs a fixed cost of $\Theta\left(\frac{n}{B}\right)$ I/Os regardless of the efficiency of the rest of the algorithm and the size of cache. 

Let $g(n, M, B) = \sum_{j = 0}^{\log_{\beta}{(\frac{n}{x})} - 1} \alpha^j f(n/\beta^j, M, B)$. We now bound $g(n, M, B)$:

\begin{enumerate}
\item [Case 1:] We prove $g(n, M, B) = O\left(\left(\frac{n}{x})^{\log_{\beta}{\alpha}}t(x, M, B\right)\right)$. Since we know that\\
$f(n, M, B) = O(\left(\frac{n}{x}\right)^{\log_{\beta}{\alpha} - \epsilon}t(x, M, B))$ for some $\epsilon > 0$ (and all $\epsilon_1 \leq \epsilon$), then we know that $f(n/\beta^j, M, B) = O\left(\left(\frac{n}{x\beta^j}\right)^{\log_{\beta}{\alpha}-\epsilon}t(x, M, B)\right)$. This then yields

\begin{align}\label{eq:case-1-gn}
g(n, M, B) = O\left(\sum_{j=0}^{\log_{\beta}{(\frac{n}{x})}-1} \alpha^j\left(\frac{n}{x\beta^j}\right)^{\log_{\beta}{\alpha}-\epsilon}t(x, M, B)\right)\\ = O\left(\left(\frac{n}{x}\right)^{\log_{\beta}{\alpha} - \epsilon_1} t(x, M, B)\left(\frac{\left(\frac{n}{x}\right)^{\epsilon_1}-1}{\beta^{\epsilon_1} - 1}\right)\right) = O\left(\left(\frac{n}{x}\right)^{\log_{\beta}{\alpha}}t(x, M, B)\right)
\end{align}

where we choose an arbitrarily small $\epsilon = \epsilon_1$ such that $\beta^{\epsilon_1} < 1$. 

\item [Case 2:] We prove $g(n, M, B) = \Theta(f(n, M, B))$. For sufficiently large $n$, we know that $\alpha(T(n/\beta, M, B)) \leq cf(n, M, B)$ for some constant $c< 1$. Let $c = c_0$ and $n'$ be the smallest constants such that this is satsified for all values of $M = [1, n^d]$ for constant $d$, $B = [L_B, n]$ where $L_B$ is the smallest value of $B$ that satisfies the algorithm's tall-cache assumption, and $n \geq n'$. 
Let $j = u$ be the largest exponent of $\beta^j$ such that $n/\beta^u \geq n'$. Therefore, we rewrite the equation for $g(n, M, B)$ in this case to be

\begin{align*}
g(n, M, B) &= \sum_{j=0}^{u-1} \alpha^j f\left(\frac{n}{\beta^j}, M, B\right) +\sum_{j=u}^{\log_{\beta}{\alpha} - 1} \alpha^j f\left(\frac{n}{\beta^j}, M, B\right) \\
&= \sum_{j=u}^{\log_{\beta}\alpha-1} \left(\frac{c_0}{\alpha}\right)^jf(n, M, B) + O(1)\\ 
&\leq f(n, M, B) \sum_{j = 0}^{\infty} \left(\frac{c_0}{\alpha}\right)^j + O(1) \\
&\leq f(n, M, B) \left(\frac{1}{1-c_0}\right) + O(1) = O(f(n, M, B)). 
\end{align*}

Trivially (by the case when $j = 0$), we know that $g(n, M, B) = \Omega(f(n, M, B))$. 

\item [Case 3:] We prove $g(n, M, B) = \Theta\left(\left(\frac{n}{x}\right)^{\log_{\beta}{\alpha}} t(x, M, B)\log{\left(\frac{n}{x}\right)}\right)$. Since we know that \\
$f(n) = \Theta\left(\left(\frac{n}{x}\right)^{\log_{\beta}{\alpha}}t(x, M, B)\right)$, we then also know that 
$$f(n/\beta^j, M, B) = \Theta\left(\left(\frac{n}{x\beta^j}\right)^{\log_{\beta}{\alpha}}t(x, M, B)\right)$$
 and thus obtain the following $g(n, M, B)$ in this case

\begin{align*}
g(n, M, B) = \Theta\left(\sum_{j=0}^{\log_{\beta}\left(\frac{n}{x}\right)-1} \alpha^j \left(\frac{n}{x\beta^j}\right)^{\log_{\beta}{\alpha}}t(x, M, B)\right) = \Theta\left(\left(\frac{n}{x}\right)^{\log_{\beta}{\alpha}}t(x, M, B) \log_{\beta}{\left(\frac{n}{x}\right)}\right).
\end{align*} 

\end{enumerate}

We now prove the first three cases using our bounds on $g(n, M, B)$ above.

\begin{enumerate}
\item [Case 1:] $F(n, M, B) = \Theta(A(n, M, B)) + O\left(\left(\frac{n}{x}\right)^{\log_{\beta}{\alpha}}t(x, M, B)\right) = \Theta(A(n, M, B))$\\ $T(n, M, B) = \Theta\left(A(n, M, B) + \frac{n}{B}\right)$
\item [Case 2:] $F(n, M, B) = \Theta(A(n, M, B)) + \Theta(f(n, M, B)) = \Theta(f(n, M, B))$\\ $T(n, M, B) = \Theta\left(f(n, M, B) + \frac{n}{B}\right)$
\item [Case 3:] $F(n, M, B) = \Theta(A(n, M, B)) + \Theta\left(\left(\frac{n}{x}\right)^{\log_{\beta}{\alpha}}t(x, M, B) \log_{\beta}{\left(\frac{n}{x}\right)}\right) = \Theta\left(A(n, M, B) \log{\left(\frac{n}{x}\right)}\right)$\\ $T(n, M, B) = \Theta\left(A(n, M, B)\log\left(\frac{n}{x}\right) + \frac{n}{B} \right)$
\item [Case 4:] Given our assumption that $f(n, M, B) = \Omega(n/B)$, we first need to show an upper bound on $g(n, M, B)$ (it is trivially $\Omega(f(n, M, B))$). 

By using Eq.~\ref{eq:case-1-gn}, we can obtain the following upper bound on $g(n, M, B)$ when $\frac{\alpha}{f'(\beta)} > 1$, where $f'(\beta)$ is the largest expression containing $\beta$ in the dominator produced by $f(n/\beta, M, B)$ or $1$ if $f'(\beta) = 0$. For example, if $f(n/\beta, M, B) = (n/\beta)^2$, then $f'(\beta) = \beta^2$ or if $f(n/\beta, M, B) = (\beta n)^2 + \frac{1}{\beta}$, then $f'(\beta) = 1/\beta$.

\begin{align}\label{eq:top}
g(n, M, B) = \sum_{j = 0}^{\log_{\beta}{\left(\frac{n}{x}\right)-1}} \alpha^j f(n/\beta^j, M, B) &\leq \left(\frac{\frac{\alpha}{f'(\beta)}}{\frac{\alpha}{f'(\beta)}-1}\right)  \left(\frac{n}{x}\right)^{\log_{\beta}{\left(\frac{\alpha}{f'(\beta)}\right)}} f(n, M, B) \\ &= O\left(f(n, M, B) \left(\frac{n}{x}\right)^{\log_{\beta}{\left(\frac{\alpha}{f'(\beta)}\right)}}\right).
\end{align}

Thus, this gives us the final expression for $T(n)$ to be

\begin{align}
T(n) &= \Theta(A(n, M, B)) +  O\left(f(n, M, B) \left(\frac{n}{x}\right)^{\log_{\beta}{\left(\frac{\alpha}{f'(\beta)}\right)}}\right) + \Theta\left(\frac{n}{B}\right)\\ 
&= O\left(A(n, M, B) +  f(n, M, B) \left(\frac{n}{x}\right)^{\log_{\beta}{\left(\frac{\alpha}{f'(\beta)}\right)}} + f(n, M, B) + \frac{n}{B}\right).\label{eq:greater-one}
\end{align} 

If $\frac{\alpha}{f'(\beta)} = 1$, then 

\begin{align}\label{eq:equal-one}
T(n) = \Theta\left(A(n, M, B) + f(n, M, B)\log\left(\frac{n}{x}\right) + f(n, M, B) + \frac{n}{B}\right).
\end{align}

Finally, if $\frac{\alpha}{f'(\beta)}<1$, then we obtain the following expression for $g(n, M, B)$:

\begin{align*}
g(n, M, B) = \sum_{j = 0}^{\log_{\beta}\left(\frac{n}{x}\right) - 1} \alpha^j f(n/\beta^j, M, B) \leq \left(\frac{1}{1 - \frac{\alpha}{f'(\beta)}}\right) f(n, M, B) .
\end{align*} 

Trivially, $g(n, M, B) = \Omega(f(n, M, B))$. Therefore, we know that the final expression for $T(n)$ to be

\begin{align}\label{eq:less-one}
T(n) = \Theta\left(A(n, M, B) + f(n, M, B) + \frac{n}{B}\right). 
\end{align}
\end{enumerate}
\end{proof}

Note that we do not present the proofs for when we need to take $\floor{\log_{\beta}\left(\frac{n}{x}\right)-1}$ or $\ceil{\log_{\beta}\left(\frac{n}{x}\right)-1}$ since the proofs are nearly identical to that presented for the original Master Theorem~\cite{CLRS09}.

\paragraph{One-Layer Self-Reductions}

We state a relationship between one-layer self-reductions and our Master Theorem framework above. We refer to the process of solving a problem by reducing to several problems of smaller size each of which can be solved in cache and one recursive call is necessary as a \emph{one-layer self-reduction}. Suppose the runtime of an algorithm in the RAM model is $n^{\log_{\beta}{\alpha}}$, then by dividing the problems into $\frac{n^{\log_{\beta}{\alpha}}}{M}$ subproblems each of which takes $M/B$ I/Os to process, the I/O complexity of the algorithm is $\Theta\left(\frac{n^{\log_{\beta}{\alpha}}}{M^{\log_{\beta}\alpha -1}B}\right)$ which is the same result we obtain via our Master Theorem framwork above when $t(x, M, B) = M/B$. 

We now prove formally the theorem related to one-layer self-reductions. 

\begin{theorem}
Let $P$ be a problem of size $n$ which can be reduced to $g(n/M)$ sub-problems, each of which takes $T(M, M, B)$ I/Os to process. 
The runtime of such a one-layer self reduction for the problem $P$ is $T(n, M, B) = g(n/M) T(M, M, B) + f(n, M, B)$ where $f(n, M, B) = \Omega\left(\frac{n}{B}\right)$. 
\end{theorem}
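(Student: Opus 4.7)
The plan is to directly account for all I/Os performed by a one-layer self-reduction, separating them into disjoint phases, and then sum them to obtain the stated recurrence. Since there is only a single level of branching down to base-case subproblems of size $M$, no recursion tree analysis is needed and the bookkeeping is essentially additive.

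First, I would characterize the reduction phase, whose cost is captured by $f(n,M,B)$. This phase is responsible for reading the input of size $n$, partitioning it into the $g(n/M)$ subproblem descriptions, and combining the subproblem outputs before writing the final answer. Because the algorithm must load every word of the size-$n$ input at least once and each cache miss brings in at most $B$ words, this phase alone must cost $\Omega(n/B)$ I/Os. Adopting the same convention as in Theorem~\ref{thm:io-master-theorem}, where $f$ is required to include all input-reading and output-writing at the outer level, this immediately yields the required property $f(n,M,B) = \Omega(n/B)$.

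Second, I would bound the subproblem phase. By hypothesis there are $g(n/M)$ subproblems, each of input size $M$, and each is handled using $T(M,M,B)$ I/Os. Since each subproblem occupies the full cache, in the worst case the cache state remaining after solving one subproblem carries no exploitable information for the next---bringing in the next subproblem's inputs may fully evict the previous one. Thus the subproblem costs add independently, contributing $g(n/M)\cdot T(M,M,B)$ to the total.

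Summing the two disjoint phases yields $T(n,M,B) = g(n/M)\,T(M,M,B) + f(n,M,B)$, as claimed. The main subtlety, and really the only place where care is needed, is the accounting convention for $f$: outer-level input/output transfers must be charged to $f$ rather than rolled into the per-subproblem cost $T(M,M,B)$, so that boundary transfers into and out of cache are not double-counted when subproblems are invoked. Once this convention is fixed (matching the one used in Theorem~\ref{thm:io-master-theorem}) the decomposition is clean and the theorem follows immediately.
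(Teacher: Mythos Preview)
Your proposal is correct and follows essentially the same approach as the paper: decompose the total I/O cost into the aggregate cost of the $g(n/M)$ size-$M$ subproblems plus an overhead term $f(n,M,B)$ that absorbs input reading (hence the $\Omega(n/B)$ bound) and any other per-level work. If anything, your write-up is more careful than the paper's about the accounting convention for $f$ and about why the subproblem costs add independently.
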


\begin{proof}
The number of I/Os needed to process a subproblem of size $M$ is given by $T(M, M, B)$. If $g(n/M)$ is the total number of such subproblems of size $M$ that need to be processed, then the total number of I/Os needed to process all subproblems is $O(g(n/M)T(M, M, B))$. Then, to read in the input of size $n$ requires $\Omega\left(\frac{n}{B}\right)$ I/Os. Any other I/Os incurred while processing the $g(n/M)$ would result in a total of $f(n, M, B)$ I/Os.
\end{proof}


%
%
%
%
%

\subsection{Faster I/O Matrix Multiplication via I/O Master Theorem}
\label{sec:MMspeed}
As we mentioned above, any I/O algorithm that has a self-reduction to one of the forms stated in Section~\ref{sec:masterstheorem}. Using our I/O Master Theorem, we can show a comparable I/O matrix multiplication bound to the matrix multiplication bound based on finding the rank of the Matrix Multiplication Tensor in the RAM model. 

Recent improvements to matrix multiplication's running time also imply faster cache oblivious algorithms. Recent work has improved the bounds on $\omega$ where $\omega$ is the constant such that for any $0<\epsilon<1$ there is an algorithm for $n$ by $n$ matrix multiplication that runs in $n^{\omega+\epsilon}$. The most recent improvements on these bounds have been achieved by bounding the rank of the Matrix Multiplication Tensor~\cite{vstoc12,legall}. 

The I/O literature does not seem to have kept pace with these improvements. While previous work discusses the efficiency of naive matrix multiplication and Strassen matrix multiplication, it does not discuss the further improvements that have been generated. 

We note in this section that the modern techniques to improve matrix multiplication running time, those of bounding the rank of the Matrix Multiplication Tensor, all imply cache-efficient algorithms.

\begin{theorem}[Matrix Multiplication I/O Complexity\cite{COPPERSMITH1990251}]\label{thm:mm-self-reduction}
	Let $T_{MA}(n) = O\left(\frac{n^2}{B}\right)$ be the time it takes to do matrix addition on matrices of size $n$ by $n$. 
If the matrix multiplication tensor's rank is bounded such that the RAM model running time is $n^{\omega'+\epsilon}$ for any $0<\epsilon<1$ then the following self-reduction exists for some constant $\alpha$,

$$T_{MM}(n) = O\left(\alpha^{\omega '+\epsilon}T_{MM}\left(\frac{n}{\alpha}\right) +  \alpha^{\omega '}T_{MA}\left(\frac{n}{\alpha}\right)\right).$$ 
\end{theorem}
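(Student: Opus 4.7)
The plan is to exploit the fact that matrix multiplication algorithms obtained by bounding the rank of the matrix multiplication tensor are fundamentally recursive by construction: a rank-$R$ decomposition of the $\alpha \times \alpha \times \alpha$ matrix multiplication tensor \emph{is} a recipe for reducing $n\times n$ multiplication to $R$ multiplications on $(n/\alpha) \times (n/\alpha)$ blocks plus a bounded number of block additions, and this block-substitution argument works verbatim in the I/O model.

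First I would recall the bilinear formalism. A rank-$R$ decomposition of the matrix multiplication tensor over $\alpha \times \alpha$ matrices is an identity
\[
C_{ik} \;=\; \sum_{r=1}^{R} w_{ik}^{(r)} \left(\sum_{i',j'} u_{i'j'}^{(r)} A_{i'j'}\right)\!\left(\sum_{j',k'} v_{j'k'}^{(r)} B_{j'k'}\right)
\]
with scalar constants $u^{(r)}, v^{(r)}, w^{(r)}$ independent of $A,B,n$. By the Sch{\"o}nhage $\tau$-theorem (or directly by the definition of $\omega'$ as the infimum over which tensor-rank bounds are known), for every $\epsilon>0$ there is some fixed $\alpha$ with $R \le \alpha^{\omega'+\epsilon}$, and the number of nonzero $u^{(r)}, v^{(r)}, w^{(r)}$ entries is bounded by an algorithm-dependent constant $c_\alpha$, which is at most $O(\alpha^{\omega'})$ for the algorithms in question (it is the count of scalar additions performed outside the $R$ scalar multiplications).

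Next I would substitute $(n/\alpha) \times (n/\alpha)$ blocks for the scalar entries. Because matrix multiplication is bilinear in its arguments, partitioning $A$ and $B$ into $\alpha \times \alpha$ block matrices with blocks of size $(n/\alpha) \times (n/\alpha)$ preserves the identity above with each $A_{i'j'}, B_{j'k'}, C_{ik}$ reinterpreted as a block. Thus the algorithm performs exactly $R \le \alpha^{\omega'+\epsilon}$ block multiplications, each costing $T_{MM}(n/\alpha)$, and at most $c_\alpha = O(\alpha^{\omega'})$ block additions/subtractions (to assemble the $R$ inputs to the subproblems and the $\alpha^2$ output blocks from the $R$ subproblem results), each costing $T_{MA}(n/\alpha)$. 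Summing these costs yields the claimed recurrence.

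The only remaining subtlety, and the main thing to verify for the I/O model specifically, is that the block additions really do cost only $T_{MA}(n/\alpha)$ I/Os per operation: this requires that each $(n/\alpha) \times (n/\alpha)$ block be accessible as a contiguous memory region so that block addition reduces to the standard scanning cost $O((n/\alpha)^2/B)$. I would handle this by storing the matrices in a recursive block layout (equivalently, Z-order / Morton order) at the top of the algorithm; laying out the input in this order is a one-time reshuffling costing $O(T_{MA}(n))$ I/Os, and this layout is preserved by the recursion so it does not reappear in the recurrence. With this layout in place, all block reads, writes, additions, and scalar-multiply-and-adds used to form the $R$ linear combinations run in the claimed $T_{MA}(n/\alpha)$ I/Os each, completing the self-reduction.
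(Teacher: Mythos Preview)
The paper does not supply its own proof of this theorem; it is stated with a citation to the Coppersmith--Winograd literature and then immediately fed into the I/O Master Theorem in the following lemma. Your reconstruction is exactly the standard block-substitution argument underlying all tensor-rank-based fast matrix multiplication algorithms, and it is correct, including the I/O-specific observation that a recursive (Morton/Z-order) block layout is needed so that each $(n/\alpha)\times(n/\alpha)$ block is contiguous and block addition genuinely costs $O((n/\alpha)^2/B)$.

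One minor remark on the addition count: your assertion that $c_\alpha = O(\alpha^{\omega'})$ is not something that follows from the rank bound alone---a rank-$R$ bilinear decomposition over $\alpha\times\alpha$ matrices can in principle use up to $O(R\alpha^2)$ scalar additions across the three sets of linear forms. But since $\alpha$ is a \emph{fixed} constant in this statement, any constant is $O(\alpha^{\omega'})$, so the point is moot for the recurrence; only the multiplication count $R \le \alpha^{\omega'+\epsilon}$ actually drives the exponent in the downstream analysis.
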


Self-reductions feed conveniently into cache oblivious algorithms. Notably, when we plug this equation into the I/O Master Theorem from Section~\ref{sec:masterstheorem}, we obtain the following bound as given in Lemma~\ref{lem:mt-mm}. Recursive structures like this tend to result in cache-oblivious algorithms. After all, regardless of the size of cache, the problems will be broken down until they fit in cache. Then, when a problem and the algorithm's execution fit in memory, the time to answer the query is $O(M/B)$, regardless of the size of $M$ and $B$. 

\begin{lemma}\label{lem:mt-mm}
If the matrix multiplication tensor's rank is bounded such that the RAM model running time is $n^{\omega'+\epsilon}$ for any $0<\epsilon<1$ and Theorem~\ref{thm:mm-self-reduction} holds with base case $T_{MM}(\sqrt{M}) = \frac{M}{B}$, then the running time for cache-oblivious matrix multiplication in the I/O model is at most $O\left(\frac{n^{\omega'+\epsilon}}{M^{\frac{\omega'+\epsilon}{2}-1}B} + \frac{n^2}{B}\right)$ for any $0<\epsilon<1$.
\end{lemma}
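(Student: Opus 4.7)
The plan is to apply the I/O Master Theorem (Theorem~\ref{thm:io-master-theorem}) directly to the self-reduction from Theorem~\ref{thm:mm-self-reduction}. First I would simplify the driving function: since $T_{MA}(n/\alpha) = O((n/\alpha)^2/B)$ and $\alpha$ is an absolute constant, the matrix-addition contribution telescopes to $\alpha^{\omega'} \cdot O((n/\alpha)^2/B) = O(n^2/B)$. Hence the recurrence becomes $T_{MM}(n, M, B) = \alpha^{\omega'+\epsilon}\, T_{MM}(n/\alpha, M, B) + O(n^2/B)$, which fits the Master Theorem template with branching $a = \alpha^{\omega'+\epsilon}$, shrinkage $b = \alpha$, and driving function $f(n,M,B) = n^2/B$.

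Next I would argue that Case~1 of Theorem~\ref{thm:io-master-theorem} applies. Here $\log_b a = \omega' + \epsilon$, and because it is known that $\omega' \geq 2$ and we are assuming $\epsilon > 0$, we have $\log_b a > 2$. Consequently $f(n,M,B) = O(n^2/B) = O(n^{\log_b a - \delta}/B)$ for any $\delta \in (0, \omega'+\epsilon - 2)$, so $f$ is polynomially dominated by the leaf term. The total I/O complexity therefore equals the aggregate leaf cost plus the input-scan floor.

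Third, I would count leaves explicitly using the stated base case $T_{MM}(\sqrt{M}, M, B) = M/B$. Recursion halts once the subproblem has dimension $\sqrt{M}$, which occurs at depth $\log_\alpha(n/\sqrt{M})$. With branching $\alpha^{\omega'+\epsilon}$ per level, the tree has $(n/\sqrt{M})^{\omega'+\epsilon}$ leaves, each of cost $M/B$, yielding
$$A(n,M,B) \;=\; \left(\frac{n}{\sqrt{M}}\right)^{\!\omega'+\epsilon}\cdot \frac{M}{B} \;=\; \frac{n^{\omega'+\epsilon}}{M^{(\omega'+\epsilon)/2 - 1}\,B}.$$
Adding the $\Theta(n^2/B)$ input-scan term (the $n \times n$ matrices contain $n^2$ words, so the $n/B$ additive piece appearing in Case~1 must be replaced by $n^2/B$ here) gives exactly the claimed bound.

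The main obstacle is justifying that the stated base case is actually attained in the cache-oblivious setting: the algorithm keeps recursing below size $\sqrt{M}$ because it does not know $M$, so I must argue via the tall-cache assumption that once a subproblem together with its three operand submatrices collectively fit in cache, every deeper recursive call is served at no additional I/O cost beyond the $\Theta(M/B)$ misses needed to load the block. A secondary, routine check is that the polynomial gap $\omega' + \epsilon - 2$ is a fixed positive constant, so that the $\epsilon_2$ required by Case~1 of Theorem~\ref{thm:io-master-theorem} can be chosen uniformly in $n$ and $M$; this makes the applicability of the Master Theorem rigorous rather than merely formal.
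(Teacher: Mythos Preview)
Your proposal is correct and follows the paper's strategy almost exactly: simplify the additive term to $O(n^2/B)$, apply the I/O Master Theorem with base case at dimension $\sqrt{M}$, and read off the leaf cost $(n/\sqrt{M})^{\omega'+\epsilon}\cdot M/B$. The only substantive difference is that you invoke Case~1, whereas the paper invokes Case~4 (noting $\alpha^{\omega'+\epsilon}/f'(\alpha)=\alpha^{\omega'+\epsilon-2}>1$ and using Eq.~\eqref{eq:greater-one}); both routes yield the identical bound.

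One point to tighten: the Case~1 hypothesis in Theorem~\ref{thm:io-master-theorem} is $f=O(D)$ with $D(n,M,B)=(n/\sqrt{M})^{\omega'+\epsilon-\epsilon_2}\cdot M/B$, which carries an $M$-dependence that your check ``$f=O(n^{\log_b a-\delta}/B)$'' (the ordinary Master Theorem condition) does not address. The required inequality does hold---since $M\le n^2$ whenever the recursion is nontrivial, one gets $M^{(\omega'+\epsilon-\epsilon_2)/2-1}\le n^{\omega'+\epsilon-\epsilon_2-2}$ and hence $n^2/B=O(D)$---but that line should appear explicitly. The paper's choice of Case~4 sidesteps this because its hypothesis is merely $f=\Omega(n/B)$.
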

\begin{proof}
The algorithm that uses the self-reduction implied by bounding the border rank produces the recurrence $T_{MM}(n) = O\left(\alpha^{\omega '+\epsilon}T_{MM}\left(\frac{n}{\alpha}\right) +  \alpha^{\omega '}T_{MA}\left(\frac{n}{\alpha}\right)\right)$ by Theorem~\ref{thm:mm-self-reduction}. By our bound on $T_{MA}(n) = O\left(\frac{n^2}{B}\right)$ and our base case $T_{MM}\left(\sqrt{M}\right) =  \frac{M}{B}$, we use Case $4$ of our I/O Master Theorem (Theorem~\ref{thm:io-master-theorem}) from Section~\ref{sec:masterstheorem} to get $T_{MM}(n) = O\left(\frac{n^{\omega'+\epsilon}}{M^{\frac{\omega'+\epsilon}{2}-1}B} + \frac{n^2}{B}\right)$ since $\frac{\alpha^{\omega' + \epsilon}}{f'(\alpha)} > 1$. 

The algorithm implied by the self-reduction does not change depending on the size of the cache or the cache line. The running time will follow this form regardless of $M$ and $B$. Thus, this algorithm is cache-oblivious. 

\end{proof}

\subsection{RAM  Reductions Imply I/O Upper Bounds}
\label{sec:Hy1}

Reductions are generally between problems of the same running time in terms of their input size. There are also reductions showing problems that run faster in their running time are harder than problems that run slower. For example if zero triangle is solved in $(|E|)^{1.5-\epsilon}$ then 3-SUM is solved in $O(n^{2-\epsilon'})$ time.
An open problem in fine-grained complexity is showing that problem that runs slower is harder than a problem that runs faster in a reductions sense. Some types of reductions in the inverse direction would imply a faster I/O algorithm for 0 triangle and thus imply faster I/O algorithms for APSP and (min,+) matrix multiplication. 

Notably 3-SUM saves a factor of $MB$ in the I/O-model whereas the 0 triangle problem saves a factor of only $\sqrt{M}B$. 

One type of reduction in the RAM model is of the form 
$$T_{0 \triangle}(n) = O \left(\sum_{s \in S}T_{3SUM}(s) + n^{3-\epsilon}\right).$$
Where the running times of the $T_{3SUM}$ problems summed equal $n^{3-\epsilon}$ if 3-SUM is solvable in truly sub-quadratic time. In this case we will get an improvement over the zero triangle running time when $MB = O(n^{\epsilon})$. Notably, if the extra work done is I/O-efficient then the zero triangle problem could be solved faster at a wider range of values of $M$ and $B$. The reductions we have covered in this paper have had the extra work be efficient. The reductions are efficient in spite of the fact that these reductions were originally RAM model reductions which did not care about memory locality. 

There is, however, one kind of RAM reduction which does not imply speedups in the I/O-model. Following in the style of Patrascu's convolution 3-SUM reduction \cite{patrascu2010towards}, we can have a reduction of the form 
\begin{align}
T_{0 \triangle}(n) = O(g^cn^{3-2i}T_{3SUM}(n^i/g) + n^{3}/g),\label{eq:0-triangle}
\end{align}
where $g$ is an integer, $c$ is an arbitrary constant and $i\in[0,3/2]$. These reductions imply speedups when a polynomial time improvement is made for 3SUM, but does not immediately imply speedups if no polynomial time improvement is made for 3SUM. If the additional $n^3/g$ work is I/O-inefficient, this reduction might not imply speedups in the I/O model. 

If one is trying to show hardness for 3SUM from APSP, the approach that does not imply algorithmic improvements must have a large amount of I/O-inefficient work. We suggest the more fruitful reductions to look for have the form of Eq.~\ref{eq:0-triangle}.

\section{Novel Reductions}
In this section we cover reductions related to Wiener Index, Median, Single Source Shortest Paths, and s-t Shortest Paths. We first cover our super linear lower bounds, then cover the linear lower bounds. 

\subsection{Super Linear Lower Bounds}
\label{sec:newLB}
We present reductions in the I/O model which yield new lower bounds. We have as corollaries of these same reductions related lower bounds in the RAM model.  Many of these reductions relate to the problem of finding the Wiener Index of the graph. 

We show diameter reduces to Wiener Index, APSP reduces to Wiener Index, and we show 3 vs 4 radius reduces to median.

\begin{definition}[Wiener Index]
Given a graph $G = (V, E)$ let $D[i,j]$ be the shortest path distance between node $i$ and node $j$ in the graph $G$, where $D[i, i] = 0$), $n = |V|$, and $m = |E|$.
The Wiener index of the graph is $\sum_{\forall i\in V}\sum_{\forall j \in V} D[i,j]$.
\end{definition}

The Wiener Index measures the total shortest path distance between all pairs of notes. Intuitively this is a measure of graph distance. Much like median, radius and diameter. If lots of nodes are far apart then the Wiener index will be high, if the nodes are close together then the Wiener index will be low. 

\begin{lemma}
	If (directed/undirected) Wiener index in a (weighted/unweighted) graph, $G = (V, E)$, is solvable in $T(n, m, M,B)$ time then for any choice of sets $X \subset V$ and $T \subset V$ the sum $\sum_{x \in X}\sum_{t \in T} \delta(x,t)$ is computable in $O\left(T(n,m,M,B)+\frac{m}{B}\right)$ time. 
	\label{lem:weinerSubsets}
\end{lemma}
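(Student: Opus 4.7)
My plan is to reduce the subset-sum $\sum_{x\in X}\sum_{t\in T}\delta(x,t)$ to four Wiener-index calls on $G$ and three augmentations of it, combined by inclusion--exclusion. In the directed case I would form three auxiliary graphs: $G'_X$, obtained from $G$ by adding a fresh source-pendant $s_x$ for each $x\in X$ with a single zero-weight directed edge $s_x\to x$; $G'_T$, obtained by adding a fresh sink-pendant $r_t$ for each $t\in T$ with edge $t\to r_t$; and $G'$, which contains both sets of pendants. Because each $s_x$ has no incoming edges and each $r_t$ has no outgoing edges, the only new finite-distance ordered pairs are $\delta(s_x,v)=\delta_G(x,v)$, $\delta(v,r_t)=\delta_G(v,t)$, and $\delta(s_x,r_t)=\delta_G(x,t)$; every other pendant-involving pair has distance $\infty$ and contributes nothing to the Wiener sum.

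This yields the four identities $W(G)=\sum_{u,v\in V}\delta_G(u,v)$, $W(G'_X)=W(G)+\sum_{x\in X,v\in V}\delta_G(x,v)$, $W(G'_T)=W(G)+\sum_{v\in V,t\in T}\delta_G(v,t)$, and $W(G')=W(G)+\sum_{x,v}\delta_G(x,v)+\sum_{v,t}\delta_G(v,t)+\sum_{x\in X,t\in T}\delta_G(x,t)$. The inclusion--exclusion combination $W(G')-W(G'_X)-W(G'_T)+W(G)$ then evaluates to exactly $\sum_{x\in X,t\in T}\delta_G(x,t)$. Each of the four graphs has $|V|+O(n)$ vertices and $|E|+O(n)$ edges and can be written as a contiguous edge list in $O(m/B)$ I/Os by a single streaming pass over $E$ that consults the indicator bits for $X$ and $T$ and emits the pendant edges inline. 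The four oracle invocations cost $T(n,m,M,B)$ apiece, and the final scalar combination of their outputs is $O(1)$, giving total cost $O(T(n,m,M,B)+m/B)$ as required.

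The main obstacle I anticipate is the undirected (and unweighted) variant, where the source-only/sink-only asymmetry used above no longer applies: an undirected zero-weight pendant makes $\delta(s_x,s_{x'})=\delta_G(x,x')$ finite, and an unweighted pendant adds a known constant of $1$ or $2$ to every distance through it. Fortunately the same four-term identity still survives, because computing $W(G'_X)$ and $W(G'_T)$ picks up the same parasitic within-$X$ and within-$T$ pair sums that also appear in $W(G')$, so these terms cancel in $W(G')-W(G'_X)-W(G'_T)+W(G)$. A short verification --- assuming $X\cap T=\emptyset$, or otherwise adjusting by the known offsets $c\,|X|\,|T|$ and $|X\cap T|$ for the unweighted case --- recovers $\sum_{x\in X,t\in T}\delta_G(x,t)$ up to an $O(1)$-computable correction, leaving the $O(T(n,m,M,B)+m/B)$ I/O budget intact.
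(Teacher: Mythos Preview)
Your proposal is correct and follows essentially the same approach as the paper: both add pendant copies $X'$ and $T'$, compute the Wiener index on the four graphs $G$, $G+X'$, $G+T'$, $G'$, and apply the same inclusion--exclusion identity $WI(G')-WI(G+X')-WI(G+T')+WI(G)$ to isolate $\sum_{x,t}\delta(x,t)$ up to a known additive constant. The only cosmetic differences are that the paper uses weight-$1$ pendant edges throughout (correcting by $2|X||T|$ at the end) rather than your zero-weight edges in the directed case, and the paper does not separate out the directed/undirected bookkeeping as explicitly as you do.
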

\begin{proof}
	Create a graph $G'$ by adding two special sets of nodes $X'$ and $T'$ to $G$. Specifically, $\forall x \in X$ add $x'$ to $G'$ and add an edge with weight 1 (or if $G$ is an unweighted graph just an edge) to the node $x$ in $G$. And, $\forall t \in T$ add $t'$ to $G'$ and add an edge with weight 1 (or if $G$ is an unweighted graph just an edge) to the node $t$ in $G$. 
	
	Now we will ask for the Wiener index of 4 graphs $G'$, $G+X'$, $G+T'$ and $G$. Let $WI(G)$ be the Wiener index of graph G. Note that the shortest path between $x'$ and $t'$ in $G'$ is $\delta(x,t)+2$ and the shortest path between $x_1',  x_2' \in X'$ is $\delta(x_1,x_2)+2$ and these paths always use edges $\{(x',x), (t,t')\}$, $\{(x_1', x_1), (x_2', x_2)\}$, respectively, and otherwise exclusively edges in $G$. Thus, we have that the formula
	$$W(G, G') = WI(G')-WI(G+X')-WI(G+T')+WI(G) =  \sum_{x' \in X'} \sum_{t' \in T'} d_{G'}(x',t')=  \sum_{x \in X} \sum_{t \in T} d_{G}(x,t)+2$$
		
	So if we return $-2|X'||T'| + W(G, G') = -2|X'||T'|+\sum_{x' \in X'} \sum_{t' \in T'} d_{G'}(x',t')= -2|X'||T'|+ \sum_{x \in X} \sum_{t \in T'} d_{G}(x,t)+2 = \sum_{x \in X} \sum_{t \in T'} d_{G}(x,t)$, which is the desired value. We need to add $O(n)$ nodes and edges to the original graph which costs $O\left(\frac{m}{B}\right)$ time. Then, we need to run the algorithm for Wiener index four times resulting in a total I/O complexity of $O\left(T(n, m, M, B) + \frac{m}{B}\right)$.
\end{proof}

\begin{lemma}
	If undirected Wiener index in an unweighted graph, $G = (V, E)$, is solvable in $T(n, m, M, B)$ time then for any choice of sets $X \subset V$ and $T \subset V$ the sum\\ $\sum_{x \in X}\sum_{t \in T}  \max\{\min\{\delta(u,v),k+1\},k\}$ is computable in $O\left(T(kn,km,M,B)+\frac{k|E|}{B}\right)$ time. 
	\label{lem:weinerHelper}
\end{lemma}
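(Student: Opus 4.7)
The plan is to reduce to two applications of Lemma~\ref{lem:weinerSubsets} on carefully augmented graphs. The key identity, valid for every nonnegative integer $\delta$, is
$$\max\{\min\{\delta,\,k+1\},\,k\} \;=\; k \;+\; \big(\min\{\delta,\,k+1\}-\min\{\delta,\,k\}\big),$$
so it suffices to compute, for each $s \in \{k,k+1\}$, the truncated sum $W_s := \sum_{x \in X,\,t \in T} \min\{\delta_G(x,t),\,s\}$; the quantity we want is then $k|X||T| + W_{k+1} - W_k$.

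To compute $W_s$, I would build a graph $G_s$ whose $X$-to-$T$ distances are the $G$-distances capped at $s$. Choose $a_X = \lceil (s-2)/2 \rceil$ and $a_T = \lfloor (s-2)/2 \rfloor$, so that $a_X + a_T = s - 2$. For each $x \in X$ append a fresh pendant path of length $a_X$ whose far endpoint I call $p_x$; for each $t \in T$ append a fresh pendant path of length $a_T$ ending in $q_t$. Then introduce a single hub vertex $h$ and add the edges $(p_x,h)$ for every $x \in X$ and $(q_t,h)$ for every $t \in T$. The hub route between $x \in X$ and $t \in T$ has length $a_X + 1 + 1 + a_T = s$; any zigzag through some $x' \in X$ has length at least $2 a_X + 2 + \delta_G(x',t) \ge s$ because $2a_X + 2 \ge s$, and an identical bound handles zigzags through $T$. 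Paths that stay inside $G$ retain length $\delta_G(x,t)$. Hence $d_{G_s}(x,t) = \min\{\delta_G(x,t),\,s\}$ for every $x \in X, t \in T$, and invoking Lemma~\ref{lem:weinerSubsets} on $G_s$ with the same subsets $X$ and $T$ returns exactly $W_s$.

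Running this construction twice (for $s=k$ and $s=k+1$) and combining via the identity above yields the target sum. Each $G_s$ has at most $n + (|X|+|T|)\lceil(s-2)/2\rceil + 1 = O(nk)$ vertices and $|E| + (|X|+|T|)\lceil (s-2)/2 \rceil + (|X|+|T|) = O(m + nk)$ edges, so by monotonicity of $T$ each Wiener-subset query costs $O(T(kn,\,km,\,M,\,B))$ cache misses; the two graphs themselves can be constructed by a scan that writes out $O(nk)$ new vertices and edges in $O(k|E|/B)$ I/Os in the sparse regime (and $O((m+nk)/B)$ generally, still within the stated bound).

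The main obstacle is confirming that the hub never shortens $d_{G_s}(x,t)$ below $\min\{\delta_G(x,t),\,s\}$. The delicate case is a zigzag that enters the hub through some $x' \in X$ whose $\delta_G(x',t)$ happens to be very small (or even $0$, if $X$ and $T$ overlap); the pendant lengths have been chosen precisely so that $2a_X + 2 \ge s$ absorbs this possibility, and a symmetric check handles zigzags through $T$ and through both sides. Once those cases are verified, the remainder is size bookkeeping and a direct appeal to Lemma~\ref{lem:weinerSubsets}.
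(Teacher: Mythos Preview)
Your approach is genuinely different from the paper's. The paper builds a single layered graph $G'$ with $k+1$ copies of $V$ together with shortcut vertices $s_1,\dots,s_k$, arranged so that $d_{G'}(x_1,t_{k+1})=\max\{\min\{\delta_G(x,t),k+1\},k\}$ directly, and then invokes Lemma~\ref{lem:weinerSubsets} once on $G'$. You instead exploit the identity $\max\{\min\{\delta,k+1\},k\}=k+\min\{\delta,k+1\}-\min\{\delta,k\}$ and build two hub-and-pendant gadgets to extract the two capped sums $W_k$ and $W_{k+1}$. The idea is pleasant and the size bookkeeping is fine.

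There is, however, a real gap. Your ``symmetric check'' for zigzags through $T$ is not actually symmetric: with $a_T=\lfloor(s-2)/2\rfloor$ you get $2a_T+2=s-1$ whenever $s$ is odd, not $\ge s$. Since the lemma places no disjointness hypothesis on $X$ and $T$, take any $x\in X\cap T$ with $\delta_G(x,t)\ge s$. Then $x$ also carries a $T$-pendant $q_x$, and the path $x\to q_x\to h\to q_t\to t$ has length $2a_T+2=s-1$, so $d_{G_s}(x,t)\le s-1<\min\{\delta_G(x,t),s\}$, contradicting the key distance claim. Exactly one of $k,k+1$ is odd, so one of the two invocations is broken. (Separately, $s=1$ makes $a_T$ negative and needs its own trivial handling.) The defect is easy to repair --- for instance, first subdivide every edge of $G$ so that all distances double and both caps $2k,2(k+1)$ are even, keeping the graph within $O(m)$ vertices and edges --- but as written the construction does not deliver $d_{G_s}(x,t)=\min\{\delta_G(x,t),s\}$ in general.
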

\begin{proof}
	Given an undirected graph, $G = (V, E)$, replace all edges with two directed edges (that form a cycle between the two endpoints of the original edge) and then proceed as described below with the new directed graph. 
	
	We will generate a new graph $G'$ with sections $G_1, G_2, \ldots, G_{k+1}$ and nodes $s_1, \ldots, s_k$ in the following way. Given a directed graph $G$ make $k+1$ copies of the vertex set, call them $G_1, G_2, \ldots, G_{k+1}$. Let $v_i \in V_i$ be the copy of $v\in V$ in $G_i$. Add an edge between $v_i$ and $v_{i+1}$. Add an edge between $u_i$ and $v_{i+1}$ if an edge exists from $u$ to $v$ in the initial edge set $E$. Next we add extra structure by adding $k$ nodes $s_1, s_2, \ldots , s_k $, where node $s_i$ connects to every node in $G_i$ via edges $(v_i, s_i)$ and every node in $G_{i+1}$ via edges $(s_i, v_{i+1})$. 
	
	Now note that the distance between $v_1$ and $u_{k+1}$ is at least $k$.  If any $s_i$ is ever used, then it is guaranteed that any path from $v_1$ to $u_{k+1}$ uses at least $k+1$ edges. Otherwise, all paths between nodes in the first layer and $k+1$-st layers require $k$ edges. 
	The paths can't be longer than $k+1$ because $v_1 \rightarrow s_1 \rightarrow s_2 \rightarrow \ldots \rightarrow s_k \rightarrow u_{k+1}$ is a $k+1$ length path that always exists between every pair of nodes in the first layer and the $k+1$-st layer. 
	
	A path of length $k$ exists in this new graph iff a path of length $k$ exists in $G$ from $u$ to $v$. STherefore, the distance between $u_1$ and $v_{k+1}$ equals $\max\{\min\{\delta(u,v),k+1\},k\}$. 
	Now we can use Lemma \ref{lem:weinerSubsets} to compute the sum  $\sum_{x \in X}\sum_{t \in T}  \delta_{G'}(x_1, t_{k+1})$. As discussed $\delta_{G'}(x_1, t_{k+1})=\max\{\min\{\delta(x,t),k+1\},k\}$. Thus, we get the desired sum. 
	
	\begin{figure}[ht]
		\centering
		\includegraphics[width=0.6\textwidth]{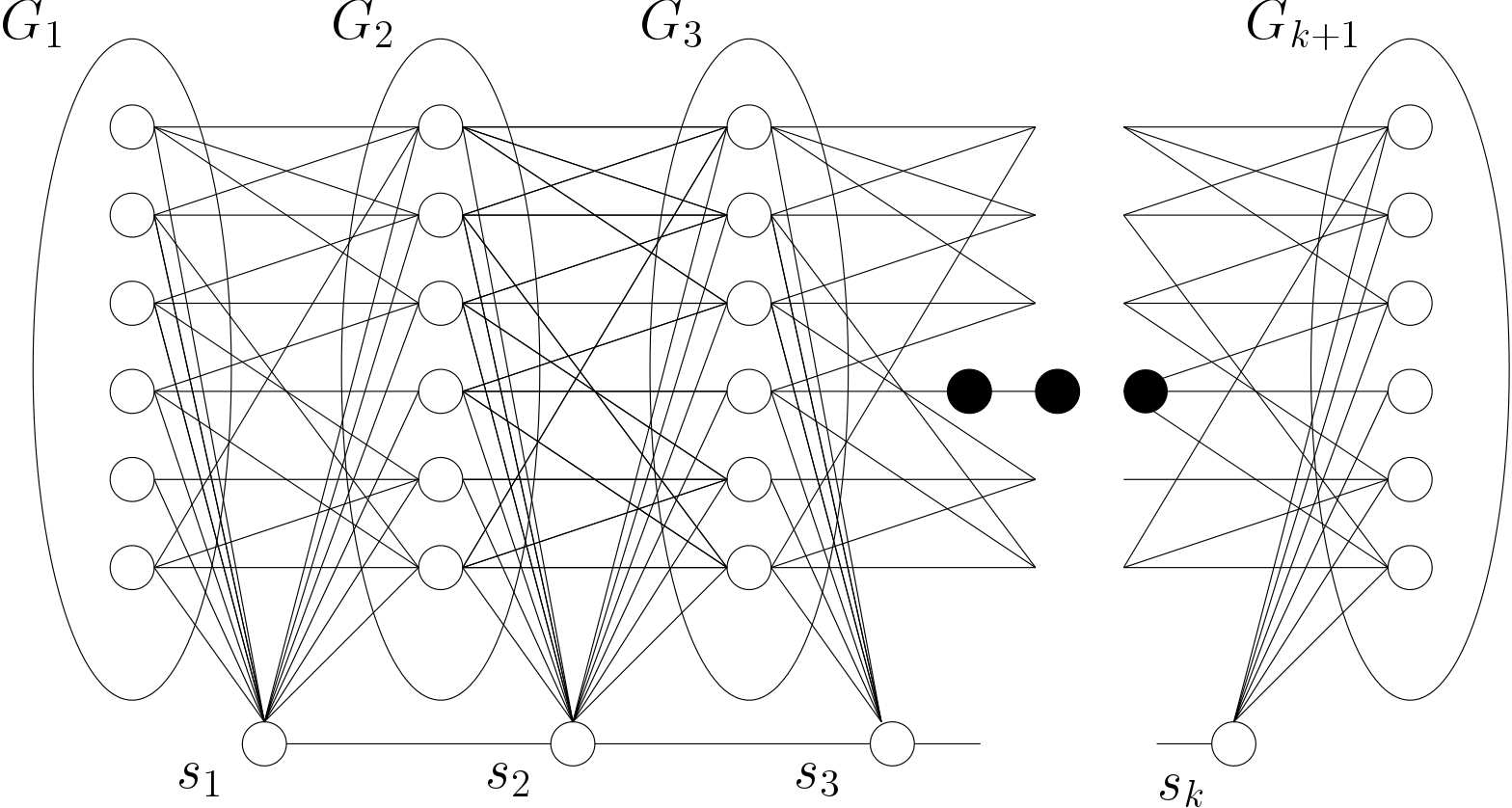}
		\caption{Depiction of the graph used for reduction from counting nodes at distance $k$ to the Wiener index. }
		\label{fig:weinerDistances}
	\end{figure}

\end{proof}

\begin{theorem}
If undirected Wiener index in an unweighted graph is solvable in $T(n, m, M, B)$ time then counting the number of pairs $x \in X$ and $t \in T$ in a directed/undirected, unweighted graph where $\delta(x,t) = k$ and computing the number of pairs where $\delta(x,t) \geq k$ is computable in $O\left(T(kn,km,M,B)+\frac{km}{B}\right)$ time. 
\label{thm:weinerDistacncek}
\end{theorem}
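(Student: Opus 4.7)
The plan is to reduce the two counting problems to two invocations of Lemma~\ref{lem:weinerHelper} with carefully chosen truncation parameters, and then recover the desired counts by a constant-sized linear combination of the returned values.

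First, I would analyze the quantity produced by Lemma~\ref{lem:weinerHelper}. For a parameter $\ell$, let
\[
S_\ell \;=\; \sum_{x\in X}\sum_{t\in T}\max\{\min\{\delta(x,t),\ell+1\},\ell\}.
\]
A case split on $\delta(x,t)$ shows that each summand equals $\ell$ whenever $\delta(x,t)\le\ell$ and equals $\ell+1$ whenever $\delta(x,t)\ge\ell+1$. Writing $N_{\ge j}=|\{(x,t)\in X\times T:\delta(x,t)\ge j\}|$, and using $N_{\ge\ell+1}+(|X||T|-N_{\ge\ell+1})=|X||T|$, we obtain the clean identity $S_\ell = \ell\,|X||T| + N_{\ge\ell+1}$. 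Hence $N_{\ge\ell+1} = S_\ell - \ell|X||T|$.

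Next I would apply this with $\ell=k-1$ and $\ell=k$. Lemma~\ref{lem:weinerHelper} gives $S_{k-1}$ and $S_k$, each in $O(T(kn,km,M,B)+km/B)$ time (the parameter inside the helper only affects $k$ by a constant, so the asymptotics are unchanged). From the identity above,
\[
N_{\ge k} \;=\; S_{k-1} - (k-1)|X||T|,
\qquad
N_{=k} \;=\; N_{\ge k} - N_{\ge k+1} \;=\; S_{k-1} - S_k + |X||T|,
\]
which are exactly the two counts requested by the theorem. The values $|X|$, $|T|$ can be obtained in $O((|X|+|T|)/B)$ I/Os by a single scan.

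The only genuinely nontrivial step is the case analysis verifying $S_\ell = \ell|X||T|+N_{\ge\ell+1}$; once this is in hand, everything else is bookkeeping, and the claimed cost bound follows directly by summing the two calls to Lemma~\ref{lem:weinerHelper} with the scan. Note also that Lemma~\ref{lem:weinerHelper} already handles both directed and undirected inputs (by its initial step of doubling undirected edges), so the reduction applies uniformly in both settings.
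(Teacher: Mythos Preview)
Your proposal is correct and follows essentially the same approach as the paper: two calls to Lemma~\ref{lem:weinerHelper} (with parameters $k-1$ and $k$), the identity $S_\ell=\ell\,|X||T|+N_{\ge \ell+1}$, and the difference $N_{\ge k}-N_{\ge k+1}$ to extract the exact count. Your write-up is in fact a bit more explicit than the paper's, which uses the same identity (in the notation $r_k-|X||T|(k-1)=a_k$) without spelling out the case analysis.
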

\begin{proof}
We use Lemma \ref{lem:weinerHelper} to compute $r_k = \sum_{x \in X}\sum_{t \in T}  \max\{\min\{\delta(u,v),k\},k-1\}$. Let $a_k$ be the number of pairs $(x,t)$ where $\delta(u,v)\geq k$ then  $r_k-|X||T|(k-1) = a_k$. 
Note that $a_k - a_{k+1}$ is the number of nodes at exactly distance $k$. So with two calls to the algorithm from Lemma~\ref{lem:weinerHelper} we can compute the number of nodes at distance $k$. 
We can now return $a_k-a_{k+1}$ and $a_k$ and get both values. 
\end{proof}

We now show that Wiener index, in sparse graphs, can efficiently return small diameters. Notably, this means that improvements to the sparse Wiener index algorithm will imply faster algorithms for the sparse diameter problem than exist right now. 

\begin{corollary}
If Wiener index is solvable in $T(n,m,M,B)$ time then returning  $\min \{\text{diameter}, k\}$ is solvable in $O\left(\log(k) T(kn,km, kM, kB) + \frac{km}{B}\right)$ time. 
\end{corollary}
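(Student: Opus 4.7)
The plan is to obtain $\min\{\text{diameter}, k\}$ by binary searching over candidate distances and invoking Theorem \ref{thm:weinerDistacncek} as an oracle at each step. Setting $X = T = V$ in Theorem \ref{thm:weinerDistacncek}, for any integer $j$ we can compute in $O(T(jn, jm, jM, jB) + jm/B)$ time the quantity
\[
a_j \;=\; \bigl|\{(u,v) \in V \times V \,:\, \delta(u,v) \geq j\}\bigr|.
\]
Since the diameter equals $\max\{j : a_j > 0\}$ and $a_1 \geq a_2 \geq \cdots$ is monotonically non-increasing, the truncated diameter $\min\{\text{diameter}, k\}$ equals the largest $j \in \{1,\dots,k\}$ with $a_j > 0$, or exactly $k$ in the case $a_k > 0$.

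The algorithm is then: first query $a_k$; if $a_k > 0$, return $k$. Otherwise, binary search in $\{1,\dots,k-1\}$ for the largest $j$ satisfying $a_j > 0$, invoking Theorem \ref{thm:weinerDistacncek} at each midpoint. By monotonicity of $a_j$, the binary search is well-defined and terminates in $\lceil \log_2 k \rceil$ rounds, each costing the cost of one invocation of Theorem \ref{thm:weinerDistacncek} with parameter at most $k$. Each such invocation applies the Wiener-index algorithm to an auxiliary graph with $O(kn)$ vertices and $O(km)$ edges, giving per-query I/O cost $O(T(kn, km, kM, kB) + km/B)$. Summing over the $O(\log k)$ binary-search rounds yields the claimed bound $O\bigl(\log(k)\, T(kn,km,kM,kB) + km/B\bigr)$, absorbing the repeated additive $km/B$ terms into the leading factor.

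The main conceptual obstacle is confirming that a single pair-count test distinguishes diameter values, which follows cleanly from the monotonicity of $a_j$. The main technical subtlety is the I/O accounting: although $O(\log k)$ invocations of the $km/B$ overhead would naively contribute $\log(k)\cdot km/B$, we only need to read the original input $G$ once (the auxiliary graph in the proof of Theorem \ref{thm:weinerDistacncek} can be generated on-the-fly by scanning $G$), and subsequent binary-search rounds reuse a cached description of $G$; thus the additive scan term is incurred only once, matching the statement of the corollary.
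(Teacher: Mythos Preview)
Your proposal is correct and follows essentially the same approach as the paper: invoke Theorem~\ref{thm:weinerDistacncek} with $X=T=V$ to get the monotone counts $a_j$, then binary search over $j\le k$. The paper's own proof is just two sentences sketching exactly this.

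One minor remark: your final paragraph about ``caching a description of $G$'' is unnecessary and slightly off---each binary-search round with parameter $j$ builds a \emph{different} layered auxiliary graph (with $j{+}1$ copies of $V$), so there is no single structure to cache. But you already handled the accounting correctly in the preceding sentence by absorbing the $O(\log k)$ copies of the $km/B$ overhead into the leading $\log(k)\cdot T(\cdot)$ term (any Wiener-index algorithm must at least scan its input), so the stated bound holds without the caching argument.
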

\begin{proof}
Using Theorem \ref{thm:weinerDistacncek} we can binary search to find the largest value such that there is a node at distance $d$ and there are no nodes larger than $d$. If this value is above $k$ then there will be nodes at distance $k+1$ which is efficient to check.  
\end{proof}

\begin{corollary}
If Wiener index is solvable in $T(n,m,M,B)$ time then returning the number of nodes at distances in $[1,k]$ from each other can be done in $O\left(k T(kn,kM,kB) + \frac{k^2m}{B}\right)$ time. 
\end{corollary}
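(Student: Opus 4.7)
The plan is to invoke Theorem~\ref{thm:weinerDistacncek} once for each threshold $d \in \{1, 2, \ldots, k\}$, taking $X = T = V$ in every call. By the guarantee of that theorem, a single invocation at threshold $d$ returns both the number of ordered pairs $(u,v) \in V \times V$ with $\delta(u,v) = d$ and the number with $\delta(u,v) \geq d$. The first of these counts is exactly ``the number of (pairs of) nodes at distance $d$,'' so running through $d = 1, 2, \ldots, k$ produces the complete histogram of pairwise distances in the interval $[1,k]$, which is what the corollary asks us to output.

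For the cost analysis, each individual call at threshold $d$ costs $O(T(dn, dm, M, B) + dm/B)$ by Theorem~\ref{thm:weinerDistacncek}. Assuming $T$ is monotonically nondecreasing in its size arguments, we can upper bound each call by $O(T(kn, km, M, B) + km/B)$ using $d \leq k$. Summing over the $k$ thresholds then yields a total of $O\bigl(k\, T(kn,km,M,B) + k^2 m / B\bigr)$, which matches the claimed bound (modulo the minor notational discrepancy in how $T$'s arguments are written in the statement).

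I do not expect any genuine obstacle here: the auxiliary reductions that blow up graph size by an $O(d)$ factor and make the four Wiener-index queries per threshold are already folded into the cost of Theorem~\ref{thm:weinerDistacncek}, so this corollary is essentially a linear-in-$k$ invocation of that theorem plus a summation. The only delicate point worth spelling out in a full write-up is the monotonicity of $T$ used to replace $T(dn,dm,M,B)$ by $T(kn,km,M,B)$; this is a standard and harmless assumption in this setting, and all the earlier lemmas in the Wiener-index reduction chain rely on the same convention.
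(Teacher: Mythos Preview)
Your proposal is correct and matches the paper's approach exactly: the paper's proof is a single sentence stating that one invokes Theorem~\ref{thm:weinerDistacncek} for each distance from $1$ up to $k$, which is precisely what you do. Your observation about the notational discrepancy is also apt---the statement's $T(kn,kM,kB)$ is almost certainly a typo for $T(kn,km,M,B)$, and your analysis recovers the intended bound.
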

\begin{proof}
	Using Theorem \ref{thm:weinerDistacncek} we can check every distance from $1$ up to $k$ and return the number of nodes at that distance. 
\end{proof}

Next, we prove that improvements to median finding in sparse graphs improve the radius algorithm, using a novel reduction. Notably, in the I/O model 3 vs 4 radius is slower than 2 vs 3 radius; whereas, in the RAM-model, these two problems both run in $n^2$ time. The gap in the I/O model of a factor of $M$ is what allows us to make these statements meaningful. 

\begin{theorem}\label{thm:3v4radius-to-median}
	If median is solvable in $T(n,m, M,B)$ time then 3 vs 4 radius is solvable in\\
	$O\left(T(n, m, M, B)+\frac{n^2}{MB} + \frac{E\log(E)}{B}\right)$ time. 
\end{theorem}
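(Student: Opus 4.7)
The plan is to build an auxiliary graph $G'$ from $G$ on which running the median algorithm (plus a small amount of I/O-efficient bookkeeping) reveals the answer to 3 vs.\ 4 radius on $G$. The overall structure mirrors the 2 vs.\ 3 to OV/HS style reductions already used in the paper: we replace a ``max-style'' eccentricity question by a ``sum-style'' median question via gadgeting, combined with scan-based preprocessing that contributes the $n^2/(MB) + |E|\log|E|/B$ term.

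First I would preprocess $G$ by computing, for each vertex $v$, local distance information (e.g. indicators of which $u$ lie within distance $\leq 2$ of $v$, or counts thereof), using the adjacency-list scanning strategy from Theorem~\ref{lem:2v3-sparse}. Sorting the adjacency lists by length contributes the $O(|E|\log|E|/B)$ term, and the pair-wise set-disjointness scans contribute the $O(n^2/(MB))$ term. These scans also let us identify ``candidate centers'' (vertices that are not already excluded by a cheap 2-vs-3-type test) so that gadgets are attached only to vertices whose eccentricity is still plausibly small.

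Next I would construct $G'$ by adding, to each vertex $v\in V$, a pendant gadget (a bundle of leaves reached via short paths) whose size and attachment depth are chosen so that the sum of distances $\sigma_{G'}(v) = \sum_{x \in V(G')} d_{G'}(v,x)$ decomposes into a large common base term plus a signal term proportional to the number of vertices at distance $\geq 4$ from $v$ in the original $G$. The weights are set so that crossing the eccentricity threshold from $3$ to $4$ pushes $\sigma_{G'}$ above a precomputed threshold $\tau$; vertices of eccentricity $\geq 4$ always pay strictly more than any vertex of eccentricity $\leq 3$. The construction adds only $O(n)$ nodes and $O(n+m)$ edges, so $G'$ has the same asymptotic size as $G$, and building $G'$ from the preprocessed information takes $O(|E|\log|E|/B)$ I/Os via sorting/scanning. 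Then I would run the median algorithm on $G'$ in $T(n,m,M,B)$ I/Os, read off the value $\sigma_{G'}(v^\star)$ at the returned median $v^\star$, and output ``radius $\leq 3$'' iff $\sigma_{G'}(v^\star) \leq \tau$.

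The main obstacle is the gadget design. Median is a sum-based statistic while radius is a max-based statistic, so the gadget must convert ``at least one vertex is far'' into a detectable additive signal without inflating the base cost for small-eccentricity vertices by more than the gap. Calibrating the number of pendants and the path lengths so that (i) the threshold $\tau$ cleanly separates eccentricity $3$ from eccentricity $\geq 4$, (ii) the construction remains of size $O(n+m)$ so that the median call costs $T(n,m,M,B)$ rather than $T(\mathrm{poly}(n),\ldots)$, and (iii) the construction and threshold are computable in $O(n^2/(MB) + |E|\log|E|/B)$ I/Os, is the delicate step; the remainder of the argument is bookkeeping and invocation of Theorem~\ref{lem:2v3-sparse}-style scans for the preprocessing cost.
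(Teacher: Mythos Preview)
Your high-level plan matches the paper's: preprocess with the 2-vs-3 scanning machinery (this is exactly where the $n^2/(MB)+|E|\log|E|/B$ term comes from), build a gadgeted graph, and call median once. But you leave the gadget itself unspecified, and the direction you sketch---attaching pendants to each $v$ in $G$ so that $\sigma_{G'}(v)$ signals whether $\mathrm{ecc}(v)\le 3$---does not work as stated. The paper's construction is substantially different.

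There are two concrete obstacles your sketch does not address. First, pendants attached at $v$ change $\sigma_{G'}(u)$ for \emph{every} $u$, not just for $v$, so per-vertex calibration interferes across vertices; and with no distance cap, two eccentricity-$3$ vertices can have $\sum_u d(v,u)$ differing by $\Theta(n)$, which swamps the $O(1)$-per-target signal separating eccentricity $3$ from $4$. Second, nothing in your construction forces the median of $G'$ to be one of your candidate centers rather than, say, a pendant node.

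The paper handles all three issues with explicit gadgets. (i) It takes four layered copies $V_1,\ldots,V_4$ of $V$ (with shortcut nodes $x_i$) so that $d_{G'}(v_1,u_4)$ is $3$ when $d_G(v,u)\le 3$ and $\ge 4$ otherwise; this caps each target's contribution to a $\{3,\ge 4\}$ indicator. (ii) It attaches to each $v_1\in V_1$ a private \emph{equalization} structure whose size is chosen, via a binary encoding, from the distance-$0/1/2$ counts that the 2-vs-3 scan already computed, so that every $v_1$ has exactly the same total distance to $G'\setminus V_4$. (iii) It hangs $\Theta(n)$ leaves at distance $2$ from $V_1$ and distance $\ge 3$ from everything else, forcing the median to lie in $V_1$. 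After this, the returned median $v_1^\star$ has total distance $3n$ to $V_4$ iff $\mathrm{ecc}_G(v^\star)\le 3$. Your proposal correctly anticipates the preprocessing cost and names max-to-sum as the crux, but the actual reduction requires the layering, the per-vertex equalization, and the forcing structure, none of which your pendant sketch supplies.
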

\begin{proof}

	First we run the algorithm from Theorem \ref{thm:radiusCache} to determine if the radius is $\leq 2$. If the radius is $\geq 3$ then we will produce $G'$ by doing the following. Running this algorithm takes $O\left(\frac{n^2}{MB} + \frac{E\log(E)}{B}\right)$ time.

	See Figure \ref{fig:3vs4Median} for an image of the completed $G'$.
	
		\begin{figure}[ht]
		\centering
		\includegraphics[width=0.6\textwidth]{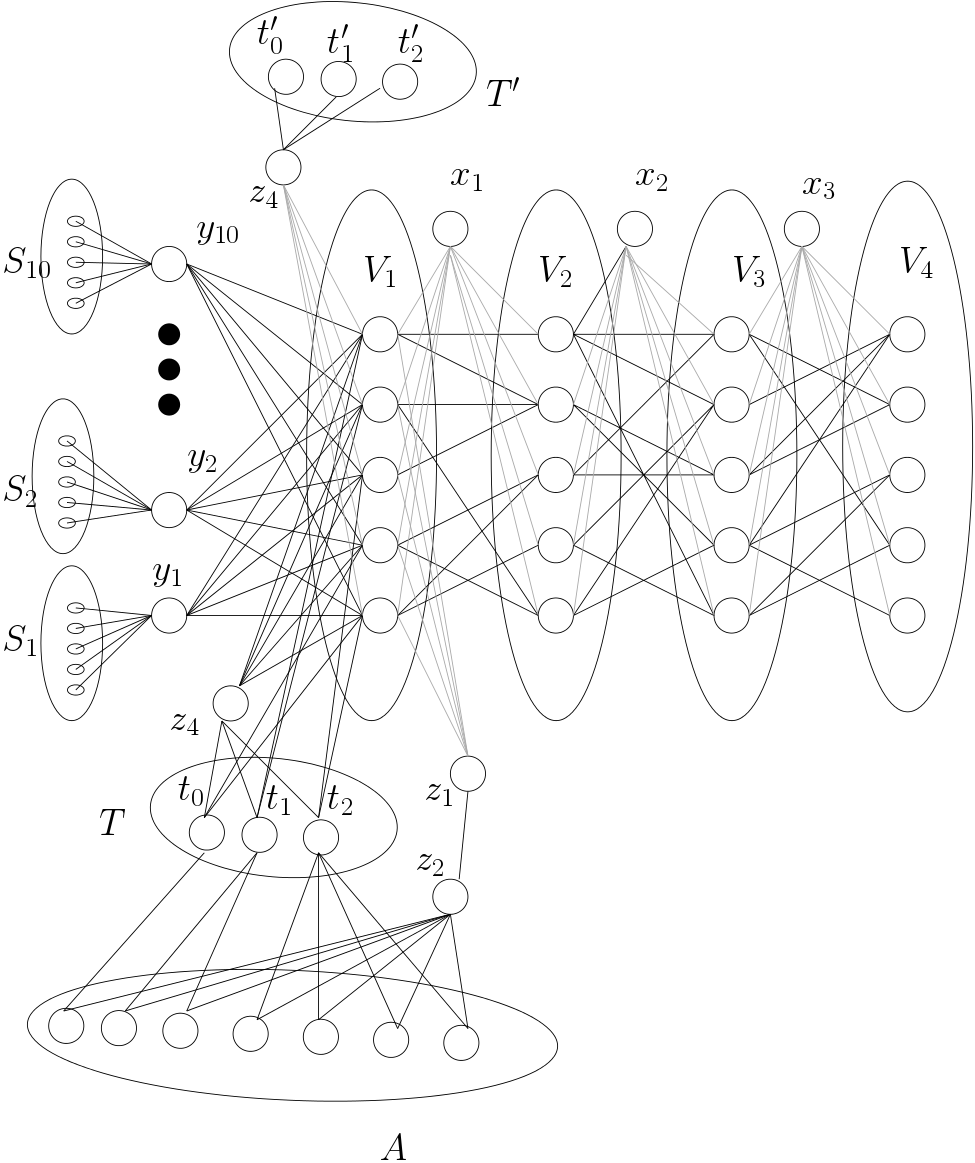}
		\caption{Depiction of the completed $G'$ graph from the 3 vs 4 radius to median reduction. Some lines are gray for readability. All lines, both gray and black, are the same undirected unweighted edges. }
		\label{fig:3vs4Median}
	\end{figure}	
	
	We start by adding four copies of the vertex set $V_1, \ldots , V_4$ to $G'$ and add edges between nodes in $v \in V_i$ and $u \in  V_{i+1}$ if $v$ and $u$ are connected in the original graph $G$. Additionally add nodes $x_1,\ldots, x_3$ where $x_i$ is connected to all nodes in $V_i$ and all nodes in $V_{i+1}$. 
	
	We want to enforce that a node in $V_1$ is the median, we can do this by adding many nodes close to nodes in $V_1$ and far from other nodes. We then want to check if there is a node in $V_1$ that is at distance $3$ from all nodes in $V_4$. The median will give us the node with the smallest total distance so we will want to correct for how close nodes in $V_1$ are to nodes in $V_2$ and $V_3$. 
	
	First we add 10 nodes $y_1, \ldots, y_{10}$ each connected to every node in $V_1$. Next add $S_1,\ldots, S_{10}$ which are sets of $10n$ nodes, node $y_i$ will connect to all nodes in $S_i$. There are less than $7n$ nodes in the rest of the graph. Nodes in $V_1$ are at distance $2$ from all the nodes in $S_i$. Nodes in $S_i$ are at distance $\geq 3$ from all nodes except those in $V_1$, $y_i$ and $S_i$. So nodes in $V_1$ are the only possible medians, as they are closer by at least $90n$ to all nodes in the various $S_i$. 
	
	The algorithm we gave for 2 vs 3 radius (Theorem \ref{thm:radiusCache}) can return the number of nodes at distance 0, 1, 2, $\geq 3$ for each nodes. We run this algorithm on $U = V_1 \cup x_1 \cup V_2 \cup x_2 \cup V_3$ and keep track of all of these numbers for each node. We will then add extra structure $W$ so that all nodes in $V_1$ will have the same total sum of distances to all nodes in $G'-V_4$. Then, the median returned will be whatever node in $V_1$ that has a minimum distance to $V_4$. Now, we note that if the radius is $\leq 3$ the distance from a node to $V_4$ will be $3n$. Next, we describe how to build $W$.
	
	The total distance to all nodes in $U$ from any node in $V_1$ will be between $4n+5$ and $6n+5$. So we create a node set $A$ of $2n$ nodes to which we then add $z_1$ and $z_2$. We connect $z_1$ to all of $V_1$ and $z_2$. We connect $z_2$ to $A$. So, every node in $A$ is at distance $\leq 3$ from $V_1$. We then add $\log(2n)$ nodes $T=t_0,t_1,\ldots,t_{\log(n)}$. We connect node $t_i$ to $2^i$ nodes which are non-overlapping with any of the other $t_j$. We add a final node, $z_3$, which connects to the nodes in $V_1$ and all nodes in $t_j$. We can now connect a node in $V_1$ to $t_j$ to make its total sum of distances $2^j$ smaller. We use this to equalize all the sums of distances. We finally need to add another $\log(2n)$ nodes $T' = \{t'_0, \dots, t'_{\log(n)}\}$ and a node $z_4$. We connect $z_4$ to all of $V_1$ and all of $T'$. We connect a node in $V_1$ to $t'_j$ if it isn't connected to $t_j$. Now all nodes in $V_1$ have the same sum of distances to all nodes in $G'-V_4$. 
	
	We now run median. If the node returned has distance $3n$ to the nodes in $V_4$ then the radius is $3$. Otherwise, the radius is $4$. 
	
\end{proof}

It has previously been shown that Wiener Index is equivalent to APSP in the RAM model. Here we show this also holds in the I/O-model. 

\begin{theorem}
	If Wiener Index is solvable in $n^{3-\epsilon}$ time in a dense graph then APSP is solvable in $\tilde{O}(n^{3-\epsilon}+n^2/B)$ time. 
	\label{thm:apsp-lb-wi}
\end{theorem}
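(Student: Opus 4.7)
The plan is to follow the standard RAM-model subcubic equivalence between APSP and Wiener Index, showing that each step is I/O-efficient when implemented in the I/O model. The workhorse primitive is Lemma~\ref{lem:weinerSubsets}: given a Wiener Index oracle running in $T(n,m,M,B)=n^{3-\epsilon}$ on a dense graph, we obtain $\sum_{x \in X,\, t \in T} \delta(x,t)$ for any $X, T \subseteq V$ at I/O cost $O(n^{3-\epsilon}+n^2/B)$. This reduces the task of extracting APSP information to choosing the right subsets and the right graph modifications.

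First, I would reduce APSP to $\lceil \log_2 n \rceil$ $(\min,+)$-matrix products via the standard repeated-squaring identity $D^{(2k)} = D^{(k)} \otimes D^{(k)}$, where $D^{(k)}[i,j]$ is the shortest-path distance from $i$ to $j$ using at most $k$ edges. Second, I would reduce a single $(\min,+)$-product $C = A \otimes B$ to Wiener Index by constructing the tripartite weighted graph $G_{A,B}$ on $I \sqcup J \sqcup K$ with $A$-weighted edges from $I$ to $J$ and $B$-weighted edges from $J$ to $K$; then $C[i,k]=\delta_{G_{A,B}}(i,k)$, and Lemma~\ref{lem:weinerSubsets} with singleton sets extracts a single entry per four Wiener Index calls.

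To avoid paying for $n^2$ calls per matrix product, I would use a batching trick: scale edge weights by a factor $K$ larger than any distance in $G_{A,B}$, and attach per-vertex additive offsets on $I$ and $K$ that encode the vertex indices as base-$K$ ``digits.'' A single Wiener Index call on the modified graph then returns a number whose base-$K$ digits expose many entries of $C$ simultaneously, so that $\tilde{O}(1)$ Wiener Index calls on auxiliary graphs of size $O(n)$ suffice to recover all of $C$. Composing this with the $O(\log n)$ distance products, and bounding each construction/extraction step by $O(n^2/B)$ I/Os (an adjacency- or distance-matrix scan), yields the claimed $\tilde{O}(n^{3-\epsilon}+n^2/B)$ bound.

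The main obstacle I expect is the weight-scaling step. The encoded Wiener Index number must fit in the I/O model's word width (possibly forcing one to split $C$ into polylogarithmically many batches), and the arithmetic used to decode the individual entries from the Wiener Index output must itself be I/O-local --- both are doable but tedious to verify. One must also confirm that the tall-cache assumption and graph density ($m=\Theta(n^2)$) are preserved under the modifications, so that each Wiener Index call indeed runs in $n^{3-\epsilon}$ I/Os as assumed.
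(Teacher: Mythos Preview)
Your approach has a real gap at the batching step, and it is not merely tedious bookkeeping. A single Wiener Index call returns one scalar, which in the standard word model occupies $O(1)$ words. To recover the full product matrix $C$ --- $n^2$ entries, each a $\Theta(\log(nW))$-bit number --- you need $\Omega(n^2)$ words of output. No $\tilde O(1)$ calls can supply that, and ``splitting $C$ into polylogarithmically many batches'' does not change the arithmetic: you would need $\tilde\Omega(n^2)$ batches, which brings you back to $\tilde\Omega(n^2)$ oracle calls per distance product. If instead you allow the encoded Wiener Index value to have $\Theta(n^2\log(nW))$ bits, then the graph you hand to the oracle carries edge weights of that bit-length, and the hypothesis that Wiener Index runs in $n^{3-\epsilon}$ I/Os on a dense graph with word-size weights no longer applies to it. Either the call count or the per-call cost blows up; you cannot have both small.

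The paper sidesteps this entirely by reducing to a \emph{decision} problem rather than trying to extract a matrix. It shows that a fast Wiener Index algorithm solves negative triangle detection: from a tripartite negative-triangle instance with weights in $[-W,W]$ it builds a four-layer weighted graph on $A\cup B\cup C\cup A'$ (plus one universal vertex and some direct $A$--$A'$ edges) arranged so that $\delta(a,a')<15W$ iff $a$ lies on a negative triangle. Then the single aggregate $\sum_{a\in A,\,a'\in A'}\delta(a,a')$, obtained via Lemma~\ref{lem:weinerSubsets} with four Wiener Index calls and $O(n^2/B)$ setup, equals a known threshold exactly when no negative triangle exists; one comparison decides the problem. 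The I/O-efficient subcubic equivalence between negative triangle and APSP already established in Section~\ref{sec:APSPReductions} then closes the loop. The idea you are missing is that a single distance sum carries exactly enough information for a \emph{decision} problem, and APSP is already known to reduce, subcubically and I/O-efficiently, to such a decision problem.
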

\begin{proof}
	We will show that Wiener Index solves the negative triangle problem. We start by taking the tripartite negative triangle instance with weights between $-W$ and $W$ and calling the partitions $A$, $B$ and $C$ respectively. 
	
	We create a new graph $G'$ and add to it $A$, $B$, $C$, and $A'$. Where $A'$ is a copy of $A$. We add edges between $A$ and $B$, $B$ and $C$, $C$ and $A'$ if they existed in the origional graph.  We put weights on those edges equal to their weight in the original graph plus $5W$. 
	
	Now if there is a node $a\in A$ involved in a negative triangle then the distance from $a$ to $a'$ in $G'$ will be $<15W$. 
	
	We add a node $x$ which is connected to every node in $G'$ with an edge of length $15W$, this ensures all nodes in the graph have a path of length $30W$ to each other in the worst case. This guarentees that no nodes have infinite path lengths to each other which would result in an un-useful response from the Wiener index. 
	
	Finally we add edges between $A$ and $A'$. Specifically, between two copies of the same node say $a$ and $a'$ we add an edge of length $15W$. Between two nodes that aren't copies, say $a$ and $v'$ we add an edge of length $12W$. 
	
	Note that the distance from $a$ to $a'$ is $<15W$ only if there is a negative triangle through $a$. Any path through $x$ that doesn't begin or end at $x$ must have length $>30W$. A path through $A$ then $B$ then $C$ then $A'$ represents a triangle. A path from $a$ to $a'$ that goes through some $v'$ will have length at least $12W+4W+4W=20W$. So, any short paths between nodes represents a negative triangle. 
	If the smallest triangle through $a$ in the original graph had total weight positive or zero, then the distance from $a$ to $a'$ will be $15W$, using the edge we added between them. 
	
	The shortest path from $a\in A$ to $v'\in A'$ when the nodes are not copies of each other is $12W$. There is an edge between them of this length, so it can be no longer. Using $x$ is less efficient and the shortest path from $A$ to $B$ to $C$ to $A'$ will have weight at least $4W$ on each of the three edges, thus be at least $12W$ in distance. 
	
	Now, if there are no negative triangles the total sum of weights between $A$ and $A'$ will be $2n(12n+3)W$. If there is a negative triangle, then at least one of the pairs will have total length $<15W$ causing the sum to be strictly less than $2n(12n+3)W$.
	
	We use Lemma \ref{lem:weinerSubsets} to find the total sum of distances between $A$ and $A'$. 
	
	The time to make these copies and add edges is $n^2/B$ time.
\end{proof}





\newcommand{\GirthE}{Girth-Containing-Edge }
 \newcommand{\GirthV}{Girth-Containing-Vertex }
 \newcommand{\stsp}{$s$-$t$-shortest-path }

\subsection{Linear-Time Reductions}
\label{sec:linear}

In fine-grained complexity, it often does not make sense to reduce linear-time problems to one another because problems often have a trivial lower bound of $\Omega(n)$ needed to read in the entire problem. However, in the I/O model, truly linear time---the time needed to read in the input---is $\Theta(n/B)$. Despite significant effort, many problems do not achieve this full factor of $B$ in savings, and thus linear lower bounds of $\Omega(n)$ are actually interesting. We can use techniques from fine-grained complexity to try to understand some of this difficulty. 

In the remainder of this section, we cover reductions between linear-time graph problems whose best known algorithms take longer than $O(|E|/B)$ time. This covers many of even the most basic problems, like the $s$-$t$ shortest path problem, which asks for the distance between two specified nodes $s$ and $t$ in a graph $G$. The \emph{sparse}  $s$-$t$ shortest paths problem has resisted improvement beyond $O(|V|/\sqrt{B})$ even in undirected unweighted graphs \cite{Mehlhorn2002}, and in directed graphs, the best known algorithms still run in time $\Omega(|V|)$ \cite{chiang1995external,ABDHI07}.  

Notably, the undirected unweighted $s$-$t$ shortest path problem is solved by Single Source Shortest Paths (SSSP) and Breadth First Search (BFS). Further note that for directed graphs the best known algorithms for SSSP, BFS, and Depth First Search (DFS) in sparse, when $|E| = O(|V|)$, directed graphs take $O(|V|)$ time. Which is a cache miss for every constant number of operations, giving no speed up at all. SSSP, BFS, and DFS solve many other basic problems like graph connectivity. 

By noting these reductions we want to show that improvements in one problem propagate to others. We also seek to explain why improvements are so difficult on these problems. Because, improving one of these problems would improve many others, any problem which requires new techniques to improve implies the others must also need these new techniques. Furthermore, any lower bound proved for one problem will imply lower bounds for the other problems reduced to it. We hope that improvements will be made to algorithms or lower bounds and propagated accordingly. 

We show reductions between the following three problems in weighted and unweighted as well as directed and undirected graphs.

\begin{definition}[$s$-$t$-shortest-path$(G,s,t)$]
	Given a graph $G$ and pointers to two verticies $s$ and $t$, return the length of the shortest path between $s$ and $t$.	
\end{definition}

\begin{definition}[Girth-Containing-Edge$(G,e)$]
	Given a graph $G$ and a pointer to an edge $e$, return the length of the shortest cycle in $G$ which contains $e$.
\end{definition}

\begin{definition}[Girth-Containing-Vertex$(G,v)$]
	Given a graph $G$ and a pointer to a vertex $v$, return the length of the shortest cycle in $G$ which contains $v$.
\end{definition}

We now begin showing that efficient reductions exist between these hard to solve linear problems. 

\begin{theorem}
	\label{thm:GirthE-to-stsp}
	Given an algorithm that solves (undirected/directed) \stsp in $f(n,|E|,M,B)$ time (undirected/directed) \GirthE can be solved in $O(f(n,|E|,M,B)+ O(1))$ time. 
\end{theorem}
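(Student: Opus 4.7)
The plan is a direct reduction: the shortest cycle containing a specified edge $e$ must consist of $e$ itself together with the shortest $u$-to-$v$ path in the graph with $e$ removed, where $u$ and $v$ are the endpoints of $e$ (and the path direction is $v \to u$ in the directed case, to close the cycle). Thus one query to \stsp\ on a slightly modified graph suffices, and the only overhead beyond the \stsp\ call is the cost of ``deleting'' the single edge $e$ and adding back $w(e)$ (or $1$ in the unweighted case) to the returned distance.

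First, I would use the provided pointer to $e$ to access it in $O(1)$ I/Os and mark it as removed---either by zeroing it out of the adjacency list entry it occupies or by setting its weight to $+\infty$ so that no shortest-path algorithm will route through it. In the undirected case, if the edge is stored as two adjacency-list entries (one at $u$, one at $v$), I would locate the reverse entry at $v$ by scanning the (pointed-to) representation of $e$, which is still $O(1)$ I/Os if each edge record stores a pointer to its twin; alternatively, I could assume the \stsp\ call tolerates a ``forbidden edge'' flag and pass the pointer along. Either way this preprocessing is $O(1)$ I/Os.

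Next, I would invoke the \stsp\ algorithm on the modified graph: call it with source $s = v$ and target $t = u$ in the directed case, or $s = u$, $t = v$ in the undirected case. Let $d$ be the returned distance. I would then restore $e$ (again $O(1)$ I/Os using the saved pointer) and return $d + w(e)$ as the girth through $e$, where $w(e) = 1$ for unweighted graphs. Correctness follows because every simple cycle containing $e$ decomposes as $e$ plus a simple $u$-to-$v$ (resp.\ $v$-to-$u$) path avoiding $e$, and conversely any such path combined with $e$ yields a cycle through $e$; the shortest of these is attained by the \stsp\ output. Hence the total I/O cost is $f(n,|E|,M,B) + O(1)$.

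The main (very minor) obstacle is making the ``delete $e$'' step truly $O(1)$ I/Os in the undirected case when the edge is represented symmetrically in two adjacency lists; this requires the input representation to either store twin pointers with each edge record or to let the \stsp\ primitive accept a forbidden-edge pointer. Both are standard assumptions in this setting, and the reduction carries through uniformly for the directed/undirected and weighted/unweighted variants, matching the claimed bound.
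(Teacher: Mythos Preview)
Your proposal is correct and takes essentially the same approach as the paper: delete the edge $e$, run \stsp\ between its endpoints (in the appropriate direction for the directed case), and add back the edge's weight. The paper's proof is terser and simply says to return the \stsp\ result plus $1$, while you add more care about the $O(1)$ I/O cost of the deletion and the weighted case, but the reduction is the same.
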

\begin{proof} 
	We construct a modified graph $G'$ by taking the target edge with end vertices $v_1$ and $v_2$ and deleting it. We now run \stsp$(G',v_1,v_2)$ and return this result plus $1$. The deleted edge completes the cycle, and since it is the shortest path, it results in the smallest possible cycle. For a directed graph, ensure that the deleted edge pointed from $t$ to $s$.
\end{proof}

\begin{theorem}
	Given an algorithm that solves (undirected/directed) \GirthE in $f(n,|E|,M,B)$ time (undirected/directed) \stsp can be solved in $O(f(n,|E|,M,B)+ O(1))$ time. 
\end{theorem}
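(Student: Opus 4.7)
The plan is to build a modified graph $G'$ from $G$ so that the shortest cycle through a designated edge of $G'$ encodes the shortest $s$-$t$ path in $G$. The cleanest construction is to attach a fresh gadget vertex rather than inserting an edge directly between $s$ and $t$, which avoids awkward cases when $(s,t)$ is already an edge of $G$.

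Concretely, I would first construct $G'$ by adding a single new vertex $u$ together with two edges of unit weight: in the undirected case, the edges $\{u,s\}$ and $\{u,t\}$; in the directed case, the edges $(u,s)$ and $(t,u)$. Let $e$ denote the edge $\{u,s\}$ (respectively $(u,s)$). I would then invoke the \GirthE algorithm on $(G',e)$ and return its output minus $2$. Constructing $G'$ costs $O(1)$ I/Os since only one vertex and two edges must be appended to the end of the vertex and edge lists, and no existing data is modified. Thus the total cost is $f(n+1,|E|+2,M,B)+O(1)=O(f(n,|E|,M,B)+1)$ as required.

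For correctness I would argue as follows. Any simple cycle in $G'$ containing $e$ uses $u$ exactly once, and since $u$'s only neighbors in $G'$ are $s$ and $t$ (and the directions force the cycle to enter $u$ from $t$ and leave to $s$), every such cycle must use both of the new edges. Removing these two edges from such a cycle leaves a simple $s$-$t$ walk in $G$; conversely, any $s$-$t$ path in $G$ can be closed through $u$ to form a simple cycle of $G'$ containing $e$ of length $d_G(s,t)+2$. Hence the shortest cycle through $e$ has length exactly $d_G(s,t)+2$, so subtracting $2$ recovers the desired shortest-path distance.

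The only subtle point is ensuring the \GirthE oracle is forced to use the second new edge rather than some other path out of $u$; in the construction above this is immediate because $u$ has no other incident edges (or, in the directed case, no other out-edge). A secondary issue, easily handled, is the weighted case: one would instead add the edges with some chosen weight $w$ (e.g.\ $0$ if allowed, otherwise any fixed $w$) and subtract $2w$ at the end; the same cycle-decomposition argument applies without change. I do not expect any genuine obstacle here, so the overall plan is essentially the dual of Theorem~\ref{thm:GirthE-to-stsp}.
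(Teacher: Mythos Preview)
Your proposal is correct and gives a valid reduction, but the paper takes a slightly more direct route: it inserts a single edge $e'$ of weight $1$ directly between $s$ and $t$ (oriented from $t$ to $s$ in the directed case), calls \GirthE$(G',e')$, and subtracts $1$. When such an edge is already present, the paper deletes it first, remembers its weight $d$, and in the undirected case returns $\min\{d,\ \text{cycle length}-1\}$. Your gadget-vertex construction is arguably cleaner precisely because it sidesteps this case analysis: since $u$ is fresh, there is nothing to delete and no comparison with a pre-existing $s$-$t$ edge is needed. On the other hand, the paper's version touches one edge rather than one vertex plus two edges; both are $O(1)$ modifications, so the distinction is stylistic rather than substantive.
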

\begin{proof} 
	To obtain the shortest path between $s$ and $t$ we construct a new graph $G'$ which simply add an edge $e'$ of weight $1$ between $s$ and $t$. If this edge already exists delete it and add an edge of weight $1$. If we are in the undirected case, save the deleted edges weight as $d$. We then run \GirthE$(G',e')$ and subtract $1$. Again, the edge should be directed from $t$ to $s$ in the case of directed graphs.
	In the undirected case where we deleted an edge, compare the output shortest cycle length minus 1 to the distance $d$, return the smaller value. 
	
	 This requires a single call to \GirthE and a constant number of changes to the original input.
\end{proof}

\begin{theorem}
	Given an algorithm that solves (undirected/directed) \GirthV in $f(n,|E|,M,B)$ time (undirected/directed) \stsp can be solved in $O(f(n,|E|,M,B)+ O(1))$ time. 
\end{theorem}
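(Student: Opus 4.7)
The plan is to reduce (undirected/directed) \stsp to (undirected/directed) \GirthV by introducing a single fresh vertex whose only cycle is forced to go through the desired shortest path. Specifically, I would construct a modified graph $G'$ from $G$ by adding one new vertex $v$ along with two edges of weight $1$: in the undirected case, $\{v,s\}$ and $\{v,t\}$; in the directed case, $t \to v$ and $v \to s$. Since $v$ is fresh and these are the only two edges incident to $v$, any cycle in $G'$ passing through $v$ must enter on one of them and leave on the other. In the directed case the orientations force the cycle to have the form $v \to s \rightsquigarrow t \to v$, and in the undirected case the cycle has the form $v - s \rightsquigarrow t - v$; in either case the ``middle portion'' is precisely an $s$-to-$t$ walk in the original graph $G$.

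Having set this up, the key claim is that
\[
\mathrm{\GirthV}(G',v) \;=\; d_G(s,t) + 2,
\]
because the shortest cycle through $v$ is obtained by concatenating the shortest $s$-to-$t$ path in $G$ with the two added unit edges, and conversely every cycle through $v$ yields an $s$-to-$t$ walk in $G$ whose length is the cycle length minus $2$. Thus the algorithm simply calls \GirthV on $G'$ with vertex $v$ and returns the result minus $2$.

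For the cost, the construction adds exactly one vertex and two edges, which can be appended at the end of the stored vertex/edge lists using $O(1)$ additional I/Os (just writing the new adjacency entries into $O(1)$ cache lines). The call to \GirthV costs $f(n+1,|E|+2,M,B)=O(f(n,|E|,M,B))$, so the total I/O cost is $O(f(n,|E|,M,B)+O(1))$, as required.

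The only subtle point — and the step I would treat most carefully — is justifying that the shortest cycle through $v$ in $G'$ really corresponds to a simple $s$-to-$t$ path in $G$ and not to some shorter structure. This is immediate because $v$ has exactly two incident edges (in the undirected case) or exactly one in-edge and one out-edge (in the directed case), so any cycle through $v$ uses both of them exactly once, and the remainder of the cycle is a walk from $s$ to $t$ in $G$, which can be shortened to a simple $s$-to-$t$ path of length at most $d_G(s,t)$. There is no issue if an edge between $s$ and $t$ already exists in $G$, since $v$ is a new vertex and $\{v,s\},\{v,t\}$ (or $t\to v, v\to s$) are therefore guaranteed to be new edges.
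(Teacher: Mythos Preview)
Your proposal is correct and is essentially identical to the paper's own proof: add a fresh vertex $v'$ adjacent to $s$ and $t$ (with edges oriented $t\to v'$ and $v'\to s$ in the directed case), call \GirthV$(G',v')$, and return the result minus $2$. You supply more justification for correctness than the paper does, but the construction, the edge orientations, and the cost accounting all match exactly.
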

\begin{proof} 
	We construct a modified graph $G'$ by adding a new vertex $v'$ and connecting it to the verticies $s$ and $t$. We then call \GirthV$(G',v')$ and return the result minus $2$. For the directed case we must direct the edges from $t$ to $v'$ and from $v'$ to $s$.
\end{proof}

\begin{theorem}
	Given an algorithm that solves (undirected/directed) \GirthV in $f(n,|E|,M,B)$ time (undirected/directed) \GirthE can be solved in $O(f(n,|E|,M,B)+ O(1))$ time. 
\end{theorem}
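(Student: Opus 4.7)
The plan is to reduce \GirthE to \GirthV by \emph{subdividing} the target edge. Given an instance $(G, e)$ of \GirthE with $e = (u,v)$, I construct $G'$ from $G$ by deleting the edge $e$, introducing one fresh vertex $v'$, and adding the two edges $(u, v')$ and $(v', v)$ (preserving the orientation $u \to v' \to v$ in the directed case). I then return $\textsc{Girth-Containing-Vertex}(G', v')$, possibly with a small additive correction depending on weights.

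The key observation is that $v'$ has degree exactly $2$ in $G'$ (in-degree and out-degree $1$ in the directed case), so every cycle in $G'$ through $v'$ must traverse both of the new edges $(u,v')$ and $(v',v)$. Such a cycle therefore consists of the two new edges together with a simple $v$-to-$u$ walk in $G \setminus \{e\}$, and hence corresponds bijectively (and weight-preservingly, up to the weights assigned to $(u,v')$ and $(v',v)$) to a cycle in $G$ that contains the edge $e$. Conversely every cycle in $G$ containing $e$ maps to such a cycle in $G'$.

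For the unweighted case I assign weight $1$ to each of the two new edges; then the shortest cycle through $v'$ has length exactly $1$ more than the shortest cycle through $e$ in $G$, so I return the oracle's answer minus $1$. For the weighted case I assign the original weight $w(e)$ to the edge $(u,v')$ and weight $0$ to $(v',v)$, so the shortest cycle through $v'$ has the same total weight as the shortest cycle through $e$ in $G$, and I return the oracle's answer unchanged (if the model disallows zero weights, I can instead double all existing weights and assign weight $w(e)$ to each of the two new edges, then halve the output).

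The construction touches only a constant number of cells: one edge is removed, one vertex is appended, and two edges are appended. These updates can be performed with $O(1)$ I/Os (given pointers to $e$ and to the ends of the edge/vertex lists), so the total cost is $f(n, |E|, M, B) + O(1)$, matching the statement. I do not foresee a substantive obstacle here; the only subtlety is choosing an edge-weight split that is legal in the model, which the two variants above handle.
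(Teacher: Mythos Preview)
Your proposal is correct and essentially identical to the paper's proof: subdivide the target edge by a fresh degree-$2$ vertex $v'$, call \GirthV on $(G',v')$, and subtract $1$. The paper's argument is the same reduction (delete $e=(v_1,v_2)$, add $v'$ adjacent only to $v_1,v_2$, return \GirthV$(G',v')-1$), though it does not spell out the weighted variant as carefully as you do.
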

\begin{proof} 
	We construct a modified graph $G'$ by taking the target edge with end verticies $v_1$ and $v_2$, deleting the edge, and then replacing it with a new vertex $v'$ which connects only to $v_1$ and $v_2$. This modification only requires a constant number of operations. Next run \GirthV$(G',v')$ and return it's result minus $1$.  This reduction also works for directed graphs by directing the new edges appropriately.
\end{proof}

\begin{theorem}
	Given an algorithm that solves directed \GirthE in $f(n,|E|,M,B)$ time then directed \GirthV is solvable in $O(f(n,|E|,M,B)+n/B)$ time. 
\end{theorem}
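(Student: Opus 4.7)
The plan is to reduce directed \GirthV$(G,v)$ to a single call of directed \GirthE on a slightly modified graph $G'$ that results from ``splitting'' the target vertex $v$. Specifically, I would construct $G'$ from $G$ as follows: introduce one new vertex $v'$; leave every edge of $G$ that is \emph{not} outgoing from $v$ unchanged; for every outgoing edge $(v,u)$ of $G$, replace it with the edge $(v',u)$ of the same weight; finally, add a single new edge $e'=(v,v')$ of weight $1$ (or weight $0$ in the weighted setting). Now $v'$ inherits all outgoing edges of $v$, while $v$ retains all of its incoming edges, and the only edge leaving $v$ is the new edge $e'$. Then I would call directed \GirthE$(G',e')$ and return the result minus $1$ (or unchanged in the weighted variant with weight~$0$).

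To justify correctness, observe that any directed cycle $C$ in $G$ passing through $v$ can be written as $v \to u_1 \to \dots \to u_k \to v$. Under the split, this becomes the cycle $v \to v' \to u_1 \to \dots \to u_k \to v$ in $G'$, which uses the new edge $e'$; its length is exactly one more than that of $C$. Conversely, every directed cycle through $e'$ in $G'$ must enter $v$ via some original incoming edge and then leave via $e'$, so contracting $e'$ yields a directed cycle in $G$ through $v$ of length one less. Hence the shortest cycle through $e'$ in $G'$ corresponds exactly to the shortest cycle through $v$ in $G$, up to the additive constant contributed by $e'$.

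For the I/O analysis, $G'$ has $n+1$ vertices and $|E|+1$ edges, so the call to \GirthE takes $f(n+1,|E|+1,M,B)=O(f(n,|E|,M,B))$ I/Os. The only non-trivial construction cost is rewriting the outgoing edges of $v$ to originate from $v'$ instead; since $v$ has at most $n-1$ outgoing neighbors and these are stored contiguously under a standard adjacency-list layout (or can be re-labeled in place within the edge list), this rewriting takes $O(\deg^{+}(v)/B+1)=O(n/B)$ I/Os. Adding $v'$ and the single edge $e'$ is $O(1)$. Summing the two contributions gives the claimed bound of $O(f(n,|E|,M,B)+n/B)$.

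The main (mild) obstacle is getting the construction cost down to $O(n/B)$ rather than $O(|E|/B)$: one must be careful not to scan the entire edge list to locate the outgoing edges of $v$. The argument relies on the standard assumption that adjacency information for a single vertex can be accessed in I/O time proportional to its degree over $B$, at which point the bound $\deg^{+}(v)\le n-1$ yields the desired $O(n/B)$ overhead. This is also the reason the theorem is stated for the directed case: the same splitting trick does not cleanly separate the ``in'' and ``out'' roles of $v$ in an undirected graph, so a different construction would be needed there.
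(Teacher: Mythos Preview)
Your proposal is correct and is essentially identical to the paper's proof: your $v$ plays the role of the paper's $v_{in}$ and your new vertex $v'$ plays the role of $v_{out}$, with the same added edge $e'$ from the in-copy to the out-copy and the same ``subtract one'' at the end. Your correctness and I/O arguments are in fact somewhat more detailed than the paper's, but the construction and the $O(|A(v)|/B)\le O(n/B)$ bound are the same.
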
 
\begin{proof}
	We construct a modified graph $G'$ by splitting $v$ into two verticies $v_{in}$ and $v_{out}$ where $v_{in}$ contains all of the edges from other verticies to $v$ and $v_{out}$ has edges to all of the verticies which $v$ had edges to.  We then add an additional edge $e'$ directed from $v_{in}$ to $v_{out}$. We then call \GirthE$(G',e')$ and return its result minus $1$.
	
	Let $|A(v)|$ be the length of the adjacency list of the vertex $v$. These edits take time $O(|A(v)|/B)$, which is upper bounded by $O(n/B)$.
\end{proof}

\begin{theorem}
	Given an algorithm that solves directed \stsp in $f(n,|E|,M,B)$ time then directed \GirthV is solvable in $O(f(n,|E|,M,B)+n/B)$ time. 
\end{theorem}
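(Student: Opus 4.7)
The plan is to reduce directed Girth-Containing-Vertex to directed $s$-$t$-shortest-path by splitting the distinguished vertex $v$ into an ``in-copy'' and an ``out-copy,'' so that cycles through $v$ in the original graph correspond exactly to paths between these two copies in the modified graph. This is the same vertex-splitting trick used in the preceding theorem in the excerpt, but applied to reduce to \stsp directly rather than to \GirthE.

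Concretely, I would construct a modified graph $G'$ as follows. Replace $v$ with two new vertices $v_{in}$ and $v_{out}$. Redirect every edge $(u,v)$ of $G$ to become an edge $(u, v_{in})$ in $G'$, and redirect every edge $(v,w)$ to become an edge $(v_{out}, w)$. Do \emph{not} add any edge between $v_{in}$ and $v_{out}$. Then invoke the given algorithm to compute $d := \stsp(G', v_{out}, v_{in})$, and return $d$. Correctness follows from the bijection between simple cycles through $v$ in $G$ and simple $v_{out}$-to-$v_{in}$ paths in $G'$: any cycle $v \to w_1 \to \cdots \to w_k \to v$ in $G$ becomes a path $v_{out} \to w_1 \to \cdots \to w_k \to v_{in}$ in $G'$ of the same length, and conversely any $v_{out}$-to-$v_{in}$ path in $G'$, prepended and appended back at $v$, yields a cycle through $v$ of equal length in $G$. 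Since the shortest such cycle is what \GirthV must return, the output $d$ is exactly the correct girth-containing-$v$ length.

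For the I/O accounting, I first argue the modification is cheap. Reading the adjacency information of $v$ (both its incoming and outgoing edge lists) and relabeling the endpoint $v$ as either $v_{in}$ or $v_{out}$ touches only $O(|A(v)|)$ words, which scans in and writes out in $O(|A(v)|/B) = O(n/B)$ I/Os; the rest of $G$ is left unchanged and can be reused in place. The subsequent call to the \stsp algorithm costs $f(n, |E|, M, B)$, giving a total of $O(f(n, |E|, M, B) + n/B)$ I/Os as claimed.

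The main subtlety, and the only place the proof requires care, is ensuring that the path returned in $G'$ truly corresponds to a simple cycle through $v$ in $G$ and not a degenerate ``length-zero'' artifact: because we add no direct edge from $v_{out}$ to $v_{in}$, the shortest $v_{out}$-to-$v_{in}$ path is forced to use at least one intermediate vertex, precisely matching a nontrivial cycle through $v$ in $G$. A minor bookkeeping point is that vertex splitting in an adjacency-list representation must update the endpoint labels consistently on both sides of each affected edge; this is still $O(n/B)$ I/Os since we touch only the edges incident to $v$, but one should specify the data layout (e.g., storing each edge once with forward/backward pointers, or allowing the \stsp subroutine to accept an input in which $v_{in}$ appears in some lists and $v_{out}$ in others) to make the bound rigorous.
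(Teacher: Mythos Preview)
Your proof is correct and takes essentially the same approach as the paper: split $v$ into $v_{in}$ (receiving all in-edges) and $v_{out}$ (receiving all out-edges), then call \stsp$(G', v_{out}, v_{in})$, with the $O(|A(v)|/B) = O(n/B)$ I/O accounting for the edit. The only discrepancy is that the paper writes ``return the result plus $1$,'' whereas you return $d$ directly; your version is the correct one, since the bijection between cycles $v \to w_1 \to \cdots \to w_k \to v$ in $G$ and paths $v_{out} \to w_1 \to \cdots \to w_k \to v_{in}$ in $G'$ preserves the number of edges exactly, and the paper's ``$+1$'' appears to be a typo (likely carried over from the adjacent \GirthE reduction, which does add an extra edge).
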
 
\begin{proof}
	We construct a modified graph $G'$ by splitting $v$ into two verticies $v_{in}$ and $v_{out}$ where $v_{in}$ contains all of the edges from other verticies to $v$ and $v_{out}$ has edges to all of the verticies which $v$ had edges to. We now run \stsp$(G',v_{out},v_{in})$ and return the result plus $1$.
	
	Let $|A(v)|$ be the length of the adjacency list of the vertex $v$. These edits take time $O(|A(v)|/B)$, which is upper bounded by $O(n/B)$.
\end{proof}

When solving \GirthV in the directed case, we know which direction the path must follow the edges and can perform this decomposition. Unfortunately this no longer works in the undirected case and a more complex algorithm is needed, giving slightly weaker results.

\begin{theorem}
	\label{thm:unGirthV-to-unstsp}
	Given an algorithm that solves undirected \stsp in $f(n,|E|,M,B)$ time then undirected \GirthV is solvable in $O((f(n,|E|,M,B)+n/B)\lg(n))$ time. 
\end{theorem}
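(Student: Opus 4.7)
The plan is to mimic the directed reduction from Girth-Containing-Vertex to s-t shortest path, but compensate for the fact that splitting $v$ into $v_{in},v_{out}$ in an undirected graph fails: the s-t shortest path could simply traverse a single edge $(v,u)$ back and forth, which does not correspond to any cycle. To fix this, I would enforce that the path must leave $v$ through one neighbor and return through a \emph{different} neighbor, by running $O(\lg n)$ different splits which together separate every pair of distinct neighbors.

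Concretely, assign each neighbor of $v$ a distinct binary label of length $k=\lceil \lg n\rceil$ (we can just use the index in the adjacency list of $v$). For each bit position $i \in \{0,\dots,k-1\}$, build a modified graph $G_i'$: delete $v$, introduce two new vertices $v_A^{(i)}$ and $v_B^{(i)}$, and for each neighbor $u$ of $v$ add the edge $(v_A^{(i)},u)$ if $u$'s label has bit $i$ equal to $0$, and $(v_B^{(i)},u)$ otherwise, preserving the original edge weights. Then invoke the \stsp oracle on $G_i'$ with source $v_A^{(i)}$ and target $v_B^{(i)}$, and return the minimum distance obtained over all $i$.

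For correctness, let $C$ be the shortest cycle through $v$; it uses exactly two edges $(v,u)$ and $(v,u')$ with $u\neq u'$. Pick any bit $i$ on which the labels of $u$ and $u'$ differ; then in $G_i'$ the sequence $v_A^{(i)}\to u \to \cdots \to u' \to v_B^{(i)}$ (or the symmetric version) is a path of the same weight as $C$, so the returned distance at iteration $i$ is at most $|C|$. Conversely any $v_A^{(i)}$-$v_B^{(i)}$ path in $G_i'$ uses two distinct neighbors of $v$ and a path between them avoiding $v$, which yields a closed walk through $v$ in $G$; the minimum-weight such walk must be a simple cycle through $v$ (any repetition could be shortcut), so the returned value is at least the girth through $v$. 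Hence the minimum over $i$ equals the correct answer.

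For the running time, constructing $G_i'$ from $G$ only requires re-labeling the endpoints of the $\deg(v)\le n$ edges incident to $v$ (together with a constant amount of bookkeeping per neighbor to look up the relevant bit of its index), which costs $O(n/B)$ I/Os; then one \stsp call costs $f(n,|E|,M,B)$. Summing over the $k=O(\lg n)$ bit positions gives the claimed bound $O\bigl((f(n,|E|,M,B)+n/B)\lg n\bigr)$. The main conceptual obstacle is precisely the failure of the one-shot directed split in the undirected setting, and the $\lg n$ overhead is the cost of the binary-label covering that forces any two neighbors to be separated in at least one iteration; this factor seems unavoidable via this style of reduction.
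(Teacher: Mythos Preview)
Your proposal is correct and follows essentially the same approach as the paper: both label the neighbors of $v$ with distinct binary strings, build one split graph per bit position (connecting $v_A$ to neighbors with a $0$ in that bit and $v_B$ to those with a $1$), call \stsp on each, and return the minimum. The only cosmetic difference is that the paper uses $\lceil \lg(\deg v)\rceil$ bits rather than $\lceil \lg n\rceil$, but since $\deg v\le n$ this yields the same $O(\lg n)$ bound.
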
 
\begin{proof}
	When attempting to solve \GirthV in the undirected case, if we wish to split the required vertex $v$ we end up with the issue of not knowing how to partition the edges between the new nodes. However, we only need to ensure that the two edges used in the solution are assigned to opposite nodes. Conveniently, if $v$ has degree $d$ we can generate $O(\lg d)$ partitions of the edges such that every  pair of edges appears on opposite sides in at least one partition. To do so, label each edge with numbers from $0$ to $d-1$. These can be expressed by $s = \ceil{\lg(d)}$ bit numbers. We generate $s$ partitions where the the assignment of the edges in the $i^{th}$ partition is given by the value of the $i^{th}$ bit of the edge's number. Since each number is different, all pairs of them must differ in at least one bit, yielding the desired property. 
	
	To solve undirected \GirthV we first find all the neighbors of $v$ and number them as above. Now for each bit in this numbering  we construct a new graph $G_i'$ which replaces $v$ with a pair of verticies $v_{i,0}'$ and $v_{i,1}'$. Additionally, $v_{i,0}'$ is connected to all the neighbors of $v$ which had a $0$ in the $i^{th}$ bit of it's number. Similarly, $v_{i,1}'$ is connected to all the neighbors of $v$ which had a $1$ in the $i^{th}$ bit of it's number. To solve \GirthV with \stsp, after constructing each $G_i'$ we call \stsp$(G_i',v_{i,0},v_{i,1})$ and store the answer. After constructing all of the augmented graphs and running the \stsp algorithm, our girth is simply the minimum of all the shortest path lengths found. Constructing each augmented graph only requires interacting with each node and edge a constant number of times and can be done in sequential passes. It thus runs in $n/B$ time. Since this is a sparse graph, the degree of $v$ cannot be more than $|E|$ and thus we will not need to construct more than $O(\lg(n))$ graphs and make  $O(\lg(n))$ calls to the \stsp algorithm. 
\end{proof}

\begin{theorem}
	Given an algorithm that solves undirected \GirthE in $f(n,|E|,M,B)$ time then undirected \GirthV is solvable in $O((f(n,|E|,M,B)+n/B)\lg(n))$ time. 
\end{theorem}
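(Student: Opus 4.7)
The plan is to mirror the bit-partition construction from Theorem~\ref{thm:unGirthV-to-unstsp}, but route the query through \GirthE instead of \stsp. First I would label the $d$ neighbors of the target vertex $v$ with distinct integers $0,\dots,d-1$, using $s=\lceil\lg d\rceil$ bits. For each bit position $i \in \{0,\dots,s-1\}$ I build a modified graph $G'_i$ in which $v$ is split into two copies $v_{i,0}$ and $v_{i,1}$: a neighbor whose label has bit $i$ equal to $0$ is attached to $v_{i,0}$, a neighbor whose bit $i$ is $1$ is attached to $v_{i,1}$, and I insert a fresh edge $e'_i$ of weight $1$ between $v_{i,0}$ and $v_{i,1}$. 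I then call \GirthE$(G'_i, e'_i)$ for each $i$ and return the minimum of the returned values, minus $1$.

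For correctness, any simple cycle through $v$ in $G$ uses exactly two edges incident to $v$, connecting it to distinct neighbors $a$ and $b$. Since $a\neq b$ their labels differ in at least one bit position $i^*$, so in $G'_{i^*}$ the two edges attach to different copies of $v$; closing the cycle via $e'_{i^*}$ produces a cycle through $e'_{i^*}$ in $G'_{i^*}$ whose length is exactly one more than the original. Conversely, any cycle through $e'_{i^*}$ must leave $v_{i^*,0}$ via an edge coming from a neighbor whose label has bit $i^*$ equal to $0$ and enter $v_{i^*,1}$ via an edge from a neighbor with bit $i^*$ equal to $1$, so it corresponds to a genuine cycle through $v$ in $G$ of length one less. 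Taking the minimum over all $i$ and subtracting $1$ therefore yields the length of the shortest cycle containing $v$.

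For the cost: constructing each $G'_i$ requires a single pass over $v$'s adjacency list together with $O(1)$ additional edits, which costs $O(|A(v)|/B) = O(n/B)$ I/Os since the graph is sparse. There are $s = O(\lg n)$ bit positions, hence $O(\lg n)$ invocations of \GirthE at cost $f(n,|E|,M,B)$ each and $O(\lg n)$ construction passes at $O(n/B)$ each, yielding a total of $O((f(n,|E|,M,B)+n/B)\lg n)$ I/Os as claimed.

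The main subtlety, already handled in Theorem~\ref{thm:unGirthV-to-unstsp}, is the bit-partition argument: every pair of distinct neighbors of $v$ must be separated across $v_{i,0}$ and $v_{i,1}$ in at least one of the $s$ graphs, which follows because any two distinct $s$-bit numbers disagree in at least one bit. No new obstacle arises from replacing \stsp with \GirthE, because the inserted edge $e'_i$ forces the cycle returned by \GirthE to traverse the split of $v$, so the reduction is essentially a mechanical variant of the preceding theorem.
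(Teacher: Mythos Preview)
Your proposal is correct and follows essentially the same approach as the paper: the paper's proof simply states that the reduction proceeds exactly as in Theorem~\ref{thm:unGirthV-to-unstsp}, except one adds an extra edge $e_i'$ between $v_{i,0}'$ and $v_{i,1}'$ and calls \GirthE$(G_i',e_i')$ instead of \stsp. You have supplied the correctness argument and the $-1$ adjustment that the paper leaves implicit, but the construction and analysis are the same.
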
 
\begin{proof}
	The reduction from \GirthV to \GirthE proceeds exactly as in the proof of Theorem~\ref{thm:unGirthV-to-unstsp} except that we add an extra edge $e_i'$ between $v_{i,0}'$ and $v_{i,1}'$ and call \GirthE$(G_i',e_i')$ instead of \stsp on the input.
\end{proof}

\section{Lower Bounds from Fine-Grained Reductions}
The fundamental problems in the fine-grained complexity world are good starting points for assumptions in the I/O model because these problems are so well understood in the RAM model. Additionally, both APSP and 3-SUM have been studied in the I/O model \cite{apspSparse,pagh2014inputMatrix,patrascu2010towards}. These reductions allow us to propagate believed lower bounds from one problem to others, as well as propagate any potential future algorithmic improvements.  

\subsection{Reductions to 3-SUM}
\label{sec:3sum}
We will show that 3-SUM is reducible to both convolution 3-SUM and 0 triangle in the I/O-model. 

\begin{lemma}
	If convolution 3-SUM is solved in $f(n,M,B)$ time then 3-SUM  is solved in\\
	 $O(g^3 f(n/g,M,B)) + n^2/(gMB) )$ time for all $g\in [1,n]$.
	\label{lem:con3sumred}
\end{lemma}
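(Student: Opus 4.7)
The plan is to adapt P\u{a}tra\c{s}cu's classical hashing-based reduction from 3-SUM to convolution 3-SUM, showing that each step can be implemented I/O-efficiently. Given a 3-SUM instance $A$ of size $n$ and a parameter $g \in [1,n]$, first fix an almost-linear hash function $h : U \to [R]$ with $R = \Theta(n/g)$, meaning that for any $a,b$, $h(a)+h(b)$ equals $h(a+b) + \delta$ for some small $\delta$ (say $\delta \in \{0,1,-1\}$ in the standard construction $h(x) = \lfloor (ax \bmod p)/q \rfloor$). Such a hash distributes elements roughly uniformly, so each bucket $B_i = \{a \in A : h(a) = i\}$ has expected size $g$. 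Bucketing itself is a sort/scan on the hash value, costing $O((n/B)\log_{M/B}(n/B))$ I/Os, absorbed into the additive term.

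Second, for each triple of ranks $(r_1,r_2,r_3) \in [g]^3$ and each shift $\delta$, build a convolution 3-SUM instance of size $R$: three arrays where position $i$ of array $s$ holds the $r_s$-th element of $B_i$ (or a sentinel if $|B_i| < r_s$). The almost-linear property guarantees that any witness $a+b=c$ with $a\in B_i, b\in B_j, c\in B_k$ satisfies $i+j \equiv k+\delta \pmod R$, so the constructed instance, fed to the given $f$-time convolution 3-SUM algorithm, reports such witnesses as long as the buckets involved have size at most $g$. With $O(g^3)$ rank-triples times $O(1)$ shifts, this contributes $O(g^3\, f(n/g, M, B))$ I/Os, matching the first term of the claimed bound.

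Third, handle \emph{overflow}: witnesses $(a,b,c)$ where at least one of the three elements lies in a bucket of size exceeding $g$. By the (approximate) uniformity of $h$, a standard tail/Markov argument shows the expected number of pairs $(a,b) \in A \times A$ for which $a$ or $b$ lies in an overflowed bucket is $O(n^2/g)$. These pairs can be verified in batches: concatenate the overflow sets, sort, and use the I/O-efficient 3-SUM subroutine of Baran--Demaine--P\u{a}tra\c{s}cu on each $\Theta(M)$-sized chunk to check whether $a+b$ appears in $A$, yielding the $MB$ savings and giving $O(n^2/(gMB))$ I/Os overall; repetition or Chernoff-style concentration converts the expectation into a worst-case bound at the cost of a logarithmic factor absorbed into the $O(\cdot)$.

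The main obstacle is the overflow step: showing that the $O(n^2/g)$ worth of RAM-model ``cleanup'' work transfers to $O(n^2/(gMB))$ I/Os. The reason this should go through is that the cleanup is essentially a restricted 3-SUM verification over $O(n)$ candidate pairs against the sorted array $A$, which is exactly the regime in which the cache-aware 3-SUM algorithm achieves the $1/(MB)$ savings; the care needed is in choosing the hash family (so that the bound on overflow mass holds with high probability) and in laying out the buckets contiguously so that the batched verifications each read $\Theta(M/B)$ cache lines and process $\Theta(M)$ pairs between misses.
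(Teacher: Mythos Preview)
Your proposal is correct and follows essentially the same approach as the paper: both implement P\u{a}tra\c{s}cu's hashing-based reduction I/O-efficiently, separating the overflow buckets (expected $O(n/g)$ elements, handled as an asymmetric 3-SUM in $O(n^2/(gMB))$ I/Os) from the light buckets (handled by $O(g^3)$ rank-triple convolution 3-SUM calls of size $n/g$). The paper's proof is in fact truncated mid-sentence after the rank-labeling step, so your explicit description of building the Conv3SUM instances from the rank-sorted elements actually supplies the detail the paper omits.
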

\begin{proof}
	Following the proof of P{\v{a}}tra\c{s}cu we can hash each value into the range $n/g$ in time $n/B$ \cite{patrascu2010towards}. We then sort the elements by their hash value in time $n \lg_{M/B}(n)/B$. We scan through and put elements in over-sized buckets in one memory location, and put the elements in buckets with less than $10g$ elements elsewhere sorted by hash value in time $O(n/B)$. 
	
	The expected number of elements in an over-sized buckets is $n/g$ and then solve the 3-SUM problem on lists of length $n$, $n$ and $n/g$ in $n^2/(gMB)$ time. 
	
	We then go through the small buckets of size $<10g$ we mark each element in the buckets by their order in the bucket (so each element is assigned a unique number from $[1,10g]$). Now we re-sort the elements in small buckets by their order number in time $n \lg_{M/B}(n)$. Then we 
\end{proof}

\begin{corollary}
If convolution 3-SUM is solved in time $O(n^{2-\epsilon}/(MB))$ or $O(n^{2}/(M^{1+\epsilon}B))$  or $O(n^{2}/(MB^{1+\epsilon}))$ then 3-SUM is solved in $O(n^{2-\epsilon'}/(MB))$ or $O(n^{2}/(M^{1+\epsilon'}B))$  or $O(n^{2}/(MB^{1+\epsilon'}))$ time, violating the I/O 3-SUM conjecture. 
\label{cor:conv3to3sum}
\end{corollary}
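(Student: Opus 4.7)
The plan is to deduce the corollary directly from Lemma \ref{lem:con3sumred} by substituting each of the three hypothetical convolution 3-SUM running times into the formula $O(g^{3} f(n/g, M, B) + n^{2}/(gMB))$ and choosing $g \in [1,n]$ to balance the two terms, thereby producing a 3-SUM algorithm that beats the I/O 3-SUM conjecture.

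For the first case, I would substitute $f(n,M,B) = n^{2-\epsilon}/(MB)$, obtaining a total of
\[
O\!\left(g^{3}\cdot\frac{(n/g)^{2-\epsilon}}{MB} + \frac{n^{2}}{gMB}\right) = O\!\left(\frac{g^{1+\epsilon}\,n^{2-\epsilon}}{MB} + \frac{n^{2}}{gMB}\right).
\]
Setting $g^{1+\epsilon} n^{2-\epsilon} = n^{2}/g$ gives the optimal choice $g = n^{\epsilon/(2+\epsilon)}$ (which lies in $[1,n]$ for $\epsilon \in (0,1]$), yielding total I/O cost $n^{2-\epsilon/(2+\epsilon)}/(MB)$, i.e.\ an improvement with $\epsilon' = \epsilon/(2+\epsilon) > 0$.

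For the second case, substituting $f(n,M,B) = n^{2}/(M^{1+\epsilon}B)$ gives
\[
O\!\left(g\cdot\frac{n^{2}}{M^{1+\epsilon}B} + \frac{n^{2}}{gMB}\right),
\]
balanced by $g^{2} = M^{\epsilon}$, so $g = M^{\epsilon/2}$; this yields $O(n^{2}/(M^{1+\epsilon/2}B))$, giving $\epsilon' = \epsilon/2$. The third case is symmetric: substituting $f(n,M,B) = n^{2}/(MB^{1+\epsilon})$ and choosing $g = B^{\epsilon/2}$ yields $O(n^{2}/(MB^{1+\epsilon/2}))$, again with $\epsilon' = \epsilon/2$. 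I should check in each case that the chosen $g$ is an integer in $[1,n]$ up to rounding (and observe that rounding only affects constants), and note that each of the three resulting bounds directly contradicts Conjecture \ref{conj:3-sum}.

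The only real obstacle is a sanity check that Lemma \ref{lem:con3sumred} really applies for all $g \in [1,n]$, including the small-$\epsilon$ regime where the optimum $g$ may be very close to $1$; here the second additive term $n^{2}/(gMB)$ just reduces to the standard 3-SUM upper bound, and no tall-cache or divisibility assumption is violated. Beyond this, the argument is a routine balancing computation, so I would keep the write-up to the three substitutions and their optimizations, followed by a one-line appeal to Conjecture \ref{conj:3-sum} to conclude the violation.
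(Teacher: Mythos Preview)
Your proposal is correct and follows the same approach as the paper: apply Lemma~\ref{lem:con3sumred}, substitute each hypothetical convolution 3-SUM bound, and balance the two terms to choose $g$. You obtain exactly the same values of $\epsilon'$ (namely $\epsilon/(2+\epsilon)$, $\epsilon/2$, $\epsilon/2$) that the paper derives, and your added sanity checks on $g\in[1,n]$ are a nice touch the paper omits.
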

\begin{proof}
If conv. 3-SUM (convolution 3-SUM) is solved in time $f(n,M,B)=  n^{x}/(M^yB^z)$ then we can solve 3-SUM in time $O(n^{2-(2-x)/(4-x)}/(M^{1+(1-y)/(4-x)}B^{1+(1-z)/(4-x)}))$. 

If we can solve conv. 3-SUM in $O(n^{2-\epsilon}/(MB))$ then we can solve 3-SUM in $O(n^{2-\epsilon/(2+\epsilon)}/(MB))$. 

If we can solve conv. 3-SUM in $O(n^{2}/(M^{1+\epsilon}B))$  then we can solve 3-SUM in $O(n^{2}/(M^{1+\epsilon/2}B))$.

If we can solve conv. 3-SUM in $O(n^{2}/(MB^{1+\epsilon}))$  then we can solve 3-SUM in $O(n^{2}/(MB^{1+\epsilon/2}))$. 
\end{proof}

\begin{lemma}
	If 0 triangle is solved in $f(n,M,B)$ time then convolution 3-SUM  is solved in $O(\sqrt{n}f(\sqrt{n},M,B)+n^{1.5}\lg_{M/B}(n)/B)$ time.
	\label{lem:0triangleconv3sum}
\end{lemma}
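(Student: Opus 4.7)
The plan is to adapt P\v{a}tra\c{s}cu's reduction from convolution 3-SUM to 0-triangle \cite{patrascu2010towards} to the I/O model, using block size $\sqrt{n}$. First I would partition each of $A$, $B$, $C$ into $\sqrt{n}$ consecutive blocks of size $\sqrt{n}$, writing each index uniquely as $a\sqrt{n}+i'$ with $a,i'\in[\sqrt{n}]$. The convolution constraint $i+j+k\equiv 0\pmod n$ becomes $(a+b+c)\sqrt{n}+(i'+j'+k')\equiv 0\pmod n$; since $i'+j'+k'\in[0,3\sqrt{n})$, only $O(\sqrt{n})$ residues $(a+b+c)\bmod\sqrt{n}$ are possible, which I will call the \emph{slices} to enumerate.

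For each slice $s\in[\sqrt{n}]$, the plan is to construct a tripartite 0-triangle instance whose three vertex parts are the block indices of $A$, $B$, $C$, each of size $\sqrt{n}$. Edge weights between two blocks are set to a canonical encoding (via P\v{a}tra\c{s}cu's hashing trick) of the candidate intra-block partial sums, so that a zero-weight triangle in the slice-$s$ graph exists if and only if there is a convolution 3-SUM witness with $(a+b+c)\bmod\sqrt{n}=s$. Running the assumed algorithm on each of the $\sqrt{n}$ slice instances costs a total of $\sqrt{n}\cdot f(\sqrt{n},M,B)$ I/Os, which matches the first term of the claimed bound.

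The remaining work is producing the edge weights for each slice. Per slice, one must examine $\Theta(n)$ block-pair tuples (there are $\Theta(n)$ edges on $\sqrt{n}$ vertices in a tripartite graph) and aggregate them to the canonical form. Done memory-locally, this requires a constant number of sorts of $O(n)$ records per slice, costing $O((n/B)\log_{M/B}(n))$ I/Os per slice by the standard I/O sorting bound, and $O(n^{1.5}\log_{M/B}(n)/B)$ I/Os summed over all $\sqrt{n}$ slices. This matches the second term of the claimed bound.

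The main obstacle is verifying that P\v{a}tra\c{s}cu's hashing-based encoding (a RAM construction) can be produced I/O-efficiently and that the resulting tripartite graph does encode exactly the slice-$s$ convolution witnesses in a single 0-triangle call. The critical point is that the hash signatures of a block are short, canonical functions of the block's contents, so the work of producing them reduces to scanning each block once (linear I/O) and then sorting tuples by their block endpoints to collect them into the edge list; both operations fit within the $O((n/B)\log_{M/B}(n))$ sorting budget per slice. Everything else (comparing hash signatures at 0-triangle time, checking candidate witnesses) is internal to the 0-triangle call and thus already charged to $f(\sqrt{n},M,B)$.
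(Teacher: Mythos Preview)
Your high-level shape (enumerate $O(\sqrt{n})$ index ``slices,'' build one $\sqrt{n}$-vertex tripartite instance per slice, and account for $O((n/B)\log_{M/B}n)$ preprocessing per slice) matches the paper, but the heart of the reduction is wrong. The three vertex parts in the 0-triangle instance are \emph{not} the block indices of $A$, $B$, $C$, and there is no ``hashing trick'' that lets a single edge weight summarize all the intra-block partial sums so that three such summaries add to zero exactly when a witness exists. If vertices were block indices, then a triangle $(a,b,c)$ would correspond to a $\sqrt{n}\times\sqrt{n}\times\sqrt{n}$ sub-instance of convolution 3-SUM, not a single arithmetic identity; no hash of the block contents collapses that to one additive test.

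The correct reduction (the paper follows Vassilevska Williams--Williams) uses an asymmetric indexing: one part is indexed by the high $\sqrt{n}$ bits of an array index, another by the low $\sqrt{n}$ bits, and each edge weight is literally a \emph{single} array entry determined by its two endpoints and the current slice parameter $i$. Concretely, with parts $L^i,R^i,S^i$ of size $\sqrt{n}$ one sets $w(L^i[s],R^i[t])=A[(s-1)\sqrt{n}+t]$, $w(R^i[t],S^i[q])=B[(i-1)\sqrt{n}+q-t]$, and $w(L^i[s],S^i[q])=-C[(s+i-2)\sqrt{n}+q]$, so a triangle $(s,t,q)$ sums to $A[\cdot]+B[\cdot]-C[\cdot]$ for indices that satisfy the convolution constraint for slice $i$. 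The I/O cost you budgeted is then spent exactly as in your last paragraph: scan the arrays, tag each value with its target position in the adjacency matrix, and sort --- but applied to this edge-per-array-entry layout, not to hashed block signatures.
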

\begin{proof}
	We will use a reduction inspired by the reduction in Vassilevska-Williams and Williams \cite{williams2013finding}. We produce $\sqrt{n}$ problems. Specifically the problems will be labeled by $i\in [1,\sqrt{n}]$ and we will produce a graph on $L^i,R^i,S^i$. We make the problems as follows (as is done in Vassilevska-Williams and Williams \cite{williams2013finding}).
	\begin{align}
	w(L^i[s],R^i[t])&=A[(s-1)\sqrt{n} +t]\\
	w(R^i[s],S^i[q])&=B[(i-1)\sqrt{n} +q-t]\\
	w(L^i[s],S^i[q])&=-C[(s+i-2)\sqrt{n}+q]
	\end{align} 
	
	Zero triangle will need the input adjacency list to be given to it. Given an adjacency matrix of size $\sqrt{n}$ by $\sqrt{n}$ indexed by $k$ and $j$ let $h_{\sqrt{n}}(k,j) = k\sqrt{n}+j$. For each problem we will generate the adjacency list and lay it out in memory by labeling each element with its order in memory. It will take $n/B$ time to scan through the convolution 3-sum instance. Given the index $i$ of the problem we can compute the $k$ and $j$ (note for values in list $B$ they will have multiple pairs $k$ and $j$ produced) for the corresponding 0 triangle instance and thus compute $h_{\sqrt{n}}(k,j)$. We can scan through the values from  the lists $A$,$B$ and $C$ and assign them values $h$ and then sort the lists based on the values of $h$. This will take time $O(n/B + n\lg_{M/B}(n)/B)$ for each subproblem $i$. For a  total time of 
	$$\sqrt{n}f(\sqrt{n},M,B)+n^{1.5}\lg_{M/B}(n)/B.$$
\end{proof}

\begin{theorem}
	If 0 triangle is solved in time $O(n^{3-\epsilon}/(MB))$ or $O(n^{3}/(M^{1+\epsilon}B))$  or $O(n^{3}/(MB^{1+\epsilon}))$ then 3-SUM is solved in $O(n^{2-\epsilon'}/(MB))$ or $O(n^{2}/(M^{1+\epsilon'}B))$  or $O(n^{2}/(MB^{1+\epsilon'}))$ time, violating the I/O 3-SUM conjecture.
	\label{thm:0tri3sum}
\end{theorem}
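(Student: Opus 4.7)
The plan is to chain the two previous results together: Lemma~\ref{lem:0triangleconv3sum} turns a fast 0-triangle algorithm into a fast convolution 3-SUM algorithm, and Corollary~\ref{cor:conv3to3sum} turns a fast convolution 3-SUM algorithm into a fast 3-SUM algorithm that contradicts Conjecture~\ref{conj:3-sum}. So the proof should be a short composition: take whichever of the three hypothesized 0-triangle running times is given, substitute it into Lemma~\ref{lem:0triangleconv3sum} to derive a corresponding convolution 3-SUM running time, then feed that into Corollary~\ref{cor:conv3to3sum}.

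Concretely, I would handle the three cases in parallel. In the first case, with $f(n,M,B) = n^{3-\epsilon}/(MB)$, Lemma~\ref{lem:0triangleconv3sum} gives convolution 3-SUM in time
\[
\sqrt{n}\cdot \frac{n^{(3-\epsilon)/2}}{MB} + \frac{n^{1.5}\lg_{M/B}(n)}{B} \;=\; O\!\left(\frac{n^{2-\epsilon/2}}{MB}\right),
\]
after which Corollary~\ref{cor:conv3to3sum} yields 3-SUM in $O(n^{2-\epsilon'}/(MB))$ for some $\epsilon'>0$. In the second case, $f(n,M,B)=n^3/(M^{1+\epsilon}B)$ gives $\sqrt{n}\,f(\sqrt{n},M,B) = n^2/(M^{1+\epsilon}B)$, and Corollary~\ref{cor:conv3to3sum} then produces a 3-SUM algorithm running in $O(n^2/(M^{1+\epsilon'}B))$. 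The third case is symmetric: $f(n,M,B)=n^3/(MB^{1+\epsilon})$ becomes convolution 3-SUM in $O(n^2/(MB^{1+\epsilon}))$, and Corollary~\ref{cor:conv3to3sum} gives 3-SUM in $O(n^2/(MB^{1+\epsilon'}))$. Each of these contradicts Conjecture~\ref{conj:3-sum}.

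The main obstacle I anticipate is the additive term $n^{1.5}\lg_{M/B}(n)/B$ from Lemma~\ref{lem:0triangleconv3sum}, which must be shown not to dominate the leading term in any of the three cases. In the standard regime where $M$ is polynomial in $n$ (and in particular $M \le n^{1-\delta}$ for some $\delta>0$, or otherwise the I/O bounds in the hypothesis are already trivial), the leading term $n^{2-\epsilon/2}/(MB)$, respectively $n^2/(M^{1+\epsilon}B)$ or $n^2/(MB^{1+\epsilon})$, dominates $n^{1.5}\lg_{M/B}(n)/B$. It suffices to observe that $n^{0.5-\epsilon/2}/M \ge \tilde{\Omega}(1)$ in the tall-cache regime where the conjecture is formulated, so the sorting overhead is absorbed; a short argument verifying this for each of the three parameter regimes is all that is needed.

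Finally, I would state explicitly how the resulting $\epsilon'$ depends on $\epsilon$: it follows directly from Corollary~\ref{cor:conv3to3sum}, which gives $\epsilon' = \epsilon/(2+\epsilon)$ for the first case and $\epsilon' = \epsilon/2$ for the other two. This closes the chain and completes the contradiction with Conjecture~\ref{conj:3-sum}.
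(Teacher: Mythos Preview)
Your proposal is correct and follows essentially the same route as the paper: apply Lemma~\ref{lem:0triangleconv3sum} to pass from 0-triangle to convolution 3-SUM, then Corollary~\ref{cor:conv3to3sum} to reach 3-SUM. Your treatment is in fact more careful than the paper's, which does not explicitly discuss the additive $n^{1.5}\lg_{M/B}(n)/B$ term; one small slip is that your stated final values of $\epsilon'$ quote the formulas from Corollary~\ref{cor:conv3to3sum} using the convolution-3-SUM exponent rather than composing with the halving from Lemma~\ref{lem:0triangleconv3sum} (e.g., the first case should give $\epsilon' = \epsilon/(4+\epsilon)$), but this does not affect the argument.
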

\begin{proof}  
	If 0 Triangle is solved in time $O(n^{3-\epsilon}/(MB))$ or $O(n^{3}/(M^{1+\epsilon}B))$  or $O(n^{3}/(MB^{1+\epsilon}))$ then convolution 3-SUM is solved in $O(n^{2-\epsilon'}/(MB))$ or $O(n^{2}/(M^{1+\epsilon'}B))$  or $O(n^{2}/(MB^{1+\epsilon'}))$ time by Lemma \ref{lem:0triangleconv3sum}.
	
	By Corollary \ref{cor:conv3to3sum} we have that convolution 3-SUM being solved in  $O(n^{2-\epsilon'}/(MB))$ or\\
	 $O(n^{2}/(M^{1+\epsilon'}B))$  or $O(n^{2}/(MB^{1+\epsilon'}))$ time implies 3-SUM is solved in  $O(n^{2-\epsilon''}/(MB))$ or\\ $O(n^{2}/(M^{1+\epsilon''}B))$  or $O(n^{2}/(MB^{1+\epsilon''}))$ time. 
\end{proof}

\begin{lemma}
	If 3-SUM is solved in $f(n,M,B)$ I/Os then Convolution 3-SUM is solvable in $O(f(n,M,B) +n/B)$ I/Os.
	\label{lem:0convReduction}
\end{lemma}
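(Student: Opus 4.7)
The plan is to reduce Convolution 3-SUM to a constant number of standard 3-SUM instances of size $O(n)$, with preprocessing that can be done by a sequential scan in $O(n/B)$ I/Os. The main technical observation is that the modular index condition $i+j+k\equiv 0\pmod n$ can be linearized without blowing up the number of 3-SUM calls beyond a constant.

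First I would eliminate the modular constraint: since $i,j,k\in[0,n-1]$, we have $i+j+k\in[0,3n-3]$, so $i+j+k\equiv 0\pmod n$ holds iff $i+j+k\in\{0,n,2n\}$. Accordingly I would handle each of the three possible values $s\in\{0,n,2n\}$ as a separate subproblem and OR the answers.

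Next, for a fixed $s$, I would encode the equality $i+j+k=s$ into the numerical values using a large multiplier. Let $W$ be larger than any value $|A[i]+B[j]+C[k]|$ can attain (for example $W := 3\max_{i,j,k}(|A[i]|,|B[j]|,|C[k]|)+1$, computable in one scan). Define
\[
A'[i]:=A[i]+Wi,\qquad B'[j]:=B[j]+Wj,\qquad C'_s[k]:=C[k]+W(k-s).
\]
Then $A'[i]+B'[j]+C'_s[k]=(A[i]+B[j]+C[k])+W(i+j+k-s)$. Since the first summand has absolute value less than $W$, the total equals $0$ iff \emph{both} $A[i]+B[j]+C[k]=0$ and $i+j+k=s$. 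This reduces the $s$-subproblem of Convolution 3-SUM to a three-list 3-SUM instance on $A',B',C'_s$.

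Then I would collapse each three-list 3-SUM instance to a standard single-list 3-SUM by tagging: pick three pairwise-separated large offsets $\tau_A,\tau_B,\tau_C$ with $\tau_A+\tau_B+\tau_C=0$ and $|\tau_A|,|\tau_B|,|\tau_C|$ exceeding any possible sum of three elements from $A'\cup B'\cup C'_s$, and form the single list $L_s := (A'+\tau_A)\cup(B'+\tau_B)\cup(C'_s+\tau_C)$ of size $3n$. The offset sizes guarantee that any zero-sum triple in $L_s$ must contain exactly one element from each of $A',B',C'_s$, preserving correctness. Every construction step (finding $W$, computing $A',B',C'_s$, adding tags, concatenating) is a single sequential pass through the $O(n)$-word inputs, so each costs $O(n/B)$ I/Os. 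We invoke the 3-SUM oracle once for each $s\in\{0,n,2n\}$ on an instance of size $3n$, for a total of $3f(O(n),M,B)+O(n/B)=O(f(n,M,B)+n/B)$ I/Os, as required.

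The main obstacle is the modular constraint, because a single multiplicative shift of indices only encodes an \emph{exact} equality $i+j+k=s$, not a congruence. Splitting into three cases is the cleanest way around this; an alternative is to bundle the three shifted copies of $C$ into one combined list and make a single 3-SUM call of size $5n$, but this yields the same asymptotic bound and is essentially the same argument. A minor sub-point to verify is that the multiplier $W$ and the tag offsets fit into a constant number of machine words (they are polynomial in $n$ and the input magnitudes), so they do not inflate the per-element I/O cost.
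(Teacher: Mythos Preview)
Your proposal is correct and uses the same core idea as the paper: encode the index constraint into the numerical values via a large multiplier (the paper uses $10M\cdot i$ where $M$ is the maximum absolute value, you use $W\cdot i$), concatenate into a single list of size $O(n)$ with a linear scan costing $O(n/B)$ I/Os, and make $O(1)$ calls to the 3-SUM oracle. The only real difference is that you explicitly handle the modular constraint $i+j+k\equiv 0\pmod n$ by splitting into the three exact cases $i+j+k\in\{0,n,2n\}$, whereas the paper's sketch treats only the non-modular $i+j=k$ form; this makes your write-up more complete for the definition given in the paper, but does not change the asymptotics or the underlying mechanism.
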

\begin{proof}
	Let the largest element in the list have absolute value $M$. 
	The standard reduction takes the $i^{th}$ element of the list $A$, $A[i]=a_i$, and adds $10Mi+a_i$, $10Mi+a_i$ and $-10Mi+a_i$ to the list $L$. If $A[i]+A[j]+A[i+j]=0$ then $10Mi+A[i]+10Mj+A[j]-10M(i+j)+A[i+j] = 0$. If $10Mi+A[i]+10Mj+A[j]-10M(k)+A[k] = 0$ then due to the large size of $10M$ $i+j=k$ and $A[i]+A[j]+A[k] = A[i]+A[j]+A[i+j]=0$. 
	It takes $O(1)$ I/Os to read in $B$ entries from $A$ and write out $B$ to $L$. Thus, we run 3-SUM once on the list $L$ and take $n/B$ I/Os for a total time of $O(f(n,M,B) +n/B)$ I/Os.
\end{proof}

\begin{corollary}
	Convolution 3-SUM is solvable in $O(n^2/(MB)+n/B)$
	\label{cor:0convUB}
\end{corollary}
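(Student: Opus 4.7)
The plan is to derive this as an immediate consequence of Lemma~\ref{lem:0convReduction} composed with the best known I/O upper bound for 3-SUM. First I would invoke the fact, cited in Table~\ref{table:long-list} from \cite{baran2005subquadratic}, that 3-SUM can be solved in $\tO(n^2/(MB))$ I/Os; this gives an explicit function $f(n,M,B) = O(n^2/(MB))$ to plug into the reduction.

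Next, I would apply Lemma~\ref{lem:0convReduction}, which says that any $f(n,M,B)$-I/O algorithm for 3-SUM yields an $O(f(n,M,B) + n/B)$-I/O algorithm for Convolution 3-SUM via the standard $10Mi + a_i$ encoding that forces the index sum constraint. Substituting $f(n,M,B) = O(n^2/(MB))$ yields the claimed bound $O(n^2/(MB) + n/B)$.

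There really is no obstacle here: the reduction is a single linear scan producing a list $L$ of length $O(n)$ followed by one black-box call to a 3-SUM routine, and both terms of the bound come straight from that template. The only minor care needed is to note that the preprocessing to build $L$ is inherently $\Omega(n/B)$ (one must read the input), so the additive $n/B$ cannot be absorbed into the $n^2/(MB)$ term when $M$ is large, which is precisely why it appears in the statement.
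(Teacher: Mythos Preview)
Your proposal is correct and matches the paper's approach exactly: the paper's proof is literally just ``Given Lemma~\ref{lem:0convReduction},'' implicitly combined with the $\tO(n^2/(MB))$ 3-SUM upper bound from \cite{baran2005subquadratic} cited in Table~\ref{table:long-list}. Your write-up is in fact more complete than the paper's, since you make explicit both ingredients and explain why the additive $n/B$ term survives.
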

\begin{proof}
	Given Lemma \ref{lem:0convReduction}
\end{proof}
\subsection{APSP Reductions in the IO-Model}
\label{sec:APSPReductions}
%
%
%

We show reductions between APSP, negative weight triangle finding, $(\min,+)$-matrix multiplication, and all pairs triangle detection, as diagrammed in Figure~\ref{fig:APSPandfriends}. 


\begin{figure}[ht]
	\centering
	\includegraphics[width=0.5\textwidth]{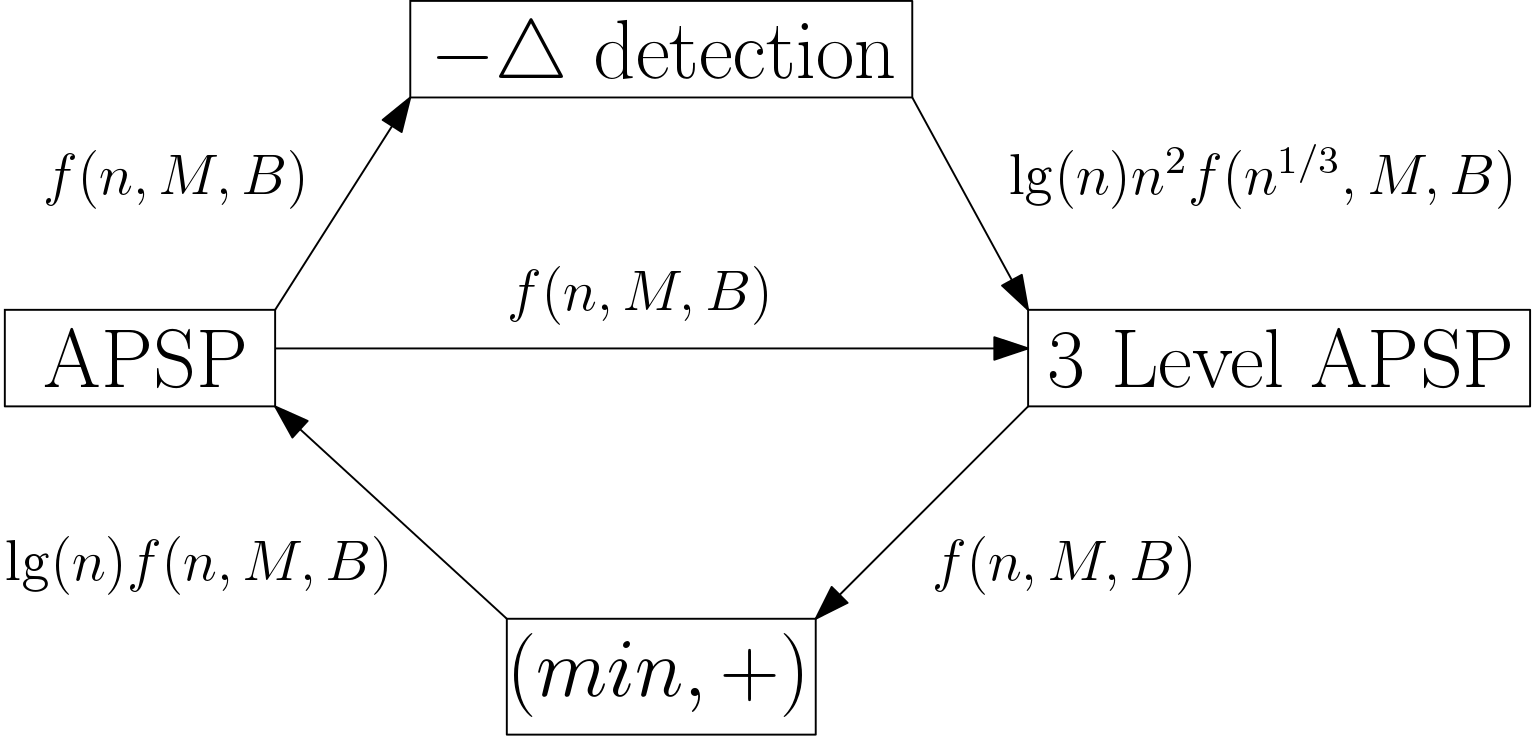}
	\caption{A depiction of the relationships between All Pairs Shortest Paths (APSP) and other problems in the I/O-model. An arrow $A \rightarrow B$ means that an algorithm for problem $A$ can be used to solve problem $B$. The labels on these arrows mean that if problem A is solvable in $f(n,M,B)$ cache misses, then problem $B$ is solvable in the time labeled on the arrow.}
	\label{fig:APSPandfriends}
\end{figure}


\paragraph{The problems.}

We are considering a set of four problems, shown in Figure \ref{fig:APSPandfriends}. 

\begin{definition}[All-Pairs-Shortest-Path(G)]
	Given a fully connected graph $G$ with large edge weights (weights between $-n^c$ and $n^c$ for some constant $c$) return the path lengeths between all pairs of nodes in a matrix $D$ where $D[i][j] = $ the length of the shortest path from node $i$ to node $j$). 
\end{definition}

Another related version of APSP requires us to return all the shortest paths in addition to the distances. To represent this information efficiently, one is required to return an $n$ by $n$ matrix $P$ where the $P[i][j]$ is the next node after $i$ on the shortest path from $i$ to $j$. The matrix $P$ allows one to extract the shortest path between two points by following the path through the matrix $P$.  This problem is also called APSP.

\begin{definition}[Three-Layer-APSP(G)]
 Solve APSP on $G$ where $G$ is promised to be a bipartite graph $G$ which has partitions $A$, $B$, and $C$, such that there are no edges within $A$,$B$ or $C$ and no edges between $A$ and $C$. This is shown visually in Figure \ref{fig:threelayerapsp}. 
 \end{definition}


\begin{definition}[Negative-triangle-detection(G)]
	Given a graph $G$, retuning true if there is a negative triangle and false if there is no negative triangle. This problem is also called $-\triangle$ detection.
\end{definition}


\begin{figure}[ht]
	\centering
	\includegraphics[width=0.5\textwidth]{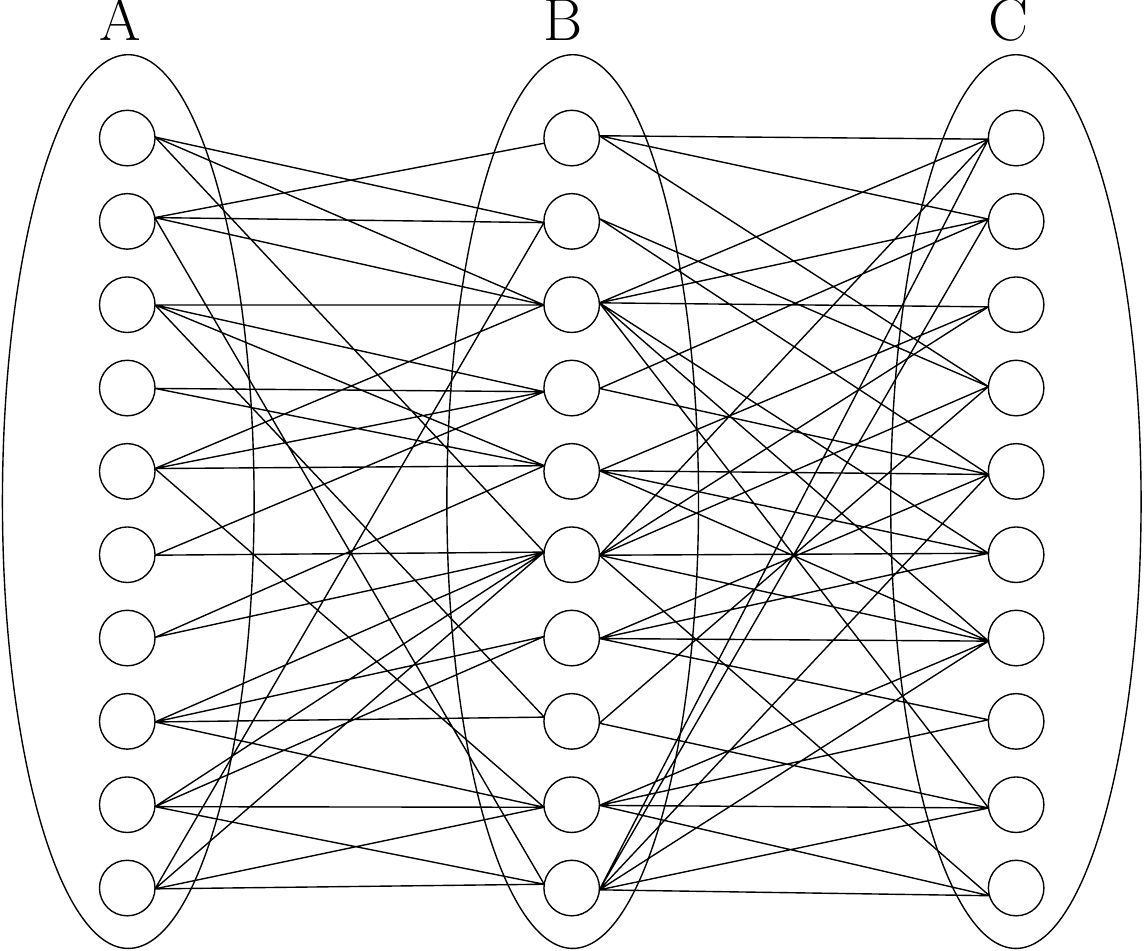}
	\caption{An example of a three layer APSP problem.}
	\label{fig:threelayerapsp}
\end{figure}

\begin{definition}[$(\min,+)$-Matrix-Multiplication(A,B)] 
This problem is a variant on matrix multiplication.Given an $n$ by $n$ matrix $A$ and an $n$ by $n$ matrix $B$ return an $n$ by $n$ matrix $C$ such that $C[i][j]= \min \left( \{A[i][k]+B[k][j]| \forall k\in [1,n]  \} \right )$ .
\end{definition}

The motivation for showing I/O equivalences between these problems is two fold. First, just as in the RAM model, these reductions can provide a shared explanation for why some problems have seen no improvement in their I/O complexity for years. 

\paragraph{The set of reductions.}

\begin{theorem}
	If $\left(\min,+\right)$ runs in time $f\left(n,M,B\right)$, then APSP runs in 
	$O \left ( \lg\left(n\right) f\left(n,M,B\right) \right )$. 
	\label{thm:minTOapsp}
\end{theorem}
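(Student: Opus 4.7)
The plan is to carry out the standard repeated-squaring reduction from APSP to $(\min,+)$-matrix multiplication, but to argue carefully that its I/O cost is simply $O(\log n)$ times the cost of one matrix product. First I would set up the distance matrix $D_0$ of size $n \times n$, where $D_0[i,j]$ is the weight of the edge $(i,j)$, with $0$ on the diagonal and $+\infty$ where no edge is present; this initialization is a single scan of the input costing $O(n^2/B)$ I/Os, which is absorbed into $f(n,M,B)$ since any $(\min,+)$-multiplication algorithm must at least read its two inputs.

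The core step is repeated squaring under the $(\min,+)$ semiring: define $D_{k+1} = D_k \otimes D_k$, where $\otimes$ denotes $(\min,+)$ matrix multiplication. A standard inductive argument shows that $D_k[i,j]$ equals the shortest-path distance from $i$ to $j$ using at most $2^k$ edges. Thus after $\lceil \log_2 n \rceil$ squarings, the matrix $D_{\lceil \log_2 n \rceil}$ stores the true all-pairs shortest-path distances, since any simple shortest path uses at most $n-1$ edges.

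For the I/O accounting, each squaring is a single call to the $(\min,+)$-multiplication routine and therefore costs $f(n,M,B)$ I/Os, where by assumption $f(n,M,B) = \Omega(n^2/B)$ (reading and writing the $n \times n$ matrices). Between squarings we write out $D_{k+1}$ and read it back as both operands in the next call; these accesses add only $O(n^2/B) = O(f(n,M,B))$ I/Os per round. Summing over the $O(\log n)$ squarings yields total cost $O(\log(n)\, f(n,M,B))$.

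\noindent\textbf{Main obstacle.} There is no deep obstacle: the reduction is classical, and the only subtle point is bookkeeping to ensure that the cost of moving intermediate matrices $D_k$ between cache and main memory is truly dominated by one call to the $(\min,+)$-multiplication subroutine rather than being charged separately each round. As long as we assume (as is standard, and consistent with the upper bounds cited earlier such as $\tilde{O}(n^3/(\sqrt{M}B))$) that $f(n,M,B) \geq n^2/B$, these auxiliary I/Os are absorbed, and the $O(\log n)$ factor comes purely from the number of squarings.
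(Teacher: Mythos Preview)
Your proposal is correct and follows essentially the same folklore repeated-squaring reduction as the paper: both square the weighted adjacency/distance matrix $O(\lg n)$ times under the $(\min,+)$ semiring and absorb the $O(n^2/B)$ read/write cost per round into $f(n,M,B)$. The only minor addition in the paper's proof is that it also tracks a successor matrix $S$ alongside the value matrix $V$ so that actual shortest paths (not just distances) can be recovered, but this does not change the I/O bound or the structure of the argument.
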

\begin{proof}
	In folklore we can solve APSP with $lg\left(n\right)$ calls to $\left(\min,+\right)$. We take the adjacency matrix $A$ and then use repeated squaring to produce $A, A^2, A^{2^2}, \ldots , A^{2^{\lg\left(n\right)}}$. Then we simply multiply these $\lg\left(n\right)$ matrices together and the output will be the set of shortest paths between all pairs.

	To get both the min path lengths and the successor matrix after each multiplication, we will use both $V$ and $S$ output by $\left(\min,+\right)$. Say we are multiplying $L_1$ and $L_2$ and they have successor matrices $S_1$ and $S_2$, and the output of the $\left(\min,+\right)$ multiplication is $\left(V,S\right)$. The output length matrix $L_{out} = V$ and $S_{out}= S$. 
	
	Each multiplication takes $O\left(n^2/B\right)$ cache misses to read in and write out the matrices and $f\left(n,M,B\right)$ for the multiplication itself. Note that a trivial lower bound on $f\left(n,M,B\right)$ is $O\left(n^2/B\right)$.
	So the total number of cache misses is $O \left ( \lg\left(n\right) f\left(n,M,B\right)  \right )$
\end{proof}

\begin{theorem}
	If All Pairs Shortest Paths runs in time $f\left(n,M,B\right)$, then negative weight triangle detection in a tripartite graph runs in 
	$O \left ( f\left(n,M,B\right) \right )$ 
	cache misses. 
\end{theorem}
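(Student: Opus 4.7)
The plan is to reduce negative triangle detection on a tripartite graph with parts $A$, $B$, $C$ to a single APSP call on a closely related three-layer graph, plus a linear-in-output post-processing scan. The key observation is that a negative triangle consists of edges $(a,b) \in A \times B$, $(b,c) \in B \times C$, and $(c,a) \in C \times A$, and the $A$-$B$-$C$ portion is exactly what the Three-Layer-APSP structure computes in one shot.

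First, I would construct the auxiliary graph $G'$ by taking $G$ and removing every $C$-$A$ edge, keeping only the $A$-$B$ and $B$-$C$ edges. Building $G'$ just requires a pass over the edge list and costs $O(|E|/B) = O(n^2/B)$ I/Os. Then I would invoke the assumed APSP algorithm on $G'$ in $f(n,M,B)$ I/Os, obtaining a distance matrix $D$ with
\[
D[a][c] \;=\; \min_{b \in B} \bigl(w(a,b) + w(b,c)\bigr)
\]
for every $a \in A$, $c \in C$, since the only $a$-to-$c$ paths in $G'$ go through some $b \in B$.

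Next, I would scan the $C$-$A$ edge list of the original graph $G$ alongside the corresponding entries of $D$, and for each pair $(a,c)$ test whether
\[
D[a][c] + w(c,a) < 0.
\]
If any such pair is found, the triangle on $a$, the witnessing $b \in B$, and $c$ has negative weight, and I return true; otherwise no negative triangle exists and I return false. This scan touches $O(n^2)$ entries laid out contiguously, so it costs $O(n^2/B)$ I/Os, which is absorbed by the trivial lower bound $f(n,M,B) = \Omega(n^2/B)$ arising from the need for APSP to write its output.

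The main thing to be careful about is making sure that the reduction does not accidentally get shortcut by direct $c$-to-$a$ edges in the original graph; removing the $C$-$A$ edges before running APSP takes care of this cleanly, and using the same layout as the input ensures the final scan is I/O-efficient. Everything else is bookkeeping, and the total cost is $O(f(n,M,B) + n^2/B) = O(f(n,M,B))$, as claimed.
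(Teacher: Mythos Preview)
Your reduction has a genuine gap: you assert that in $G'$ (with only the $A$--$B$ and $B$--$C$ edges retained) the shortest $a$-to-$c$ distance equals $\min_{b\in B}\bigl(w(a,b)+w(b,c)\bigr)$, but this need not hold when edge weights are negative. A path such as $a\to b_1\to a'\to b_2\to c$ may be strictly shorter than every two-hop path $a\to b\to c$, so $D[a][c]$ can undershoot the quantity you want. Worse, if the graph is undirected (or has antiparallel edges), a single negative $A$--$B$ edge already yields a negative cycle $a\to b\to a$, and APSP on $G'$ is not even well defined. Since negative weights are precisely the regime where negative triangles live, this is not a corner case you can wave away.

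The paper fixes exactly this issue by (i) introducing a fourth layer $I'$ (a copy of $I$) so that the target path $i\to j\to k\to i'$ has a fixed length of three edges, and (ii) adding a large offset $7m$ (where $m$ is the maximum absolute edge weight) to every edge so that any path using more than three edges is automatically longer than $21m$, while a three-edge path has weight $<21m$ iff the original triangle was negative. Your construction can be repaired in the same spirit: shift every retained edge weight up by a sufficiently large constant to kill backtracking, and then test $D[a][c]+w(c,a)$ against the corresponding threshold rather than against zero. Without that offset (or an equivalent device such as orienting the graph as a DAG), the argument does not go through.
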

\begin{proof}
	Let us call the  whole vertex set $V$ and the three groups of nodes $I$, $J$ and $K$.\\
	Let $m = \max \{ |w\left(v,u\right)| | \forall v,u \in V \}$. 
	
	 Now we create a new graph $G'$. Where $V= I \cup J \cup K \cup I'$ and all edges $\left(i,k\right)$ for $i\in I$ and $k\in K$ are removed and the edge $\left(i',k\right)$ is added. Additionally we add $7m$ to the weights of all edges (this to force the shortest paths to not 'backtrack' and go through one set multiple times). This takes $O\left(n^2/B\right)$ cache misses. 
	 
	 Now we run $APSP$ on $G'$. We look at the path lengths between pairs of nodes $i$ and $i'$. If any of those path lengths is $<21m$, then the total original triangle was negative, return true. Otherwise we return false. 
	 
	 Setting up the graph takes $O\left(n^2/B\right)$ cache misses. Running APSP takes $f\left(4/3n,M,B\right)$ cache misses. Checking for short paths between $i$ and $i'$ takes $O\left(n\right)$ time. If $n>B$, then $n = O\left(n^2/B\right)$. If $n<B$, then the entire computation fits in two cache lines and thus takes $O\left(1\right )$ time to compute even if $M = \Theta\left(B\right)$. Once again $n^2/B$ is a trivial lower bound on $f\left(n,M,B\right)$. So the total number of cache misses is
	 $$O \left ( f\left(n,M,B\right)  \right ).$$
	
\end{proof}

\begin{theorem}
	If  negative weight triangle detection in a tripartite graph runs in time $f\left(n,M,B\right)$, then three layer APSP with weights in the range $[-W,W]$ runs in 
	$O \left (\lg\left(W\right)n^2 f\left(n^{1/3},M,B\right) \right )$ 
	cache misses.
	\label{thm:negtriangleTOmintriangle} 
\end{theorem}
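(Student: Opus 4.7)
The plan is to reduce three-layer APSP to $(\min,+)$ matrix multiplication, and then to reduce $(\min,+)$ matrix multiplication to many small calls of negative-triangle detection via a block decomposition together with the binary-search technique of Vassilevska Williams--Williams. First I would observe that in the tripartite graph $G$ with partitions $A, B, C$ and only $A$-$B$ and $B$-$C$ edges, the only nontrivial shortest paths are the $A$-to-$C$ ones: for $a\in A$ and $c\in C$, $d(a,c)=\min_{b\in B}(w(a,b)+w(b,c))$, which is exactly the $(a,c)$ entry of the $(\min,+)$ product of $W_{AB}$ with $W_{BC}$. The $A$-to-$A$, $B$-to-$B$, and $C$-to-$C$ distances are additional $(\min,+)$ products of the same shape, and distances of the form $A$-to-$B$ or $B$-to-$C$ follow from one further round of entrywise combination with the raw edge weights, all at $O(n^2/B)$ additional I/O cost which is absorbed by the main bound.

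To handle $(\min,+)$ multiplication, I would partition each of $A, B, C$ into $n^{2/3}$ contiguous blocks of size $n^{1/3}$, yielding $(n^{2/3})^3=n^2$ block triples $(A_\alpha, B_\beta, C_\gamma)$. Laying each block out contiguously at the start lets us pull any single sub-instance into cache in $O(n^{2/3}/B)$ I/Os, which is dominated by the trivial lower bound $f(n^{1/3},M,B)=\Omega(n^{2/3}/B)$. For each fixed pair $(A_\alpha, C_\gamma)$ the block of the output satisfies $D_{\alpha\gamma}[a,c]=\min_\beta \min_{b\in B_\beta}(w(a,b)+w(b,c))$, so once the inner per-$\beta$ block products are computed they are combined by a streaming entrywise minimum across the $n^{2/3}$ middle blocks, again at negligible $O(n^2/B)$ overhead.

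For each of the $n^2$ sub-instances I would use a batched binary search on top of the negative-triangle decision oracle to extract the $n^{1/3}\times n^{1/3}$ output block in $O(\log W)$ oracle queries, exactly as in the RAM-model reduction of Vassilevska Williams--Williams. Each round of binary search rebuilds an auxiliary tripartite graph on the three blocks by adding $C_\gamma \to A_\alpha$ back-edges whose weights encode the current per-entry upper-bound threshold, so that a negative triangle exists precisely when some entry is still refinable; the refinement step narrows the per-entry search intervals in a coordinated way, converging after $O(\log W)$ rounds. Each such round is one negative-triangle-detection call on an $O(n^{1/3})$-vertex graph, costing $f(n^{1/3},M,B)$ I/Os, and summing over all $n^2$ block triples and all $O(\log W)$ rounds yields the claimed $O(\log(W)\, n^2\, f(n^{1/3},M,B))$ bound.

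The hard part will be the witness-extraction step inside each sub-instance: a negative-triangle decision oracle outputs only one bit per call, so some care is required to arrange the batched binary search so that $O(\log W)$ coordinated rounds suffice to recover every refinable entry of the sub-instance simultaneously, rather than paying one binary search per entry. A secondary obstacle is bookkeeping in the I/O model: one must verify that constructing the per-round back-edge weights, streaming the auxiliary tripartite graph into the oracle, and aggregating partial products across middle blocks all fit into $O(n^2/B)$ streaming I/Os so that they are strictly dominated by the main cost $n^2 \log(W)\, f(n^{1/3},M,B)$.
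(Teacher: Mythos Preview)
Your high-level plan matches the paper's: both follow the Vassilevska~Williams--Williams reduction, partitioning each side into $n^{2/3}$ blocks of size $n^{1/3}$ and running $O(\log W)$ rounds of binary search on the $(i,k)$ thresholds, with negative-triangle detection calls on each $n^{1/3}$-sized block triple.

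However, there is a genuine gap in your witness-extraction step, and it is precisely the step you flag as ``the hard part'' without resolving. You claim that for each of the $n^2$ block triples, one oracle call per binary-search round suffices, so that $O(\log W)$ calls recover the entire $n^{1/3}\times n^{1/3}$ output block. That is information-theoretically impossible: each call returns a single bit, so $O(\log W)$ calls yield $O(\log W)$ bits, far short of the $n^{2/3}\log W$ bits needed to pin down $n^{2/3}$ entries in $[-W,W]$. Concretely, if the oracle says ``yes, a negative triangle exists,'' you learn only that \emph{some} $(i,k)$ entry lies below its current threshold; you cannot refine all entries' intervals in a ``coordinated'' way from that one bit.

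The missing idea, which the paper's proof invokes with the phrase ``one for each edge removed and one for each subset,'' is the find-and-remove amortization. Within a single binary-search round you do \emph{not} limit yourself to one call per block triple. Instead, whenever a block triple answers ``yes,'' you locate an actual offending pair $(i,k)$ (via further oracle calls on that triple), record it, delete the edge $(i,k)$, and query the same triple again; you stop when it answers ``no.'' The crucial amortized bound is that each of the $n^2$ pairs $(i,k)$ is deleted at most once over the whole round, so the total number of ``yes'' answers across all triples is at most $n^2$, and together with the $(n^{2/3})^3=n^2$ initial ``is this triple empty?'' calls you get $O(n^2)$ calls per round and hence $O(n^2\log W\cdot f(n^{1/3},M,B))$ overall. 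Your accounting reaches the same final number, but only because it undercounts calls per triple; the argument as written does not go through.
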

\begin{proof}
	
	We will use the same reduction as \cite{williams2010subcubic} and analyze it in I/O-model. 
	
	Let the three layer APSP's layers be called $I$, $J$ and $K$. We want to find for every pair $\left(i,k\right)$ where $i\in I$ and $k\in K$ the $j$ such that triangle $\triangle_{i,j,k}$ has minimum weight. 
	We will discover this by doing $\lg\left(W\right)+1$ rounds where we start by re-assigning all $w\left(i,k\right)=0$ and then binary search on each $w\left(i,j\right)$ for the value where $w\left(i,j\right)+w\left(j,k\right)+w\left(k,i\right)=0$.
	
	Now, in each round we split each set $I$, $J$ and $K$ into $n^{2/3}$ groups of size $n^{1/3}$.  We can then once again keep two matrices $V$ for minimum value so far and $S$ for the $j$ achieving that value. 
	
	We can call negative weight triangle detection repeatedly on all $\left(n^{2/3}\right)^3$ possible choices of three subsets. This will take at most $n^2+\left(n^{2/3}\right)^3$ calls. One for each edge removed and one for each subset. This results in $O\left(n^2f\left(n^{1/3},M,B\right)\right)$ cache misses.
	The total number of cache misses is 
	$$ O \left ( \lg\left(W \right)n^2 f\left(n^{1/3},M,B\right) \right ) .$$
\end{proof}

\begin{corollary}
	If  negative weight triangle detection in a tripartite graph runs in time $f\left(n,M,B\right)$, then three layer APSP over weights in the range $[-poly\left(n\right),poly\left(n\right)]$ runs in 
	$O \left (\lg\left(n\right)n^2 f\left(n^{1/3},M,B\right)\right)$ 
	cache misses. 
\end{corollary}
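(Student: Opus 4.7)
The plan is to derive this as an immediate consequence of Theorem~\ref{thm:negtriangleTOmintriangle}, which already establishes the general bound $O(\lg(W) n^2 f(n^{1/3}, M, B))$ for weights in the range $[-W, W]$. The only work to do is to specialize the weight range to $W = \mathrm{poly}(n)$ and propagate this substitution through the running time.

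First I would write $W = n^c$ for some constant $c$, which is what the assumption ``weights in the range $[-\mathrm{poly}(n), \mathrm{poly}(n)]$'' formally means. Then $\lg(W) = \lg(n^c) = c \lg(n) = O(\lg(n))$, with the constant $c$ absorbed into the big-$O$. Substituting into the bound from Theorem~\ref{thm:negtriangleTOmintriangle} yields
\[
O(\lg(W) \, n^2 \, f(n^{1/3}, M, B)) = O(\lg(n) \, n^2 \, f(n^{1/3}, M, B)),
\]
which is exactly the claimed cache-miss bound.

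There is no real obstacle here; the corollary is just a named specialization of the preceding theorem. The only thing to verify is that the binary search over edge weights done in the proof of Theorem~\ref{thm:negtriangleTOmintriangle} still works in the polynomial-weight regime, but this is immediate since binary search over a range of size $n^c$ completes in $\lceil c \lg n \rceil = O(\lg n)$ rounds, each of which invokes the tripartite negative triangle detector on subgraphs of size $n^{1/3}$ exactly as in the original argument. No additional I/O overhead is incurred beyond what is already accounted for in the parent theorem, so the proof reduces to a one-line invocation.
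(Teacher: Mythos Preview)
Your proposal is correct and matches the paper's approach exactly: the paper's entire proof is the single sentence ``Simply apply Theorem~\ref{thm:negtriangleTOmintriangle} with a $poly(n)$ weight.'' Your additional explanation of why $\lg(n^c)=O(\lg n)$ and your remark about the binary search are correct elaborations, but the paper does not bother to spell them out.
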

\begin{proof}
	Simply apply Theorem \ref{thm:negtriangleTOmintriangle} with a $poly\left(n\right)$ weight. 
\end{proof}

\begin{theorem}
	If three layer APSP runs in time $f\left(n,M,B\right)$, then $\left(\min,+\right)$  matrix multiplication runs  in 
	$O \left (f\left(n,M,B\right) \right )$ 
	cache misses. 
	\label{thm:apspTOnegtri}
\end{theorem}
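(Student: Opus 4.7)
The plan is to build a direct and almost completely mechanical reduction: from the two input matrices $A$ and $B$ construct a three-layer (tripartite) graph whose $I$-to-$K$ shortest paths are exactly the entries of the $(\min,+)$ product $C = A \star B$, then invoke the given three-layer APSP algorithm and read off the answer.

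Concretely, I would first lay out three vertex sets $I = \{i_1,\ldots,i_n\}$, $J = \{j_1,\ldots,j_n\}$, $K = \{k_1,\ldots,k_n\}$ of $n$ nodes each, fitting the three-layer APSP promise (no edges inside a layer and no edges between $I$ and $K$). Set the edge weights to $w(i_s, j_t) := A[s,t]$ and $w(j_t, k_q) := B[t,q]$. Then every $i_s$-to-$k_q$ path has length exactly $A[s,t] + B[t,q]$ for the unique middle vertex $j_t$ used, so the shortest path distance $d(i_s,k_q) = \min_{t \in [n]} \left( A[s,t] + B[t,q] \right) = C[s,q]$. The second step is to feed this graph into the assumed three-layer APSP routine, then scan its output distance matrix restricted to the $I \times K$ block and write that out as $C$.

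For the I/O accounting, I would argue that converting $A$ and $B$ into the edge-weight representation required by the APSP routine is a single streaming pass through each of the two $n^2$-sized matrices, costing $O(n^2/B)$ cache misses; the same is true for extracting the $I \times K$ block from the returned APSP output. The call to the three-layer APSP algorithm on a graph of $3n$ vertices costs $f(3n,M,B) = O(f(n,M,B))$. Since reading the input alone forces $f(n,M,B) = \Omega(n^2/B)$, the pre- and post-processing are absorbed and the total is $O(f(n,M,B))$ cache misses as claimed.

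The only subtlety worth being careful about is the layout conversion: depending on how the three-layer APSP routine expects its graph (adjacency matrix, sorted edge list, etc.), one may need to transpose or permute $A$ or $B$ to match the indexing of layers $I\to J$ and $J\to K$. This is the one place where a naive implementation could blow up the $O(n^2/B)$ streaming bound, but it is handled by a standard cache-oblivious transpose which still runs in $O(n^2/B)$ I/Os under a tall-cache assumption. No other step is delicate: because the graph is tripartite with no $I$--$K$ edges, the "no backtracking" concerns from Theorem~\ref{thm:negtriangleTOmintriangle} do not arise here, so no weight inflation is needed.
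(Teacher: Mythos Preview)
Your proposal is correct and follows essentially the same approach as the paper: build the tripartite graph with layers $I,J,K$, encode $A$ on the $I\!\to\!J$ edges and $B$ on the $J\!\to\!K$ edges, call three-layer APSP, and read off the $I\times K$ distances as $C$, absorbing the $O(n^2/B)$ setup into $f(n,M,B)$. Your write-up is in fact a bit more careful than the paper's about the I/O cost of the layout conversion and about the fact that no $I$--$K$ edges are needed.
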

\begin{proof}
	Given an instance of $\left(\min,+\right)$ matrix multiplication produce a graph $G$ made up of three sets of size $n$: $I$,$J$ and $K$. Edges will go from $I$ to $J$ and $J$ to $K$.
	
	The length of the edge from $i \in I$ to $j \in J$  will be $w\left(i,j\right) = A[i,j]$. 
	The length of the edge from $j \in J$ to $k \in K$  will be $w\left(j,k\right) = B[j,k]$. 
	The length of the edge from $k \in K$ to $i \in I$ will be $w\left(k,i\right) = 0$.
	
	Run ``All pairs min triangle detection in a tripartite graph'' on $G$ and it produces a matrix that lists the $j$ that minimize the triangles $S$. Return a matrix $S$ of $j$ and a matrix $V$ where $V[i,k] = w\left(i,S[i,j]\right)+w\left(S[i,j],j\right)$. Return $V$ and $S$. 
	
	This takes $O\left(n^2/B + f\left(n,M,B\right)\right)$ cache misses 
\end{proof}

\begin{lemma}
	If All Pairs Shortest Paths runs in time $f\left(n,M,B\right)$, then three layer APSP runs in 
	$O \left ( f\left(n,M,B\right) \right )$ 
	cache misses. 
	\label{thm:apspTOmintriang}
\end{lemma}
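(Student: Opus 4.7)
The plan is to observe that three-layer APSP is simply APSP restricted to a particular class of sparse/structured graphs, so the reduction is essentially trivial: we present the three-layer instance as a generic APSP input and invoke the assumed APSP algorithm. Given an instance of three-layer APSP on partitions $I$, $J$, $K$ of size $n$ each, I would construct a weighted complete-graph representation on the vertex set $V' = I \cup J \cup K$ (so $|V'| = 3n$) whose weight matrix $W$ agrees with the given edge weights between $I$--$J$ and $J$--$K$, and assigns $W[u,v] = +\infty$ (or a sufficiently large sentinel such as $(3n)\cdot W_{\max}+1$) for every forbidden pair: pairs inside $I$, pairs inside $J$, pairs inside $K$, and pairs between $I$ and $K$. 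This construction is a single scan/write of an $O(n)\times O(n)$ matrix, costing $O(n^2/B)$ cache misses.

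Next I would run the assumed APSP algorithm on $W$ in $f(3n, M, B)$ cache misses, and then extract the desired output by scanning the $I\times K$ submatrix (and, if successor matrices are also required, the corresponding successor entries). The scan and the write of the output matrix both cost $O(n^2/B)$ cache misses, which is subsumed by $f(n,M,B)$ since $f(n,M,B) = \Omega(n^2/B)$ is a trivial I/O lower bound coming from just reading the input.

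For correctness, I would argue that because no edges are available between $I$ and $K$ and none within any single partition, every finite $i$-to-$k$ path in the constructed graph must alternate through $J$ in exactly the way allowed by the original three-layer instance; the sentinel weights are chosen large enough that the APSP algorithm never prefers a ``sentinel'' edge over a genuine path in a negative-triangle-free scenario, and more generally one can enforce this by simply using $+\infty$ in the standard $(\min,+)$-semiring formulation. Therefore distances computed by APSP on $W$ restricted to $I\times K$ coincide with the three-layer APSP answer.

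Finally, to convert $f(3n,M,B)$ to $O(f(n,M,B))$, I would appeal to the standard scaling assumption that the APSP running time is polynomially bounded so that $f(cn, M, B) = O(f(n,M,B))$ for any constant $c$. The only mild obstacle, and really the only thing worth being careful about, is this size-blow-up from $n$ to $3n$ and the handling of the sentinel weight (ensuring it is representable within the model's word size and does not create spurious shorter paths); both are routine. Summing the costs gives $O(n^2/B) + f(3n,M,B) = O(f(n,M,B))$ cache misses.
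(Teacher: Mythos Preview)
Your proposal is correct and takes essentially the same approach as the paper: three-layer APSP is a promise/special case of APSP, so you simply feed the instance to the APSP algorithm. The paper's proof is a one-liner to this effect; you are just being more explicit about representing missing edges with sentinel weights and about the $f(3n,M,B)=O(f(n,M,B))$ scaling, both of which the paper leaves implicit.
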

\begin{proof}
 	Three layer APSP is APSP but with the possible inputs reduced. Running APSP will solve the three layer promise problem because APSP works on any graph. 
\end{proof}

\paragraph{The equivalences.}

\begin{theorem}
	The following problems run in time $\widetilde{O}(\frac{N^3}{\sqrt{M}B}+ \frac{n^2}{B})$
	 cache misses:
	 \begin{enumerate}
	 	\item All Pairs Shortest Paths
	 	\item $\left(\min,+\right)$ matrix multiplication
	 	\item Negative triangle detection in a tripartite graph 
	 	\item Three layer APSP
	 \end{enumerate}
	 \label{thm:fastAlgorithms}
\end{theorem}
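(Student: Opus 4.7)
The plan is to establish the common upper bound by chaining the reductions proved earlier in this section, starting from the only externally-cited ingredient and propagating the bound to each of the four problems. The starting point is the cache-oblivious $(\min,+)$ matrix multiplication algorithm of Pagh and Stockel \cite{pagh2014inputMatrix}, which achieves $\tO(n^3/(\sqrt{M}B))$ I/Os; together with the unavoidable $O(n^2/B)$ cost of scanning the two $n \times n$ input matrices, this gives the bound for item (2).

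Next, I would feed this bound into Theorem~\ref{thm:minTOapsp}: taking $f(n,M,B) = \tO(n^3/(\sqrt{M}B))$, the repeated-squaring reduction yields an APSP algorithm running in $O(\lg(n) \cdot f(n,M,B)) = \tO(n^3/(\sqrt{M}B))$ cache misses, where the extra logarithmic factor is absorbed into the $\tO(\cdot)$ and the $n^2/B$ term accounts for reading the adjacency matrix and writing the distance matrix. This handles item (1).

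For item (3), I would apply the reduction from negative weight triangle detection to APSP stated just after Theorem~\ref{thm:minTOapsp}. That reduction constructs a graph $G'$ on $O(n)$ vertices at a cost of $O(n^2/B)$ I/Os, makes a single call to APSP, and then inspects the relevant entries of the distance matrix, all within $\tO(n^3/(\sqrt{M}B) + n^2/B)$ I/Os. Finally, item (4) is immediate from Lemma~\ref{thm:apspTOmintriang}: three layer APSP is a promise restriction of APSP, so any APSP algorithm solves it with the same running time.

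The proof is essentially pure chaining, so the only real obstacle is bookkeeping — making sure the polylogarithmic overheads incurred by repeated squaring and the additive $O(n^2/B)$ terms introduced by each reduction are truly absorbed into the claimed $\tO(n^3/(\sqrt{M}B) + n^2/B)$ expression, and checking that none of the reductions blow up the input size by more than a constant factor (which inspection of the earlier proofs confirms). No new algorithmic idea is required beyond the pre-existing $(\min,+)$ matrix multiplication algorithm.
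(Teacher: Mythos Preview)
Your proposal is correct and follows essentially the same chaining strategy as the paper. The only cosmetic difference is that the paper cites prior work directly for \emph{both} $(\min,+)$ matrix multiplication and APSP, whereas you start from the $(\min,+)$ bound alone and derive the APSP bound via Theorem~\ref{thm:minTOapsp}; either route is fine, and the remaining two items are obtained by the same reductions (Lemma~\ref{thm:apspTOmintriang} for three layer APSP and the APSP-to-negative-triangle reduction for item~(3)) in both arguments.
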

\begin{proof}
	This was proven true for $\left(\min,+\right)$ matrix multiplication and APSP by previous work~\cite{jia1981complexity}.
	The reduction in Theorems \ref{thm:apspTOmintriang} and \ref{thm:apspTOnegtri} we get that both  Negative triangle detection in a tripartite graph 
	and three layer APSP run in $\widetilde{O}\left(\frac{N^3}{\sqrt{M}B}+ \frac{n^2}{B}\right)$.
\end{proof}

\begin{corollary}
	The following solve APSP faster.
	\begin{enumerate}
		\item If $\left(\min,+\right)$ matrix multiplication is solvable in $f(n)$ time then APSP is solvable in $O(lg(n) f(n,M,B))$ time. 
		\item If negative triangle detection in a tripartite graph is solvable in $f(n)$ time then APSP is solvable in $O(lg^(n)n^2f(n^{1/3},M,B))$ time.
	\end{enumerate}
\end{corollary}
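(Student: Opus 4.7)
The plan is to obtain both parts by chaining the reductions already proved immediately above, treating the corollary as the natural composition of the established implications. For part (1), there is essentially nothing left to do: Theorem~\ref{thm:minTOapsp} already states precisely that an $f(n,M,B)$ algorithm for $(\min,+)$-matrix multiplication yields an APSP algorithm running in $O(\lg(n)\,f(n,M,B))$ cache misses, via $\lg n$ repeated-squarings of the adjacency matrix. So part (1) can be discharged in a single sentence by citing Theorem~\ref{thm:minTOapsp}.

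For part (2) the plan is to compose three reductions in sequence. First I would feed the assumed $f(n,M,B)$-time algorithm for negative triangle detection into the corollary of Theorem~\ref{thm:negtriangleTOmintriangle} (the $W=\mathrm{poly}(n)$ version), which yields a three-layer APSP algorithm running in $O(\lg(n)\,n^2\,f(n^{1/3},M,B))$ cache misses. Next, I would apply Theorem~\ref{thm:apspTOnegtri}, which converts a three-layer APSP algorithm into a $(\min,+)$-matrix multiplication algorithm at no additional asymptotic cost beyond an $O(n^2/B)$ setup term (absorbed into the trivial $n^2/B$ lower bound on the running time). This gives a $(\min,+)$-multiplication algorithm in $O(\lg(n)\,n^2\,f(n^{1/3},M,B))$ time. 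Finally, plugging this into Theorem~\ref{thm:minTOapsp} yields APSP in
\[
O\bigl(\lg(n)\cdot \lg(n)\,n^2\,f(n^{1/3},M,B)\bigr) \;=\; O\bigl(\lg^{2}(n)\,n^{2}\,f(n^{1/3},M,B)\bigr)
\]
cache misses, matching the claimed bound (noting that the statement's "$\lg^{(n)}$" appears to be a typo for $\lg^{2}(n)$).

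There is no real obstacle here, since each reduction is already established and the I/O overheads of the reductions (essentially $O(n^2/B)$ for graph construction plus lookups in the successor matrices) are dominated by the $n^2\,f(n^{1/3},M,B)$ term, because $f(n^{1/3},M,B) = \Omega(n^{2/3}/B)$ is a trivial lower bound for reading/writing the relevant subproblem. The only minor bookkeeping point to verify is that when we invoke Theorem~\ref{thm:negtriangleTOmintriangle} we have the right weight range: APSP on the original $(\min,+)$ instance produced at the end of the chain will have entries bounded by $n\cdot W$, so taking the polynomial-weight corollary of Theorem~\ref{thm:negtriangleTOmintriangle} is what we need, and this is the source of the single $\lg n$ factor (not $\lg W$). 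Putting these together yields the corollary immediately.
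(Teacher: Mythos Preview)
Your proposal is correct and follows essentially the same approach as the paper: both parts are obtained by composing the reductions around the cycle in Figure~\ref{fig:APSPandfriends}, with part~(1) being a direct restatement of Theorem~\ref{thm:minTOapsp} and part~(2) obtained by chaining Theorem~\ref{thm:negtriangleTOmintriangle} (and its polynomial-weight corollary), Theorem~\ref{thm:apspTOnegtri}, and Theorem~\ref{thm:minTOapsp}. The paper's own proof is a one-line citation of the relevant reductions; your write-up is simply a more explicit unpacking of the same chain, including the bookkeeping that the two $\lg n$ factors compose to give the $\lg^2 n$ in the final bound.
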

\begin{proof}
	By using the
	reductions in Theorems \ref{thm:apspTOmintriang} and \ref{thm:apspTOnegtri} we get these values. 
\end{proof}

\begin{lemma}
If 0 triangle is solved in $f(n,M,B)$ time then $- \triangle$ over numbers in the range of $[-W,W]$ is solved in $O(lg(W)f(n,M,B)+ lg(W)n^2/B)$.
\label{lem:zeroTraingleAPSP}
\end{lemma}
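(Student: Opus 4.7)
The plan is to reduce $-\triangle$ on a tripartite graph $G$ with integer edge weights in $[-W,W]$ to $O(\lg W)$ invocations of the $0$-triangle oracle via a binary search on the minimum triangle weight. Since every edge weight lies in $[-W,W]$, every triangle weight lies in $[-3W,3W]$, so the minimum triangle weight
\[
t^* \;:=\; \min_{i,j,k}\bigl(w(i,j)+w(j,k)+w(k,i)\bigr)
\]
also lies in $[-3W,3W]$, and $-\triangle$ is a YES-instance iff $t^*<0$.

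The core gadget is a cheap \emph{shift}: for any integer $v$, obtain $G_v$ from $G$ by replacing each weight $w(k,i)$ with $w(k,i)-v$. This is a single sequential pass over the $O(n^2)$ entries of $C$, so constructing $G_v$ costs $O(n^2/B)$ I/Os. By construction, a $0$-triangle in $G_v$ is exactly a triangle of weight $v$ in $G$, which the oracle decides in $f(n,M,B)$ I/Os. Hence each round costs $f(n,M,B)+O(n^2/B)$ I/Os, and the global input graph size handed to the oracle stays at $n$.

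I would then perform a binary search on $t^*$, maintaining an interval $[L,H]$ containing $t^*$, initialized to $[-3W,3W]$. In each round, pick $v=\lfloor(L+H)/2\rfloor$, build $G_v$, and query $0$-triangle, using the accumulated results of previous rounds to interpret the outcome as a half-interval update of $[L,H]$. After $O(\lg W)$ rounds the interval collapses to a single value, at which point one checks whether $t^*<0$. Summing costs gives $O(\lg(W)\,f(n,M,B)+\lg(W)\,n^2/B)$ I/Os.

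The main obstacle is that $0$-triangle is an equality test, while a clean binary search wants a monotone comparison of the form ``$t^*\le v$''. The technical step is to show that a carefully chosen sequence of midpoints, combined with a bookkeeping invariant that tracks both an upper witness from past YES answers and a lower bound from past NO answers, still halves $[L,H]$ per round. A fallback route, mirroring the binary-search strategy of Theorem~\ref{thm:negtriangleTOmintriangle} (but in the reverse direction), is to binary search on $w(k,i)$ per pair in $O(\lg W)$ parallel rounds using the $0$-triangle oracle to compute $\min_j w(i,j)+w(j,k)$ to within $\pm 1$, and then scan all $(i,k)$ pairs in $O(n^2/B)$ additional I/Os to detect a negative sum; this yields the same $O(\lg(W)f(n,M,B)+\lg(W)n^2/B)$ bound and sidesteps the equality-vs-inequality issue.
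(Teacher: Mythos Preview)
Your shift gadget is correct: subtracting $v$ from one side of the tripartite graph and calling the $0$-triangle oracle tests ``is there a triangle of weight exactly $v$?'' in $f(n,M,B)+O(n^2/B)$ I/Os. The gap is that this equality oracle genuinely cannot drive a binary search for $t^*$. A NO answer at $v$ only says $v\notin S$, where $S$ is the set of realized triangle weights; it gives no information about whether $t^*<v$ or $t^*>v$, so no ``lower bound from past NO answers'' invariant can be maintained. In the extreme case where the graph has a single triangle, $S=\{t^*\}$ is a singleton in $[-3W,3W]$ and locating it with equality queries alone needs $\Omega(W)$ calls, not $O(\lg W)$.

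Your fallback does not go through either. One call to the $0$-triangle oracle returns a single bit (``some triangle is zero''), not per-pair information, so you cannot advance $n^2$ per-pair binary searches with one call per round. The strategy of Theorem~\ref{thm:negtriangleTOmintriangle} works in that direction precisely because the negative-triangle oracle is monotone (``is the minimum $<0$?''); even there it must split into many small subcalls, which would not give the $\lg(W)\,f(n,M,B)$ bound you need here.

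What the paper does, following Vassilevska-Williams and Williams, adds the one ingredient you are missing: \emph{scaling}. In round $i$ each weight $w$ is replaced by its top $i$ bits, i.e.\ $\lfloor w/2^{L-i}\rfloor$. In the scaled graph the minimum triangle weight differs from $\lfloor t^*/2^{L-i}\rfloor$ by at most an additive constant (the rounding error over three edges), and the previous round's estimate confines it to an interval of $O(1)$ integers. Exhaustively testing those $O(1)$ candidates via your shift-and-call trick then pins down the next bit of $t^*$. Each scaled-and-shifted instance is built by one scan of the $O(n^2)$ weights in $O(n^2/B)$ I/Os, giving $O(\lg W)\bigl(f(n,M,B)+n^2/B\bigr)$ in total.
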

\begin{proof}
	Following the reduction from Vassilevska-Williams  and Williams we will turn negative triangle on the graph $G$ into $\lg(W)$ copies of the problem \cite{williams2010subcubic}. We will create a tripartite instance of the $-\triangle$ problem by making $3$ copies of the vertex set $V',V'',V'''$ and $e(v',w'')=e(v',w''')=e(v'',w''')=e(v,w)$ but $e(v',w')=e(v'',w'')=e(v''',w''')=\infty$. We then consider the $\lg(W)$ problems created by replacing edge weights with the highest $i$ bits of that edge length. Creating these new problems takes $n^2/B$ time, there are $\lg(W)$ problems we need to write. So we take total time $O(lg(W)f(n,M,B)+ lg(W)n^2/B)$.
\end{proof}

\begin{lemma}
	Zero triangle is solvable in $O(n^3/(\sqrt{M}B)+n^2/B)$ I/Os. 
	\label{lem:0triUB}
\end{lemma}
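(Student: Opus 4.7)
The plan is to mimic the standard blocked algorithm for $(\min,+)$-matrix multiplication: represent the graph by its $n\times n$ weight matrix $W$, partition the vertex set into $n/\sqrt{M/c}$ groups of size $\sqrt{M/c}$ for a small constant $c$ chosen so that three $\sqrt{M/c}\times\sqrt{M/c}$ blocks fit simultaneously in cache, and then scan over all triples of groups $(I,J,K)$ to look for a zero triangle entirely inside those three groups.

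First I would lay out $W$ on disk so that each $\sqrt{M/c}\times\sqrt{M/c}$ submatrix occupies a contiguous region; this is a one-time reorganization that can be done in $O(\mathrm{sort}(n^2)) = O((n^2/B)\log_{M/B}(n^2/B))$ I/Os, but in fact the simpler guarantee $O(n^2/B)$ suffices if we assume the input already arrives in block-row-major layout (or if we tolerate the extra logarithmic factor, which is absorbed into the $\tilde O$ notation commonly used elsewhere in the paper; cleaner, we can just read $W$ once and emit the blocked layout with a constant number of passes under the tall-cache assumption). Then for each ordered triple $(I,J,K)$ I would bring the three submatrices $W[I,J]$, $W[J,K]$, and $W[K,I]$ into cache at a cost of $O(M/B)$ I/Os per triple, and perform the purely in-cache test ``does there exist $(i,j,k)\in I\times J\times K$ with $W[i,j]+W[j,k]+W[k,i]=0$?'' by an $O((\sqrt{M})^3)=O(M^{3/2})$ exhaustive scan, which is free in the I/O model.

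Counting: there are $(n/\sqrt{M})^3 = n^3/M^{3/2}$ triples, and each costs $O(M/B)$ I/Os to load, so the total search cost is $O((n^3/M^{3/2})\cdot(M/B)) = O(n^3/(\sqrt{M}B))$. Adding the $O(n^2/B)$ input scan yields the claimed bound $O(n^3/(\sqrt{M}B)+n^2/B)$. If any triple reports a zero triangle, we return true; otherwise we return false after all triples have been examined.

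The only subtlety, and the main ``obstacle,'' is choosing the block size so that three blocks plus a constant amount of bookkeeping truly fit in cache — i.e.\ picking $c$ so that $3(M/c)+O(1)\le M$ — and handling boundary groups when $\sqrt{M}$ does not divide $n$; both are resolved by standard padding and adjusting constants. One can also note that this matches the self-reduction structure used for $(\min,+)$ matrix multiplication and hence can alternatively be derived as a one-layer self-reduction fed into the I/O Master Theorem of Section~\ref{sec:masterstheorem}, giving the same bound; this alternative viewpoint makes the algorithm cache-oblivious if desired.
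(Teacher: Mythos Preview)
Your proposal is correct and follows essentially the same approach as the paper: partition the vertex set into $\Theta(n/\sqrt{M})$ groups of size $\Theta(\sqrt{M})$, iterate over all $\Theta(n^3/M^{3/2})$ triples of groups, and load the three relevant $\sqrt{M}\times\sqrt{M}$ weight submatrices at a cost of $O(M/B)$ I/Os each. The paper phrases this via the tripartite version of the problem but the arithmetic and the idea are identical; your additional remarks about layout, constants, and the cache-oblivious alternative are reasonable elaborations rather than a different method.
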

\begin{proof}
We can consider the tripartite version. There are three sets of vertices $|A|=|B|=|C|=n$. Let $A_i= A[in/g,(i+1)n/g]$, $B_i= B[in/g,(i+1)n/g]$ and $C_i= C[in/g,(i+1)n/g]$. Then, we can consider the $g^3$ subproblems $A_i$, $B_j$ and $C_k$ where $i,j,k \in [1,g]$. Every triangle is contained in some subproblem. We can fit a subproblem in memory if $|A_i|=\sqrt{M}$. This gives us I/Os $n^3/M^{1.5} (MB) + n^2/B= O(n^3/(\sqrt{M}B)+n^2/B)$. 
\end{proof}









\subsection{Orthogonal Vectors (OV)}
\label{sec:ov}

\begin{lemma}
	OV is solvable in $O(n^2/(MB) + n/B)$ I/Os cache obliviously.
	\label{lem:ovUB}
\end{lemma}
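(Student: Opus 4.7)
The plan is to design a cache-oblivious divide-and-conquer algorithm. Given $U, V$ with $|U| = |V| = n$, the algorithm splits each set into halves $U = U_1 \cup U_2$ and $V = V_1 \cup V_2$ (taking contiguous segments of the input arrays), recurses on the four pairs $(U_i, V_j)$, and returns the OR of the four answers. Treating each vector as $O(1)$ words via the standard packing of $d = O(\log n)$ bits per word, reading the current subproblem's vectors takes $O(n/B)$ I/Os, yielding the self-reduction $T(n, M, B) = 4\,T(n/2, M, B) + O(n/B)$.

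The crucial observation is that once the current subproblem's input fits in cache --- i.e., once $n = O(M)$ --- loading both sublists takes only $O(M/B)$ I/Os, after which the $O(n^2)$ inner-product tests execute without any further misses. The oblivious algorithm keeps recursing below this threshold, but the data is already resident, so no additional I/Os are charged. For counting purposes we therefore treat the recursion as bottoming out at size $x = \Theta(M)$ with base case $t(x, M, B) = O(M/B)$.

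Applying Case~4 of the I/O Master Theorem (Theorem~\ref{thm:io-master-theorem}) with $\alpha = 4$, $\beta = 2$, $f(n, M, B) = O(n/B)$, and base case $t(M, M, B) = M/B$, we compute $A(n, M, B) = (n/M)^{\log_2 4}(M/B) = n^2/(MB)$ and $\alpha/f'(\beta) = 4/2 = 2 > 1$. The tighter bound from Eq.~\ref{eq:greater-one} then gives $T(n, M, B) = O(A + f \cdot (n/x)^{\log_\beta(\alpha/f'(\beta))} + f + n/B) = O(n^2/(MB) + (n/B)(n/M) + n/B) = O(n^2/(MB) + n/B)$. Cache obliviousness follows because the recursive splitting depends only on $n$, never on $M$ or $B$: the same procedure attains the stated bound for every valid cache configuration.

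The main obstacle is the base-case bookkeeping: one must verify that the in-place recursive splits keep each sublist $U_i$ and $V_j$ contiguous in memory so that the $O(M/B)$ bulk-load bound actually holds rather than degrading to one miss per vector. This is handled by laying out $U$ and $V$ contiguously at the outset and always bisecting at midpoints, a property the recursion naturally preserves.
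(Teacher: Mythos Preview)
Your proof is correct and follows essentially the same approach as the paper: the same four-way recursive split, the same recurrence $T(n)=4T(n/2)+O(n/B)$, and the same appeal to the I/O Master Theorem with base case at size $\Theta(M)$. One minor quibble: you write $d=O(\log n)$, whereas the paper defines OV with $d=\omega(\log n)$; both the paper and you are implicitly absorbing the $d$ dependence into lower-order factors, so this does not affect the argument.
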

\begin{proof}
	We will give a recursive algorithm, $OV(A,B)$. 
	
	The base case is when $|A|=|B|=1$, then we simply take the dot product. If the dot product is zero then return $TRUE$. 
	
	Given two lists $A$ and $B$ of size greater than one then divide the two lists in half. Call the halfs of $A$ $A_1$ and $A_2$. Call the halves of $B$ $B_1$ and $B_2$. We then run $OV$ on four recursive calls $OV(A_1, B_1)$, $OV(A_1, B_2)$, $OV(A_2, B_1)$, and $OV(A_2, B_2)$. If any return $TRUE$ then return true, else return $FALSE$.
	
	The running time of this algorithm is given by 
	$T(n)= 4T(n/2)+n/B$ 
	
	We use the master's theorem from Section \ref{sec:masterstheorem} to find that the running time is $O(n^2/(MB) + n/B)$. We note that access pattern of this algorithm is independent of the size of cache and the size of the cache line. Making this algorithm cache oblivious. 
\end{proof}

\begin{lemma}
	If sparse diameter is solvable in $f(n,M,B)$ I/Os then OV is solvable in $\tO( f(n,M,B)+ n/B)$ I/Os.
	\label{lem:diamOV}
\end{lemma}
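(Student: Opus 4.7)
The natural approach is to port the classical Roditty--Vassilevska Williams reduction from OV to $2$ vs $3$ diameter into the I/O model, checking that the graph construction itself is streaming-friendly. Given the OV instance $U, V \subseteq \{0,1\}^d$ with $|U|=|V|=n$ (and $d = n^{o(1)}$, the setting of interest for the OV conjecture), build an undirected graph $G$ on vertex set $U \sqcup V \sqcup [d] \sqcup \{x,y\}$ with the following edges: $u \sim i$ whenever $u_i = 1$ for $u \in U$; $v \sim i$ whenever $v_i = 1$ for $v \in V$; hub edges $x \sim u$ for every $u \in U$, $y \sim v$ for every $v \in V$, and $x \sim i$, $y \sim i$ for every coordinate $i \in [d]$; and the single edge $x \sim y$.

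The correctness analysis is a short case check and should be done first. Pairs of vertices inside $U$, inside $V$, or involving a hub or a coordinate are all at distance at most $2$ via $x$, $y$, or the $x$--$y$ edge. The only pairs whose distance can exceed $2$ are $(u,v)$ with $u \in U$, $v \in V$: such a pair is at distance $2$ iff $u$ and $v$ share a coordinate $i$ with $u_i = v_i = 1$, i.e.\ iff they are \emph{not} orthogonal; otherwise the shortest path is $u$--$x$--$y$--$v$ of length $3$. Hence $\mathrm{diam}(G) = 2$ iff no orthogonal pair exists and $\mathrm{diam}(G) = 3$ otherwise, so a single call to the sparse diameter oracle decides OV.

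Next I would bound the I/O cost of emitting $G$. Sort/stream the OV input so that $U$ and $V$ are each laid out row-major; then a single linear scan over $U$ produces its portion of the adjacency list by writing, for each $u$, the hub edge to $x$ followed by one edge per $1$-coordinate. A symmetric scan handles $V$, and appending the $O(d)$ hub-to-coordinate edges and the edge $xy$ costs $O(d/B)$. Since the total output size is $O(nd + d) = \tilde{O}(n)$ words and writes are sequential, this costs $O(nd/B) = \tilde{O}(n/B)$ I/Os. The resulting graph has $|V_G| = 2n + d + 2 = \tilde{O}(n)$ vertices and $|E_G| = \tilde{O}(n)$ edges, so it qualifies as a sparse graph of size $\tilde{O}(n)$; invoking the assumed algorithm yields $f(\tilde{O}(n), M, B)$ I/Os, which is $\tilde{O}(f(n, M, B))$ with the polylog factors absorbed into the tilde.

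The main obstacle is really only bookkeeping: one has to be sure that both the streaming construction and the graph's sparsity-after-reduction are honest. The construction is streaming because every pass accesses the input in row-major order and appends to a single output stream, so no random-access term slips in. The sparsity claim relies on the standard OV convention that $d = n^{o(1)}$; if instead one wants to avoid this convention, the same bound goes through after replacing $n$ by $n + d$ everywhere. Combining the construction cost with the oracle call gives $\tilde{O}(f(n,M,B) + n/B)$ I/Os, as required.
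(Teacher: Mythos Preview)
Your proposal is correct and follows essentially the same approach as the paper: both port the standard OV-to-$(2\text{ vs }3)$-diameter reduction (the paper cites Abboud--Vassilevska Williams--Wang, you cite Roditty--Vassilevska Williams; the constructions are the same up to the extra $x\sim y$ edge you add, which is harmless) and argue the graph can be built in $\tilde O(n/B)$ I/Os. One small point worth tightening: your sequential scan produces the adjacency lists of vertices in $U$ and $V$, but not the adjacency lists of the coordinate vertices $i\in[d]$, which require all $u$ with $u_i=1$ and all $v$ with $v_i=1$; the paper handles this by emitting all edge pairs and then I/O-sorting them, at cost $\frac{nd}{B}\log_{M/B}(nd)=\tilde O(n/B)$, which still fits under your stated bound.
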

\begin{proof}
	We will use the reduction from Abboud, Vassilevska-Williams and Wang \cite{diameterReduction}. 
	If the lists from OV are $A$ and $B$ with $n$ vectors of length $d$. We have $n$ nodes $a_i$, $n$ nodes $b_i$, $d$ nodes $v_i$ and two nodes $y_a$ and $y_b$. Where the edges $(a_i,y_a)$ all exist and edges $(d_i,y_a)$ all exist. Where the edges $(b_i,y_b)$ all exist and edges $(d_i,y_b)$ all exist. The edges $(a_i,d_j)$ exist if $A[i][j]=1$ and the edges  $(b_i,d_j)$ exist if $B[i][j]=1$. 
	
	We can in time $n/B$ output an adjacency list for $y_a$ and $y_b$.
	We read in as many vectors (or fractions of a vector as we can) and we output  $(a_i,d_j)$ and  $(d_j,a_i)$ if $A[i][j]$ and $(b_i,d_j)$ and  $(d_j,b_i)$ if $B[i][j]$. We then sort these vectors which takes $\frac{nd}{B} \lg_{\frac{M}{B}}{nd}$ time. This produces adjacency lists for all nodes. 
\end{proof}

\begin{lemma}
	If Edit Distance is solvable in $f(n,M,B)$ time then $OV$ is solvable in $\tO(f(n,M,B)+n/B)$ I/Os. 
	\label{lem:edOV}
\end{lemma}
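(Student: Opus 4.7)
The plan is to port the Backurs--Indyk reduction from OV to Edit Distance into the I/O model, showing that the gadget construction can be executed as a streaming pass and that the resulting strings have total length $\tilde{O}(n)$ so that calling the Edit Distance oracle once costs $\tO(f(n,M,B))$. Given an OV instance with sets $A,B$ of $n$ vectors each in $\{0,1\}^d$ with $d = \text{polylog}(n)$, the Backurs--Indyk reduction builds two strings $S_A, S_B$ from vector gadgets $\mathrm{VG}_A(u)$, $\mathrm{VG}_B(v)$ glued together with padding/separator gadgets, of total length $n \cdot \mathrm{poly}(d) = \tilde{O}(n)$, such that $\mathrm{ED}(S_A, S_B)$ equals a precomputable threshold $\tau$ iff no orthogonal pair exists.

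The first step is to observe that each vector gadget depends only on a single input vector, and concatenation/separator gadgets depend only on $n$ and $d$. I would therefore construct $S_A$ by streaming through the list $A$ one vector at a time (reading each vector costs $O(d/B)$ I/Os, and there are $n$ of them) and appending the corresponding gadget plus separators to an output buffer sequentially. The analogous pass produces $S_B$. Because both reads and writes are purely sequential, the total I/O cost is $O(nd/B + |S_A|/B + |S_B|/B) = \tO(n/B)$.

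The second step is to invoke the Edit Distance algorithm on the constructed pair $(S_A, S_B)$, paying $f(|S_A|+|S_B|, M, B) = f(\tilde{O}(n), M, B) = \tO(f(n,M,B))$ I/Os under the natural monotonicity of $f$. The third step compares the returned value to $\tau$ in $O(1)$ I/Os and outputs the OV answer accordingly. Summing the three costs yields the claimed $\tO(f(n,M,B) + n/B)$ bound.

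The main obstacle is verifying that the Backurs--Indyk gadgets are genuinely \emph{local} in the sense needed for a single streaming pass: the original write-up assumes RAM random access, so I would have to check that no gadget relies on looking at vectors other than the one being encoded, and that the separator/pattern gadgets have lengths and symbols computable from indices alone. A secondary issue is ensuring that the threshold $\tau$ is computable in $\tO(n/B)$ I/Os, which follows since it is a fixed closed-form function of $n$ and $d$. Once these local-producibility checks are in place, the I/O analysis of the reduction is immediate.
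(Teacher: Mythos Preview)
Your proposal is correct and follows essentially the same approach as the paper: both invoke the Backurs--Indyk reduction, argue that the vector-by-vector gadget construction is a sequential streaming computation costing $\tO(n/B)$ I/Os, and then make a single call to the Edit Distance oracle on strings of length $\tilde{O}(n)$. The paper additionally spells out the two-stage structure (OV $\to$ Pattern $\to$ Edit Distance) of the original reduction, but this is just a finer-grained description of the same construction you outline.
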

\begin{proof}
	The edit distance proof from Backurs and Indyk is generated by taking each vector and making a string not too much longer \cite{backurs2015edit}. The total time to produce the strings is $\tO(n/B)$.
	
	There are two stages of the reduction. 
	
	The first stage reduces orthogonal vectors to a problem they define, Pattern. In this reduction the vectors from the edit distance problem are converted into a string that can be formed by reading the vectors bit by bit. This means the pattern can be produced in $\tO(n/B)$ time.
	
	The second step of the reduction reduces pattern to edit distance. This works by tripling the size of one of the patterns. This also takes $\tO(n/B)$ time. 
	
	As a result, we can take the original orthogonal vector problem and turn it into two strings that can be given as input to edit distance. Thus, if the edit distance problem is solvable in time $f(n,M,B)$ then orthogonal vectors is solvable in time $\tO(f(n,M,B)+n/B)$
\end{proof}

\begin{lemma}
	If Longest Common Subsequence is solvable in $f(n,M,B)$ time then $OV$ is solvable in $\tO(f(n,M,B)+n/B)$ I/Os.
	\label{lem:LCSOV}
\end{lemma}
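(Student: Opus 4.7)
The plan is to mirror the proof of Lemma \ref{lem:edOV} but using the known OV-to-LCS reduction instead of the OV-to-Edit-Distance one. Specifically, I would invoke the reduction of Abboud, Backurs, and Vassilevska Williams (the same line of work that gives \cite{backurs2015edit}), which takes an OV instance with two sets $A, B$ of $n$ vectors in $\{0,1\}^d$ with $d = O(\log n)$ and produces two strings $S_A, S_B$ of length $\tilde{O}(n)$ over a constant-size alphabet such that the LCS of $S_A$ and $S_B$ exceeds a fixed threshold $\tau$ if and only if there exist $a \in A$, $b \in B$ that are orthogonal. All correctness of this reduction is inherited from the RAM-model proof; my only job is to bound its I/O cost.

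The reduction is built from local gadgets: a coordinate gadget $CG(v_i)$ of length $O(1)$ that depends only on a single bit $v_i$, a vector gadget $VG(v)$ that is the concatenation of the $d$ coordinate gadgets of $v$ interleaved with fixed separator symbols (length $O(d)$), and finally string-level gadgets that concatenate the $VG$'s for every vector of $A$ (respectively $B$) separated by blocks of fixed ``padding'' symbols. Because each symbol of the output stream depends only on at most one coordinate of one input vector (plus a position counter that can be maintained in a constant number of registers), the two strings $S_A$ and $S_B$ can be produced by a single scan of the input: read the next cache line of coordinates, write the next cache line of gadget symbols, and advance. This gives a construction cost of $O(nd/B) = \tilde{O}(n/B)$ I/Os, and the resulting strings have length $\tilde{O}(n)$.

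Once $S_A$ and $S_B$ are produced, I would call the assumed LCS algorithm on them at cost $f(\tilde{O}(n), M, B)$ and compare the returned value to the known threshold $\tau$ to decide the OV instance; this post-processing costs $O(1)$ I/Os. Under the standard assumption that $f$ is well-behaved under polylogarithmic blowup (as in \cite{backurs2015edit}), $f(\tilde{O}(n), M, B) = \tilde{O}(f(n,M,B))$, so the total I/O cost is $\tilde{O}(f(n,M,B) + n/B)$, as claimed.

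The main obstacle is convincing the reader that the gadget construction really is local enough to be streamed. This is essentially bookkeeping: I would state once and for all that the position in the output stream and the index of the current input vector and coordinate can be maintained in $O(1)$ working space, so any fixed-size pattern in the output can be emitted without random access to the input. Given that, the I/O bound is immediate and the rest of the proof is just a pointer to the RAM-model correctness argument.
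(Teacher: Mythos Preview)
Your proposal is correct and follows essentially the same approach as the paper: both invoke the Abboud--Backurs--Vassilevska Williams OV-to-LCS reduction, observe that the gadget strings are produced by a bit-by-bit scan of the input vectors and hence cost $\tO(n/B)$ I/Os to construct, and then absorb the polylogarithmic blowup in string length into the $\tO$ when calling the LCS algorithm. The only cosmetic difference is that the paper quotes the output string length as $O(nd^2)$ rather than $O(nd)$, but this is immaterial once hidden inside $\tO$.
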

\begin{proof}
	We will use the reduction from Abboud, Backurs, and  Vassilevska-Williams \cite{abboud2015tight}. Their reduction produces strings by going through the vectors bit by bit and generating two strings of length $O(nd^2)$. The LCS of the two strings tells us if the original orthogonal vectors instance had an orthogonal pair of vectors, if there are $O(n)$ strings each of length $O(d)$ in the orthogonal vectors instance. The reduction can be performed in the I/O-model in $nd^2/B$ I/Os.
	
	Thus if LCS is solvable in $f(n,M,B)$ time then orthogonal vectors is solvable in time\\ 
	$O(f(nd^2,M,B)+nd^2/B)$ time which takes $\tO(f(n,M,B)+n/B)$ I/Os.
\end{proof}

\subsection{Hitting Set}
Abboud, Vassilevska-Williams and Wang define a new problem, called Hitting Set~\cite{diameterReduction}. 

\begin{definition}[Hitting Set]
Given an input of a list $A$ and $B$ of $d$ dimensional $1$ and $0$ vectors return $True$ if there $\exists a\in A$ such that $\forall b\in B$ $a \cdot b >0$.
\end{definition}

\begin{lemma}
	Hitting Set is solvable in $\tO( n^2/(MB)+ n/B)$ I/Os.
	\label{lem:hsUB}
\end{lemma}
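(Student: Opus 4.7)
The plan is to mirror the recursive divide-and-conquer structure of the OV algorithm from Lemma~\ref{lem:ovUB}. The essential twist is that Hitting Set requires a per-$a$ witness (``does $a$ miss some $b$?'') rather than a single global Boolean, so I would carry along a bit array indexed by $A$ through the recursion and only collapse it to a Boolean at the top level.

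First, I would define a recursive routine $\mathrm{HS}(A,B)$ that returns, for every $a \in A$, a flag $m[a] \in \{0,1\}$ that is true iff there exists $b \in B$ with $a \cdot b = 0$. The base case $|A|=|B|=1$ just computes the single dot product and writes the corresponding bit. In the recursive case, split $A$ into halves $A_1, A_2$ and $B$ into halves $B_1, B_2$, make the four recursive calls $\mathrm{HS}(A_i, B_j)$ for $i,j \in \{1,2\}$ returning four partial bit arrays, and combine via $m_{A_i}[a] := m_{A_i, B_1}[a] \vee m_{A_i, B_2}[a]$, a linear scan over $|A_i|$ bits. After the top-level call, return TRUE iff some $a \in A$ still has $m[a] = 0$, which is one more linear scan.

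The additive work per recursion level is $O(n/B)$ I/Os to read the relevant input pieces and OR-merge the returned bit arrays, giving the recurrence $T(n) = 4 T(n/2) + O(n/B)$ with base case $T(\Theta(M)) = O(M/B)$ once the two sublists and the bit-array workspace fit in cache. Applying Case~4 of the I/O Master Theorem (Theorem~\ref{thm:io-master-theorem}) with $\alpha = 4$, $\beta = 2$ yields $T(n) = O\!\left(\tfrac{n^2}{MB} + \tfrac{n}{B}\right)$, exactly matching the OV bound. Because the access pattern references only $n$ (never $M$ or $B$), the algorithm is automatically cache-oblivious, just like the OV procedure it imitates.

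The main obstacle is bookkeeping rather than any new idea: one must lay the per-$a$ bit arrays contiguously so that the OR-merge genuinely costs $O(|A_i|/B)$ rather than $O(|A_i|)$, and the four subcalls' output buffers must be placed so the merging pass is a sequential scan of adjacent memory. This is the same storage discipline already needed for the OV recursion of Lemma~\ref{lem:ovUB}, so no additional cache-oblivious machinery is required.
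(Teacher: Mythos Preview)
Your proposal is correct but takes a genuinely different route from the paper. The paper's proof of Lemma~\ref{lem:hsUB} is a single-level, cache-\emph{aware} blocking: partition $A$ and $B$ into $O(n/M)$ chunks of size $M/2$, bring in each of the $O((n/M)^2)$ pairs $(A_i,B_j)$ at a cost of $O(M/B)$ I/Os, and in cache mark those $a\in A_i$ that are orthogonal to some $b\in B_j$; afterwards scan for an unmarked $a$. This is essentially the one-layer self-reduction discussed in Section~\ref{sec:masterstheorem}. You instead recurse all the way down in the style of the OV algorithm of Lemma~\ref{lem:ovUB}, threading a per-$a$ bit array through the recursion and OR-merging on the way up, yielding the same recurrence $T(n)=4T(n/2)+O(n/B)$ and hence the same $O(n^2/(MB)+n/B)$ bound.

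What each buys: the paper's argument is shorter and avoids the output-buffer layout issues you flag, but it is cache-aware (it explicitly uses $M$ to choose the block size). Your version is cache-oblivious, matching the OV lemma in that respect, at the price of the extra bookkeeping for the bit array. One minor remark: the recurrence actually falls under Case~1 of Theorem~\ref{thm:io-master-theorem} (take $\epsilon_2=1$, so $D(n,M,B)=(n/M)\cdot M/B=n/B=f(n,M,B)$), not only Case~4; either case yields the stated bound.
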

\begin{proof}
	We can solve hitting set by splitting our input lists into groups $A_i = A[iM/2,(i+1)M/2]$ and  $B_i = B[iM/2,(i+1)M/2]$ we can pull in every pair of $A_i$ and $B_j$ each pair takes $M/B$ I/Os to read in and mark an index $A_i[j]$ if we find that there is a vector that $A_i[j]$ is orthogonal to. There are $n^2/M^2$ pairs $A_i$ and $B_j$. This gives us a total number of I/Os of $n^2/(MB)$.
\end{proof}

\begin{lemma}
	If sparse radius is solvable in $f(n,M,B)$ I/Os then Hitting Set is solvable in
	\\ $\tO( f(n,M,B)+ n/B)$ I/Os.
	\label{lem:radiHS}
\end{lemma}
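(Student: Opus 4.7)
The plan is to mimic the OV-to-sparse-diameter reduction of Lemma~\ref{lem:diamOV} but use the Hitting-Set-to-radius gadget of Abboud, Vassilevska-Williams, and Wang~\cite{diameterReduction}. Given a Hitting Set instance of lists $A$ and $B$, each consisting of $n$ vectors of dimension $d$, I construct a graph $G$ on vertex sets $L_A = \{a_i\}_{i=1}^n$, $L_B = \{b_j\}_{j=1}^n$, and $L_C = \{c_k\}_{k=1}^d$, with an edge $(a_i,c_k)$ whenever $A[i][k] = 1$ and an edge $(b_j,c_k)$ whenever $B[j][k] = 1$. I add one auxiliary vertex $w$ adjacent to all of $L_A \cup L_C$, together with a small padding gadget attached to each node outside $L_A$, in order to force any node of eccentricity $2$ to lie in $L_A$.

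The correctness argument then follows the standard AVW analysis: the eccentricity of $a_i$ is $2$ iff every $b_j$ shares a coordinate with $a_i$ (via the length-two path $a_i - c_k - b_j$), and otherwise is at least $3$; while the auxiliary $w$ guarantees that other pairs inside $L_A \cup L_C \cup \{w\}$ are always within distance $2$ of $a_i$. The padding ensures that vertices in $L_B$, $L_C$, and the auxiliary nodes themselves all have eccentricity at least $3$. Consequently the radius of $G$ is $2$ iff there exists $a \in A$ that hits every $b \in B$, and $3$ otherwise.

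For the I/O cost of the construction, I perform a single scan of the input vectors to emit edge tuples, at cost $O(nd/B)$ I/Os; I then run an external-memory sort by source vertex in $O((nd/B)\log_{M/B}(nd/B)) = \tilde{O}(n/B)$ I/Os (using the Hitting-Set input size convention that matches the statements in Table~\ref{table:long-list}) to produce adjacency lists. The auxiliary vertex $w$ and padding contribute only $O((n+d)/B)$ additional I/Os. Finally I invoke the assumed sparse-radius algorithm on $G$ in $f(n,M,B)$ I/Os and read off whether the radius is $2$ or $3$, giving the claimed total of $\tilde{O}(f(n,M,B) + n/B)$.

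The main obstacle is to design the padding gadget so that it (i) makes every non-$L_A$ vertex have eccentricity strictly greater than $2$ on \emph{every} input, not just generic ones, while (ii) keeping $|E|$ linear in the Hitting-Set input size so that $G$ still qualifies as sparse when passed to the assumed algorithm, and (iii) admitting an I/O-linear construction. The standard AVW device attaches a constant number of private neighbors to each $b_j$ and each $c_k$, which raises their eccentricity without perturbing the distances among $L_A$ nodes; verifying that this device can be emitted by a streaming scan of the input, with only a single external sort to form adjacency lists, is the careful part of the proof.
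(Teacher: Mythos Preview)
Your proposal is correct and follows essentially the same route as the paper: both invoke the Abboud--Vassilevska-Williams--Wang Hitting-Set-to-radius gadget, build the graph by a single scan over the input vectors to emit edge tuples, and then externally sort to produce adjacency lists, giving the $\tilde O(n/B)$ construction cost before one call to the assumed sparse-radius algorithm.

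The only noteworthy difference is in the gadgetry. The paper literally reuses the diameter construction (two universal auxiliary vertices $y_a$, $y_b$ adjacent to the $A$-side and $B$-side respectively) and defers the correctness argument entirely to~\cite{diameterReduction}. You instead use a single auxiliary $w$ together with per-vertex padding attached to every non-$L_A$ node, and you explicitly call out the obstacle that one must force all non-$L_A$ vertices to have eccentricity strictly greater than $2$ so that the radius is $2$ iff some $a_i$ hits every $b_j$. This is a real issue that the paper's terse proof glosses over (indeed, with only the two auxiliaries and no padding, $y_a$ itself can have eccentricity $2$ regardless of the Hitting Set answer), so your more careful treatment is an improvement in rigor. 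Either gadget keeps the edge count linear in $nd$ and is emitted by a streaming scan, so the I/O bound is unaffected.
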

\begin{proof}
	We will generate the same graph as in our sparse diameter reduction. 	We will use the reduction from Abboud, Vassilevska-Williams and Wang \cite{diameterReduction}. 
	If the lists from OV are $A$ and $B$ with $n$ vectors of length $d$. We have $n$ nodes $a_i$, $n$ nodes $b_i$, $d$ nodes $v_i$ and two nodes $y_a$ and $y_b$. Where the edges $(a_i,y_a)$ all exist and edges $(d_i,y_a)$ all exist. Where the edges $(b_i,y_b)$ all exist and edges $(d_i,y_b)$ all exist. The edges $(a_i,d_j)$ exist if $A[i][j]=1$ and the edges  $(b_i,d_j)$ exist if $B[i][j]=1$. 
	
	We can in $n/B$ time output an adjacency list for $y_a$ and $y_b$.
	We read in as many vectors (or fractions of a vector as we can) and we output  $(a_i,d_j)$ and  $(d_j,a_i)$ if $A[i][j]$ and $(b_i,d_j)$ and  $(d_j,b_i)$ if $B[i][j]$. We then sort these vectors which takes $nd \lg_{M/B}{nd}/B$ time. This produces adjacency lists for all nodes. 
\end{proof}

\section{I/O Model Complexity Classes}
\label{sec:Pcache}
In this section we examine the I/O model from a complexity theoretic perspective. Section~\ref{sec:OracleModel} provides some necessary background information. In Section~\ref{sec:PandPSPACE} we define the classes $\PCACHE_{M,B}$ and $\CACHE_{M,B}\left(t\left(n\right)\right)$ describing the problems solvable in a polynomial number of cache misses and $O\left(t\left(n\right)\right)$ cache misses respectively. We then demonstrate that $\PCACHE_{M,B}$ lies between $P$ and $PSPACE$ for reasonable choices of cache size. In Section~\ref{sec:TMsimRAM} we provide simulations between the I/O model and both the RAM and Turing machine models. In Section~\ref{sec:Hierarchy} we prove the existence of a time hierarchy in $\CACHE_{M,B}\left(t\left(n\right)\right)$.
The existence of a time hierarchy in the I/O model grounds the study of fine-grained complexity by showing that such increases in running time do provably allow more problems to be solved. The techniques to achieve many of the results in this section also follow in the same theme of reductions, although the focus of the problems examined is quite different.
\subsection{Hierarchy Preliminaries: Oracle Model}
\label{sec:OracleModel}

Oracles are used to prove several results, most notably in the time-hierarchy proof. The oracle model was introduced by Turing in 1938 \cite{turing1938systems}. The definition we use here comes from Soare \cite{soare1999recursively} and similar definitions can be found in computational complexity textbooks.

In the Turing machine oracle model of computation we add a second tape and corresponding tape head. This oracle tape and its oracle tape head can do everything the original tape can:  reading, writing and moving left and right. This oracle tape head has two additional states $ASK$ and $RESPONSE$. After writing to the oracle tape, the tape head can go into the $ASK$ state. In the $ASK$ state the oracle computation is done on the input written to the oracle tape and then the tape head is changed to the $RESPONSE$ state. All of this is done in one computational step. If the oracle is the function language $L:~\{0,1\}^n \rightarrow \{0,1\}^*$, then the output is written on the tape for the input $i$ is $L(i)$. This allows the Turing machine to make $O(1)$ cost black box calls to the oracle language and get strings as output. 


The notation $A^B$ describes a computational class of the languages decidable by an oracle Turing machine of $A$ with a $B$ oracle.  The oracle language will be a language decidable in the function version of $B$. The oracle machine will then be resource limited as $A$ is resource limited. 

In this paper we will also talk about RAM machine oracles. This is a simple extension of the typical Turing machine oracle setup. The RAM machine will have two randomly accessible memories. One will be the standard RAM memory. The other memory will be the oracle memory, the RAM can read and write words to this memory and can additionally enter the $ASK$ state. One time step after entering the $ASK$ state the RAM will be returned to the $RESPONSE$ state and the contents of the oracle memory will contain the oracle language output, $L(i)$.  

\subsection{$\PCACHE_{M,B}$ and its relationship with P and PSPACE }
\label{sec:PandPSPACE}
First we define the class of problems solvable 
 given some function, $t(n)$, the number of cache misses, up to constant factors. 
\begin{definition}
	Let $\CACHE_{M,B}\left(t\left(n\right)\right)$ be the set of problems solvable in $O\left(t\left(n\right)\right)$ cache misses on a IO-model machine with a cache of size $O(M)$ and a cache line size of size $O(B)$.
\end{definition}

Next, we consider the class of problems solvable by any polynomial number of cache misses. 
\begin{definition}
	Let $\PCACHE_{M,B}$ be the set of problems solvable in polynomial numbers of cache misses on a IO-model machine with a cache of size $O(M)$ and a cache lines of size $O(B)$.
	$$\PCACHE_{M,B} = \bigcup_{i=_1}^\infty CACHE_{M,B}(n^i)$$
\end{definition}

First, let us note that the CACHE class can simulate the RAM class. The IO-model is basically a RAM model with extra power. 

\begin{lemma}
\label{lem:RAMsubsetCache}
$RAM(t(n)) \subseteq CACHE_{M,B}(t(n))$
\end{lemma}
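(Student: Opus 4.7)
The plan is to show that any algorithm running on a RAM machine in $t(n)$ steps can be simulated on an I/O-model machine incurring at most $O(t(n))$ cache misses. The key observation is that the I/O model is strictly more permissive than the RAM model: it charges nothing for computation and for accesses to data already in cache, and charges only for the transfers between cache and main memory. So any RAM algorithm has a direct simulator in the I/O model, and we just need to bound the number of cache misses the simulator induces.

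I would proceed as follows. First, I would describe the simulation: take the RAM algorithm $\mathcal{A}$ and run it verbatim on the I/O machine, laying out its memory in main memory with the same addresses. Whenever $\mathcal{A}$ performs an arithmetic/logic step with operands already in registers (i.e.\ in cache after being fetched), the I/O machine performs the same operation at zero I/O cost. Whenever $\mathcal{A}$ issues a memory read or write to address $x$, the I/O simulator checks whether the cache line containing $x$ is already in cache; if not, it brings that cache line in (evicting some line by whatever cache-replacement policy is assumed, e.g.\ LRU or optimal), and then performs the read/write.

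Next I would count the cache misses. Each RAM step accesses at most a constant number of memory addresses (typically one or two operand words). Each such access causes at most one cache miss by the simulator just described. Thus $t(n)$ RAM steps trigger at most $c \cdot t(n) = O(t(n))$ cache misses, independent of $M$ and $B$ (indeed we can only do better when $M,B>1$, since subsequent accesses to the same cache line are free). The total number of I/Os is therefore $O(t(n))$, placing the language decided by $\mathcal{A}$ in $\CACHE_{M,B}(t(n))$.

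I do not foresee a real obstacle here; the main subtle point is making clear that the I/O model retains the RAM's computational primitives (word-size operations, random access to memory) and merely augments the cost model. Once that is pinned down, the inclusion is immediate from the one-miss-per-RAM-step bound. One small bookkeeping point is the initial loading of the input: if the input is considered already in main memory as in the usual convention, no extra charge is needed beyond the per-access count; if the input must be streamed into cache, the $O(n/B) = O(t(n))$ additional misses are absorbed into the bound since $t(n) \geq n$ for any algorithm that reads its input.
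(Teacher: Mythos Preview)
Your proposal is correct and follows essentially the same idea as the paper: each RAM step touches $O(1)$ words and hence incurs $O(1)$ cache misses, giving $O(t(n))$ misses overall. The only cosmetic difference is that the paper makes this concrete by fixing $M=3$, $B=1$ and enumerating the three RAM operation types (write constant, unary op, binary op) with their exact miss counts, then invokes the fact that $\CACHE_{M,B}$ is defined up to constant factors in $M$ and $B$; you instead argue directly for arbitrary $M,B$, which is equally valid and arguably cleaner.
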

\begin{proof}
With $B=1$ and $M=3$ we can simulate all the operations of a RAM machine. 
There are three things to simulate in $O(1)$ cache misses: 
\begin{itemize}
	\item Write a constant to a given word in memory. \\
	We can write a word to memory with 1 cache miss. 
	\item Read $a$ and write $op(a)$ to a given word in memory.\\
	We can read in $a$ with one cache miss. We can compute $op(a)$ with zero cache misses for the operations doable in one time-step on the RAM model. Finally, we can write $op(a)$ with one cache miss. For a total of two cache misses. 
	\item Read $a$ and $b$ and write $op(a,b)$ to a given word in memory. \\
	 We can read $a$ and $b$ with two cache misses. We can compute $op(a,b)$ in zero cache misses for all operations that are doable in one time step for the RAM model. Finally, we can write $op(a,b)$ in one cache miss. 
\end{itemize}
$\CACHE_{M,B}(t(n))$ defines $M$ and $B$ asymptotically and thus is a superset of $\CACHE$ with any constant $M$ and $B$.
\end{proof}

Now we introduce a complexity class $\MEM$. Note that this class is very similar to $SPACE$. 

\begin{definition}
	We define the class $\MEM(s(n))$ to be the set of problems solvable in $SPACE(s(n))$ when the input is of size $O(s(n))$.
\end{definition}

Why $\MEM$ and not $SPACE$? We want to use the $\MEM$ class as an oracle which will model computation doable on a cache machine in one cache miss. 
When $t(n) = \Omega(n)$ then $\MEM(t(n)) = SPACE(t(n))$; however, these classes differ when we have a small work space. 
A $SPACE( o(n) )$  machine is given a read-only tape of size $n$ and compute space $o(n)$. This extra read-only tape gives the $SPACE$ machine too much power when compared with the cache. Notably, we can scan through the entire input with one step with a $SPACE(\lg(n))$ oracle. A cache would require $n/B$ time to scan this input.


\begin{lemma}
	$\MEM(Mw) \subseteq CACHE_{M,B}(M/B)$.
\end{lemma}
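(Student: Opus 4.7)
The plan is to exploit the fact that the whole $\MEM(Mw)$ computation (input plus workspace) measured in bits is $O(Mw)$, which in machine words is $O(M)$, so the entire computation can be carried out inside the cache once the input has been brought in.

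First I would unpack the definition: a language in $\MEM(Mw)$ is decided by a Turing machine whose input has size $O(Mw)$ bits and that uses $O(Mw)$ total space. Since the word size is $w$, this is $O(M)$ words of data in total, which fits in $O(M/B)$ cache lines.

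Next, I would describe the simulating I/O machine. It begins by scanning the input sequentially and loading it into cache. Because the input occupies $O(M)$ contiguous words, this phase incurs exactly $O(M/B)$ cache misses. I would then argue that the cache can be laid out to hold both the encoded input tape and the workspace of the simulated $\MEM(Mw)$ machine: both together are $O(M)$ words, still fitting in the $O(M)$-word cache. The I/O machine then simulates the $\MEM(Mw)$ machine step by step; every tape read, write, and head move touches only cells in cache, so no further cache misses are incurred regardless of how many computation steps the simulation takes (the I/O model charges only for misses, not for in-cache work). Finally, writing the answer out takes $O(1)$ additional misses, which is absorbed into $O(M/B)$.

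Putting these pieces together gives a total of $O(M/B)$ cache misses, establishing the containment. The only delicate point, and the one I would be most careful with, is confirming that the simulation can indeed place the entire input plus workspace of the $\MEM(Mw)$ machine within $O(M)$ words of cache; this relies on the convention that $\MEM(s(n))$ counts both read-only input and workspace inside its bound $s(n)$, which is exactly how $\MEM$ is defined in the excerpt (explicitly contrasted with $\text{SPACE}$ precisely to avoid giving the simulated machine a free read-only input tape). Once that is in hand, the rest is just a clean ``load once, compute in cache'' argument.
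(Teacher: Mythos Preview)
Your proposal is correct and follows essentially the same approach as the paper: load the entire $O(Mw)$-bit input into cache in $O(M/B)$ misses, then solve entirely in cache with no further misses. Your write-up is more detailed than the paper's two-sentence proof, but the underlying argument is identical.
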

\begin{proof}
We can read the entire problem of size $Mw$ into memory by bringing in $Bw$ bits at a time, for a total of $M/B$ cache misses. 
Once the problem is in memory, it can be solved entirely in cache with a cache of size $O(Mw)$ bits, or $O(M)$ words.  
\end{proof}

Now we prove that a RAM machine with oracle access to our $\MEM$ oracle can be simulated by a cache machine. We simulate the MEM machine and RAM machine together efficiently in cache. 

\begin{corollary}
\label{cor:ram-subset-cache}
	$RAM(t(n))^{MEM(Mw)} \subseteq CACHE_{M+3,B}(t(n))$.
\end{corollary}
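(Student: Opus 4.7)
The plan is to partition the $M+3$ cache words into two disjoint regions that play different roles in the simulation, and to argue that each type of RAM-oracle step translates into at most $O(1)$ cache misses. I will reserve $3$ cache words for the ``RAM simulation region'' and the remaining $M$ cache words for the ``oracle memory region,'' which will permanently hold the contents of the oracle tape and never be evicted. The key observation is that because the entire oracle memory fits in cache, neither reads/writes to oracle memory nor the oracle's own $\MEM(Mw)$ computation need cause any cache misses.

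First I would invoke Lemma \ref{lem:RAMsubsetCache} (or its proof) to handle the ordinary RAM steps: every ordinary instruction reads at most two words from main memory, computes a single-step operation, and writes one word back. Using the reserved $3$ words of cache, each such instruction can be simulated in $O(1)$ cache misses exactly as in Lemma \ref{lem:RAMsubsetCache}, independent of the separate oracle memory region.

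Next I would handle the oracle-tape operations. When the RAM writes a word to a given address in oracle memory, I update the corresponding word in the reserved $M$-word oracle region directly in cache, incurring zero cache misses; similarly for reads from oracle memory. When the simulated RAM enters the $\mathrm{ASK}$ state, the oracle must decide the oracle language $L$ on input of size at most $Mw$ bits using $\MEM(Mw)$ space. Since $\MEM(Mw) = SPACE(Mw)$ on inputs of size $O(Mw)$ and the entire input and working space live in the $M$-word oracle region already resident in cache, the oracle computation is performed entirely in cache and produces the $\mathrm{RESPONSE}$ in zero cache misses. (Constants on the $\MEM$ space bound can be absorbed into the asymptotic $\Theta(M)$ cache size; I would remark that the precise ``$+3$'' is an artifact of the Lemma \ref{lem:RAMsubsetCache} construction and that $\CACHE_{M,B}$ is defined up to constants anyway.)

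Summing over the $t(n)$ RAM steps gives $O(t(n))$ cache misses overall, establishing the claimed inclusion. The one subtlety worth verifying carefully is that the $\MEM$ oracle truly needs only $M$ words of cache (not $M$ plus scratch), which is immediate from the definition of $\MEM(s(n))$ given in Section \ref{sec:PandPSPACE}: the input size and total space are both $O(s(n))$, so all intermediate work occurs in the same $O(M)$ words that already sit in the oracle region. No main-memory access outside this resident region ever happens during an oracle call, so the simulation incurs no extra I/O.
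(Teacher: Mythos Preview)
Your proposal is correct and follows essentially the same approach as the paper: reserve $3$ cache words to simulate the RAM via Lemma~\ref{lem:RAMsubsetCache} and the remaining $M$ words to hold the oracle tape permanently in cache, so that oracle reads, writes, and the $\MEM(Mw)$ computation itself incur no additional misses. Your write-up is in fact more careful than the paper's, which compresses the same argument into three sentences; the only cosmetic difference is that the paper accounts an oracle-tape write as ``$O(1)$ cache misses (pull from main memory and write to the simulated oracle tape)'' whereas you split this into an ordinary RAM read (one miss) followed by a zero-miss in-cache write, but the total accounting is identical.
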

\begin{proof}
	We can use $3$ words in the cache to simulate our $RAM(t(n))$ machine using Lemma \ref{lem:RAMsubsetCache}. We can use the remaining $M$ words of the cache to simulate the $\MEM(Mw)$ oracle. Each word the RAM machine writes to the oracle tape can be simulated in $O(1)$ cache misses (pull from main memory and write to the simulated oracle tape). Each word the RAM machine reads from the oracle can be simulated with no cache misses, because both the oracle tape and the RAM simulation are in cache.
\end{proof}

The class $\PCACHE_{M,B}$ is equivalent to a polynomial time algorithm with oracle access to a MEM oracle. Intuitively, in both cases we get to use a similarly powerful object (the cache or the MEM oracle) a polynomial number of times. 

\begin{theorem}
	If $Bw= O\left(poly\left(n\right)\right)$, then $\PCACHE_{M,B} = P^{MEM\left(Mw\right)}$ 
	\label{thm:cacheSpaceEquivalence}
\end{theorem}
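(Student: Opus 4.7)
The plan is to prove the two inclusions $\PCACHE_{M,B} \subseteq P^{MEM(Mw)}$ and $P^{MEM(Mw)} \subseteq \PCACHE_{M,B}$ separately. The reverse inclusion is nearly immediate from the work already done: applied with polynomial $t(n)$, Corollary~\ref{cor:ram-subset-cache} gives $RAM(n^k)^{MEM(Mw)} \subseteq CACHE_{M+3,B}(n^k)$ for every $k$, so taking the union over $k$ yields $P^{MEM(Mw)} \subseteq \PCACHE_{M,B}$, after absorbing the additive ``$+3$'' into the asymptotic cache size.

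For the forward inclusion, I would simulate a cache algorithm $A$ making $p(n)$ cache misses by a polynomial-time RAM machine $R$ equipped with a $MEM(Mw)$ oracle. The RAM stores, in its ordinary memory, a sparse representation of $A$'s main memory: only the $O(p(n))$ blocks ever touched by $A$ need to be kept, indexed by their main-memory address, so this uses polynomial space and supports read/write in polynomial time. The RAM also maintains a single ``cache register'' of $Mw$ bits holding the current contents of $A$'s cache. One macro-step proceeds as follows: pass the cache register to the $MEM(Mw)$ oracle, which simulates the (free) in-cache computation of $A$ until $A$ issues its next cache-miss instruction, and returns the updated cache contents together with the address and type (load or evict) of that miss. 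The RAM then services the miss by copying $B$ words between its sparse main-memory table and the cache register.

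The oracle language is legitimately in $MEM(Mw)$: its input is $Mw$ bits (the cache contents, together with a hard-coded description of $A$) and the simulation uses only the cache register as workspace, equipped with a $\lceil Mw \rceil$-bit counter running up to $2^{Mw}$ to detect in-cache non-termination, so total workspace is $O(Mw)$ bits on input of size $Mw$. The RAM makes $p(n)$ oracle calls interleaved with the servicing work, so total RAM time is polynomial provided each miss can be serviced in polynomial time.

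The main obstacle --- and the reason the hypothesis $Bw = \mathrm{poly}(n)$ is needed --- is exactly the cost of servicing a single cache miss: the RAM must transfer $Bw$ bits between its simulated main memory and the cache register. If $Bw$ were superpolynomial, even one miss would break the polynomial time budget, regardless of how clever the rest of the reduction is; and if $Bw$ were too small relative to the input size, we could not even read $n$ input bits in polynomial many misses and the statement would fail in the other direction (though this is already absorbed in how $M,B,w$ scale with $n$). Under $Bw = \mathrm{poly}(n)$, each miss costs polynomial RAM time, so the simulation runs in polynomial time overall, which completes the forward inclusion and hence the theorem.
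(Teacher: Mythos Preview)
Your proposal is correct and follows essentially the same approach as the paper: for the reverse inclusion you invoke Corollary~\ref{cor:ram-subset-cache} (the paper redoes that argument inline), and for the forward inclusion you have the polynomial-time RAM manage main memory while the $MEM(Mw)$ oracle simulates the in-cache computation between misses, with the hypothesis $Bw=\mathrm{poly}(n)$ used exactly as in the paper to bound the per-miss servicing cost. Your version is in fact slightly more careful than the paper's in two respects---you make explicit the sparse main-memory representation and the in-oracle step counter for non-termination---but these are refinements of the same argument, not a different route.
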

\begin{proof}
	
First, we consider the inclusion $\PCACHE_{M,B} \subseteq P^{MEM(Mw)}$. 
 The oracle tape will have the memory address of the next requested cache line, a space for the RAM machine to write the contents of the requested cache line, the state the RAM simulation ended in, and finally the $Mw$ bits of the contents of the cache.  The oracle will be queried once per simulated cache miss, and return the state of the cache at the next cache miss, as well as the requested cache line. The polynomial machine will write the requested cache line to the oracle tape and then run the oracle. 
Note, if the cache machine we simulate takes $t(n)$ cache misses the polynomial machine (of the $P^{MEM(Mw)}$ oracle machine), then will need to take time $O(Bwt(n))$. However, $Bw$ is $O(poly(n))$ and $t(n)$ is $O(poly(n))$, so $Bwt(n)$ is also $O(poly(n))$. 

Second,  we consider the inclusion $\PCACHE_{M,B}=\PCACHE_{M+3,B} \supseteq P^{MEM(Mw)}$. Note that  $\PCACHE_{M+3,B}$ and $\PCACHE_{M,B}$ are equal because we consider the asymptotic size of the cache and cache line. The first $3$ words of the cache will be used to simulate $P$ (by simulating a RAM machine). The next $M$ words will be used to simulate the $\MEM(Mw)$ oracle and its tape. We can simulate the $\MEM(Mw)$ oracle with our cache because we can run any RAM program that uses only $Mw$ space on the cache in $1$ time step. 

Given $\PCACHE_{M,B} \subseteq P^{MEM(Mw)}$ and $\PCACHE_{M,B} \supseteq P^{MEM(Mw)}$ we have that $\PCACHE_{M,B} = P^{MEM(Mw)}$.
\end{proof}

We then note that in many cases MEM and SPACE are equivalent. 

\begin{corollary}
	If $Bw= O\left(poly\left(n\right)\right)$ and
	$Mw= \Omega\left(n\right)$, then \\
	 $\PCACHE_{M,B} = P^{SPACE(Mw)} $.
\end{corollary}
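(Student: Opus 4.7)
The plan is to derive this corollary immediately from Theorem~\ref{thm:cacheSpaceEquivalence}, which under the hypothesis $Bw = O(\text{poly}(n))$ already gives $\PCACHE_{M,B} = P^{MEM(Mw)}$. All that is left is to argue $MEM(Mw) = SPACE(Mw)$ under the extra hypothesis $Mw = \Omega(n)$, and then substitute into the conclusion of that theorem.

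For the identification $MEM(Mw) = SPACE(Mw)$, I would return to the definition of $\MEM$ in the excerpt: $\MEM(s(n))$ consists of problems solvable in $SPACE(s(n))$ when the input has size $O(s(n))$. The only definitional gap between $\MEM$ and $SPACE$ is therefore that a $SPACE$ machine has, in addition to its $O(s(n))$ work tape, a read-only input tape of size $n$, whereas a $\MEM$ machine must fit everything into $O(s(n))$ total workspace. When $s(n) = Mw = \Omega(n)$ this distinction disappears: any $SPACE(Mw)$ machine can be simulated by a $\MEM(Mw)$ machine that first copies the input onto its work tape at a cost of only $O(n) = O(Mw)$ additional space, and then runs the original computation in place. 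The reverse inclusion $MEM(Mw) \subseteq SPACE(Mw)$ is immediate, since a $SPACE(Mw)$ machine is strictly more resourced. Hence $MEM(Mw) = SPACE(Mw)$.

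Combining, we get $\PCACHE_{M,B} = P^{MEM(Mw)} = P^{SPACE(Mw)}$. The main thing to be careful about—more a subtlety than a genuine obstacle—is how these two classes behave when used as \emph{oracles} rather than as stand-alone language classes, since a polynomial-time oracle caller can in principle issue queries of size $\text{poly}(n)$ rather than only size $O(Mw)$. However, Theorem~\ref{thm:cacheSpaceEquivalence} only ever invokes the oracle on queries describing one cache snapshot at a time, which has size $O(Mw)$; and under $Mw = \Omega(n)$ the input-size restriction in the definition of $\MEM$ is vacuous on all such queries. So the class-level equivalence transfers cleanly to the oracle setting and the corollary follows.
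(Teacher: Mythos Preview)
Your proposal is correct and follows essentially the same route as the paper: invoke Theorem~\ref{thm:cacheSpaceEquivalence} to get $\PCACHE_{M,B} = P^{MEM(Mw)}$, then observe that $\MEM(Mw) = SPACE(Mw)$ once $Mw = \Omega(n)$ since the input-size restriction in the definition of $\MEM$ becomes vacuous. Your extra paragraph about oracle query sizes is a subtlety the paper simply glosses over, so if anything you are being more careful than the original.
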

\begin{proof}
	Given Theorem \ref{thm:cacheSpaceEquivalence}, $\PCACHE_{M,B} = P^{MEM\left(Mw\right)}$ . We further have that $\MEM(s(n)) = SPACE(s(n))$ if $s(n) = \Omega(n)$. Note that the definition of $\MEM(s(n))$ is problems solvable in $s(n)$ space with an input of size $O(s(n))$. SPACE machines are given an input in their working tape (and thus an input of size $O(s(n))$) when $s(n) =\Omega(n)$.   
\end{proof}

Finally, we note that $P$ is a subset of $\PCACHE_{M,B}$.

\begin{corollary}
	
	$P \subseteq \PCACHE_{M,B} $.
	\label{lem:PsubPCACHE}
\end{corollary}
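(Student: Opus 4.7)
The plan is to derive this inclusion directly from Lemma \ref{lem:RAMsubsetCache}, which already does the heavy lifting of showing $RAM(t(n)) \subseteq CACHE_{M,B}(t(n))$ via a straightforward simulation using only $M = 3$ words of cache and $B = 1$. The corollary should then be obtained by taking unions of these time-bounded containments over all polynomial time bounds.

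More concretely, first I would recall that $P$ can be characterized as $\bigcup_{i=1}^{\infty} RAM(n^i)$, since a problem is in $P$ iff it is decidable by a RAM machine in $O(n^i)$ steps for some constant $i$ (the equivalence between polynomial-time RAM and polynomial-time Turing machines is standard and only costs a polynomial blowup, which is absorbed into the union). Next, applying Lemma \ref{lem:RAMsubsetCache} termwise, I have $RAM(n^i) \subseteq CACHE_{M,B}(n^i)$ for every $i$. Taking the union on both sides yields
\[
P \;=\; \bigcup_{i=1}^{\infty} RAM(n^i) \;\subseteq\; \bigcup_{i=1}^{\infty} CACHE_{M,B}(n^i) \;=\; \PCACHE_{M,B}.
\]

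The only subtlety worth addressing explicitly is the role of the cache parameters $M$ and $B$. Lemma \ref{lem:RAMsubsetCache} constructs the simulation with $M = 3$ and $B = 1$, and the definition of $\CACHE_{M,B}(t(n))$ treats $M$ and $B$ asymptotically (as $O(M)$ and $O(B)$), so any choice of $M, B \geq 1$ is at least as powerful. Hence the containment holds uniformly for all meaningful cache configurations. There is no real obstacle here; the main care is simply making sure the asymptotic convention on $M$ and $B$ is invoked so that the constant-sized simulation in Lemma \ref{lem:RAMsubsetCache} fits inside the $\CACHE_{M,B}$ class for general $M$ and $B$.
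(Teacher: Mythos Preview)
Your proof is correct but takes a different route from the paper. The paper argues via the oracle characterization: it invokes Theorem~\ref{thm:cacheSpaceEquivalence} to get $\PCACHE_{M,B} = P^{MEM(Mw)}$ and then notes the trivial containment $P \subseteq P^{MEM(Mw)}$. You instead go directly through Lemma~\ref{lem:RAMsubsetCache}, applying $RAM(n^i) \subseteq CACHE_{M,B}(n^i)$ termwise and taking the union over~$i$. Your argument is more elementary and, as a bonus, does not rely on the hypothesis $Bw = O(\mathrm{poly}(n))$ that Theorem~\ref{thm:cacheSpaceEquivalence} carries; the paper's route, on the other hand, fits more naturally into the surrounding narrative since the corollary is placed immediately after the oracle-equivalence theorem.
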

\begin{proof}
	By $P \subseteq P^{MEM(Mw)}$ and by Theorem \ref{thm:cacheSpaceEquivalence}  $P \subseteq P^{MEM(Mw)} = \PCACHE_{M,B}$.
\end{proof}

\begin{lemma}
	If $Mw= \Theta\left(poly\left(n\right)\right)$, then $\PCACHE_{M,B} \subseteq PSPACE$ 
	\label{thm:PCACHEPSAPCE}
\end{lemma}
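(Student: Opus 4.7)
The plan is to bootstrap off of Theorem \ref{thm:cacheSpaceEquivalence} and then do a standard oracle-unwinding argument. First I would observe that the hypothesis $Mw = \Theta(\mathrm{poly}(n))$ automatically gives $Bw = O(\mathrm{poly}(n))$ (since $B \leq M$ in any sensible cache), so the hypothesis of Theorem \ref{thm:cacheSpaceEquivalence} is met and we may replace $\PCACHE_{M,B}$ by $P^{MEM(Mw)}$. The whole proof then reduces to showing $P^{MEM(Mw)} \subseteq PSPACE$ when $Mw$ is polynomially bounded.

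To establish that inclusion, I will build, for an arbitrary oracle machine $R$ witnessing $L \in P^{MEM(Mw)}$, a polynomial-space simulator $S$. $S$ keeps $R$'s RAM state (program counter, registers, and the contents of the polynomially many memory cells $R$ ever touches) on its own work tape; because $R$ runs in polynomial time with $O(\log n)$-bit words, this state fits in polynomial space. $S$ executes $R$ step by step. Whenever $R$ enters the $ASK$ state, $S$ suspends the main simulation, reads the oracle-tape contents as a $\mathrm{poly}(n)$-length input, and invokes a sub-routine that simulates the $MEM(Mw)$ computation on that input. By definition of $MEM$, this sub-routine uses at most $O(Mw) = \mathrm{poly}(n)$ bits of work space, and crucially that space can be reclaimed once the response is written back. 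After writing the oracle response to the oracle tape, $S$ releases the sub-routine's workspace and resumes $R$. Since space is reused across queries, $S$'s total space usage is bounded by the polynomial needed for $R$'s state plus the polynomial needed for one oracle call.

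The main obstacle I anticipate is justifying that ``polynomial-space oracle word-RAM simulation'' actually collapses into $PSPACE$ in the usual Turing-machine sense. This is standard (each word-RAM operation touches only $O(\log n)$ bits and can be simulated by a Turing machine with at most a polynomial-factor blow-up in space), but one has to be careful that successive oracle queries do not accumulate space: the reclaim-and-reuse step in the simulator above is the key point. A second, smaller care point is the tightness of the polynomial bounds on $Mw$: one should verify that replacing $MEM(Mw)$ by a $PSPACE$ subroutine does not require knowing $M$ and $B$ explicitly at the Turing-machine level, which follows because $Mw$ is just polynomial in $n$. Once these routine pieces are checked, composing $\PCACHE_{M,B} = P^{MEM(Mw)}$ with $P^{MEM(Mw)} \subseteq PSPACE$ yields the desired $\PCACHE_{M,B} \subseteq PSPACE$.
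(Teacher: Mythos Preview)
Your proposal is correct, but it takes a different route from the paper. The paper argues directly: a $\PCACHE_{M,B}$ machine makes only $\mathrm{poly}(n)$ cache misses, each touching $Bw$ bits of main memory, and since $Bw \leq Mw = \mathrm{poly}(n)$ the total main memory ever touched is polynomial; hence a PSPACE machine can reserve one polynomial-size region for the cache contents and another for all touched main-memory cells, and simulate the cache machine step by step. No detour through Theorem~\ref{thm:cacheSpaceEquivalence} is needed.

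Your approach instead first invokes Theorem~\ref{thm:cacheSpaceEquivalence} to replace $\PCACHE_{M,B}$ by $P^{MEM(Mw)}$ and then does a standard oracle-unwinding to place $P^{MEM(Mw)}$ inside $PSPACE$. This is perfectly valid and arguably cleaner as a modular argument, since the oracle-collapse step is routine once $Mw$ is polynomial. The trade-off is that you inherit the hypotheses of Theorem~\ref{thm:cacheSpaceEquivalence} (you correctly note that $B \leq M$ forces $Bw = O(\mathrm{poly}(n))$, so this is fine), whereas the paper's direct simulation is self-contained and makes the space accounting completely explicit. Both arguments ultimately rest on the same two facts: the cache itself is polynomial-size, and the polynomial bound on cache misses caps the amount of external state that must be recorded.
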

\begin{proof}
	
First, we consider the inclusion $\PCACHE_{M,B} \subseteq PSPACE$. 
The $\PCACHE$ can only pull in $O\left(wB poly(n)\right)$ bits from main memory which is polynomial since $wB \leq M = O(poly(n))$. Our PSPACE machine reserves two polynomial size sections of tape, one to simulate the cache and the other to store all of the values the $\PCACHE$ machine pulls from main memory. Thus $\PCACHE_{M,B} \subseteq PSPACE$.


\end{proof}

\begin{lemma}
	$ \bigcup_{c=_1}^\infty PCACHE_{n^c,B} = PSPACE$ 
	\label{thm:sumPCACHEeqPSPACE}
\end{lemma}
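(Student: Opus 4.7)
The plan is to prove the two inclusions separately, and both will follow fairly directly from machinery already set up in the excerpt. For the forward inclusion $\bigcup_{c=1}^\infty PCACHE_{n^c,B} \subseteq PSPACE$, I would apply Lemma~\ref{thm:PCACHEPSAPCE} to each class in the union: for any fixed $c$ we have $Mw = n^c \cdot w = \Theta(\mathrm{poly}(n))$ under the standard word-size assumption $w = O(\mathrm{poly}(n))$, so Lemma~\ref{thm:PCACHEPSAPCE} yields $PCACHE_{n^c,B}\subseteq PSPACE$. Since $PSPACE$ is closed under countable unions of this form (each language in the union lives in $PSPACE$), the union is contained in $PSPACE$.

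For the reverse inclusion $PSPACE \subseteq \bigcup_{c=1}^\infty PCACHE_{n^c,B}$, I would take an arbitrary language $L\in PSPACE$ decided by a Turing machine $T$ using space $s(n) = n^k$ for some constant $k$. I set $c$ so that the cache has at least $n^k / w + O(1)$ words, e.g.\ $c = k$ (adjusting by a constant if one wants to account for word size), so that the entire PSPACE tape of $T$ together with $T$'s finite control fits inside the cache of size $M = n^c$. The simulation is then trivial from the I/O-model perspective: read the input into cache using $O(n/B)$ cache misses, and then simulate $T$ step by step using the cache as $T$'s work tape. Since the simulation of each step only touches memory already resident in cache, no further cache misses are incurred, regardless of how many steps $T$ takes (even if it runs for $2^{\mathrm{poly}(n)}$ time). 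Finally, write out the single-bit (or polynomial-size) answer with $O(1)$ further cache misses. The total is $O(n/B) = \mathrm{poly}(n)$ cache misses, placing $L$ in $PCACHE_{n^c,B}$.

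The main conceptual point — which is not really an obstacle but worth stating carefully — is that the I/O model charges only for cache misses, not for in-cache computation, so an exponentially long PSPACE computation on an input of size $n$ becomes free once the relevant $\mathrm{poly}(n)$ bits of working tape are fetched into a sufficiently large cache. This is exactly the asymmetry already exploited in Corollary~\ref{cor:ram-subset-cache} and Theorem~\ref{thm:cacheSpaceEquivalence}: a $MEM(Mw)$ oracle is ``free'' inside one cache miss, and taking the union over $c$ lets $Mw$ grow with any polynomial in $n$, matching the space budget available to a $PSPACE$ machine. The only minor bookkeeping is verifying the word-size assumption $w = O(\mathrm{poly}(n))$ so that cache words can address the $n^k$-bit tape, which is the standard convention used throughout the paper. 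Combining both inclusions yields the claimed equality.
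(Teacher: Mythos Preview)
Your proposal is correct and follows essentially the same approach as the paper. For the forward inclusion you invoke Lemma~\ref{thm:PCACHEPSAPCE} where the paper argues directly that $\CACHE_{n^{x_L},B}(n^{y_L})\subseteq SPACE(n^{x_L}+n^{y_L})$, and for the reverse inclusion you simulate the PSPACE machine directly in cache where the paper phrases the same idea via the $\MEM$/$SPACE$ equivalence for $Mw=\Omega(n)$; these are cosmetic differences only.
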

\begin{proof}
First, we show any language in $PSPACE$ is in $\bigcup_{c=_1}^\infty PCACHE_{n^c,B}$. Every language, $L$, in $PSPACE$ is computable in $SPACE(n^{t_L})$ for some constant $t_L$. Next note that $\PCACHE_{n^{\lceil t_L \rceil},B} \supset SPACE(n^{t_L})$, because $n^{\lceil t_L \rceil} = \Omega(n)$ thus $\MEM(n^{\lceil t_L \rceil})= SPACE(n^{\lceil t_L \rceil})$. So, every language $L$ in $PSPACE$ is contained in $\bigcup_{c=_1}^\infty PCACHE_{n^c,B}$. 

Second, every language, $L$, in 	$\bigcup_{c=_1}^\infty PCACHE_{n^c,B}$ is contained in  $\CACHE_{n^{x_L},B}(n^{y_L})$ for some constants $x_L$ and $y_L$. Note that $\CACHE_{n^{x_L},B}(n^{y_L}) \subset SPACE(n^{x_L}+n^{y_L})$. The sum of two polynomials is a polynomial, so any language in $\bigcup_{c=_1}^\infty PCACHE_{n^c,B}$ is contained in $PSPACE$.  
\end{proof}

\subsection{$\CACHE_{M,B}$ Hierarchy}
\label{sec:Hierarchy}
In this section we prove that a hierarchy exists in the IO-model. The separation in the CACHE hierarchy is $B$ times the separation for the RAM hierarchy. We know that RAM machines given polynomially more time can solve more problems than those given polynomially less.

\begin{theorem}
	For $\epsilon \geq 0$,
	\begin{center}
		$RAM^O(t(n)) \subsetneq RAM^O((t(n))^{1+ \epsilon})$.
		\cite{cook1972time}
	\end{center}
	\label{thm:ramHierarchy}
\end{theorem}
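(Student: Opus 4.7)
The plan is to adapt the classical diagonalization argument behind the Time Hierarchy Theorem to the RAM-with-oracle setting. Fix an oracle $O$ and a reasonable time-constructible function $t(n)$. I will exhibit a language $L$ decidable by a RAM machine with oracle $O$ in time $O(t(n)^{1+\epsilon})$ that cannot be decided by any such machine in time $O(t(n))$.

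The diagonalizer $D^O$ operates as follows. On input $x$, it interprets $x$ as an encoding $\langle M \rangle$ of a RAM machine $M$ with oracle access to $O$ (if $x$ is not a valid encoding, it rejects). It then simulates $M^O$ on input $\langle M \rangle$ for at most $c \cdot t(n)^{1+\epsilon/2}$ RAM steps, using a decrementing clock to enforce the time bound. If the simulation completes within the bound, $D^O$ outputs the opposite answer; otherwise $D^O$ rejects. The key efficiency ingredient is the standard universal-RAM simulation: a fixed RAM machine with oracle $O$ can simulate an arbitrary RAM machine (with code of constant size relative to $n$) with only a constant (or at worst polylogarithmic) multiplicative overhead per step, provided that each oracle call of $M$ is forwarded to $O$ by $D$'s own oracle tape in $O(1)$ time. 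Because $t(n)$ is time-constructible, initializing and decrementing the clock also fits inside $O(t(n)^{1+\epsilon/2})$ steps, so the total running time of $D^O$ is $O(t(n)^{1+\epsilon})$ as required, placing $L(D^O) \in RAM^O(t(n)^{1+\epsilon})$.

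To show $L(D^O) \notin RAM^O(t(n))$, suppose for contradiction some $M^O$ decides $L(D^O)$ in time at most $c_M \cdot t(n)$ for all $n$. Pick $n$ large enough (depending on $|\langle M \rangle|$ and the universal-simulation overhead) so that the simulation of $M^O$ on input $\langle M \rangle$ of length $n$ finishes within the clock budget $c \cdot t(n)^{1+\epsilon/2}$. Then $D^O(\langle M \rangle)$ outputs the negation of $M^O(\langle M \rangle)$, contradicting the assumption that $M^O$ decides $L(D^O)$. Hence the containment is strict.

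The main technical obstacle is quantifying the overhead of the universal RAM simulation precisely enough to justify fitting the clock and the simulation into $t(n)^{1+\epsilon}$ steps while the simulated machine runs up to $t(n)$ steps. In the standard word-RAM model this overhead is constant (since accessing any word and dispatching an opcode is $O(1)$ with appropriate indirect addressing), and the oracle call of the simulated machine can be piped through $D$'s oracle tape in $O(1)$ time per invocation. With this in hand, a gap of $t(n)^{\epsilon/2}$ between the clock and the simulated budget is more than enough to absorb all simulation overhead, giving the separation for every $\epsilon > 0$.
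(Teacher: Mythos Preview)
The paper does not actually prove this theorem: it is stated as a known result and attributed to Cook~\cite{cook1972time}, with no accompanying argument. Your sketch supplies the standard diagonalization proof of a relativized RAM time hierarchy, which is the expected way to justify the cited statement and is essentially correct.

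Two small points worth tightening. First, the theorem as written allows $\epsilon \geq 0$, which is clearly a typo (at $\epsilon=0$ the containment cannot be strict); your final paragraph rightly restricts to $\epsilon>0$. Second, your contradiction step needs the usual padding trick made explicit: to ``pick $n$ large enough'' you must consider inputs of the form $\langle M \rangle 0^k$ so that a fixed machine $M$ has encodings of arbitrarily large length $n$, allowing the $t(n)^{\epsilon/2}$ slack to eventually dominate the constant-factor simulation overhead depending on $|\langle M\rangle|$. With that detail in place the argument is complete.
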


Let $s(n)$ be the space usage of the algorithm running on the RAM machine. 
Let $\alpha = \frac{B+\lceil \lg\left(s\left(n\right)\right)/w\rceil}{B}$, which is the number of cache lines needed to represent both a cache line and its memory address. Note, in the case where one word is large enough to address all of the  memory used by the algorithm (a standard assumption) $\alpha = 1 +1/B \leq 2$. We now give a simulation of a CACHE machine by a RAM machine with MEM oracle.

\begin{lemma}
	$\CACHE_{M,B}(t\left(n\right)) \subseteq RAM \left (t\left(n\right)\left(B+\alpha \right) \right)^{MEM\left(Mw\alpha\right)}$
	\label{lem:NormalRamSimCache}
\end{lemma}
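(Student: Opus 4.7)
The plan is to simulate the CACHE machine by a RAM machine that delegates all of the ``in-cache'' computation to a single MEM oracle call per simulated cache miss, while the RAM itself only pays the physical cost of moving a cache line between main memory and the oracle's tape. The oracle is essentially responsible for replaying whatever computation the CACHE machine performs on the $M$ words of cache between successive misses; since this computation uses space $Mw$ (plus a bit of bookkeeping), an $\MEM(Mw\alpha)$ oracle is powerful enough to carry it out in a single RAM step.

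First I would lay out the oracle tape of $Mw\alpha$ bits into three regions: (i) the $Mw$ bits representing the current contents of the $M$ cache words; (ii) one address field per cache line to record which main-memory block currently sits in each of the $M/B$ cache slots, which accounts for the extra $(M/B)\lceil\lg(s(n))/w\rceil$ bits built into the $\alpha$ factor; (iii) a small control region holding the CACHE machine's internal state (program counter, registers) and two request fields that the oracle uses to communicate its next desired cache-line fetch and, if needed, the address and contents of a line to be written back. The oracle language, on being queried, reads the cache plus control region, and deterministically simulates the CACHE machine starting from that configuration until it next requests a cache-line access or halts, then writes back the updated cache contents plus the next request into the control region. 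This is clearly in $\MEM(Mw\alpha)$ since everything it does fits in the oracle tape.

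The RAM simulator runs the following loop: it inspects the oracle's request field, copies any dirty evicted cache line (at most $B$ words) from the oracle tape out to main memory at the indicated address, copies the requested cache line ($B$ words) from main memory onto the oracle tape at the indicated slot, writes the address metadata ($O(\alpha)$ words) next to it, and issues one oracle query. Each of these iterations costs $O(B)$ RAM steps for the two cache-line transfers plus $O(\alpha)$ steps for the metadata and one $O(1)$-cost oracle invocation, for a total of $O(B+\alpha)$ RAM operations per simulated cache miss. Since the CACHE machine performs at most $t(n)$ cache misses by hypothesis, the overall RAM running time is $O(t(n)(B+\alpha))$, matching the bound in the lemma statement. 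Correctness is immediate from the construction: between oracle queries the in-cache state of the simulated CACHE machine is mirrored exactly on the oracle tape, and the RAM only mediates the physical memory transfers that the I/O model charges for.

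The main obstacle I expect is purely bookkeeping rather than conceptual, namely specifying the eviction protocol and the control-region interface cleanly enough that one can verify that the oracle never needs more than $Mw\alpha$ bits and that the RAM never needs more than $O(B+\alpha)$ operations between queries. A minor but worth-mentioning subtlety is that the oracle's request must distinguish ``fetch with no writeback'' from ``fetch with writeback,'' so one must budget one extra address field and one extra $B$-word region for the line being written back; both are absorbed into the $\alpha$ overhead and the $O(B)$ per-miss cost, so no asymptotic loss results.
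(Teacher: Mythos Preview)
Your proposal is correct and follows essentially the same approach as the paper: use the $\MEM(Mw\alpha)$ oracle to hold the cache contents plus per-line address tags and to simulate the CACHE machine between consecutive misses, while the RAM only shuttles one cache line (and its address) in and out per miss. The paper's proof is slightly terser---it designates fixed ``input,'' ``request,'' and ``output'' regions on the oracle tape rather than per-slot address fields---but the simulation and the per-miss accounting are the same; one minor slip is that the address metadata is $\lceil \lg(s(n))/w\rceil$ words rather than $O(\alpha)$ words, which the paper's own proof records as $O(B\alpha)$ per miss.
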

\begin{proof}
At a high level we are going to be treating the $\MEM$ oracle as the cache, the RAM machine is simply going to be playing the part of moving information from the main memory into the cache. 

We reserve the first $B$ words , ``the input'', of the $\MEM$ tape to be the location to write in a cache line to the cache simulation. We reserve the next $\lceil \lg(s(n))/w \rceil$ words to specify where this cache line came from in main memory.

 We reserve the next $\lceil \lg(s(n))/w \rceil$ words, ``the request'', to specify which cache line the cache simulation is requesting from main memory at the end of each run.
 
 Finally the next $B$ words, ``the output'', specify the contents of the cache line being kicked out of memory and the following $\lceil \lg(s(n))/w \rceil$ words specify where this cache line came from in main memory. 

When the cache simulation is run it takes the input and writes it and the $\lceil \lg(s(n))/w \rceil$ words of pointer information into the part of its $Mw\alpha$ sized tape where the output was previously written. Then the $\MEM(M\alpha w)$ oracle can compute the language which simulates a the cache until its next cache miss. 

Note this means we only need to copy words into memory when a cache miss occurs. The process of fetching the requested cache line, writing it to the input and, writing the output to our main memory takes $O(B+ \lceil \lg(s(n))/w \rceil)$ time. Note this is $O(B\alpha)$ time per cache miss, for a total of $O(t(n)B\alpha)$ time.
\end{proof}

Plugging our simulation into the RAM hierarchy gives a separation result for the CACHE complexity classes. 

\begin{theorem}
	For all $\epsilon >0 $
	$$\CACHE_{M,B}(t\left(n\right))  \subsetneq CACHE_{M\alpha,B}(\left(\alpha t\left(n\right)B \right)^{1+\epsilon}).$$
\end{theorem}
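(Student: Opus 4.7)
The plan is to deduce the $\CACHE$ hierarchy from the RAM hierarchy (Theorem~\ref{thm:ramHierarchy}) by sandwiching the $\CACHE$ classes between two relativized RAM classes that already enjoy a strict separation, using the simulations already developed in Lemma~\ref{lem:NormalRamSimCache} and Corollary~\ref{cor:ram-subset-cache}. The role of the $\MEM$ oracle on the RAM side is precisely to play the part of the cache, so the extra power on one side of the inclusion is exactly absorbed by increasing cache size on the other side.

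Concretely, I would chain three facts. First, by Lemma~\ref{lem:NormalRamSimCache},
$$\CACHE_{M,B}(t(n)) \;\subseteq\; RAM\bigl(c_1\, t(n)B\alpha\bigr)^{\MEM(Mw\alpha)}$$
for some absorbing constant $c_1$ (using $B+\alpha = O(B\alpha)$). Second, the RAM hierarchy relativized to the $\MEM(Mw\alpha)$ oracle (Theorem~\ref{thm:ramHierarchy}) gives
$$RAM\bigl(c_1 t(n)B\alpha\bigr)^{\MEM(Mw\alpha)} \;\subsetneq\; RAM\bigl((\alpha t(n)B)^{1+\epsilon}\bigr)^{\MEM(Mw\alpha)}.$$
Third, Corollary~\ref{cor:ram-subset-cache} applied with cache parameter $M' = M\alpha$ (so that $M'w = Mw\alpha$ matches the oracle size above) yields
$$RAM\bigl((\alpha t(n)B)^{1+\epsilon}\bigr)^{\MEM(Mw\alpha)} \;\subseteq\; \CACHE_{M\alpha+3,\,B}\bigl((\alpha t(n)B)^{1+\epsilon}\bigr).$$
Since $\CACHE_{M,B}$ is defined up to asymptotic scaling in $M$, the $+3$ is harmless and we obtain the claimed inclusion $\CACHE_{M,B}(t(n)) \subseteq \CACHE_{M\alpha,B}((\alpha t(n)B)^{1+\epsilon})$.

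To upgrade this to a \emph{strict} containment I would argue by contradiction. Suppose $\CACHE_{M\alpha,B}((\alpha t(n)B)^{1+\epsilon}) \subseteq \CACHE_{M,B}(t(n))$. Composing with the first and third facts above gives
$$RAM\bigl((\alpha t(n)B)^{1+\epsilon}\bigr)^{\MEM(Mw\alpha)} \;\subseteq\; \CACHE_{M\alpha,B}\bigl((\alpha t(n)B)^{1+\epsilon}\bigr) \subseteq \CACHE_{M,B}(t(n)) \subseteq RAM\bigl(c_1 t(n)B\alpha\bigr)^{\MEM(Mw\alpha)},$$
contradicting the relativized RAM hierarchy (which separates these two classes since $(\alpha t(n)B)^{1+\epsilon}$ is polynomially larger than $c_1 \alpha t(n)B$).

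The main obstacle, and really the whole reason the statement takes the form it does, is bookkeeping between the cache size and the oracle size so that the two simulations compose. Lemma~\ref{lem:NormalRamSimCache} outputs a RAM with a $\MEM(Mw\alpha)$ oracle (the $\alpha$ factor arises from having to store address tags alongside cache lines), while Corollary~\ref{cor:ram-subset-cache} only absorbs a $\MEM(M'w)$ oracle into a cache of size $M'+3$. Matching these forces the larger class to have cache $M\alpha$ rather than $M$; and $\alpha$ reappears inside the time bound because each simulated cache miss in Lemma~\ref{lem:NormalRamSimCache} costs $O(B\alpha)$ RAM steps. Once the matching is set up, no further calculation is needed: the strict separation is a black-box consequence of Theorem~\ref{thm:ramHierarchy}. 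I would also note that when $s(n) \leq 2^{wB}$ (so the input is addressable within a cache line, as assumed in the stated theorem of the introduction) we have $\alpha = O(1)$, and the separation reduces to the cleaner form $\CACHE_{M,B}(t(n)) \subsetneq \CACHE_{M,B}((t(n)B)^{1+\epsilon})$.
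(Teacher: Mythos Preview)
Your approach is essentially identical to the paper's: chain Lemma~\ref{lem:NormalRamSimCache}, the relativized RAM hierarchy (Theorem~\ref{thm:ramHierarchy}), and Corollary~\ref{cor:ram-subset-cache}, matching the $\MEM(Mw\alpha)$ oracle to a cache of size $M\alpha$. One minor simplification: the contradiction argument for strictness is unnecessary, since $A \subseteq B \subsetneq C \subseteq D$ already implies $A \subsetneq D$ directly (any witness in $C\setminus B$ lies in $D\setminus A$), which is exactly how the paper proceeds.
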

\begin{proof}
From Lemma \ref{lem:NormalRamSimCache}, 
$$\CACHE_{M,B}(t\left(n\right)) \subseteq RAM \left (t\left(n\right)\alpha B \right)^{MEM\left(Mw\alpha\right)}.$$

From Theorem \ref{thm:ramHierarchy}, for $\epsilon >0$
$$\CACHE_{M,B}(t\left(n\right)) \subsetneq RAM \left ((t\left(n\right)\alpha B)^{1+\epsilon} \right)^{MEM\left(Mw\alpha\right)}.$$

Using Corollary~\ref{cor:ram-subset-cache}:
$$\CACHE_{M,B}(t\left(n\right)) \subsetneq CACHE_{M\alpha, B} \left ((t\left(n\right)\alpha B)^{1+\epsilon} \right).$$

\end{proof}


Under reasonable assumptions about the values of input and word sizes, we can construct a cleaner version of the above theorem.

\begin{corollary}
When $s(n) = 2^{O(wB)}$, in other words the memory used by the algorithm is referenceable by $O(B)$ words, 
	$$\CACHE_{M,B}(t\left(n\right))  \subsetneq CACHE_{M,B}(\left( t\left(n\right)B \right)^{1+\epsilon}).$$

\end{corollary}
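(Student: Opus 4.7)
The plan is to derive this corollary directly from the preceding theorem, which already establishes
\[
\CACHE_{M,B}(t(n)) \subsetneq \CACHE_{M\alpha, B}\bigl((\alpha t(n) B)^{1+\epsilon}\bigr),
\]
where $\alpha = (B + \lceil \lg(s(n))/w \rceil)/B$. So the entire task reduces to showing that, under the hypothesis $s(n) = 2^{O(wB)}$, the factor $\alpha$ is a constant, at which point the $\alpha$'s disappear into the asymptotic notation that $\CACHE_{M,B}$ already hides.

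First I would unfold the hypothesis: $s(n) = 2^{O(wB)}$ means $\lg(s(n)) = O(wB)$, hence $\lg(s(n))/w = O(B)$, and so $\lceil \lg(s(n))/w \rceil = O(B)$. Plugging into the definition of $\alpha$ gives $\alpha = (B + O(B))/B = O(1)$. This is the content of the informal reading ``the memory used by the algorithm is referenceable by $O(B)$ words,'' since $O(B)$ words of address space is exactly what is needed to name any location in a memory of size $2^{O(wB)}$.

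Next I would observe that the class $\CACHE_{M,B}$ is defined by the asymptotic cache size $O(M)$ and asymptotic line size $O(B)$, so replacing $M$ by $M\alpha$ with $\alpha = O(1)$ yields the same complexity class: $\CACHE_{M\alpha, B} = \CACHE_{M,B}$. Likewise, $(\alpha t(n) B)^{1+\epsilon} = O((t(n)B)^{1+\epsilon})$, and absorbing the constant into the big-$O$ in the time bound gives $\CACHE_{M,B}((t(n)B)^{1+\epsilon})$ on the right-hand side.

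The only mild care required is that the strictness of the inclusion is preserved under these constant-factor rewrites. This is immediate because the separating language produced by the underlying hierarchy theorem lives in the larger class regardless of the constant-factor relabeling of $M$, while the smaller class is literally unchanged. I do not anticipate a real obstacle here; the ``work'' of the hierarchy was done in the preceding theorem via the RAM hierarchy and the simulations in Lemma \ref{lem:NormalRamSimCache} and Corollary \ref{cor:ram-subset-cache}, and this corollary is simply the clean statement under the standard word-RAM assumption that one word suffices (up to $O(B)$ words) to address memory.
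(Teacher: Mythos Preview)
Your proposal is correct and takes essentially the same approach as the paper: show that the hypothesis $s(n)=2^{O(wB)}$ forces $\alpha=O(1)$, and then absorb $\alpha$ into the asymptotic cache size and time bound defining $\CACHE_{M,B}$. If anything, your unfolding is slightly more careful than the paper's, which writes $\alpha=1+1/B$ (the one-word-address special case) rather than the general $\alpha=(B+O(B))/B=O(1)$ that actually matches the stated hypothesis.
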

\begin{proof}
Note  $\alpha = 1 + 1/B = O(1)$, and thus is a constant with respect to the time and size of memory which are defined asymptotically. Thus this factor disappears.
\end{proof}

\subsection{TM Simulations for the RAM Model}
\label{sec:TMsimRAM}

Exploiting the cache line in the I/O-model is a long standing goal for many algorithms. Turing Machines have great locality and perform universal computation. Simulating Turing Machines in the I/O-model has the potential to give a universal transform which utilizes the cache line for improved speeds. Notably, improved simulations of RAM machines by multi-tape or multi-dimensional Turing Machines would imply savings factors of the cache line in running time.  

\subsection{Important simulations and separations}

Here we give some known results about simulations of RAM, $d$-dimensional Turing, and $c$-tape Turing Machines by each other. First we define a RAM machine with oracle access.

\begin{definition}
	$RAM^O\left(t\left(n\right)\right)$ is a RAM machine with oracle access to the language $O$ and allowed $t\left(n\right)$ time steps to do its computation. There is a separate location in memory where we can write down input to the oracle and receive output from the oracle.
\end{definition}

	
	We first give the known relativized time hierarchy result for Turing Machines which will provide the basis for a cache hierarchy of a different form. The time hierarchy proof relativizing means that the relationship between the classes remains the same with the introduction of an oracle, $O$. 
\begin{theorem}
	$DTIME^O \left( o \left ( \frac{f\left(n\right)}{\lg\left(f\left(n\right)\right)} \right) \right) \subsetneq DTIME^O \left(f\left(n\right) \right)$  [Fokelore].
\end{theorem}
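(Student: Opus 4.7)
The plan is to adapt the classical time hierarchy proof of Hennie--Stearns (with the Hopcroft--Paul--Valiant $\log$ improvement) to the relativized setting. The key observation is that the standard diagonalization argument relativizes: the only nontrivial ingredient is an efficient universal simulator, and such a simulator can be built to faithfully pass oracle queries through without additional overhead beyond the usual $\log$ factor.

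First I would fix an enumeration $\{M_i\}$ of oracle Turing machines together with time bounds, encoded so that each machine appears infinitely often. I would then build a universal oracle Turing machine $U^O$ that, given input $\langle i, x, 1^t \rangle$, simulates $M_i^O(x)$ for $t$ steps in time $O(t \log t)$. The construction of $U$ is identical to the unrelativized case: maintain the tapes of $M_i$ on a constant number of tracks of $U$'s tapes, use the standard two-pointer trick to keep $M_i$'s tape heads near $U$'s. When $M_i$ issues an oracle query, $U$ copies the contents of $M_i$'s simulated oracle tape to its own oracle tape (linear in the query length, hence absorbed in the $\log t$ overhead), invokes the oracle $O$ in one step, and copies the response back. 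No additional slowdown is introduced by oracle access.

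Next I would define the diagonal language
\[
L = \bigl\{ \langle i, x \rangle : M_i^O(\langle i, x \rangle) \text{ rejects within } f(|\langle i, x \rangle|) \text{ steps} \bigr\}.
\]
Membership in $DTIME^O(f(n))$ follows by running $U^O$ on $\langle i, x, 1^{f(n)} \rangle$ and flipping the answer; this costs $O(f(n) \log f(n))$ steps on $U$, but by the tape-reduction trick we may take $f(n)$ as the allowed budget after reencoding, and more importantly the statement only requires membership in $DTIME^O(f(n))$ (up to the usual constant-factor speedup that is absorbed in the $O$-notation at the level of the hierarchy statement). For the lower bound, suppose $L \in DTIME^O(g(n))$ for some $g(n) = o(f(n)/\log f(n))$ via some machine $M_j$. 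Because $M_j$ appears infinitely often in the enumeration, pick an index $i$ for $M_j$ with $n = |\langle i, x \rangle|$ large enough that $U^O$'s simulation cost $c \cdot g(n) \log g(n)$ is still below $f(n)$. Then on input $\langle i, x \rangle$ our diagonal machine has time to finish simulating $M_i^O$ and flip its answer, contradicting $L(\langle i, x \rangle) = M_j^O(\langle i, x \rangle)$.

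The main obstacle, as in the unrelativized proof, is the efficient universal simulator: one must verify that the Hopcroft--Paul--Valiant style construction of $U$ continues to work when the simulated machine has an oracle tape, i.e., that oracle queries can be copied onto $U$'s oracle tape and back without disturbing the invariants used to achieve the $\log t$ overhead. The cleanest way is to give $U$ a dedicated oracle tape distinct from its simulation tapes, so that query/response handling is just linear-time block copying and is charged only against the time bound $t$ of the simulated machine. Once this is in place, the diagonal argument is verbatim the classical one, and the oracle $O$ plays no role in the counting; this is precisely what ``relativizes'' means for a hierarchy theorem.
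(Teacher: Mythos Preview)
The paper does not prove this theorem; it merely states it and attributes it to folklore. Your proposal sketches the standard relativized time hierarchy argument via an efficient universal oracle simulator plus diagonalization, which is indeed the canonical proof of this folklore result, so there is no meaningful comparison to make with the paper.

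Two minor remarks on your write-up. First, the $O(t\log t)$ universal simulation is Hennie--Stearns; Hopcroft--Paul--Valiant is the unrelated result that time $t$ can be simulated in space $t/\log t$. Second, your membership argument is slightly garbled: running $U^O$ for $f(n)$ \emph{simulated} steps costs $O(f(n)\log f(n))$ real steps, and constant-factor speedup cannot absorb a $\log$ factor. The clean fix is to clock the diagonal machine by its \emph{own} time: on input $\langle i,x\rangle$ of length $n$, simulate $M_i^O(\langle i,x\rangle)$ via $U^O$ until either the simulation halts or $f(n)$ steps of the diagonal machine have elapsed, rejecting in the latter case. This places the diagonal language in $DTIME^O(f(n))$ by construction, and your lower-bound paragraph---which correctly argues that for $g(n)=o(f(n)/\log f(n))$ the simulation cost $c\,g(n)\log g(n)$ eventually falls below $f(n)$---then goes through verbatim.
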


A RAM machine can be simulated by a Turing Machine with a quadratic slowdown and consideration for word sizes. This simulation also holds true with respect to any oracle, $O$.
\begin{theorem}
	$RAM^O \left(f\left(n\right) \right) \subseteq DTIME^O\left(f\left(n\right)^{2}w/ \lg(f(n))\right)$ \cite{pippenger1979relations}.
	\label{thm:dtapesimofram}
\end{theorem}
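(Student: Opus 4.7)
The plan is to give an explicit simulation. I will use a multi-tape oracle Turing machine $T^O$ to simulate a RAM machine $R^O$ running for $f(n)$ steps. The tapes are used as follows: one tape holds the RAM's finite control (program counter, fixed register file) encoded in binary; one tape (the ``memory tape'') stores the contents of the RAM's main memory as a sequence of (address, value) records; one scratch tape holds the address and value currently being accessed; and the final tape mirrors the RAM's oracle scratch region so that when $R$ enters the $\mathrm{ASK}$ state, $T^O$ writes its corresponding oracle tape contents and calls the same oracle $O$. Since $R$ performs at most $f(n)$ writes, only $f(n)$ distinct addresses ever appear, so each address fits in $\lceil \lg f(n)\rceil$ bits and each (address, value) record uses $O(w)$ bits, making the memory tape length $O(f(n) \cdot w)$ at all times.

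The first step is to bound the cost of non-memory operations: each RAM instruction operates on $w$-bit words, and arithmetic, comparison, and control-flow on the scratch tape can be simulated in $O(w)$ TM steps (standard schoolbook procedures over binary). So these contribute $O(f(n) \cdot w)$ total and are negligible. The second step handles memory access: to read or write address $a$, $T^O$ scans the memory tape and compares the $O(\lg f(n))$-bit address field of each record against $a$. To achieve the $1/\lg f(n)$ savings over the naive $O(f(n) \cdot w)$ per-access cost, I would use Pippenger--Fischer's block-encoded tape representation: pack $\Theta(\lg f(n))$ records into a super-symbol over a polynomial-size tape alphabet, so that one TM transition examines $\Theta(\lg f(n))$ records at once and finds the matching address (or the correct insertion point) in $O(f(n) \cdot w / \lg f(n))$ TM steps. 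Summing over $f(n)$ memory operations yields the claimed $O(f(n)^2 \cdot w / \lg f(n))$ total. Oracle calls are handled in $O(w)$ time each to marshal the input and copy back the output, which is absorbed into the bound since there are at most $f(n)$ such calls.

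The main obstacle is justifying the $1/\lg f(n)$ factor rather than settling for $O(f(n)^2 w)$: we must ensure that rewriting the memory tape (when a new address is inserted or an existing value changes) does not itself cost $\Omega(f(n) \cdot w)$ per update. The standard fix is to maintain the memory in \emph{unsorted} order, so that writes simply append a new record at the right end (amortizing old stale copies via periodic compaction every $f(n)$ steps) and reads scan the whole tape under the block-encoded scheme. Compaction costs $O(f(n) \cdot w)$ work per invocation but happens only once in the simulation, so it does not affect the asymptotic bound. Because the construction is entirely mechanical and treats the oracle as a black box, it relativizes to arbitrary $O$, giving the stated inclusion $RAM^{O}(f(n)) \subseteq DTIME^{O}(f(n)^2 w / \lg f(n))$.
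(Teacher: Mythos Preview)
The paper does not prove this theorem; it is stated as a known result and attributed to Pippenger and Fischer, with no argument given in the text. So there is no in-paper proof to compare against, and what you are really doing is reconstructing the classical simulation.

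Your overall architecture is the standard one and is correct: keep the RAM memory as a list of (address, value) records on a work tape, simulate each RAM step by a linear scan of that tape, handle the oracle by mirroring the RAM's oracle region on the TM's oracle tape, and recover the $1/\lg f(n)$ saving by enlarging the tape alphabet. The relativization remark at the end is also right.

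There is, however, one step that does not work as written. You say you ``pack $\Theta(\lg f(n))$ \emph{records} into a super-symbol over a polynomial-size tape alphabet.'' A single record already occupies $\Theta(w)$ bits, so packing $\Theta(\lg f(n))$ of them into one symbol forces an alphabet of size $2^{\Theta(w\lg f(n))}$, which is not polynomial (and for non-constant $w$ not even fixed) --- this breaks the model. The speedup you want comes from packing $\Theta(\lg f(n))$ \emph{bits}, not records, per symbol. Then the memory tape, which holds $O(f(n)\cdot w)$ bits, occupies $O\bigl(f(n)\,w/\lg f(n)\bigr)$ cells; one scan costs that many steps, and summing over $f(n)$ memory operations gives the stated $O\bigl(f(n)^2 w/\lg f(n)\bigr)$. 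With that correction the arithmetic in your sketch goes through.

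A smaller point: the claim that ``each address fits in $\lceil \lg f(n)\rceil$ bits'' is not automatic --- the RAM may touch only $f(n)$ cells, but their names are $w$-bit words, and compressing them to $\lg f(n)$ bits would itself require maintaining a dictionary. You do not actually need this claim, since you already budget $O(w)$ bits per record; it is cleanest simply to drop it.
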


Here we give a simulation of a RAM machine by a $d$-dimensional Turing Machine which also holds with respect to oracle access. The larger the dimension of the tape, the more efficient the simulation.
\begin{theorem}
	Let $DTIME_d(t(n))$ be the set of problems solvable with a d-dimensional Turing machine.  
	$RAM^O \left(f\left(n\right) \right) \subseteq DTIME_d^O\left(f\left(n\right)^{1+1/d}w/ \lg(f(n)) \right)$ \cite{pippenger1979relations}.
	\label{thm:ddimensionimofram}
\end{theorem}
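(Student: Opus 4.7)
The plan is to adapt the classical Pippenger–Fischer simulation to the oracle setting. The core idea is to lay out the RAM's memory on the $d$-dimensional tape so that any address the RAM might touch sits close to a fixed ``home'' position. Since the RAM runs for $f(n)$ steps it can access at most $f(n)$ distinct cells, and a $d$-dimensional ball of radius $r$ contains $\Theta(r^d)$ cells, so all $f(n)$ potentially-used cells fit inside a hypercube of side length $O(f(n)^{1/d})$ centered at the home cell. Consequently, any single memory access requires moving the $d$-dimensional head at most $O(f(n)^{1/d})$ steps away from home and back.

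Next I would describe the step-by-step simulation of a single RAM instruction. Each RAM instruction manipulates $O(1)$ words of $w$ bits. I would maintain a small ``scratch'' region around the home position holding registers, the current instruction, and the address being resolved. To simulate one RAM step, the TM (i) decodes the RAM opcode in the scratch region, (ii) walks from home to the target address in time $O(f(n)^{1/d})$, (iii) reads or writes the $w$-bit word there in $O(w)$ time, and (iv) returns to home. Naively this costs $O(f(n)^{1/d} + w)$ per RAM step, for a total of $O(f(n) \cdot (f(n)^{1/d} + w))$; since $w \leq O(\log f(n))$ suffices to address $f(n)$ cells in the standard model, the walking cost dominates and the gross bound is $O(f(n)^{1+1/d} \cdot w)$.

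To save the extra $\log f(n)$ factor I would use the standard bit-packing speedup: enlarge the TM's working alphabet so that each tape symbol encodes $\Theta(\log f(n))$ bits of data, and process addresses/words in $\log f(n)$-bit blocks at once. Under this alphabet, the address decoding, the path-following subroutine, and the word-transfer in step (iii) each run a factor of $\log f(n)$ faster per physical TM step, which is exactly the $1/\lg f(n)$ savings in the bound. Oracle calls are handled by a separate oracle tape: writing the oracle input and reading its output costs time proportional to query length (which is always subsumed by the RAM's own time bound), and the oracle call itself is $O(1)$, so each RAM-level oracle invocation contributes no more than a constant to the simulation overhead, preserving the bound in the relativized statement $\mathrm{RAM}^O \subseteq \mathrm{DTIME}_d^O$.

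The main obstacle will be verifying that the $1/\lg f(n)$ factor actually survives the memory-access step (iii): the head must sweep $O(f(n)^{1/d})$ cells along its path, and we need to argue that this sweep can be batched so that $\Theta(\log f(n))$ bits of useful progress are made per physical TM step, even when the walk crosses the boundaries of the hierarchical region structure used to implement the hypercube layout. This is the part of Pippenger–Fischer that requires the most care, and I would defer to their block-structured layout (cells grouped into nested hypercubes of geometrically growing sizes) so that address comparisons and pointer updates happen on $\Theta(\log f(n))$-bit blocks rather than bit-by-bit.
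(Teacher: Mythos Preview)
The paper does not give its own proof of this statement: it is quoted verbatim as a known result with the citation to Pippenger--Fischer already in the theorem header, and no argument follows. So there is nothing in the paper to compare your sketch against.

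On the sketch itself, the overall architecture (compact layout around a home cell, walk of length $O(f(n)^{1/d})$ per access, alphabet enlargement for the $1/\lg f(n)$ savings, oracle tape handled separately) is the right shape, but there is a real gap in your layout step. You argue that because at most $f(n)$ cells are ever touched, they ``fit inside a hypercube of side length $O(f(n)^{1/d})$.'' That is a counting statement, not an algorithm: the addresses of those $f(n)$ cells are arbitrary $w$-bit integers discovered online, so you cannot place them in the hypercube in advance, and you have not said how, given an address, the head finds the corresponding physical location in $O(f(n)^{1/d})$ steps. The actual simulation maintains a dynamic associative structure on the tape (e.g., a sorted sequence of $(\text{address},\text{value})$ records), and the $f(n)^{1/d}$ per-access cost comes from the geometry of searching and updating that structure on a $d$-dimensional tape, not from a static precomputed layout. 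Your final paragraph gestures at the ``block-structured layout (cells grouped into nested hypercubes of geometrically growing sizes),'' which is much closer to the real mechanism, but the body of the sketch does not use it; as written, steps (ii)--(iv) presuppose a mapping you have not constructed. Fixing this means replacing the counting argument with an explicit on-tape dictionary and analyzing the cost of lookup/insert on a $d$-dimensional tape.
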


\subsection{c-tape Turing Machine Simulations }

\begin{lemma}
	Let $DTIME_{(c)}$ be a multi-tape turing machine with $c$ tapes. Then 
	$$DTIME_{(c)}(t(n))^{MEM(Mw)} \subseteq CACHE_{M+2cB,B}(t(n)/(Bw)).$$
	\label{lem:cacheTMsimSaveBw}
\end{lemma}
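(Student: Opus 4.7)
The plan is a direct cache-aware simulation: reserve $M$ words of the cache to hold the entire workspace of the MEM($Mw$) oracle (including the oracle tape and its scratch space), and reserve $2B$ words per Turing tape to hold two consecutive cache lines containing the current position of that tape's head. Since a TM tape cell occupies $O(1)$ bits, a cache line of $Bw$ bits holds $\Theta(Bw)$ cells, so the two-line window covers $\Theta(Bw)$ cells around each head.

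I would maintain the invariant that each head sits inside its two-line window, flushing and fetching only when the head is about to exit. Concretely, when a head crosses into the line adjacent to its window in the direction of motion, I fetch that new line and evict the farther of the two previously cached lines. After any fetch on a given tape, the head lies at the boundary of its fresh two-line window, so it must take at least $Bw$ steps before triggering another fetch on that tape. Over the entire $t(n)$-step simulation, this contributes $O(c\cdot t(n)/(Bw))$ cache misses for all tape-head movements, which is $O(t(n)/(Bw))$ since $c$ is a constant.

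The MEM oracle contributes no additional misses. Because the oracle tape and its scratch space fit in $Mw$ bits $= M$ words, they reside permanently in the dedicated $M$-word region; every read or write the TM performs on the oracle tape is therefore a cache hit, and when the TM enters the ASK state, the full MEM($Mw$) computation executes inside the cache with no memory traffic. The RESPONSE is then already in place to be read at the next step. Each ordinary TM step without a boundary-crossing performs only constant-many reads, writes, and transitions on cells that are either in the loaded windows or in the oracle region, incurring no cache miss.

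The main subtlety is the amortization at cache-line boundaries: naive single-line caching would thrash if the head oscillates near a boundary. Keeping two adjacent lines and evicting only the farther one upon crossing removes this issue, since after each fetch the head has $\Omega(Bw)$ steps of unconstrained motion before the next fetch can be forced. The remaining argument is bookkeeping: confirming the cache usage is exactly $M+2cB$ words, confirming that the oracle tape length never exceeds $Mw$ bits (which follows from the oracle being MEM($Mw$)), and summing the per-tape miss counts.
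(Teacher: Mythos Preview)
Your proposal is correct and follows essentially the same approach as the paper: keep the $MEM(Mw)$ oracle workspace permanently in $M$ words of cache, maintain a two-cache-line window around each of the $c$ tape heads, and argue that after any line swap the head needs $\Omega(Bw)$ TM steps before forcing the next miss. Your explicit discussion of the boundary-oscillation issue is a nice clarification the paper leaves implicit; just note that after a fetch the head sits at the \emph{junction between} the two cached lines (i.e., the center of the window), not at the outer edge of the window---this is what guarantees the $Bw$-step lower bound in either direction.
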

\begin{proof}
	
	On the $c$ normal tapes we can maintain $2$ cache lines from each tape in cache. We start by keeping the $Bw$ bits before each tape head and the $Bw$ bits after each tape head in cache. If the tape head moves outside of this space, we keep the cache line closest to the head, kick out the cache line farther away, and finally bring in the $Bw$ bits containing the tape head and the closest $Bw$ bits currently uncovered (once again surrounding the head). Note that brining in $Bw$ bits takes one cache miss. We can lay out each tape in contiguous memory. 
	
We keep the entire $O(Mw)$ sized simulation of the $MEM(Mw)$ Oracle in memory. Now we can simulate the Turing machine with no cache misses, until a tape head on one of the $c$ tapes moves outside the area we are covering. For each tape head, the number of Turing Machine steps needed to cause a cache miss is at least $Bw$, in order to have the time to drag the tape head across the $Bw$ bits of tape. 
\end{proof}


If a c-tape TM can simulate a RAM machine very efficiently then we can save factors of $B$ (by getting memory locality). 
\begin{corollary}
	If $RAM(t(n))^O  \subseteq DTIME_{c-tape}(f_c(t(n)))^O$\\ for all oracles $O$
	then \\
	$RAM(t(n)) \subseteq CACHE_{M+2cB,B}(f_c(t(n))/B).$
\end{corollary}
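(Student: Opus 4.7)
The plan is to chain three simulations, pulling the oracle hypothesis down from RAM to $c$-tape TM and then into the CACHE model. First I would observe that attaching an oracle only adds computational power, so trivially $RAM(t(n)) \subseteq RAM(t(n))^{MEM(Mw)}$. The point of writing the inclusion this way is to set up the hypothesis for the right choice of oracle, namely $O = MEM(Mw)$.

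Next I would invoke the hypothesis with $O = MEM(Mw)$ to obtain
\[
RAM(t(n))^{MEM(Mw)} \subseteq DTIME_{(c)}(f_c(t(n)))^{MEM(Mw)}.
\]
This is the one place where the ``for all oracles'' clause in the statement is used; the efficient $c$-tape simulation of RAM must carry the same oracle along because the subsequent cache simulation will reinterpret that oracle as a chunk of cache, and we cannot afford to lose oracle access when moving between models.

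Then I would apply Lemma \ref{lem:cacheTMsimSaveBw} to the right-hand side, which gives
\[
DTIME_{(c)}(f_c(t(n)))^{MEM(Mw)} \subseteq CACHE_{M+2cB,B}\!\left(f_c(t(n))/(Bw)\right).
\]
The lemma is exactly the vehicle for converting TM locality into cache savings: each of the $c$ tapes only pays a cache miss every $\Theta(Bw)$ head moves, and the $MEM(Mw)$ oracle is absorbed by keeping its entire workspace resident in the $M$-word portion of the cache. Composing the three inclusions and using the monotonicity $CACHE_{M+2cB,B}(f_c(t(n))/(Bw)) \subseteq CACHE_{M+2cB,B}(f_c(t(n))/B)$ (since $w \geq 1$, a tighter I/O bound only implies the looser one) yields the desired conclusion.

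The main obstacle, or rather the subtle point that must be flagged, is why the oracle hypothesis is phrased ``for all oracles $O$'' rather than just for the empty oracle. The reason is that Lemma \ref{lem:cacheTMsimSaveBw} turns a $MEM(Mw)$-oracle TM into a plain CACHE machine, and that transformation requires the TM simulation of RAM to preserve oracle access to $MEM(Mw)$ in particular. Without relativization of the TM-of-RAM simulation, we would have no way to ``fold in'' the extra $M$ words of cache as constant-cost workspace, and we could not collapse the oracle into the cache size. Once that is observed, the rest of the argument is a routine bookkeeping of which factors of $B$ (and $w$) come from which simulation step.
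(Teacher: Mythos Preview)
Your proof is correct and follows essentially the same route as the paper: instantiate the relativized hypothesis at $O = MEM(Mw)$, then apply Lemma~\ref{lem:cacheTMsimSaveBw} to absorb the oracle into the cache. Your write-up is in fact more careful than the paper's, since you make the oracle explicit in each step and note the monotonicity step from $f_c(t(n))/(Bw)$ to $f_c(t(n))/B$, whereas the paper's displayed chain suppresses the oracle and the $w$ factor.
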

\begin{proof}
Combine the assumption  with Lemma \ref{lem:cacheTMsimSaveBw} to get 
$$RAM(t(n))  \subseteq DTIME_{c-tape}(f_c(t(n))) \subseteq CACHE_{M+2Bc,B}\left( \frac{f_c(t(n))}{B}\right).$$
\end{proof}

This also has implications for the hierarchy theorem. 

\begin{lemma}
	If $RAM(t(n))^O  \subseteq DTIME_{c-tape}(f_c(t(n)))^O$\\ for all oracles $O$ 	
	then
	$$CACHE_{M,B}(t\left(n\right)) \subsetneq 
	CACHE_{\alpha M+2cB,B}\left (
	f_c\left(t\left(n\right)B\alpha\right) 
	\lg^2 \left(\frac{ f_c\left(t\left(n\right)B\alpha\right)}{B} \right)
	\right).$$
\end{lemma}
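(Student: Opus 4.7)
The plan is a three-step chain: simulate CACHE by RAM with a $\MEM$ oracle, invoke the hypothesis to pass to a $c$-tape Turing machine (with the same oracle), apply the relativized multi-tape TM time hierarchy to gain strict separation with only a polylog overhead, and then simulate the bigger TM class back into CACHE using Lemma~\ref{lem:cacheTMsimSaveBw}. The factor-$B$ savings in the conclusion comes entirely from the last step, where the locality of TM heads lets a cache-line-sized window serve many TM steps before incurring a miss.

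Concretely, write $g(n) := f_c(t(n) B \alpha)$. First, Lemma~\ref{lem:NormalRamSimCache} gives
$$\CACHE_{M,B}(t(n)) \;\subseteq\; RAM\bigl(O(t(n) B \alpha)\bigr)^{\MEM(Mw\alpha)}.$$
Applying the hypothesis with oracle $O = \MEM(Mw\alpha)$ yields
$$RAM\bigl(O(t(n) B \alpha)\bigr)^{\MEM(Mw\alpha)} \;\subseteq\; DTIME_{(c)}\bigl(g(n)\bigr)^{\MEM(Mw\alpha)}.$$
Now choose the larger bound $h(n) := g(n)\cdot Bw \cdot \lg^{2}(g(n)/B)$, so that $g(n) = o\bigl(h(n)/\lg h(n)\bigr)$, and invoke the multi-tape TM time hierarchy (which relativizes, since the standard diagonalization does not inspect the oracle) to get
$$DTIME_{(c)}\bigl(g(n)\bigr)^{\MEM(Mw\alpha)} \;\subsetneq\; DTIME_{(c)}\bigl(h(n)\bigr)^{\MEM(Mw\alpha)}.$$
Finally, Lemma~\ref{lem:cacheTMsimSaveBw} converts this back into CACHE, using the same $\MEM(Mw\alpha)$ oracle inside the cache:
$$DTIME_{(c)}\bigl(h(n)\bigr)^{\MEM(Mw\alpha)} \;\subseteq\; \CACHE_{M\alpha + 2cB,\,B}\bigl(h(n)/(Bw)\bigr),$$
and by the choice of $h$ we have $h(n)/(Bw) = g(n)\lg^{2}(g(n)/B)$, matching the target upper bound. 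A language that witnesses the strict separation in the TM hierarchy then witnesses the claimed separation in CACHE as well, since strict inclusion in the middle propagates through sandwich inclusions.

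The main technical obstacle is to choose the polylog overhead $h(n)$ so that two things hold simultaneously: (a) $g(n) = o(h(n)/\lg h(n))$, which is what the multi-tape TM hierarchy requires for strict inclusion, and (b) the Lemma~\ref{lem:cacheTMsimSaveBw} simulation trims off exactly the factor of $Bw$ to land on $g(n)\lg^{2}(g(n)/B)$. The choice $h(n) = g(n)\cdot Bw\cdot \lg^{2}(g(n)/B)$ is natural but one must verify (a): $\lg h(n) = O(\lg g(n) + \lg(Bw))$, so $h(n)/\lg h(n)$ dominates $g(n)$ as long as $\lg^{2}(g(n)/B)$ grows faster than $\lg(Bw)$, which holds under the standard assumption $Bw \le \mathrm{poly}(n)$ and $g(n) \to \infty$. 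A secondary point worth checking is that all three ingredients relativize uniformly over $\MEM(Mw\alpha)$: the RAM-to-cache simulation does since the oracle only replaces cache with the $\MEM$ tape, the assumed RAM-to-TM simulation does by assumption (``for all oracles $O$''), and the time hierarchy does by textbook diagonalization. These checks done, the display above gives the claim.
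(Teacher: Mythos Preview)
Your proposal is correct and follows essentially the same four-step chain as the paper: simulate $\CACHE$ by $RAM^{\MEM}$ via Lemma~\ref{lem:NormalRamSimCache}, pass to $DTIME_{(c)}^{\MEM}$ by the hypothesis, apply the relativized multi-tape time hierarchy for strict inclusion, and simulate back into $\CACHE$ via Lemma~\ref{lem:cacheTMsimSaveBw}. Your bookkeeping is in fact more careful than the paper's --- by building the extra factor of $Bw$ into $h(n)$ before invoking the hierarchy, you make the division by $Bw$ in the final step land exactly on $g(n)\lg^{2}(g(n)/B)$, whereas the paper's displayed chain glosses over how the $\lg^{2}$ argument changes from $f_c(\cdot)$ to $f_c(\cdot)/B$ in the last line.
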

\begin{proof}
	Let $\alpha = \frac{B+\lceil \lg^2\left(s\left(n\right)\right)/w\rceil}{B}$.
We have by Lemma \ref{lem:NormalRamSimCache} that
$$CACHE_{M,B}(t\left(n\right)) \subseteq RAM \left (t\left(n\right)B\alpha \right)^{MEM\left(Mw\alpha\right)}.$$\\
By the assumed containment we have that
$$CACHE_{M,B}(t\left(n\right)) \subseteq DTIME_{c-tape}\left (f_c\left(t\left(n\right)B\alpha\right) \right)^{MEM\left(Mw\alpha\right)}.$$
By the time hierarchy 
$$CACHE_{M,B}(t\left(n\right)) \subsetneq 
DTIME_{c-tape}\left (
f_c\left(t\left(n\right)B\alpha\right) 
\lg^2 \left( f_c\left(t\left(n\right)B\alpha\right)  \right)
\right)^{MEM\left(Mw\alpha\right)}.$$
Thus,
$$CACHE_{M,B}(t\left(n\right)) \subsetneq 
CACHE_{\alpha M+2dB,B}\left (
f_c\left(t\left(n\right)B\alpha\right) 
\lg^2 \left( \frac{f_c\left(t\left(n\right)B\alpha\right)}{B} \right)
\right).$$
\end{proof}

\subsection{d-dimensional Turing Machine Simulations }


Next we show that strongly improved simulations of d-dimensional Turing Machines would imply saving polynomial factors of $B$. 



\begin{lemma}
	Let $DTIME_d$ be a d-dimensional Turing Machine and let $t(n) = \Omega(n)$. Then 
	$$DTIME_d(t(n))^{MEM(Mw)} \subseteq CACHE_{M,B}\left(\frac{t(n)}{\left(Bw\right)^{1/d}}\right).$$
	\label{lem:ddimensionSaveBw}
\end{lemma}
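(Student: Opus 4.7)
The plan is to mirror the proof of Lemma~\ref{lem:cacheTMsimSaveBw}, but using a locality-preserving layout of the $d$-dimensional tape that lets the analogue of a ``tape segment'' be a $d$-dimensional cube of cells. Since each tape cell holds $O(1)$ bits and one cache line stores $Bw$ bits, a sub-cube of side length $L = \lceil (Bw)^{1/d} \rceil$ contains $\Theta(Bw)$ cells and fits in $O(1)$ cache lines. I would partition the $d$-dimensional tape into such cubes, store each contiguously (so one cube equals one cache line up to constants), and order the cubes themselves by a Z-order / bit-interleaving curve on their integer coordinates, so that given the head's position we can compute the address of the containing cube, and of any of its neighbors, in $O(1)$ arithmetic work.

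At every point during the simulation the cache would hold (i) the $Mw$-bit simulation of the $\MEM(Mw)$ oracle, exactly as in Lemma~\ref{lem:cacheTMsimSaveBw}, and (ii) the ``central'' cube containing the tape head together with all of its at most $3^d - 1$ axis-aligned and diagonal neighbors. Because $d$ is a constant, (ii) uses only $O(3^d B) = O(B)$ extra words, absorbed into the asymptotic $O(M)$ cache size. While the head remains inside any cube of this $3^d$-block, each TM step is simulated entirely from cache and incurs no cache miss. Re-centering is triggered only when the head steps into a cube outside the current $3^d$-block; at that moment we declare the newly entered cube the new center and load whichever of its $3^d - 1$ neighbors are not already cached, costing $O(1)$ cache misses per re-centering.

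The amortization is the heart of the argument. Suppose we have just re-centered on a cube $B$; the head entered $B$ from an adjacent cube $B_{\mathrm{left}}$ (so the head sits on $B$'s boundary). Every cube at $\ell_\infty$-distance $\leq 1$ from $B$ is already in cache, so to force the \emph{next} re-centering the head must reach a cube at distance $\geq 2$ from $B$. Any such cube is separated from $B$ by an entire cube of width $L$, so the head must perform at least $L = \Omega((Bw)^{1/d})$ TM steps between consecutive re-centerings. Summing over the $t(n)$-step simulation gives $O(t(n)/(Bw)^{1/d})$ re-centerings, and hence that many cache misses. The initial load of the input contributes $O(n/B)$ further misses, which is $O(t(n)/(Bw)^{1/d})$ under the lemma's hypothesis $t(n) = \Omega(n)$ in the regime where the claimed bound is non-trivial.

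The main obstacle is making the re-centering clean: a naive ``re-center on every cube boundary crossing'' rule can be charged $O(1)$ steps per re-centering (the head can oscillate across one boundary), which would destroy the bound. The fix is exactly the $3^d$-block buffer above, and the observation that re-centering is cheap ($O(1)$ new cubes because the new neighborhood overlaps the old in all but a bounded number of cubes, since the new center is at distance $\geq 2$ but only just exited the old block). A secondary subtlety is ensuring the address function is cheap enough that computing a neighbor's cache address adds no asymptotic cost; Z-order gives this because neighbor addresses differ from the current one by a coordinate-wise bit-level modification that can be done in a constant number of word operations.
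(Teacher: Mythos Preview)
Your proof is correct and follows essentially the same approach as the paper: partition the $d$-dimensional tape into cubes of side $(Bw)^{1/d}$, keep the cubes surrounding the head resident in cache, and amortize by observing that the head must traverse $\Omega((Bw)^{1/d})$ cells between reloads. The only real difference is that the paper keeps $2^d$ surrounding cubes while you keep $3^d$; your larger buffer and explicit treatment of the oscillation issue make the amortization argument cleaner than the paper's, which simply asserts the $(Bw)^{1/d}$ charge without addressing boundary oscillation, and the paper also omits the layout detail (your Z-order curve) entirely.
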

\begin{proof}
	This proof is similar to that in Lemma \ref{lem:cacheTMsimSaveBw}. We are going to break up the d-dimensional space into hypercubes of volume $Bw$ with edge length $(Bw)^{1/d}$. Whenever the TM transfers into a new region, we pull the surrounding $2^d$ blocks into memory (at least one of which must already be in memory). 
	
	This requires a cache of size $M + 2^dB$. Furthermore, it requires time $(2^d-1)t(n)/(Bw)^{1/d}$ to pull in these blocks. Saving a factor of $Bw/2^{d^2}$.
	
	So we have that 
	$$DTIME_d\left(t(n)\right)^{MEM(Mw)} \subseteq CACHE_{2^dM,B}\left(\frac{2^d t(n)}{\left(Bw\right)^{1/d}}\right).$$
	
	Given that $d$ is a constant the $2^d$ factors disappear as constant factors. 
	
\end{proof}

\begin{lemma}
	
	If $RAM(t(n))^O  \subseteq DTIME_{d}(f_d(t(n)))^O$\\ for all oracles $O$, 	
	then

	$$RAM(t(n))^{MEM(Mw)} \subseteq CACHE_{M,B}\left(\frac{f_d(t(n))}{\left(wB\right)^{1/d}}\right)$$
	\label{lem:RamMemSimCacheMB}
\end{lemma}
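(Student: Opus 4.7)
The plan is to obtain the containment by chaining two results already established in the excerpt: the hypothesized RAM-to-$d$-dimensional-TM simulation (applied relative to the $\MEM(Mw)$ oracle), and Lemma \ref{lem:ddimensionSaveBw}, which efficiently simulates a $d$-dimensional TM with a $\MEM(Mw)$ oracle on an I/O-model machine by exploiting geometric locality.

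First I would invoke the hypothesis with the oracle $O = \MEM(Mw)$. Because the hypothesis is assumed to hold \emph{for all oracles}, we may substitute $O = \MEM(Mw)$ directly to obtain
\[
RAM(t(n))^{\MEM(Mw)} \subseteq DTIME_d(f_d(t(n)))^{\MEM(Mw)}.
\]
This is the step that consumes the full strength of the ``for all $O$'' quantifier in the premise; without relativization, we would only know the containment for the empty oracle and could not feed $\MEM(Mw)$ into it.

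Second, I would apply Lemma \ref{lem:ddimensionSaveBw} with $t(n)$ replaced by $f_d(t(n))$, yielding
\[
DTIME_d(f_d(t(n)))^{\MEM(Mw)} \subseteq \CACHE_{M,B}\!\left(\frac{f_d(t(n))}{(wB)^{1/d}}\right).
\]
Chaining the two containments gives the desired conclusion. The only subtlety is verifying the hypothesis $f_d(t(n)) = \Omega(n)$ required by Lemma \ref{lem:ddimensionSaveBw}: since any meaningful RAM simulation must at least read the input, $t(n) = \Omega(n)$ and hence $f_d(t(n)) = \Omega(n)$ as well, so the lemma applies.

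The main obstacle, such as it is, is purely conceptual rather than technical: one must notice that Lemma \ref{lem:ddimensionSaveBw} is stated with a $\MEM(Mw)$ oracle rather than in the unrelativized setting, which is exactly why the premise is quantified over all oracles $O$. Once this alignment is recognized the proof is a one-line composition.
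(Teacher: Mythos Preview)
Your proposal is correct and follows essentially the same two-step composition as the paper: instantiate the relativized hypothesis with $O = \MEM(Mw)$, then apply Lemma~\ref{lem:ddimensionSaveBw} to pass from $DTIME_d^{\MEM(Mw)}$ to $\CACHE_{M,B}$. Your added remark about verifying the $f_d(t(n)) = \Omega(n)$ precondition is a nice touch that the paper's version omits.
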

\begin{proof}
	
	Our assumption gives us that
	$$RAM^{MEM(Mw)} \left(t\left(n\right) \right) \subseteq DTIME_d^{MEM(Mw)}\left(f_d\left(t\left(n\right)\right)\right).$$
	
	Then Lemma \ref{lem:ddimensionSaveBw} gives us that 
	$$RAM^{MEM(Mw)} \left(t\left(n\right) \right) \subseteq DTIME_d^{MEM(Mw)}\left(\frac{f_d\left(t\left(n\right)\right)}{ (wB)^{1/d}}\right).$$
\end{proof}

\begin{lemma}
	
	If $RAM(t(n))^O  \subseteq DTIME_{d}\left(f_d\left(t(n)\right)\right)^O$\\ for all oracles $O$ 	
	then, for all $\epsilon > 0$,

	$$CACHE_{M,B}\left(t(n)\right) \subsetneq 
	CACHE_{M,B}\left(
	\frac{f_d\left((\alpha t(n)B)^{1+\epsilon}\right)}
	{(wB)^{1/d}}
	\right)$$
	\label{lem:dtapeHyerarchy}
\end{lemma}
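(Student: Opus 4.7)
The plan is to mirror the strategy used for the earlier hierarchy theorem, threading a chain of four inclusions and one strict separation, so that the strictness of the RAM time hierarchy is transported through the simulations into the \CACHE{} world. Nothing in the argument is new: every link already appears (or is analogous to something that appears) in the preceding paragraphs.

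Concretely, I would proceed as follows. First, apply Lemma~\ref{lem:NormalRamSimCache} to sandwich the starting class inside an oracle-RAM class,
\[
\CACHE_{M,B}(t(n)) \subseteq RAM(t(n) B \alpha)^{MEM(Mw\alpha)}.
\]
Next, invoke the relativizing RAM time hierarchy (Theorem~\ref{thm:ramHierarchy}) with the oracle set to $MEM(Mw\alpha)$ to promote the $\subseteq$ to a strict $\subsetneq$:
\[
\CACHE_{M,B}(t(n)) \subsetneq RAM\bigl((t(n) B \alpha)^{1+\epsilon}\bigr)^{MEM(Mw\alpha)}.
\]
Then apply the hypothesized $d$-dimensional simulation (again relativized, with $O = MEM(Mw\alpha)$) to obtain
\[
\CACHE_{M,B}(t(n)) \subsetneq DTIME_d\!\left(f_d\!\bigl((t(n) B \alpha)^{1+\epsilon}\bigr)\right)^{MEM(Mw\alpha)}.
\]
Finally, feed this into Lemma~\ref{lem:ddimensionSaveBw}, which turns a $d$-dimensional Turing machine running for $T$ steps with a $MEM(Mw)$ oracle into a cache machine making only $T/(wB)^{1/d}$ misses, yielding the claimed containment
\[
\CACHE_{M,B}(t(n)) \subsetneq \CACHE_{M,B}\!\left(\frac{f_d\!\bigl((\alpha t(n) B)^{1+\epsilon}\bigr)}{(wB)^{1/d}}\right).
\]

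The only genuine subtlety, and therefore the place I would be most careful, is the bookkeeping of cache size through the chain. Lemma~\ref{lem:NormalRamSimCache} blows up the oracle memory from $Mw$ to $Mw\alpha$, while Lemma~\ref{lem:ddimensionSaveBw} would reintroduce a $2^d$ factor; since $d$ and (under the standard word-size assumption that $s(n) = 2^{O(wB)}$) the factor $\alpha$ are both $O(1)$, these are absorbed asymptotically into $\CACHE_{M,B}$, matching the statement exactly. The other item worth a sentence of care is that Theorem~\ref{thm:ramHierarchy} and Theorem~\ref{thm:ddimensionimofram} both hold relative to \emph{any} oracle, so instantiating them with the $MEM(Mw\alpha)$ oracle is legitimate — this is precisely the reason the proof needs the hypothesis to be stated uniformly in the oracle $O$. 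Apart from that, the argument is a routine composition.
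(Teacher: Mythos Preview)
Your proposal is correct and follows essentially the same four-step chain as the paper: Lemma~\ref{lem:NormalRamSimCache}, then the relativized RAM hierarchy (Theorem~\ref{thm:ramHierarchy}), then the hypothesized relativized $d$-dimensional simulation, and finally Lemma~\ref{lem:ddimensionSaveBw}. Your explicit remark about absorbing the $\alpha$ and $2^d$ blow-ups into the asymptotic cache size is exactly the bookkeeping the paper relies on (its displayed conclusion in fact lands in $\CACHE_{M\alpha,B}$ before identifying it with $\CACHE_{M,B}$).
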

\begin{proof}
		Let $\alpha = \frac{B+\lceil \lg\left(s\left(n\right)\right)/w\rceil}{B}$.
		We have by Lemma \ref{lem:NormalRamSimCache} that
		$$CACHE_{M,B}\left(t\left(n\right)\right) \subseteq RAM \left( t\left(n\right)B\alpha \right)^{MEM\left(Mw\alpha\right)}.$$\\
		
		By the RAM time hierarchy (Theorem \ref{thm:ramHierarchy}) we have that, for all $\epsilon>0$, 
	$$CACHE_{M,B}(t\left(n\right))
	\subsetneq
	RAM \left( \left(t\left(n\right)B\alpha\right)^{1+\epsilon} \right)^{MEM\left(Mw\alpha\right)}.$$

	Then Lemma \ref{lem:RamMemSimCacheMB} gives us that 
	$$CACHE_{M,B}\left(t\left(n\right)\right) \subsetneq DTIME_d^{MEM(Mw\alpha)}\left(f_d((t\left(n\right)B\alpha)^{1+\epsilon})\right).$$
	
	Finally by Lemma \ref{lem:ddimensionSaveBw} we get that
		$$CACHE_{M,B}\left(t\left(n\right)\right) \subsetneq CACHE_{M\alpha, B}\left(f_d((t\left(n\right)B\alpha)^{1+\epsilon})/ (wB)^{1/d}\right).$$
\end{proof}

%
%

\subsection{Implications of Improved Simulations}
Note that with the current simulation efficiencies of c-tape Turing Machines (Theorem \ref{thm:dtapesimofram}) and d-dimensional Turing Machines (Theorem \ref{thm:ddimensionimofram}) we do not save factors of $B$ off our running time. Additionally, these simulations efficiencies do not tighten our hierarchy. 
However, some improvements to simulation efficiencies would improve our ability to automatically get locality and tighten our hierarchy. 

We now given an example of such an improvement assuming an improved simulation of  d-dimensional Turing Machines. Assume that for some constant $c$ we have that $RAM(t(n))^O  \subseteq DTIME_{d}(t(n)^{1+1/(cd)})^O$ for all oracles $O$.\\
 In what cases do we get an improved time? 
Let the running time of our RAM machine be $t(n) = n^k$ and the size of $B$ be $B = n^b$.
We save polynomial factors when simulating the RAM machine with a CACHE machine where $c > k/b$. To be more specific, if $c=8$ then we save polynomial factors when $B = n^{1/4}$ and $t(n) = n^{2-\epsilon}$ for $\epsilon>0$.

\section{Conclusion}
\label{sec:Conclusion}

In this paper we give a formal definition for a complexity class based on the I/O model of computation and show its relationship to other complexity classes. This gives us a bridge between these well studied fields. Our hierarchy separation gives further justification for the study of fine-grained complexity in the I/O model, and although we are able to transfer over some results, there is ample work to be done on this topic. Further our simulations suggest results in pure complexity theory could have implications for faster algorithms in the I/O model. From here we propose several specific problems for future work. 

We give a hierarchy, however, unfortunately, the separation not only includes an increase in running time but also cache and line size. The increase in size of cache and cache line are only a constant sized blow up under normal assumptions (that the word size is large enough to index the memory). However, it would be very interesting to show these results without any increase. Furthermore, removing the factor of $B$ from the hierarchy separation would be exciting. It would also be interesting to show a separation hierarchy based on cache size alone.

Many fine grained reductions in the RAM model port directly to the caching model. However, this need not be the case. Finding reductions between problems in the caching model (especially non-trivial ones) would be very interesting. Finding cases where the RAM and I/O reductions are very different would be interesting. Additionally, reducing between problems in the I/O model may lead to algorithmic improvements. 

We show a connection between Turing Machine simulations of RAM machines and memory locality in Section~\ref{sec:TMsimRAM}. Showing, through any method, that a certain factor of the cache line size, $B$, can always be saved in the I/O model would be very interesting.  

\section*{Acknowledgements}

We thank the anonymous reviewers for their helpful suggestions.

\let\realbibitem=\bibitem
\def\bibitem{\par \vspace{-1.2ex}\realbibitem}

\bibliographystyle{alpha}
\bibliography{mybib} 

\newpage

\appendix

\section{New Upper Bounds Proofs}
\label{sec:newUpperProofs}

We will create a self reduction. The only problem we have is how to make sure that full adjacency lists are grouped together, unless they are too big for cache and then split. We will do this with division rules, and then argue that the splits aren't too inefficient.

We begin by building a more general algorithm that counts the nodes at distance $0$, $1$ and $2$. This will allow for computing 2vs3 Diameter and 2vs3 Radius efficiently (in Corollary \ref{cor:Diam} and Corollary \ref{cor:Radius}). 

\begin{theorem}
	In $O(|E|^2/(MB)+E\lg(E)/B)$ time we can return an array $T$ where for  $i\in[1,n]$ and $j\in[0,2]$ $T[i][j]$ is the number of nodes at distance $\leq j$ from node $i$ in an undirected, unweighted graph $G$. 
	\label{thm:diamCache}
	\label{thm:radiusCache}
\end{theorem}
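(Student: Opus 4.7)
The plan is to adapt the cache-aware algorithm of Theorem~\ref{lem:2v3-sparse} into a cache-oblivious form. First I would sort every adjacency list internally and then sort the nodes globally by adjacency-list length; by standard cache-oblivious sorting these steps cost $O(|E|\log|E|/B)$ I/Os and account for the additive $\frac{|E|\log|E|}{B}$ term. After sorting, the concatenated adjacency lists form a single contiguous array laid out in increasing order of list length. The crucial fact I want to preserve from the cache-aware version is the set-disjointness reinterpretation: two nodes $u,v$ are at distance at most two iff $S_u \cap S_v \neq \emptyset$, where $S_v$ is $v$'s closed neighborhood, and once we know whether each pair is close we can count the entries of $T[i][0], T[i][1], T[i][2]$ by a single pass.

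The heart of the proof would then be a cache-oblivious divide-and-conquer $\textsc{Compare}(X,Y)$ that, given two contiguous blocks $X, Y$ of concatenated adjacency lists, updates the output counters with the contribution of every pair $(u,v)\in X\times Y$. When the total size of the blocks exceeds a size threshold (which we do not name, since we are oblivious), I split both blocks in half and recurse on the four resulting pairs, giving the recurrence $T(n) = 4T(n/2) + O(n/B)$. Once the blocks are small enough that $X$, $Y$, and the relevant portion of the output fit in cache, I perform pairwise set-intersection directly in $O((|X|+|Y|)/B)$ I/Os using the in-list sorted order. Feeding this recurrence into Case~1 of the I/O Master Theorem (Theorem~\ref{thm:io-master-theorem}) with base case $T(\Theta(M)) = \Theta(M/B)$ yields the claimed $O(|E|^2/(MB))$ bound, and because neither the branching factor nor the split points depend on $M$ or $B$ the algorithm is cache-oblivious. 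To make sure we count and not merely detect closeness, I would maintain an ``already close'' indicator bit per ordered pair $(i,j)$ with $j \in \{0,1,2\}$, merged across sibling recursive calls, so that the intersection of $S_u \cap S_v$ being non-empty in any one subproblem is recorded exactly once.

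The main obstacle, as the authors flag, is gracefully handling adjacency lists that are longer than the current recursion block. Once the recursion descends below the size of some single long list, that list straddles multiple sub-blocks, and naively streaming it inside every recursive call would blow up the $n/B$ term in the recurrence. My approach would be to exploit the pre-sort-by-length: the long-list region and the short-list region are disjoint suffixes of the concatenated array, so within a few levels of recursion the short and long regions separate, and any long adjacency list that still needs pairing is handled by an asymmetric sweep, scanning a $\Theta(M)$-sized fragment of the long list against a recursive call on the short side. Verifying that this uneven splitting still satisfies a recurrence of the form $T(n) = 4T(n/2) + O(n/B)$ (to which the Master Theorem argument is robust under uneven but balanced-up-to-constants divisions) is the delicate bookkeeping the paper defers to the appendix, and is where I would focus the bulk of the proof.
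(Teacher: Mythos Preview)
Your high-level plan matches the paper's: sort each adjacency list, concatenate lists in order of length, then run a four-way recursive $\textsc{Compare}(X,Y)$ governed by $T(n)=4T(n/2)+O(n/B)$ and appeal to Case~1 of the I/O Master Theorem. The preprocessing, the recurrence, and the Master Theorem application are exactly what the appendix does.

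Where your sketch parts from the paper is in the long-list handling, and there is a genuine gap. You propose to rely on the length-sort so that ``within a few levels of recursion the short and long regions separate,'' then process a long list by ``scanning a $\Theta(M)$-sized fragment of the long list against a recursive call on the short side.'' But ``long'' versus ``short'' and ``$\Theta(M)$-sized fragment'' are both defined relative to $M$, which a cache-oblivious algorithm cannot reference; as written this step is cache-aware. More basically, you have not said where $\textsc{Compare}$ actually splits $X$ and $Y$: at the midpoint (tearing lists apart arbitrarily, which breaks the per-pair counting) or at a list boundary (which is undefined when the block is a fragment of one list). The paper's solution is to make this choice explicit and oblivious: it uses two candidate split points, the true midpoint $k$ and the nearest adjacency-list boundary $k'$, and branches into four cases depending on whether each of $A,B$ currently holds (part of) a single list or several whole lists. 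The uneven boundary-snapping is then shown to at most quadruple the number of size-$\Theta(M)$ leaf subproblems, which is the ``delicate bookkeeping'' you allude to but do not supply.

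A smaller point: your ``indicator bit per ordered pair $(i,j)$ with $j\in\{0,1,2\}$'' is not the right bookkeeping object. The paper keeps one \texttt{alreadyClose} bit \emph{per adjacency list} on the $B$-side (reset at the recursion level where that list first appears whole) together with a \texttt{distTwo} counter per node, and OR-merges the return bits $D(\cdot,\cdot)[1]$ across sibling calls; this is what ensures each non-empty intersection is counted exactly once toward $T[i][2]$.
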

\begin{proof}
	Start by setting $\forall i\in V$ $T[i][0]=1$. This takes time $O(|V|/B)$.

	Sort the adjacency lists of all nodes in time $sort(E) = O(E\lg(E)/B)$.
	Count the size of the adjacency lists of all nodes and mark these sizes in $O(E/B)$. 
	
	For each node set $T[i][1] = $ the length of its adjacency list plus $1$. This takes time $O(|V|/B)$.
	
	We will now solve for $T[i][2]$.
	
	Add to each node's adjacency list the node itself.
	Sort the nodes by the length of their adjacency lists in time $O(sort(E))$. 
	Sort the list $T[i][j]$ by vertex first and use this new order. This takes time $sort(V)$
	
	Start each list with the node name and a word of space to store the counter $distTwo$. Initialize $distTwo=0$. Append the sorted adjacency lists of each node in sorted order one after each other. Call this list of adjacency lists $A$. 
	Give each node an extra indicator bits $alreadyClose$. 
	
	Then we run the algorithm we will call $D$. We feed it two copies of the adjacency list $A$ to start, but we can feed two different lists. $D$ returns a tuple. The first value allows lower subproblems to propagate up that they found a large diameter. The second value is for message passing between levels of the program. For notational simplicity let $D(A,B)[0]$ be the diameter return value and $D(A,B)[1]$ be the message passing bit. 
	
	 Let $k_A=\begin{cases}
	|A|/2, & \text{if } |A|>|B|/2 \\
	|A|, & \text{if }|A|<|B|/2
	\end{cases}$ and $k_B=\begin{cases}
	|B|/2, & \text{if } |B|>|A|/2 \\
	|B|, & \text{if }|B|<|A|/2
	\end{cases}$. 
	
	Let $k'_A$ be the closest index into $A$ where two adjacency lists are split. Let $k'_B$ be the closest index into $B$ where two adjacency lists are split.
	
	So we will subdivide the problems, but some of our divisions will split the groups unevenly. This will limit how uneven our division can be. $D(A,B)$ has four cases.
	\begin{enumerate}
		\item If $A$ and $B$ are both subsets or all of one adjacency list then set the output bit $D(A,B)[1] =\\ D(A[0,k_A],B[0,k_B])[1]  \vee D(A[0,k_A],B[k_B+1,|B|])[1]  \vee D(A[k_A+1,|A|],B[0,k_B])[1]  \vee  D(A[k_A+1,|A|],B[k_B+1,|B|])[1] $. Base case is $|A|$ or $|B|$ is length 1. Then we simply scan through the other list for that one value. If they share a value return $True$ if not return $False$. If $A$ and $B$ are each a full adjacency list then add one to the counter $distTwo$ in $A$ and add one to the counter $distTwo$ in $B$.
		\item If $A$ is a subset of, or all of, one adjacency list and $B$ contains many or one adjacency lists then we will use the indicator bits on list $B$. Intuitively, we are going to set these indicator bits to say if the subset of $A$ we are looking at has overlapped with $B$.
		
		We call $D(A[0,k_A],B[0,k_B])$,  $D(A[0,k_A],B[k_B'+1,|B|])$, $D(A[k_A+1,|A|],B[0,k_B'])$ and finally $D(A[k_A+1,|A|],B[k_B'+1,|B|])$. The base case is $B$ contains only one adjacency list. In this case we set $alreadyClose$ of this list to $D(A[0,k_A],B[0,k_B])[1]  \vee D(A[0,k_A],B[k_B+1,|B|])[1]  \vee D(A[k_A+1,|A|],B[0,k_B])[1]  \vee  D(A[k_A+1,|A|],B[k_B+1,|B|])[1] $, where all these will be sublists (and thus call function $1$). 
		
		If $A$ is the entirety of one list then scan $B$ for all the nodes $i\in B$ the $i.alreadyClose$ bits. Let $a.distTwo$ be the $distTwo$ variable from $A$. For each node $i$ where $i.alreadyClose=True$ add one to the $i.distTwo$ and add one to $a.distTwo$. 
		\item If $A$ contains one or many full adjacency lists and $B$ contains a subset or all of one adjacency list, then call $D(B,A)$.
		\item  If $A$ and $B$ both contain one or more full adjacency lists then we  set $k_A'$ and $k_B'$. Then we return
		$D(A[0,k_A'],B[0,k_B'])[1]  \wedge D(A[0,k_A'],B[k_B'+1,|B|])[1]  \wedge D(A[k_A'+1,|A|],B[0,k_B'])[1]  \wedge  D(A[k_A'+1,|A|],B[k_B'+1,|B|])[1] $
	\end{enumerate}
	
	We scan through the adjacency list and set $T[i][2]=i.distTwo$ for all $i$. We add counts when a list is being considered and is about to be divided. So we never double count. We iterate to the bottom and thus every list will eventually be the entirety of the input $A$ at some level of the recursion. This takes time $O(|E|/B)$ because the lists are in the same order due to our previous sort. We now have all the counts. 
	
	Once a subproblem fits in memory (that is $|A|+|B|<M$), whatever size that is, the subproblem is solved in the time to read in and write out all the data, so $O(M/B)$.  
	
	We can sub-divide unevenly when we are splitting up many adjacency lists, because we choose to divide adjacency lists where they split. However, we can count the total number of subproblems that fit in memory. For adjacency lists, $L$, of length $\geq M/4$ we will get $\leq 2L/M+1$ sublists. For adjacency lists, $L$, of length $ \leq M/4$ at most one will be alone. The other lists will be in sublists with $\geq M/(2L)$ other sublists. Thus, we will have at most twice as many sublists as we ought to, thus we will have at most four times as many subproblems of size $\leq M/2$ as we should. 
	
	At every instance we divide into at least two new subproblems. Above we bounded the number of subproblems produced at $O(|E|^2/M)$. Thus the maximum number of layers in our self recurrence is $O(\lg(E))$. We take scan time per layer. The total time for these scans is thus $O(|E|\lg(|E|)/B)$. So the total time for this algorithm will be $O((|E|/M)^2M/B+|E|\lg(|E|)/B+sort(|E|)) = O(|E|^2/(MB)+|E|\lg(|E|)/B)$. 
\end{proof}

\begin{corollary}
In $O(|E|^2/(MB)+E\lg(E)/B)$ time we can return an array $S$ where for  $i\in[1,n]$ and $j\in[0,2]$ $S[i][j]$ is the number of nodes at distance $j$ from node $i$ in an undirected, unweighted graph $G$. Also add a column for all $i\in[1,n]$  $S[i][\geq3]$ which lists the number of nodes at distance greater than or equal to 3. 
\label{cor:Sdiamradius} 
\end{corollary}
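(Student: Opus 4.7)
The plan is to derive $S$ directly from the array $T$ produced by Theorem~\ref{thm:diamCache}. Recall that $T[i][j]$ stores the number of nodes at distance \emph{at most} $j$ from node $i$ for $j\in\{0,1,2\}$, and this can be computed in $O(|E|^2/(MB)+|E|\lg(|E|)/B)$ time. From counts of ``at most'' we can recover counts of ``exactly'' by a single telescoping subtraction.

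First I would invoke Theorem~\ref{thm:diamCache} to obtain the array $T$ in the claimed time bound. Then, with $T$ laid out in memory indexed by vertex, I would perform a single linear scan over $i\in[1,n]$ and write
\[
S[i][0]=T[i][0],\qquad S[i][1]=T[i][1]-T[i][0],\qquad S[i][2]=T[i][2]-T[i][1],
\]
and finally $S[i][\geq 3] = n - T[i][2]$. Because $T[i][0]=1$ always, $T[i][1]$ is one plus the degree, and $T[i][2]$ counts the closed 2-neighborhood, these differences are exactly the counts of nodes at distance $0$, $1$, $2$, and $\geq 3$ respectively. The scan reads each row of $T$ once and writes one row of $S$; since each row has constant width, the entire post-processing costs $O(n/B)$ I/Os.

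The total cost is $O(|E|^2/(MB)+|E|\lg(|E|)/B)+O(n/B)=O(|E|^2/(MB)+|E|\lg(|E|)/B)$, as claimed. There is no real obstacle here: the only thing to be careful about is that $T$ is arranged by vertex so that the post-processing is a clean sequential scan, which is already the layout guaranteed by the proof of Theorem~\ref{thm:diamCache} (it sorts $T[i][j]$ by vertex).
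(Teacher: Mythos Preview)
Your proof is correct and essentially identical to the paper's: both invoke Theorem~\ref{thm:diamCache} to obtain $T$ and then produce $S$ by a single $O(n/B)$ scan using telescoping differences. In fact your formula $S[i][2]=T[i][2]-T[i][1]$ is the right one; the paper prints $S[i][2]=T[i][2]-T[i][1]+T[i][0]$, which appears to be a typo since with that extra term the four entries of row $i$ would sum to $n+1$ rather than $n$.
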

\begin{proof}
	First use Theorem \ref{thm:diamCache} to get $T$. Then we scan through $T$ and generate $S$.
	\begin{align}
	S[i][0]&= T[i][0]\\
	S[i][1]&= T[i][1]-T[i][0]\\
	S[i][2]&= T[i][2]-T[i][1]+T[i][0]\\
	S[i][3]&= n-T[i][2]\\
	\end{align}
	
	This takes time $O(|V|/B + |E|^2/(MB)+E\lg(E)/B)$ which is $O(|E|^2/(MB)+E\lg(E)/B)$.
\end{proof}

\paragraph{Diameter 2 vs 3}
\begin{corollary}
	In $O(|E|^2/(MB)+E\lg(E)/B)$ time we can determine if the diameter of an undirected unweighted graph is $1$, $2$, or $\geq 3$. \label{cor:Diam}
\end{corollary}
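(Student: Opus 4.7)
The plan is to reduce the decision problem directly to the counting information already provided by Corollary~\ref{cor:Sdiamradius} (equivalently Theorem~\ref{thm:diamCache}). First I would invoke Corollary~\ref{cor:Sdiamradius} to compute the table $S[i][j]$ giving, for each vertex $i\in V$ and each $j\in\{0,1,2,\geq 3\}$, the number of vertices at distance exactly $j$ from $i$ in the undirected unweighted graph $G$. By the cited result this costs $O(|E|^2/(MB)+|E|\lg(|E|)/B)$ I/Os, matching the target bound.

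Next I would scan the rows of $S$ once and classify the diameter using the following observations: the diameter is at most $1$ iff every vertex is adjacent to every other vertex, which is equivalent to $S[i][\geq 3]=0$ and $S[i][2]=0$ for all $i$; the diameter is at most $2$ iff every pair of vertices is within distance $2$, which is equivalent to $S[i][\geq 3]=0$ for all $i$; and otherwise the diameter is at least $3$. Concretely, I would maintain two boolean flags \emph{someFar} ($\exists i:S[i][\geq 3]>0$) and \emph{someTwo} ($\exists i:S[i][2]>0$) while sweeping through the $n$ rows of $S$ in the order they are laid out in external memory. If \emph{someFar} is true, output ``$\geq 3$''; else if \emph{someTwo} is true, output ``$2$''; else output ``$1$''.

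For the I/O cost of the scan, note that $S$ has $O(n)$ rows of constant size, so it occupies $O(n/B)$ cache lines and the entire sweep takes $O(n/B)=O(|V|/B)$ I/Os. Adding this to the cost of constructing $S$ yields a total I/O complexity of $O(|E|^2/(MB)+|E|\lg(|E|)/B)$, as required. The only mild subtlety is the trivial edge case $|E|=0$ (the graph is empty/disconnected), which is handled uniformly because then $S[i][\geq 3]=n-1>0$ for every vertex with $n\geq 2$, correctly producing the ``$\geq 3$'' verdict (interpreted as infinite diameter in the disconnected case). I do not anticipate any real obstacle here; the entire content of the corollary is packaging the already-computed distance-count table into a three-way decision, and no step beyond the precomputation exceeds the $O(|E|^2/(MB)+|E|\lg(|E|)/B)$ budget.
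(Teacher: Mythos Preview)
Your proposal is correct and follows essentially the same approach as the paper: invoke Corollary~\ref{cor:Sdiamradius} to get the table $S$, then do a single $O(|V|/B)$ scan testing whether any $S[i][\geq 3]>0$ (diameter $\geq 3$), else whether any $S[i][2]>0$ (diameter $2$), else output $1$. The only differences are cosmetic---you phrase the scan via two boolean flags and add a remark about the $|E|=0$ edge case---but the argument and its cost analysis are identical to the paper's.
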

\begin{proof}
	First use Corollary \ref{cor:Sdiamradius} to get $S$. Then we scan through $S$. If $S[i][\geq 3]>0$ for any $i$ then the Diameter is $\geq 3$, return $\geq 3$. Else if $S[i][2]>0$ for any $i$ then the Diameter is $2$, return $2$. Else if $S[i][1]>0$ for any $i$ then the Diameter is $1$, return $1$. 
	
	This takes time $O(|V|/B + |E|^2/(MB)+E\lg(E)/B)$ which is $O(|E|^2/(MB)+E\lg(E)/B)$.
\end{proof}

\paragraph{2 vs 3 Radius}
\begin{corollary}
	In $O(|E|^2/(MB)+E\lg(E)/B)$ time we can determine if the radius of an undirected unweighted graph is $1$, $2$, or $\geq 3$. \label{cor:Radius}
\end{corollary}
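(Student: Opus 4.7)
The plan is to mirror the diameter argument of Corollary~\ref{cor:Diam}: I would invoke Corollary~\ref{cor:Sdiamradius} to compute the array $S$ (where $S[i][j]$ counts the vertices at distance exactly $j$ from $i$ for $j \in \{0,1,2\}$, together with a column $S[i][\geq 3]$ counting the vertices at distance at least $3$ from $i$), and then determine the radius by a single linear scan over the rows of $S$.

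The key observation is that the eccentricity of vertex $i$ is $\leq 2$ if and only if $S[i][\geq 3] = 0$, and is $\leq 1$ if and only if in addition $S[i][2] = 0$. Since the radius is the minimum eccentricity, this yields the classification: the radius is $1$ iff some row $i$ has both $S[i][\geq 3] = 0$ and $S[i][2] = 0$; otherwise the radius is $2$ iff some row $i$ has $S[i][\geq 3] = 0$; otherwise the radius is $\geq 3$. First I would scan the rows once to look for a row certifying radius $1$; if none is found, a second pass (or one combined pass) would look for a row certifying radius $2$; otherwise the algorithm outputs ``$\geq 3$''.

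For the cost accounting, Corollary~\ref{cor:Sdiamradius} supplies $S$ in $O(|E|^2/(MB) + |E|\lg(|E|)/B)$ I/Os, and the linear scan over the $|V|$ rows of $S$ costs $O(|V|/B)$ additional I/Os, which (under the standing convention that $|V| = O(|E|+1)$ for the graphs of interest) is dominated by the cost of producing $S$. I do not expect a real obstacle: essentially all the work has already been done by Corollary~\ref{cor:Sdiamradius}, and the only subtlety compared to the diameter version is to get the logical direction right—the radius is certified by the existence of a single ``good'' row of $S$ (a vertex of small eccentricity), rather than by the nonexistence of any ``bad'' row as in the diameter case.
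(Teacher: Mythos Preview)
Your proposal is correct and essentially identical to the paper's proof: the paper invokes Theorem~\ref{thm:diamCache} directly to get the cumulative array $T$ and checks whether some $T[i][1]=n$ (radius $1$) or some $T[i][2]=n$ (radius $2$), which is logically equivalent to your test on $S$ since $S[i][\ge 3]=n-T[i][2]$ and $S[i][2]=T[i][2]-T[i][1]$. The only cosmetic difference is that you go through Corollary~\ref{cor:Sdiamradius} to obtain $S$ rather than working with $T$ directly, adding an extra $O(|V|/B)$ scan that does not affect the bound.
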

\begin{proof}
	First use Theorem \ref{thm:diamCache} to get $T$. Then we scan through $S$. If $T[i][1]=n$ for any $i$ then the radius is $1$, return $1$. Else if $T[i][2]=n$ for any $i$ then the Diameter is $2$, return $2$. Else return $\geq 3$. 
	
	This takes time $O(|V|/B + |E|^2/(MB)+E\lg(E)/B)$ which is $O(|E|^2/(MB)+E\lg(E)/B)$.
\end{proof}

\end{document}